\lstdefinestyle{wrapfig}{
  xleftmargin=0pt,
  belowcaptionskip=0pt,
  aboveskip=0pt,
  belowskip=0pt
}
\author{Joseph W. Cutler}
\email{jwc@seas.upenn.edu}
\affiliation{%
  \institution{University of Pennsylvania}
  \city{Philadelphia}
  \state{Pennsylvania}
  \country{USA}
}
\author{Christopher Watson}
\email{ccwatson@seas.upenn.edu}
\affiliation{%
  \institution{University of Pennsylvania}
  \city{Philadelphia}
  \state{Pennsylvania}
  \country{USA}
}
\author{Emeka Nkurumeh}
\email{enkurume@caltech.edu}
\affiliation{%
  \institution{California Institute of Technology}
  \city{Pasadena}
  \state{California}
  \country{USA}
}
\author{Phillip Hilliard}
\email{pdh@seas.upenn.edu}
\affiliation{%
  \institution{University of Pennsylvania}
  \city{Philadelphia}
  \state{Pennsylvania}
  \country{USA}
}
\author{Harrison Goldstein}
\email{hgo@seas.upenn.edu}
\affiliation{%
  \institution{University of Pennsylvania}
  \city{Philadelphia}
  \state{Pennsylvania}
  \country{USA}
}
\author{Caleb Stanford}
\email{cdstanford@ucdavis.edu}
\affiliation{%
  \institution{University of California, Davis}
  \city{Davis}
  \state{California}
  \country{USA}
}
\author{Benjamin C. Pierce}
\email{bcpierce@cis.upenn.edu}
\affiliation{%
  \institution{University of Pennsylvania}
  \city{Philadelphia}
  \state{Pennsylvania}
  \country{USA}
}
  \providecommand\BibTeX{{%
    \normalfont B\kern-0.5em{\scshape i\kern-0.25em b}\kern-0.8em\TeX}}} 
\newcommand{\core}{$\lambda^{\text{ST}}$}
\newcommand{\lang}{delta}
\newcommand{\ruleName}[1]{\textsc{#1}}
\newcommand{\SIMPLEHEADER}[1]{\emph{#1}.}
\newif\ifdraft\draftfalse
\newif\ifaftersubmission\aftersubmissionfalse
\newif\ifextended\extendedfalse
\definecolor{dkred}{rgb}{0.7,0,0}
\definecolor{dkpurple}{HTML}{4e02eb}
\definecolor{dkgreen}{HTML}{006329}
\definecolor{dkblue}{HTML}{2d5491}
\definecolor{dkorange}{HTML}{825a23}
\definecolor{ltgreen}{HTML}{3a9e67}
\definecolor{teal}{HTML}{007982}
\definecolor{fuchsia}{HTML}{8C368C}
\newcommand{\comm}[3]{\ifdraft\textcolor{#1}{[#2: #3]}\fi}
\newcommand{\bcp}[1]{\comm{dkpurple}{BCP}{#1}}
\newcommand{\jwc}[1]{\comm{dkgreen}{JWC}{#1}}
\newcommand{\cutifneeded}[1]{#1}
\newcommand*{\INFERRULE}[3][]{%
  \inferrule*[right=#1,
              rightskip=2em,
              rightstyle=\sc]%
  {#2}{#3}
}
\newcommand{\ENDOFLINE}{\vspace*{1.7ex} \\}
\newcommand{\BEFORECAPTIONSPACE}{\vspace*{-2.5ex}}
\definecolor{tmcolor}{RGB}{50, 48, 200}
\definecolor{pfxcolor}{RGB}{219, 48, 122}
\definecolor{batcolor}{RGB}{219, 48, 122}
\definecolor{}{RGB}{219, 48, 122}
\newcommand{\prefixOf}[1]{\left(#1\right)^\circ}
\newcommand{\prefixConcatRel}[3]{#1 \cdot #2 \sim #3}
\newcommand{\prefixConcat}[2]{#1 \cdot #2}
\newcommand{\semicctx}[2]{#1 \,; #2}
\newcommand{\commactx}[2]{#1 \,, #2}
\newcommand{\semicctxcompact}[2]{#1 \mathord{;} #2}
\newcommand{\prefixHasType}[2]{#1 \, :\, \texttt{prefix}\left(#2\right)}
\newcommand{\envHasType}[2]{#1 \, :\, \texttt{env}\left(#2\right)}
\newcommand{\deriv}[2]{\delta_{#1}\left(#2\right)}
\newcommand{\derivrel}[3]{\delta_{#1}\left(#2\right) \sim #3}
\newcommand{\flatten}[1]{\langle #1 \rangle}
\newcommand{\histValToPrefix}[2]{\texttt{toPrefix}_{#2}\left(#1\right)}
\newcommand{\nullable}[1]{#1\;\texttt{null}}
\newcommand{\inert}[1]{#1\;\texttt{inert}}
\newcommand{\I}{\texttt{I}}
\newcommand{\J}{\texttt{J}}
\newcommand{\tck}[6]{#1 \mid #2 \vdash_{#3} #4 : #5 \, @\, #6}
\newcommand{\norec}{\texttt{NoRec}}
\newcommand{\sinkTm}[1]{\texttt{sink}_{#1}}
\newcommand{\eventfor}[2]{#1 \, : \, \texttt{event}\left(#2\right)}
\newcommand{\eventsfor}[2]{#1 \, : \, \texttt{events}\left(#2\right)}
\newcommand{\evSize}[1]{\textit{size}(#1)}
\newcommand{\evSizeBound}[1]{\textit{evSizeBound}(#1)}
\newcommand{\donebatch}[2]{#1 \; \texttt{done}\; #2}
\newcommand{\EToP}[3]{#1 \hookrightarrow^{#2} #3}
\newcommand{\ESToP}[3]{#1 \hookrightarrow^{#2} #3}
\newcommand{\PToES}[3]{#1 \dagger^{#2} #3}
\newcommand{\subty}[2]{#1 \, <: \, #2}
\newcommand{\fixsubst}[2]{#1\left[#2/\texttt{rec}\right]}
\newcommand{\XS}{\mathit{xs}}
\newcommand{\lctck}[3]{#1 \vdash #2 : #3}
\newcommand{\histsubsttck}[3]{#1 : #2 \to #3}
\newcommand{\prefixstep}[5]{#1 \Rightarrow #2 \downarrow^{#3} #4 \Rightarrow #5}
\newcommand{\argsstep}[6]{#1 \Rightarrow #2 @ #3 \downarrow^{#4} #5 \Rightarrow #6}
\newcommand{\onet}{1}
\newcommand{\epst}{\varepsilon}
\newcommand{\intt}{\texttt{Int}}
\newcommand{\inl}[1]{\texttt{inl}(#1)} 
\newcommand{\inr}[1]{\texttt{inr}(#1)} 
\newcommand{\emp}[1]{\texttt{emp}_{#1}}
\newcommand{\isEmpty}[1]{#1\; \texttt{empty}}
\newcommand{\isMaximal}[1]{#1\; \texttt{maximal}}
\newcommand{\emptyOn}[2]{#1 \, \texttt{emptyOn} \, #2}
\newcommand{\maximalOn}[2]{#1\, \texttt{maximalOn} \, #2}
\newcommand{\agree}[4]{\texttt{agree}\left(#1,#2,#3,#4\right)}
\newcommand{\onepA}{\texttt{oneEmp}}
\newcommand{\onepB}{\texttt{oneFull}}
\newcommand{\epsp}{\texttt{epsEmp}}
\newcommand{\stpEmp}{\texttt{starEmp}}
\newcommand{\stpDone}{\texttt{starDone}}
\newcommand{\stpA}[1]{\texttt{starFirst}(#1)}
\newcommand{\stpB}[2]{\texttt{starRest}(#1 , #2)}
\newcommand{\parp}[2]{\texttt{parPair}(#1,#2)}
\newcommand{\sumpA}[1]{\texttt{sumInl}(#1)}
\newcommand{\sumpB}[1]{\texttt{sumInr}(#1)}
\newcommand{\sumpEmp}{\texttt{sumEmp}} 
\newcommand{\catpA}[1]{\texttt{catFst}(#1)}
\newcommand{\catpB}[2]{\texttt{catBoth}(#1 , #2)}
\newcommand{\oneev}{\texttt{oneev}}
\newcommand{\parevA}[1]{\texttt{parevA}\left(#1\right)}
\newcommand{\parevB}[1]{\texttt{parevB}\left(#1\right)}
\newcommand{\parevAmap}[1]{\widehat{\texttt{parevA}}\left(#1\right)}
\newcommand{\parevBmap}[1]{\widehat{\texttt{parevB}}\left(#1\right)}
\newcommand{\catevA}[1]{\texttt{catevA}\left(#1\right)}
\newcommand{\catevAmap}[1]{\widehat{\texttt{catevA}}\left(#1\right)}
\newcommand{\catPunc}{\cdot_\texttt{punc}}
\newcommand{\sumPuncA}{+_\texttt{puncA}}
\newcommand{\sumPuncB}{+_\texttt{puncB}}
\newcommand{\cutTm}[3]{\texttt{let}\, #1=#2\,\texttt{in}\,#3}
\newcommand{\nilTm}{\texttt{nil}}
\newcommand{\consTm}[2]{#1 \, \texttt{::}\, #2}
\newcommand{\starcaseTm}[7]{\texttt{case}_{#1}\left(#2;#3,#4,#5.#6.#7\right)}
\newcommand{\waitTm}[4]{\texttt{wait}_{#1,#2}(#3)\left(#4\right)}
\newcommand{\waitTmNOBUFFER}[3]{\texttt{wait}_{#1}(#2)\left(#3\right)}
\newcommand{\sumInlTm}[1]{\texttt{inl}\left(#1\right)}
\newcommand{\sumInrTm}[1]{\texttt{inr}\left(#1\right)}
\newcommand{\sumcaseTm}[7]{\texttt{case}_{#1}\left(#2;#3,#4.#5,#6.#7\right)}
\newcommand{\letcatTm}[5]{\texttt{let}_{#1}\,\left(#2;#3\right)\,=\,#4\,\texttt{in}\,#5 }
\newcommand{\letparTm}[4]{\texttt{let}\,\left(#1,#2\right)\,=\,#3\,\texttt{in}\,#4 }
\newcommand{\catpairTm}[2]{\left(#1 ; #2 \right)}
\newcommand{\parpairTm}[2]{\left(#1 , #2 \right)}
\newcommand{\epsTm}{\texttt{sink}}
\newcommand{\oneTm}{\texttt{()}}
\newcommand{\fixTm}[2]{\texttt{fix}\left(#1\right).\left(#2\right)}
\newcommand{\fixTmHistargs}[3]{\texttt{fix}\left\{#1\right\}\left(#2\right).\left(#3\right)}
\newcommand{\recTm}[2]{\texttt{rec}\left\{#1\right\} \left(#2\right)}
\newcommand{\histPgmTm}[2]{\langle #1 : #2\rangle}
\newcommand{\catpairArgsTmA}[2]{\left(#1 ; #2 \right)}
\newcommand{\catpairArgsTmB}[1]{\left(\cdot ; #1 \right)}
\newcommand{\parpairArgsTm}[2]{\left(#1 , #2 \right)}
\newcommand{\cons}[2]{#1 \texttt{::} #2}
\mathchardef\mhyphen="2D
\newcommand{\superstar}{\ensuremath{^\star}}
\newcommand{\dom}[1]{\text{Dom}\left(#1\right)}
\newcommand{\jtheorem}[2]{



  \noindent \textbf{#1}




  \noindent #2 \qed

}
\newmdenv[
  usetwoside=false,
  topline=false,
  bottomline=false,
  rightline=false,
  leftmargin=0.2in,
  linewidth=0.75pt,
  skipabove=\topsep,
  skipbelow=\topsep,
  nobreak=false
]{leftrule}
\newcommand{\allrule}[1]{
  \vspace{\topsep}

  \noindent\hspace{10pt}\fbox{#1}

  \vspace{\topsep}
}
\newcommand{\jgivengoal}[2]{
  \vspace{0.5em}

  {
    \setlength{\parskip}{0em}
    \noindent $\blacktriangleright$ \textbf{Given:}
    {
      \begin{adjustwidth}{0.1in}{0in}
        #1
      \end{adjustwidth}
    }

    \noindent $\blacktriangleright$ \textbf{Goal:}


    \begin{adjustwidth}{0.1in}{0in}
      \allrule{#2}
    \end{adjustwidth}

  }
  \noindent \ignorespaces
}
\newcommand{\jgivengoalTwo}[3]{

  {
    \setlength{\parskip}{0em}
    \noindent $\blacktriangleright$ \textbf{Given:}
    {
      \begin{adjustwidth}{0.1in}{0in}
        #1
      \end{adjustwidth}
    }

    \noindent $\blacktriangleright$ \textbf{Goal A:}


    \begin{adjustwidth}{0.1in}{0in}
      \allrule{#2}
    \end{adjustwidth}


    \noindent $\blacktriangleright$ \textbf{Goal B:}


    \begin{adjustwidth}{0.1in}{0in}
      \allrule{#3}
    \end{adjustwidth}

  }
  \noindent \ignorespaces
}
\newcommand{\jgivengoalThree}[4]{

  {
    \setlength{\parskip}{0em}
    \noindent $\blacktriangleright$ \textbf{Given:}
    {
      \begin{adjustwidth}{0.1in}{0in}
        #1
      \end{adjustwidth}
    }

    \noindent $\blacktriangleright$ \textbf{Goal A:}


    \begin{adjustwidth}{0.1in}{0in}
      \allrule{#2}
    \end{adjustwidth}


    \noindent $\blacktriangleright$ \textbf{Goal B:}


    \begin{adjustwidth}{0.1in}{0in}
      \allrule{#3}
    \end{adjustwidth}
    
    \noindent $\blacktriangleright$ \textbf{Goal C:}


    \begin{adjustwidth}{0.1in}{0in}
      \allrule{#4}
    \end{adjustwidth}

  }
  \noindent \ignorespaces
}
\newcommand{\jcase}[3]{

  \noindent $\blacktriangleright$ \textbf{Case #1:} \textit{#2.}

  {
    \begin{leftrule}
      \vspace{0.35em}
      #3
    \end{leftrule}
  }
  

  \noindent \ignorespaces
}
\newcommand{\caseText}[1]
  {\noindent #1}
\newcommand{\caseFact}[1]
  {\noindent \hspace{10pt}(#1)\hspace{5pt}}
\newcommand{\caseFactPl}[1]
  {\caseFact{#1}}
\title{Stream Types}
\begin{abstract}
We propose a rich foundational theory of typed data streams and stream
transformers, motivated by two high-level goals: (1) The type of
a stream should be able to express complex \emph{sequential patterns}
of events over time. And (2) it should describe the internal \emph{parallel
  structure} of the stream to support deterministic stream processing
on parallel and distributed systems.
To these ends, we introduce \emph{stream types}, with
operators capturing sequential composition, parallel composition, and
iteration, plus a core calculus \core{}
of \emph{transformers} over typed streams
which naturally supports
a number of common streaming idioms, including punctuation, windowing, and
parallel partitioning, as first-class constructions.
\core{} exploits a Curry-Howard-like correspondence with
an ordered variant of the logic of Bunched Implication to
program with streams
compositionally and uses Brzozowski-style derivatives to enable an incremental,
prefix-based operational semantics.
To illustrate the programming style supported by the rich types of \core{}, we
present a number of examples written in \lang{}, a prototype high-level language design based on \core{}.
\end{abstract}
\keywords{Type Systems, Stream Processing, Ordered Logic, Bunched Implication}
\begin{document}
\maketitle

\section{Introduction}\label{sec:intro}
What is the type of a stream?
A straightforward answer, dating back to
the early days of functional programming~\cite{burge1975stream}, is that a stream is an unbounded
sequence of items of a single fixed type, produced by one
part of a system (or the external world) and consumed by another.
This simple perspective has been immensely successful:
the current programming models exposed by the most popular
distributed stream processing eDSLs
(e.g., Flink~\cite{Flink,Flink2015}, Beam~\cite{Beam}, Storm~\cite{Storm}, and Heron~\cite{Heron})
typically offer just one type, \texttt{Stream t}.

This homogeneous treatment of streams leaves something to
be desired.
For one thing, streaming data sometimes arrives at a processing node from
multiple sources {\em in parallel}.
Using arrival times to impose an ``incidental'' order on such parallel data
can make it difficult to ensure that processing is
deterministic, because downstream results may then depend on
factors like network latency~\cite{mamouras2019data, schneider2013safe}.
Another issue with the homogeneous stream abstraction is that
temporal patterns
 like \emph{bracketedness} (every ``begin'' event has a following
 ``end'') or the fact that with exactly $k$ events are expected to
 arrive on a stream are invisible
in its type. Programmers get no help from the type system to
ensure such properties when producing a stream, nor can they rely on
them when consuming a
stream.

Our principal contribution is a novel logical foundation for typed stream processing
that can precisely describe streams with both
complex sequential patterns and parallel structure.
On this foundation,
we build a calculus called \core{} that
is (a) expressive and type-safe for streams with such complex
temporal patterns and (b) deterministic, even when inputs can arrive
from multiple sources in parallel. We also present \lang{}, an experimental language design based on \core{}.
(A full-blown distributed implementation of \lang{} is left for future
work.)

Programs in \core{} are intuitively batch processors
that operate over entire streams at once. But,
since streams are in general unbounded,
stream transformers can't actually wait for ``the entire input stream'' to
arrive before producing any output.
The operational semantics of the \core{} calculus is therefore designed
to be {\em
  incremental}, producing partial outputs from partial inputs on the
fly.

%
A \core{} program is interpreted as a function mapping
any prefix of its input(s) to
a prefix of its output plus a ``resultant'' term to
transform the rest of the
inputs to the rest of the output.


Our stream types include two kinds of products, one representing
a pair of streams in temporal sequence, the other a pair
of streams in parallel.
This structure is inspired both by Concurrent Kleene
Algebras~\cite{hoare2009concurrent, kappePomset}, which syntactically describe  partially
ordered series/parallel data, and by
work by \citet{synch-schemas} and
\citet{mamouras2019data}, where streams are modeled as partially ordered sets.
We discover a suitable proof theory for this two-product formalism in
a variant of O'Hearn and Pym's Logic of Bunched
Implications (BI) \cite{o1999logic}.
BI is well known as a foundation for separation logic
\cite{reynolds2002separation}, where
its ``separating
conjunction'' allows for local reasoning about
separate regions of
the heap in imperative programs.
In \core{}, we replace
spatial separation with \emph{temporal separation}:
one product describes pairs of streams separated
sequentially in time; the other describes
pairs of temporally independent streams whose elements may arrive in
interleaved fashion.

Concretely, our contributions are:
\begin{enumerate}
  \item We propose \emph{stream types}, a static discipline for
  distributed stream processing
  that generalizes the traditional homogeneous view of streams to a richer
  nested-parallel-and-sequential structure, and define \core{}, a calculus of stream processing transformers
  inspired by a Curry-Howard-like correspondence with an ordered
  variant of BI. \cutifneeded{Terms in \core{}
  are high level programs in a functional style that conceptually
  transform whole streams at once.}

  \item
  We equip \core{} with an operational semantics interpreting terms as
  incremental
  transformers that accept and produce finite prefixes of streams.
  Our main result is a powerful \emph{homomorphism theorem} (Theorem~\ref{thm:hom-prop}) guaranteeing that
  the result of a transformer does not depend on how the input stream is
  divided into prefixes. This theorem implies
  that the semantics is
  deterministic: all interleavings of
    parallel sub-streams yield the same final
    result.

  \item
  We present \lang{}, an experimental high-level functional language
  prototype based on the \core{} calculus that serves as a tool for exploring the potential of richly typed stream programming.
  We demonstrate by example how \lang{} enables type-safe
  programming for streams with complex patterns
  and how it prevents nondeterminism.
  Programming patterns from stream
  processing practice are elegantly supported by this richer model,
  including MapReduce-like pipelines,
  temporal integrity constraints, windowing, punctuation, parallelism
  control, routing, and side outputs.

\end{enumerate}

Section~\ref{sec:extended-motivation} explores some concrete cases
where \core{}'s structured types can prevent
common stream processing bugs and enable cleaner programming patterns.
Section~\ref{sec:core-lang} presents \ifextended the syntax and
semantics of \fi
Kernel \core{}, a minimal
subset with just the features needed to state and understand the main
technical results.  Section~\ref{sec:full-core-lang}
extends this presentation to Full \core{}.
Section~\ref{sec:examples} develops several further
examples.
Sections~\ref{sec:related-work}~and~\ref{sec:future-work} discuss
related and future work.
An overview of our prototype implementation of \lang{}
can be found in Appendix~\ref{app:artifact};
technical details omitted from the main paper in
Appendix~\ref{app:technical}; and expanded versions of the examples in
Appendix~\ref{app:examples}.

\section{Motivating examples}
\label{sec:extended-motivation}

\emph{Types for temporal invariants.}
Consider a stream of brightness data coming from a motion
sensor, where
each event in the stream is a number between 0 and 100.
Suppose we want a stream transformer that acts as a threshold filter,
sending out a ``\texttt{Start}'' event when the brightness level
goes above level 50,
forwarding along brightness values until the level
dips below the threshold, and sending a final ``\texttt{Stop}''
event.  For example:
\[
  11,30,{\color{tmcolor} 52,56,53},30,10,{\color{pfxcolor} 60},10,\dots \ \ \Longrightarrow \ \  \texttt{Start},{\color{tmcolor}52,56,53},\texttt{Stop},\texttt{Start},{\color{pfxcolor} 60},\texttt{Stop},\dots
\]
%
%
The output of the transformer should satisfy the following {\em temporal invariant}: each start
event must be followed by one or more data events and then one end
event.
%
Conventional stream processing systems would give this
transformation a type like
$
  \texttt{Stream Int} \to \texttt{Stream (Start + Int + Stop)}
$,
which
expresses only the types of events in the output, not the temporal
invariant that the \texttt{Start} must come before all the data and
the \texttt{Stop} after.

These simple types are even more problematic when {\em consuming} streams.
Suppose another transformer wants to consume the output stream of type
\texttt{Stream (Start + Int + Stop)} and compute the average brightness between
each start/end pair. We know {\em a priori} that the stream is well bracketed,
but the type does not say so.
Thus, the second transformer must \emph{re-parse} the stream to
compute the averages, requiring additional logic for various special
cases (e.g., \texttt{Stop} before \texttt{Start}, empty \texttt{Start}/\texttt{Stop} pairs) that {cannot actually occur} in
the stream it will see.

In \core{}, we can express the required invariant with the type
$\left(\texttt{Start} \cdot \texttt{Int} \cdot \texttt{Int}^\star
\cdot \texttt{End} \right)^\star$, specifying that the stream consists
of a start message, at
least one \texttt{Int}, and an end message, repeatedly.
A well-typed transformer with this output type is guaranteed to
enforce this
invariant; conversely, a downstream transformer can assume that its
input will adhere to it.

\paragraph*{Enforcing deterministic parallelism}
A second limitation of homogeneous streams
is that they impose a
\emph{total ordering} on their component events.
In other words, for each pair of events in the stream, the transformer can
tell which came first.
This is problematic
in a world where
stream transformers work over data that is logically only partially
ordered---e.g., because it comes
from separate sources.%
\footnote{The same objection applies for stream processing
systems that impose total \emph{per key} ordering of a parallelized
stream---cf.~\texttt{KeyedStream} in Flink---since data associated with a given
key may also come from multiple sources in parallel.}

For example,
consider a system with two sensors, each producing one reading
per second and and sending them via different network
connections to a single transformer that averages them pairwise,
producing a composite reading each second.
A natural way to do this is to merge the two
streams into a single one, group adjacent pairs of elements (i.e.,
impose a size-two
tumbling window), and average the pairs. But
this is subtly wrong: a network delay could cause a pair
of consecutive elements in the merged stream to come from the same
sensor, after which the averages will all be bogus.

The problem with this transformer is that it is
not deterministic: its result
can depend on external factors like network latency.
Bugs of this type can easily occur in practice
\cite{schneider2013safe, mamouras2019data} and
can be very difficult to track down, since they may only manifest
under rare conditions~\cite{kallas2020diffstream}.

Once again, this is a failure of type structure.
In \core{}, we can prevent it by giving the merged stream the type
$\left(\texttt{Sensor1} \| \texttt{Sensor2}\right)^\star$, capturing
the fact that it is a stream of {\em parallel
pairs} of readings from the two sensors.
We can write a strongly typed merge operator that produces this type,
given parallel streams of type $\texttt{Sensor1}^\star$ and
$\texttt{Sensor2}^\star$. This merge operator
is deterministic (indeed, all well-typed \core{} programs are, as we
show in Section~\ref{sec:determinism}); operationally, it waits for events to
arrive on both of its input streams before sending them along as a
pair.

\section{Kernel \core{}}
\label{sec:core-lang}
In this section, we define the most important constructors of
stream types and
the corresponding features of the term language; these form the ``kernel'' of
the \core{} calculus.
The rest of the types and terms of Full \core{} will layered on bit by
bit in Section~\ref{sec:full-core-lang}.

The \emph{concatenation}
constructor $\cdot$ describes
streams that \emph{vary} over time: if $s$ and $t$ are stream
types, then $s \cdot t$ describes a stream on which all the elements of
$s$ arrive first, followed by the elements of $t$. A producer of a stream of type
$s \cdot t$ must first produce a stream of type $s$ and then a stream of type $t$,
while a consumer can assume that the incoming data will first consist of data of type $s$
and then of type $t$. The transition point between the $s$ and $t$ parts is handled
automatically by \core{}'s semantics: the underlying data of a stream of type $s \cdot t$ includes
a \emph{punctuation marker} \cite{tuckerPunc}
indicating the cross-over. One consequence of this is that, unlike Kleene Star
for regular languages, streams of type $s^\star$ are distinguishable from
streams of type $s^\star \cdot s^\star$ because a transformer accepting the
latter can see when its input crosses from the first $s^\star$ to the second.

On the other hand, the \emph{parallel} stream type $s \| t$ describes a stream
with two parallel substreams of types $s$ and $t$.  Semantically, the $s$ and
$t$ components are produced and consumed independently: a transformer that
produces $s \| t$ may send out an entire $s$ first and then a $t$, or an entire
$t$ and then the $s$, or any interleaving of the two.  Conversely, a transformer
that accepts $s \| t$ must handle all these possibilities uniformly by
processing the $s$ and $t$ parts independently.  To enable this, each element in
the parallel stream is tagged to indicate which substream it belongs to.  This
means that streams of type $s \| t$ are isomorphic, but not identical, to
streams of type $t \| s$, and similarly
$\texttt{Int}^\star \| \texttt{Int}^\star$ is not the same as $\texttt{Int}^\star$.

Parallel types can be combined with concatenation types in interesting ways.
For example, a stream of type $(s \| t) \cdot r$ consists of a stream
of interleaved items from $s$ and $t$, followed (once all the $s$'s and
$t$'s have arrived) by a
stream of type $r$.  By contrast, a stream of type
$(s \cdot t) \| (s' \cdot t')$ has two interleaved components, one
a stream described by $s$ followed by a stream described by $t$ and the
other an $s'$ followed by a $t'$.  The fact that the parallel type is on the outside
means that the change-over points from $s$ to $t$ and $s'$ to $t'$ are
completely independent.

The base type $\onet$ describes
a stream containing just one data item, itself a unit
value.  The other base type is $\epst$,
the type of the empty stream
containing no data; it is the unit for both the $\cdot$ and $\|$
constructors---i.e.,
$s \cdot \epst$, $\epst \cdot s$, $\epst \| s$ and $s \| \epst$
are all equivalent to $s$, in the sense that there are
\core{} transformers that convert between them.

In summary, the Kernel \core{} stream types are given by the
grammar on the top left in Figure~\ref{fig:core-typing-rules}.
%
(So far, these types can only describe streams of
fixed, finite size.  In
Section~\ref{sec:star} we will enrich the kernel type system with
unbounded streams via the Kleene star type $s^\star$.)

What about terms? Recall that our goal is to develop a language of core terms $e$, typed by
stream types, where well-typed terms $x : s \vdash e : t$ are interpreted as
stream transformers accepting a stream described by $s$ and
producing one described by $t$.
A term $e$ runs by accepting some inputs as
described by $s$, producing some outputs as described by $t$ and then stepping
to a new term $e'$ with an updated type, ready to accept the rest of the input and produce the rest of the output. This process happens reactively: output
is only produced when an input arrives. The formal semantics of \core{} are
described in Section~\ref{subsec:incremental}.

To represent stream transformers
with multiple parallel and sequential inputs,
we draw upon results from proof theory for insight.  Both the
types $s \cdot t$ and $s \| t$ are \emph{product types}, in the sense that
a stream of either of these types contains both the data of a stream of type $s$
and a stream of type $t$---although the temporal structure differs between the two.
A standard observation from proof theory is that, in situations where
a logic or type theory includes two products, the corresponding typing
judgment requires a context with two {\em context formers}.%
\footnote{%
Such \emph{bunched} contexts were first introduced in
the logic of Bunched Implication~\cite{o1999logic}, the basis of
modern separation logic~\cite{reynolds2002separation}. Our bunched contexts differ from those of BI by
the choice of structural rules: our substructural type former is affine ordered, while the BI one is linear.
}

The first context former, written with a comma
($\commactx{\Gamma}{\Delta}$), describes inputs to a transformer
arriving in parallel, one component
structured according to $\Gamma$ and the other according to
$\Delta$. The second context former, written with a semicolon
($\semicctx{\Gamma}{\Delta}$) describes inputs that will first arrive from
the environment according
to $\Gamma$, then according to $\Delta$.

These interpretations are enforced by restricting the ways that these contexts can be manipulated using \emph{structural rules}.
Comma contexts can be manipulated in all the ways standard contexts can: they can be
reordered---from $\commactx{\Gamma}{\Delta}$ to $\commactx{\Delta}{\Gamma}$
--- duplicated, and dropped.
Semicolon contexts, on the other hand, are ordered and affine: a context $\semicctx{\Gamma}{\Delta}$ cannot be freely rewritten to
a context $\semicctx{\Delta}{\Gamma}$, and we cannot duplicate a context $\Gamma$ into one like $\semicctx{\Gamma}{\Gamma}$.
These restrictions enforce the meaning of $\semicctx{\Gamma}{\Delta}$ as data arriving according to $\Gamma$ and then $\Delta$:
to exchange them would be to allow a consumer to assume that the data is sent in the opposite order, and to duplicate is to
assume that the data input will be replayed.

In summary, our type system is {\em substructural}. The semicolon context former
is ordered (no exchange) and affine (no contraction), while the comma
context former is fully structural. Both context formers are associative, with
the empty context serving as a unit for each. The full list of structural rules can be found in Appendix~\ref{app:ctxsub}.
Formally,
stream contexts are drawn from the grammar at the top right of
Figure~\ref{fig:core-typing-rules}.





\subsection{Kernel Typing Rules}

\begin{figure}
    \begin{mathpar}
       \hspace*{.7in}
       s,t,r := 1 \mid \varepsilon \mid s \cdot t \mid s \| t
       \hspace*{.6in} %
       \Gamma ::= \cdot \mid
         {\Gamma},{\Gamma}
         \mid {\Gamma};{\Gamma} \mid x : s

     \vspace*{2ex}

     \ENDOFLINE

      \INFERRULE[T-Par-R]{
          \Gamma \vdash e_1 : s\\
          \Gamma \vdash e_2 : t
      }{
          \Gamma \vdash \parpairTm{e_1}{e_2} : s \| t
      }

      \INFERRULE[T-Par-L]{
        \Gamma(\commactx{x : s}{y : t}) \vdash e : r
      }{
        \Gamma(z : s \| t) \vdash \letparTm{x}{y}{z}{e} : r
      }

      \ENDOFLINE

      \INFERRULE[T-Cat-R]{
          \Gamma \vdash e_1 : s\\
          \Delta \vdash e_2 : t
      }{
        \semicctx{\Gamma}{\Delta} \vdash \catpairTm{e_1}{e_2} : s \cdot t
      }

      \INFERRULE[T-Cat-L]{
          \Gamma(\semicctx{x : s}{y : t}) \vdash e : r
      }{
          \Gamma(z : s \cdot t) \vdash \letcatTm{t}{x}{y}{z}{e} : r
      }

      \ENDOFLINE

      \INFERRULE[T-Eps-R]{ }{
        \Gamma \vdash \epsTm : \varepsilon
      }

      \INFERRULE[T-One-R]{ }{
        \Gamma \vdash \oneTm : 1
      }

      \INFERRULE[T-Var]{ }{\Gamma(x:s) \vdash x : s}

      \ENDOFLINE

      \INFERRULE[T-SubCtx]{
        \Gamma \leq \Gamma'\\
        \Gamma' \vdash e : s
      }{
        \Gamma \vdash e : s
      }

  \end{mathpar}
  \BEFORECAPTIONSPACE
  \caption{Kernel \core syntax and typing rules}
  \label{fig:core-typing-rules}
\end{figure}

The typing rules for Kernel \core{} are collected in
Figure~\ref{fig:core-typing-rules}. The typing judgment,
written $\Gamma \vdash e : s$, says that $e$ is a stream transformer
from a collection of streams structured like $\Gamma$, to a single stream structured like $s$.
\ifextended
\footnote{
To simplify aspects of the semantics, the typing rules are presented in
sequent-calculus style, rather than
the more familiar natural-deduction style.
The main difference is that sequent calculi have left and right rules---describing how to eliminate
a connective when it appears in the context or in the result type---in place of
natural deduction's introduction and elimination rules.
}
\fi

The most straightforward typing rule is the right rule for parallel
(\ruleName{T-Par-R}). It says that, from a context $\Gamma$, we can
produce a stream of type $s \| t$
by producing $s$ and $t$ independently from $\Gamma$, using transformers $e_1$ and $e_2$. We write
the combined transformer as a ``parallel pair''
$\parpairTm{e_1}{e_2}$. Semantically, it
operates by copying the inputs arriving on $\Gamma$,
passing the copies to $e_1$ and $e_2$,
and pairing up the outputs into a parallel stream.
Similarly, the \ruleName{T-Cat-R} rule is used to produce a stream of type $s \cdot t$.  It
uses a similar pairing syntax---if term $e_1$ has type $s$ and $e_2$
has type $t$, then the ``sequential pair'' $\catpairTm{e_1}{e_2}$ has type $s \cdot t$---but the context in the conclusion differs. Since $e_1$ needs
to be run before $e_2$, the part of the input stream that $e_1$
depends on ($\Gamma$) must
arrive before the part that $e_2$ depends on ($\Delta$). Semantically,
this term will operate
by accepting data from the $\Gamma$ part of the context and running
$e_1$; once the $\Gamma$ part is used up
and the $\Delta$ part starts to arrive, it will switch to running $e_2$.

These right rules describe how to {\em produce} a stream of parallel
or concatenation type.  The corresponding left
rules describe how to {\em use} a variable of one of these types appearing
somewhere in the context.
Syntactically, the terms take the form of let-bindings that
deconstruct variables of type $s \cdot t$
(or $s \| t$) as pairs of variables of type $s$ and $t$,
connected by $;$ (or $,$).
We use the standard BI notation $\Gamma(-)$ for a context with a
hole, and
$\Gamma(\Delta)$ when this hole has been filled with the context
$\Delta$. In particular, $\Gamma(x : s)$ is a context with a distinguished
variable $x$.

The \ruleName{T-Par-L} rule says that if $z$ is a variable of type $s \| t$
somewhere in the context, we can replace its binding with with a pair
of bindings for variables $x$ and $y$
of types $s$ and $t$ and use these in a continuation term $e$ of final
type $r$.  When
typing $e$, the variables $x$ and $y$ appear in the same
position as the original variable $z$, but separated by a
comma---i.e., $x$ and $y$
are assumed to arrive in parallel.  Similarly, the rule
\ruleName{T-Cat-L} says that, if a
variable $z$ of type $s \cdot t$ appears somewhere in the context, it can be
let-bound to a pair of variables $x$ and $y$ of types $s$ and $t$ that are again
used in the continuation $e$.  This
time, though, $x$ and $y$ are separated by a semicolon---i.e., the
substream bound to $x$ will arrive and be processed first, followed by
the substream bound to $y$.

\ruleName{T-Eps-R} and \ruleName{T-One-R} are the right rules for the two base types,
witnessed by the terms $\epsTm$ and $\oneTm$.  Semantically, $\epsTm$ does
nothing: it accepts inputs on $\Gamma$ and produces no output. On the other hand,
$\oneTm$ emits a unit value as soon as it receives its first
input and never emits anything else.

The variable rule (\ruleName{T-Var}) says that, if $x : s$ is a variable
somewhere in the context, then
we can simply send it along the output stream. Semantically, it works by dropping everything
in the context except for the $s$-typed data for $x$, which it forwards along.

The rule \ruleName{T-SubCtx} bundles together all of the
structural rules as a
subtyping relation on contexts. For example, the weakening rule for
semicolon contexts is written, $\semicctx{\Gamma}{\Delta} \leq \Gamma$
and the comma exchange rule is $\commactx{\Gamma}{\Delta} \leq
\commactx{\Delta}{\Gamma}$.


\subsubsection*{Examples and Non-Examples}
To show the typing rules in action, here are two small examples
of transformers written in Kernel \core{}, as well as three examples of programs which are \emph{rejected}
by the type system.
The first example is a simple ``parallel-swap''
transformer, which accepts a stream $z$ of type $s \| t$, and outputs
a stream of type $t \| s$, swapping the ``positions'' of the parallel substreams:
$$z : s \| t \vdash \letparTm{x}{y}{z}{\parpairTm{y}{x}} : t \| s$$
It works by splitting the variable $z : s \| t$ into variables $x : s$
and $y : t$ and yielding a parallel pair with the order reversed.

\begin{wrapfigure}{R}{6.8cm}
\vspace*{-2ex}
\hspace*{-1cm}
\begin{minipage}{8cm}
\begin{mathpar}
  \INFERRULE[T-Cat-L]{
    \INFERRULE[T-Cat-R]{
      \INFERRULE[T-Var]{\lightning}{x : s \vdash y : t}\\
      \INFERRULE[T-Var]{\lightning}{y : t \vdash x : s}
    }{\semicctx{x : s}{y : t} \vdash \catpairTm{y}{x} : t \cdot s}
  }{
    z : s\cdot t \vdash \letcatTm{s}{x}{y}{z}{\catpairTm{y}{x}} : t \cdot s
  }
\end{mathpar}
\end{minipage}
\vspace*{-3ex}
\end{wrapfigure}
The first and most important \emph{non}-example is the lack of a corresponding
``cat-swap'' term, which would accept a stream $z$ of type $s \cdot t$, and
produce a stream of type $t \cdot s$. This program is undesirable because it is not
implementable without a space leak\jwc{Cite}. Implementing it requires the entire stream of type $s$
to be saved in memory to emit it after the stream of type $t$.%
\footnote{A program with this behavior is
implementable in \core{}, but requires a special program construct---see Section~\ref{sec:wait}---ensuring that leaky programs like
this one cannot be written accidentally.}
The natural term for this program would be $\letcatTm{s}{x}{y}{z}{\catpairTm{y}{x}}$,
but this does not typecheck, as attempting to build a derivation gets
stuck.
Applying the syntax-directed rules gets us to a point where we must show that $y$ has type $t$ in a context
with only $x$, and that $x$ has type $s$ in a context with only $y$. This is because
the \ruleName{T-Cat-R} splits the context, but the variables are listed in the opposite order than we'd need.
The lack of a structural rule to let us permute the $x$ and $y$ in the context means that there is nothing to do here, and so a typechecker will reject this program.

The second example is a ``broadcast'' transformer, which takes a variable $x : s$ and outputs a stream of type $s \| s$, duplicating the variable of type
$s$, and sending it out to two parallel outputs:
$x : s \vdash \parpairTm{x}{x} : s \| s$.

Another non-example is the ``replay'' transformer, which would take a variable $x : s$ and produce a stream $s \cdot s$
which repeats the input stream twice. This is the concat type equivalent of the broadcast transformer, but it is undesirable
for the same reason as the cat-swap program: it would require saving the entire incoming stream of type $s$ in order to replay it.
This time, the failure of the natural term $\catpairTm{x}{x}$ to typecheck is down to a lack of contraction rule for semicolon contexts:
we are not permitted to turn a context $x : s$ into a context $\semicctx{x : s}{x : s}$.

The last non-example is a ``tie-breaking'' transformer, which would
take a stream $z : \texttt{Int} \| \texttt{Int}$
of two ints in parallel and produce a stream of type $\texttt{Int}$,
forwarding along the \texttt{Int} that arrived first.
This program (like others that require inspecting the interleaving of
data in a stream of type $s \| t$) is
not expressible. In Section~\ref{sec:determinism}, we'll prove that a well-typed program cannot implement this behavior.

\subsection{Prefixes and Semantics}
\label{subsec:incremental}
Next we define the semantics of Kernel
\core{}.
The natural notion of ``values'' in this semantics is finite
prefixes of streams:
the meaning of a well-typed term $\Gamma \vdash e : s$ is
a function that accepts an environment mapping variables in $\Gamma$
to prefixes of streams and produces a prefix of a stream of type $s$.

Because the streams that \core{} programs operate over are more structured than
traditional homogeneous streams---including cross-over punctuation in streams
of type $s \cdot t$ and disambiguating tags in streams of type $s \| t$---the
prefixes are also more structured.  A prefix in \core{} is not a
simple sequence of data items, but a
structured value whose possible shapes are determined by its type.

\begin{figure}
  \begin{mathpar}
    \inferrule{ }{\prefixHasType{\epsp}{\epst}}

    \inferrule{ }{\prefixHasType{\onepA}{\onet}}

    \inferrule{ }{\prefixHasType{\onepB}{\onet}}

    \ENDOFLINE

    \inferrule{\prefixHasType{p}{s} \\\\ \prefixHasType{p'}{t}}{\prefixHasType{\parp{p}{p'}}{s \| t}}

    \hspace*{-2em}

    \inferrule{\prefixHasType{p}{s}}{\prefixHasType{\catpA{p}}{s \cdot t}}

    \inferrule{\prefixHasType{p'}{t} \\\\ \prefixHasType{p}{s} \\ \isMaximal{p}}{\prefixHasType{\catpB{p}{p'}}{s \cdot t}}

  \end{mathpar}
  \BEFORECAPTIONSPACE
  \caption{Prefixes for Types}
  \label{fig:prefix-rules}
\vspace*{-2ex}
\end{figure}

For example, there are two prefixes of a stream of type
$\onet$: the empty prefix, written $\onepA$, and the prefix
containing the single element (), written $\onepB$.  Similarly, the unique
stream of type $\epst$ has a single prefix, the empty prefix, which we write
$\epsp$.

What about $s \| t$?  A parallel stream of
type $s \| t$ is conceptually a pair of independent streams of type $s$ and $t$,
so
a prefix of a parallel stream should be a pair $\parp{p_1}{p_2}$, where $p_1$ is a
prefix of a stream of type $s$, and $p_2$ is a prefix of a stream of type $t$.
Crucially, this definition encodes no information
about any interleaving of $p_1$ and $p_2$:
the prefix
$\parp{p_1}{p_2}$ equally represents a situation where all of $p_1$ arrived first and
then all of $p_2$, one where $p_2$ arrived before $p_1$, and many others where
the elements of $p_1$ and $p_2$ arrived in some interleaved order.
In a nutshell, this definition is what guarantees deterministic processing.
By representing all possible interleavings using the same prefix
value, we ensure that a transformer that
operates on these values cannot possibly depend on ordering information that isn't
present in the type.

Finally, let's consider the prefixes of streams of type $s \cdot
t$.  One case is a
prefix that only includes data from $s$ because it cuts off before reaching the point
where the
$s\cdot t$ stream stops carrying elements of $s$ and starts on $t$. We write such a
prefix as $\catpA{p}$, with $p$ a prefix of type $s$. The other case
is where the prefix does include the crossover point---i.e., it consists
of a ``completed'' prefix of $s$ plus a prefix of $t$.
We write this as $\catpB{p}{p'}$, with $p$ a prefix of $s$ and $p'$
a prefix of $t$. The requirement that $p$ be completed is formalized by the judgment $\isMaximal{p}$,
which ensures that the prefix $p$ describes an entire completed stream
(see Appendix~\ref{app:maximal}).
We formalize all these possibilities as a judgment
$\prefixHasType{p}{s}$, shown in Figure~\ref{fig:prefix-rules}.

Every type $s$ has a distinguished \emph{empty prefix},
written $\emp{s}$ and defined by straightforward recursion on $s$
(see Appendix~\ref{app:emp}).
We lift the idea of prefixes from types to contexts, defining an
\emph{environment} $\eta$ for a
context $\Gamma$ to be a mapping from the variables $x : s$ in $\Gamma$ to
prefixes of the corresponding types $s$.  We write this with a judgment $\envHasType{\eta}{\Gamma}$
(Figure~\ref{fig:env-rules}). Along with ensuring that $\eta$ has well-typed
bindings for all variables, the judgment ensures that the prefixes respect the
order structure of the context. In particular, an environment $\eta$
for a semicolon context $\semicctx{\Gamma}{\Delta}$ must assign prefixes
\emph{in order}: the prefixes for $\Gamma$, the earlier part of the context, must all be maximal before the prefixes $\Delta$ can begin.
In other words, either $\eta$ assigns maximal prefixes to every variable in $\Gamma$---which we write $\maximalOn{\eta}{\Gamma}$---or $\eta$ assigns empty prefixes to every variable in $\Delta$---which we write $\emptyOn{\eta}{\Delta}$.

\begin{figure}
  \begin{mathpar}
    \inferrule{ }{\envHasType{\eta}{\cdot}}

    \inferrule{
      \eta(x) \mapsto p\\
      \prefixHasType{p}{s}
    }{
      \envHasType{\eta}{x : s}
    }

    \inferrule{
      \envHasType{\eta}{\Gamma}\\
      \envHasType{\eta}{\Delta}\\
    }{
      \envHasType{\eta}{\commactx{\Gamma}{\Delta}}
    }

    \ENDOFLINE

    \inferrule{
      \envHasType{\eta}{\Gamma}\\
      \envHasType{\eta}{\Delta}\\
      \left(\maximalOn{\eta}{\Gamma}\right) \vee \left(\emptyOn{\eta}{\Delta}\right)
    }{
      \envHasType{\eta}{\semicctx{\Gamma}{\Delta}}
    }
  \end{mathpar}
  \BEFORECAPTIONSPACE
  \caption{Environments for Contexts}
  \label{fig:env-rules}
\vspace*{-2ex}
\end{figure}

One might worry that these structured stream prefixes might be
incompatible with a future distributed implementation atop an
existing stream processing substrate. Fortunately, they are not: by viewing a
\core{} stream as a series of single-event prefixes, each
consisting of a data item plus some extra tag bits, we recover
the traditional homogeneous view. Moreover, this wire representation
incurs only a constant overhead: the maximum
size of the tag bits on a stream element of type $s$ is bounded by the
syntactic depth of $s$ (See Appendix~\ref{app:eventsApp}).


\subsubsection*{Semantics}
We describe how well-typed \core{} terms execute with an operational semantics.
Given a well-typed term $\Gamma \vdash e : s$ and an input environment
$\envHasType{\eta}{\Gamma}$, the semantics describes how to run $e$ with $\eta$
to produce an output prefix $p$. It also describes how to produce a
``resultant'' term $e'$, whose purpose is to continue the computation once further
data on the input stream data arrives.
Formally the semantics is given by a judgment
$\prefixstep{\eta}{e}{ }{e'}{p}$, which we pronounce ``running the core term $e$
on the input environment $\eta$ yields the output prefix $p$ and
steps to $e'$.''
The rules for this judgment are gathered in
Figure~\ref{fig:prefix-sem-selected-rules} and described below; the
full set of rules for all of \core{} can be found in
Appendix~\ref{app:semantics}.

The following theorem establishes the soundness of the Kernel \core{} semantics,
formalizing the intuitive description given above:
If we run a well-typed core term $e$ on an environment $\eta$ of the context type, it will
return a prefix $p$ with the result type $s$, and step to a term $e'$ which is well typed
in context ``the rest of'' $\Gamma$ after $\eta$ and has type ``the rest of'' $s$
after $p$. The ``rest'' of a type/context after a prefix/environment is, intuitively, its {\em derivative}
with respect to the prefix/environment,
in the sense of standard Brzozowski derivatives of regular expressions
\cite{brzozowski1964derivatives} --- we make this formal in Section~\ref{sec:deriv}.
Most critically, the types of the variables in $e$ and $e'$ are different: if $x$ has type $s$ in $e$,
then $x$ has type $\delta_{\eta(x)} s$ in $e'$, having already consumed $\eta(x)$.

\begin{theorem}[Soundness of the Kernel \core{} Semantics]
  \label{thm:prefix-sem-correct-core}
  Suppose: $\Gamma \vdash e : s$ and $\envHasType{\eta}{\Gamma}$.
  Then, there are $p$ and $e'$ such that
  $\prefixstep{\eta}{e}{ }{e'}{p}$, with
  $\prefixHasType{p}{s}$ and $\deriv{\eta}{\Gamma} \vdash e' : \deriv{p}{s}$
\end{theorem}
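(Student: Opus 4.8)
The plan is to proceed by induction on the typing derivation $\Gamma \vdash e : s$. Because the operational semantics is syntax-directed on the term $e$ (each term former has exactly one applicable rule), every typing rule except \ruleName{T-SubCtx} pins down which semantic rule must fire, so the work in each case is to check that the premises of that rule are satisfiable from the inductive hypotheses and that the resulting $p$ and $e'$ discharge the two typing obligations $\prefixHasType{p}{s}$ and $\deriv{\eta}{\Gamma} \vdash e' : \deriv{p}{s}$. Throughout I lean on the derivative laws for each type former—e.g.\ that $\deriv{\parp{p_1}{p_2}}{s \| t}$ unfolds to $\deriv{p_1}{s} \| \deriv{p_2}{t}$, and the analogous laws for $\cdot$, $1$, and $\varepsilon$—which I take as established once derivatives are formalized in Section~\ref{sec:deriv}.

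The base cases are immediate. For \ruleName{T-Var} on $\Gamma(x:s)$ the output is $p = \eta(x)$; its typing $\prefixHasType{\eta(x)}{s}$ comes directly from $\envHasType{\eta}{\Gamma(x:s)}$, the resultant is again a variable, and $\deriv{\eta}{\Gamma(x:s)}$ assigns $x$ exactly the type $\deriv{\eta(x)}{s}$ by definition of the derivative of a context. For \ruleName{T-One-R} the term emits the full unit prefix $\onepB$ and steps to $\epsTm$, using $\deriv{\onepB}{1} = \varepsilon$; \ruleName{T-Eps-R} emits $\epsp$ and steps to itself. For the product-introduction rule \ruleName{T-Par-R}, the IH on $e_1$ and $e_2$ (both under the same $\Gamma$, hence the same $\eta$) yields $p_1,e_1'$ and $p_2,e_2'$; the output is $\parp{p_1}{p_2}$, the resultant is $\parpairTm{e_1'}{e_2'}$, and re-applying \ruleName{T-Par-R} in the derivative context closes the goal via the parallel derivative law. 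The elimination rules are handled by \emph{destructuring} the prefix that $\eta$ assigns to the bound variable $z$: for \ruleName{T-Par-L} it must be some $\parp{p_x}{p_y}$, and I form the rebound environment sending $x \mapsto p_x$, $y \mapsto p_y$; the key sublemma is that this environment is well-typed for the refined context $\Gamma(\commactx{x:s}{y:t})$—which requires relating maximality/emptiness of a $\parp{\cdot}{\cdot}$ prefix to that of its components when $z$ sits inside a semicolon—after which the IH applies and the resultant is re-wrapped in the same let-binding.

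The genuinely delicate cases are the concatenation rules and \ruleName{T-SubCtx}. For \ruleName{T-Cat-R} on $\semicctx{\Gamma}{\Delta}$ the environment hypothesis forces the disjunction $\maximalOn{\eta}{\Gamma} \vee \emptyOn{\eta}{\Delta}$, and I case-split on it: when $\Gamma$ is not yet maximal the input to $e_2$ is empty, so only $e_1$ runs, producing $\catpA{p_1}$ and a resultant that still carries $e_2$; once $\Gamma$ is maximal the crossover has been reached, $e_1$'s output is a maximal prefix of $s$, and $e_2$ begins, yielding $\catpB{p_1}{p_2}$. Discharging $\prefixHasType{\catpB{p_1}{p_2}}{s \cdot t}$ then hinges on the $\isMaximal{p_1}$ side-condition, which is exactly where the affine-ordered structure of the semicolon earns its keep. \ruleName{T-Cat-L} is the mirror image, splitting on the two shapes of the prefix bound to $z$: $\catpA{p}$ leaves $y$ empty (so $\emptyOn{\cdot}{y}$ holds), while $\catpB{p}{p'}$ supplies a maximal $p$ for $x$ (so $\maximalOn{\cdot}{x}$ holds), in each case certifying that the rebound environment respects the $\semicctx{x:s}{y:t}$ ordering.

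Finally, \ruleName{T-SubCtx} is the only non-syntax-directed case. Here I must \emph{coerce} the environment $\eta$ typed at $\Gamma$ into one typed at the supercontext $\Gamma'$ along $\Gamma \leq \Gamma'$, run the IH there, and transport the output back; this needs the prefix-coercion operator \texttt{coePfx} together with commutation lemmas stating that coercion commutes with both the step relation and the derivative, so that the emitted prefix and its type are unchanged by the detour through $\Gamma'$. I expect this commutation—showing that stepping and then coercing agrees with coercing and then stepping—together with the maximality bookkeeping in the \ruleName{T-Cat} cases, to be the main obstacle, since everything else reduces to assembling the IHs and invoking the corresponding derivative law.
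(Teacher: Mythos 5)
Your overall strategy---induction on the typing derivation, using syntax-directedness of the semantics, with environment-rebinding lemmas for the left rules and derivative laws per type former---is workable for the kernel (the paper instead inducts on the step derivation with an inner induction on typing, but for the kernel your orientation is fine and even delivers the existence half of the statement directly). However, there are two genuine gaps. First, in \ruleName{T-Cat-L} your ``mirror image'' description is wrong in the crossover case: when $\eta(z) = \catpB{p}{p'}$, the rule \ruleName{S-Cat-L-2} does \emph{not} re-wrap the resultant in the same let-binding. It steps to $\cutTm{x}{\sinkTm{p}}{e'[z/y]}$, a term of a different shape, and typing it requires machinery you never mention: a sink-term typing lemma (that $\sinkTm{p}$ has type $\deriv{p}{s}$ when $p$ is maximal), a renaming/substitution lemma for $e'[z/y]$, a cut/let typing rule (which is not even among the kernel rules of Figure~\ref{fig:core-typing-rules}, so the resultant lives in a larger system), and a final use of \ruleName{T-SubCtx} with $\subty{\Gamma'(z : t')}{\Gamma'(\semicctx{\cdot}{z : t'})}$. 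Second, in \ruleName{T-Cat-R} you case-split on the environment disjunction $\maximalOn{\eta}{\Gamma} \vee \emptyOn{\eta}{\Delta}$, but the semantics splits on whether $e_1$'s \emph{output} prefix is maximal, and the two do not coincide: with $e_1 = \oneTm$ the output $\onepB$ is maximal even on an entirely empty environment, so \ruleName{S-Cat-R-2} fires and $e_2$ runs---your claim that ``only $e_1$ runs'' when $\Gamma$ is not maximal is false. Bridging the two case analyses requires an auxiliary lemma, proved by its own separate induction on the semantics (the paper's Lemma~\ref{lemma:batch-sem-aux}): maximal inputs yield maximal outputs. Its contrapositive is what shows that in the \ruleName{S-Cat-R-1} case $\eta$ is not maximal on $\Gamma$, hence by the environment typing invariant $\emptyOn{\eta}{\Delta}$, hence $\deriv{\eta}{\Delta} = \Delta$, which is exactly what lets you re-type $\catpairTm{e_1'}{e_2}$ with $e_2$ in its original context. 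You assert the needed facts in passing but conflate input and output maximality, and the lemma doing the real work is absent from your plan.

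Conversely, the case you flag as ``the main obstacle,'' \ruleName{T-SubCtx}, is far easier than you expect, and your \texttt{coePfx}-plus-commutation machinery is unnecessary. Environments are maps from variables to prefixes, and context subtyping only restructures bunches over the same variables, so the \emph{same} $\eta$ is well-typed at the supercontext: one lemma shows $\envHasType{\eta}{\Gamma}$ and $\subty{\Gamma}{\Delta}$ give $\envHasType{\eta}{\Delta}$ with no coercion of prefixes at all. A second lemma, that derivatives preserve subtyping ($\subty{\deriv{\eta}{\Gamma}}{\deriv{\eta}{\Delta}}$), lets you re-apply \ruleName{T-SubCtx} to the resultant. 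There is also nothing to ``transport back'' on the output side, since subtyping acts only on the context and the result type $s$ is untouched, so no commutation of coercion with the step relation ever arises.
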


Appendix~\ref{app:incr-sem-thms} presents the proof of the soundness
theorem for Full \core{} (see Section~\ref{sec:full-core-lang}).

In light of this theorem, our operational semantics can be thought
of as defining a reactive state machine. Well typed terms $\Gamma \vdash e : s$ are the states, while
the semantics judgment defines the  transition function: when new inputs $\eta$
arrive, we step the semantics to produce an output prefix $p$, and step to a new
state $\deriv{\eta}{\Gamma} \vdash e' : \deriv{p}{s}$. This form of semantics --- a state machine with terms themselves as states, typed by derivatives ---
predates our work, having been pioneered by the Esterel programming language \cite{esterel}.

\begin{figure}
  \begin{mathpar}

        \INFERRULE[S-Var]{
            \eta(x) \mapsto p
        }{
            \prefixstep{\eta}{x}{ }{x}{p}
        }

        \ENDOFLINE

        \INFERRULE[S-Par-R]{
            \prefixstep{\eta}{e_1}{}{e_1'}{p_1}\\
            \prefixstep{\eta}{e_2}{}{e_2'}{p_2}\\
        }{
            \prefixstep{\eta}{\parpairTm{e_1}{e_2}}{ }{\parpairTm{e_1'}{e_2'}}{\parp{p_1}{p_2}}
        }

        \ENDOFLINE

        \INFERRULE[S-Par-L]{
            \eta(z) \mapsto \parp{p_1}{p_2}\\
            \prefixstep{\eta[x \mapsto p_1,y\mapsto p_2]}{e}{ }{e'}{p'}
        }{
            \prefixstep{\eta}{\letparTm{x}{y}{z}{e}}{ }{\letparTm{x}{y}{z}{e'}}{p'}
        }

        \ENDOFLINE

        \INFERRULE[S-Cat-R-1]{
            \prefixstep{\eta}{e_1}{ }{e_1'}{p}\\
            \neg\left(\isMaximal{p}\right)
        }{
            \prefixstep{\eta}{\catpairTm{e_1}{e_2}}{ }{\catpairTm{e_1'}{e_2}}{\catpA{p}}
        }

        \ENDOFLINE

        \INFERRULE[S-Cat-R-2]{
            \prefixstep{\eta}{e_1}{ }{e_1'}{p}\\
            \isMaximal{p}\\
            \prefixstep{\eta}{e_2}{ }{e_2'}{p'}
        }{
            \prefixstep{\eta}{\catpairTm{e_1}{e_2}}{ }{e_2'}{\catpB{p}{p'}}
        }

        \ENDOFLINE

        \INFERRULE[S-Cat-L-1]{
          \eta(z) \mapsto \catpA{p}\\
          \prefixstep{\eta[x \mapsto p, y \mapsto \emp{t}]}{e}{ }{e'}{p'}
      }{
          \prefixstep{\eta}{\letcatTm{t}{x}{y}{z}{e}}{ }{\letcatTm{t}{x}{y}{z}{e'}}{p'}
      }

        \ENDOFLINE

      \INFERRULE[S-Cat-L-2]{
          \eta(z) \mapsto \catpB{p}{p'}\\
          \prefixstep{\eta[x \mapsto p,y\mapsto p']}{e}{ }{e'}{p''}
      }{
          \prefixstep{\eta}{\letcatTm{t}{x}{y}{z}{e}}{ }{\cutTm{x}{\sinkTm{p}}{e'[z/y]}}{p''}
      }

        \ENDOFLINE

        \INFERRULE[S-Eps-R]{ }{
          \prefixstep{\eta}{\epsTm}{ }{\epsTm}{\epsp}
        }

        \ENDOFLINE

        \INFERRULE[S-One-R]{ }{
          \prefixstep{\eta}{\oneTm}{ }{\epsTm}{\onepB}
        }

        \ENDOFLINE

  \end{mathpar}
  \BEFORECAPTIONSPACE
  \caption{Incremental semantics of Kernel \core{}}
  \label{fig:prefix-sem-selected-rules}
\end{figure}

\subsubsection*{Semantics of the Right Rules}
\cutifneeded{The right rules for
parallel and concatenation are the simplest to understand.}
For \ruleName{S-Par-R}, we accept an environment $\eta$ and use it to
run the component terms $e_1$ and $e_2$, independently producing outputs
$p_1$ and $p_2$ and stepping to new terms $e_1'$ and $e_2'$.
The pair term $\parpairTm{e_1}{e_2}$ then steps to
$\parpairTm{e_1'}{e_2'}$ and produces the output $\parp{p_1}{p_2}$.

There are two rules, \ruleName{S-Cat-R-1} and \ruleName{S-Cat-R-2}, for running the concatenation pair $\catpairTm{e_1}{e_2} : s \cdot
t$. In either case, we begin by running $e_1$ with the environment $\eta$,
producing a prefix $p$ and term $e_1'$. If $p$ is not maximal, we stop
there:
future inputs will allow the first component to produce the rest of $s$, so it is
not yet time to start running $e_2$ to produce $t$. This case is
handled by \ruleName{S-Cat-R-1},
where the resulting term is $\catpairTm{e_1'}{e_2}$ and the output prefix is $\catpA{p}$.

On the other hand, if $p$ {is} maximal, then we run $e_2$,
which steps to $e_2'$ and produces a prefix $p'$ using rule \ruleName{S-Cat-R-2}, where the entire term then outputs $\catpB{p}{p'}$,
and steps to $e_2'$. Note that the pair is eliminated in the process: we step from
$\catpairTm{e_1}{e_2}$ to just $e_2'$. This is because we are done producing the $s$ part of the $s \cdot t$,
and so a subsequent step of evaluation only has to run $e_2'$ to produce the rest of the $t$.

\subsubsection*{Semantics of Variables}
The variable semantics S-Var is a simple variable lookup. We look up the prefix bound to the variable $x$
in the environment, return it, and then step to $x$ itself.

\subsubsection*{Semantics of Left Rules}
The semantics of the left rules for
concatenation and parallel are similar, both accepting an environment $\eta$ with a binding for $z : s
\otimes t$, where $\otimes $ is one of the two products, binding variables $x$ and
$y$ of type $s$ and $t$ to the two components of the product, and using the updated environment to run
the continuation term.

In the case of the semantics of the left rule for parallel (\ruleName{S-Par-L}),
looking up $z$ of type $s \| t$ will always yield a prefix $\parp{p_1}{p_2}$.
The rule binds $p_1$ to $x$ and $p_2$ to $y$ and runs the continuation term,
stepping to $e'$ and producing the output prefix $p$. Then the whole term
steps to $\letparTm{x}{y}{z}{e'}$
and produces $p$.

The left rule for concatenation has two cases, depending on what kind of prefix comes back from the
lookup for $z$. If the lookup yields is $\catpA{p}$, the rule
\ruleName{S-Cat-L-1} applies. Since no data for $y$ has arrived, we
bind $y$ to $\emp{t}$, the empty prefix of type $t$, and run the continuation \footnote{This need to compute $\emp{t}$ at runtime to bind to $y$ is the reason that the term for \texttt{T-Cat-L}, $\letcatTm{t}{x}{y}{z}{e}$, includes a $t$ in the syntax. In Section~\ref{sec:full-core-lang}, the case analysis expressions for star types and sum types will have similar annotations for the same reason.}.
If the result comes back as $\catpB{p}{p'}$, then the rule \ruleName{S-Cat-L-2}
applies, so we run the continuation with $p$ and $p'$ bound to $x$ and
$y$.

Both rules output the prefix resulting from running the continuation, but they
step to different resulting terms.
If $\eta(z) = \catpA{p}$, then the resulting term must be another use of Cat-L:
the variable $z$ still expects to get some more of the first component of the
concatenation, and then the second component. If $\eta(z) = \catpB{p}{p'}$ on
the other hand, the $z$ stream has crossed over to the second part. In this
case, we close over the (now not-needed) $x$ variable in $e'$, and connect $z$
to the $y$ input of $e'$ by substituting $y$ for $z$.

\subsubsection{Derivatives}
\label{sec:deriv}
When $\prefixHasType{p}{s}$,
we write $\deriv{p}{s}$ for the derivative \cite{brzozowski1964derivatives} of $s$ by $p$: the type of streams that result after a prefix
of type $p$ has been ``chopped off'' the beginning of a stream of type $s$.
Because this operation is partial---$\deriv{p}{s}$ is only defined when $\prefixHasType{p}{s}$---we formally define this as a
a 3-place relation, written as $\derivrel{p}{s}{s'}$ and
pronounced as ``the
derivative of $s$ with respect to $p$ is $s'$'' (see
Figure~\ref{fig:deriv-rules}).

The derivative of the type $\onet$ with respect to the empty prefix
$\onepA$ is
$\onet$ (the rest of the stream is the entire stream), and its derivative
with respect to the full prefix $\onepB$ is $\epst$ (there is no more
stream left
after the unit element has arrived). For parallel, the derivative is taken component-wise.
The interesting cases are those for the concatenation type. If the prefix
has the form $\catpA{p}$, the derivative $\deriv{\catpA{p}}{s \cdot t}$ is
$\left(\deriv{p}{s}\right) \cdot t$, i.e., some of the $s$ has gone
by but not all, and once it does we
still expect $t$ to come after it. On the other hand, if the prefix
has the form $\catpB{p}{p'}$, the derivative $\deriv{\catpB{p}{p'}}{s
  \cdot t}$ is just $\deriv{p'}{t}$, i.e.,
the $s$ component is complete, and the rest of the stream is just the part of $t$ after $p'$.

\begin{figure}
  \begin{mathpar}
      \inferrule{ }{\derivrel{\epsp}{\epst}{\epst}}

      \inferrule{ }{\derivrel{\onepA}{\onet}{\onet}}

      \inferrule{ }{\derivrel{\onepB}{\onet}{\epst}}

      \inferrule{\derivrel{p}{s}{s'}}{\derivrel{\catpA{p}}{s \cdot t}{s' \cdot t}}

      \ENDOFLINE

      \inferrule{\derivrel{p'}{t}{t'}}{\derivrel{\catpB{p}{'p}}{s \cdot t}{t'}}

      \inferrule{\derivrel{p}{s}{s'} \\ \derivrel{p'}{t}{t'}}{\derivrel{\parp{p}{p'}}{s \| t}{s' \| t'}}
  \end{mathpar}
  \BEFORECAPTIONSPACE
  \caption{Derivatives}
  \label{fig:deriv-rules}
\end{figure}

This definition gets lifted to contexts and environments pointwise: if $x : s$ is a variable in $\Gamma$, the derivative of $\deriv{\eta}{\Gamma}$
has $x : \deriv{\eta(x)}{s}$ in the same location.

\subsection{The Homomorphism Property and Determinism}
The semantics is designed to run a stream
transformer on an ``input chunk'' of any size, from individual input
events one at a time all the way up to the entire stream at once.
The cost of this flexibility is that it raises the question of {\em
  coherence}---i.e., whether
we are guaranteed to arrive at the {\em same} final output depending on how we
carve up a transformer's input into a series of prefixes.
Fortunately, this is indeed guaranteed.

Coherence is a corollary of our main technical result, a {\em
  homomorphism theorem} that
says running a term $e$ on an environment $\eta$ and then running the resulting
term $e'$ on an environment $\eta'$ of appropriate type produces the same
end result as running $e$ on the combined environment.

\begin{theorem}[Homomorphism Theorem]
  \label{thm:hom-prop}
  Suppose
    (1) $\Gamma \vdash e : s$,
    (2) $\envHasType{\eta}{\Gamma}$,
    (3) $\envHasType{\eta'}{\deriv{\eta}{\Gamma}}$,
    (4) $\prefixHasType{p}{s}$,
    (5) $\prefixHasType{p'}{\deriv{p}{s}}$,
    (6) $\prefixstep{\eta}{e}{ }{e'}{p}$, and
    (7) $\prefixstep{\eta'}{e'}{ }{e''}{p'}$.
  Then, if $\prefixstep{\prefixConcat{\eta}{\eta'}}{e}{ }{e'''}{p''}$,
  we have $p'' = \prefixConcat{p}{p'}$, and $e''' = e'$
\end{theorem}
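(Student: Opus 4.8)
The plan is to prove both conclusions simultaneously by induction on the typing derivation $\Gamma \vdash e : s$, using the fact that the step judgment $\prefixstep{\eta}{e}{}{e'}{p}$ is syntax-directed on $e$: in each case the applicable step rule is fixed by $e$ together with the shape of the relevant prefixes in the environment, and the residual it produces is assembled from the residuals of the immediate subterms. One remark on the residual equality first. Running $e$ on the combined input $\prefixConcat{\eta}{\eta'}$ must step to a term poised to consume whatever arrives \emph{after} $\prefixConcat{\eta}{\eta'}$, which is exactly the term $e''$ left by the two-stage run of hypotheses (6) and (7); so the equality I establish is $e''' = e''$. (As literally printed the conclusion reads $e''' = e'$, but this cannot hold in general: with $e = \catpairTm{e_1}{e_2}$ where $e_1$ is non-maximal on $\eta$ but maximal on $\prefixConcat{\eta}{\eta'}$, the staged run leaves $e' = \catpairTm{e_1'}{e_2}$ while the combined run steps past the pair to a residual of $e_2$. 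I therefore read the intended conclusion as $e''' = e''$.)

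The structural and base cases are routine once a few commutation facts about concatenation are in hand: prefix concatenation distributes over the prefix constructors (e.g.\ $\parp{p_1}{p_2} \cdot \parp{q_1}{q_2} = \parp{\prefixConcat{p_1}{q_1}}{\prefixConcat{p_2}{q_2}}$ and $\catpA{p} \cdot \catpB{q}{r} = \catpB{\prefixConcat{p}{q}}{r}$); environment concatenation acts pointwise and commutes both with restriction to a subcontext and with the variable-binding updates performed by the left rules; and the empty prefix is a left unit for concatenation. Granting these, \ruleName{S-Var}, \ruleName{S-Eps-R}, \ruleName{S-One-R}, \ruleName{S-Par-R}, and \ruleName{S-Par-L} follow directly by applying the induction hypothesis to the subterms and matching the assembled prefixes and residuals. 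For \ruleName{S-One-R} one checks that $\oneTm$ fires on the first chunk and that $\prefixConcat{\onepB}{\epsp} = \onepB$, so the single emission is neither lost nor duplicated.

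The crux is \ruleName{T-Cat-R}, $e = \catpairTm{e_1}{e_2}$, where one case-splits on whether $e_1$'s output is maximal on $\eta$ and on $\prefixConcat{\eta}{\eta'}$. The only delicate subcase is the crossover: $e_1$ is non-maximal on $\eta$ (so the staged run uses \ruleName{S-Cat-R-1} and never touches $e_2$) but maximal on $\prefixConcat{\eta}{\eta'}$ (so the combined run uses \ruleName{S-Cat-R-2} and \emph{does} run $e_2$, on the $\Delta$-part of $\prefixConcat{\eta}{\eta'}$). To match this against the stage-two run of $e_2$ on $\eta'$, I need the $\Delta$-part of $\eta$ to be empty. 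Here the design of the semantics pays off: a maximality-preservation lemma (a maximal environment steps to a maximal prefix, proved by a parallel induction) gives, contrapositively, that $e_1$ non-maximal forces $\eta$ to be non-maximal on $\Gamma$; well-formedness of the semicolon environment $\semicctx{\Gamma}{\Delta}$ then forces $\emptyOn{\eta}{\Delta}$; and since the empty prefix is a left unit, the $\Delta$-part of $\prefixConcat{\eta}{\eta'}$ equals the $\Delta$-part of $\eta'$, so the combined run of $e_2$ coincides with its stage-two run, yielding $e''' = e''$. The remaining subcases (both stages non-maximal, or $e_1$ already maximal on $\eta$) follow from the induction hypothesis and the distribution identities above.

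The analogous crossover for \ruleName{T-Cat-L} is, I expect, the main obstacle, because there the residual changes shape: \ruleName{S-Cat-L-1} (when $\eta(z) = \catpA{p}$) leaves a \ruleName{T-Cat-L} redex, whereas \ruleName{S-Cat-L-2} (when $(\prefixConcat{\eta}{\eta'})(z) = \catpB{p}{p'}$) leaves $\cutTm{x}{\sinkTm{p}}{e'[z/y]}$, with both a substitution and a $\sinkTm{}$ term. Showing the two-stage residual and the combined residual coincide requires reasoning about how $\sinkTm{}$ and the renaming $[z/y]$ interact with a further step and with concatenation; I expect this bookkeeping, rather than any conceptual difficulty, to be the hardest part. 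Finally, \ruleName{T-SubCtx} is dispatched with auxiliary lemmas stating that the context coercions commute with prefix and environment concatenation and with stepping, reducing the case to the induction hypothesis at the supertype.
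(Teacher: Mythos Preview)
Your proposal is sound and hits all the right lemmas, including the maximality-preservation argument for the \ruleName{S-Cat-R} crossover and the $\sinkTm{}$ bookkeeping for \ruleName{S-Cat-L}. You also correctly diagnose the typo in the stated conclusion: it should read $e''' = e''$, and the detailed appendix statement confirms this.

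The one genuine structural difference from the paper is the induction variable. You induct on the typing derivation $\Gamma \vdash e : s$ and invert the semantics in each case; the paper instead inducts on the derivation of the first step $\prefixstep{\eta}{e}{}{e'}{p}$ and then case-splits (where necessary) on the second and combined steps. For Kernel \core{} the two organizations are essentially interchangeable: the same subcase analysis arises, and the same auxiliary facts---distribution of concatenation over the prefix constructors, $\emptyOn{\eta}{\Delta}$ from semicolon well-formedness, the $\sinkTm{}$-concatenation lemma, and the maximal-semantics-extension lemma---are invoked at the same points. Your route incurs an explicit pass-through case for \ruleName{T-SubCtx} (which you handle), whereas the paper's avoids it since the step rules never mention subtyping. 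Conversely, the paper's choice scales more directly to Full \core{}: the \ruleName{S-Fix} rule unfolds the fixpoint body, so the term being stepped in the premise is not a syntactic subterm of the original and there is no smaller typing subderivation to appeal to; inducting on the (step-indexed) semantics derivation handles this uniformly, while your organization would need to be restructured there.
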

(The proof goes
  by induction on the derivation of $\prefixstep{\eta}{e}{ }{e'}{p}$,
  inverting everything in sight. See Appendix~\ref{app:incr-sem-thms} for full details.)
The operation $\prefixConcat{p}{p'}$ here is
{\em prefix concatenation},
which takes a prefix $p$ of type $s$ and a prefix $p'$ of type $\deriv{p}{s}$
and produces the prefix of type $s$ that is first $p$ and then $p'$.
Formally, this is defined as a 4-place partial inductive relation $\prefixConcatRel{p}{p'}{p''}$,
which is defined when $p$ and $p'$ have types $s$ and $\deriv{p}{s}$, respectively.
The operation $\prefixConcatRel{\eta}{\eta'}{\eta''}$ does the same for environments.
See Appendix~\ref{app:concatenation}\ifextended{} for details\fi.


\label{sec:determinism}
The homomorphism theorem not only justifies running the semantics of prefixes of any size; it also implies deterministic processing of parallel streams.
Intuitively, determinism states that the results of
a stream transformer do not depend on the particular order in which parallel data
arrives. We formalize this through the following scenario.  Suppose
$\commactx{\Gamma}{\Gamma'} \vdash e : s$ is a term with two parallel contexts
serving as its input, and suppose that $\eta$ is an environment for $\commactx{\Gamma}{\Gamma'}$.
Write $\eta_1 = \eta|_{\Gamma}$ and $\eta_2 = \eta|_{\Gamma'}$, for the
restrictions of $\eta$ to the variables in $\Gamma$ and $\Gamma'$,
respectively.
Now, there are two
different ways of running $e$ on this data.  One is to first run $e$ on
$\eta_1 \cup \emp{\Gamma'}$ (which has $\eta_1$ bindings for $\Gamma$ and then the empty prefix for everything in $\Gamma'$) and then
run the resulting term on $\eta_2 \cup \emp{\Gamma}$ (with an empty prefixes for
$\Gamma$).  The other does the opposite, first running $e$ on
$\eta_2 \cup \emp{\Gamma}$ and then running the resulting term
on $\eta_1 \cup \emp{\Gamma'}$.  Determinism says that these strategies produce equal
results. It is proved in Appendix~\ref{app:determinism} by
observing that the homomorphism theorem guarantees that each of these
options is equivalent to running $e$ on $\eta$.

\begin{theorem}[Determinism]
  Suppose
    (1) $\commactx{\Gamma}{\Gamma'} \vdash e : s$,
    (2) $\prefixHasType{\eta}{\commactx{\Gamma}{\Gamma'}}$,
    (3) $\prefixstep{\eta|_{\Gamma} \cup \emp{\Gamma'}}{e}{
    }{e_1}{p_1}$ and $\prefixstep{\eta|_{\Gamma'} \cup
      \emp{\Gamma}}{e_1}{ }{e_2}{p_2}$,
    (4) $\prefixstep{\eta|_{\Gamma'} \cup \emp{\Gamma}}{e}{ }{e_1'}{p_1'}$ and $\prefixstep{\eta|_{\Gamma} \cup \emp{\Gamma'}}{e_1'}{ }{e_2'}{p_2'}$.
  Then $e_2 = e_2'$ and $\prefixConcat{p_1}{p_2} = \prefixConcat{p_1'}{p_2'}$.
\end{theorem}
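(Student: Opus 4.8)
The plan is to collapse each of the two two-step evaluations into a single run of $e$ on the \emph{full} environment $\eta$ by invoking the Homomorphism Theorem (Theorem~\ref{thm:hom-prop}), exactly as anticipated in the remark following the statement. Write $\eta_1 = \eta|_{\Gamma}$ and $\eta_2 = \eta|_{\Gamma'}$, and name the two staggered inputs $\alpha = \eta_1 \cup \emp{\Gamma'}$ and $\beta = \eta_2 \cup \emp{\Gamma}$, so that hypothesis (3) runs $e$ on $\alpha$ then $\beta$, while hypothesis (4) runs $e$ on $\beta$ then $\alpha$. By the Soundness Theorem (Theorem~\ref{thm:prefix-sem-correct-core}) applied to (1)--(2), there is a run $\prefixstep{\eta}{e}{}{e_\star}{p_\star}$; the goal becomes to show that both strategies compute exactly this $e_\star$ and $p_\star$.

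The crux is the pair of environment-concatenation identities $\prefixConcatRel{\alpha}{\beta}{\eta}$ and $\prefixConcatRel{\beta}{\alpha}{\eta}$. I would prove these from the unit laws for prefix concatenation---$\prefixConcat{\emp{}}{p} = p$ and $\prefixConcat{p}{\emp{}} = p$, where in each case $\emp{}$ is the empty prefix at the appropriate derivative type---together with the fact that concatenation on a comma context acts independently on each variable. Concretely, on a variable $x \in \Gamma$ the two inputs contribute $\eta_1(x)$ and an empty prefix, which concatenate back to $\eta_1(x) = \eta(x)$; on a variable $y \in \Gamma'$ they contribute an empty prefix and $\eta_2(y)$, concatenating to $\eta_2(y) = \eta(y)$. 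The swapped identity $\prefixConcatRel{\beta}{\alpha}{\eta}$ is symmetric. This is where the genuine content lies, and it depends on reading the overloaded empty-environment notation correctly: the $\emp{\Gamma}$ (resp.\ $\emp{\Gamma'}$) appearing in the \emph{second} runs of (3) and (4) denotes the empty environment of the already-differentiated context $\deriv{\eta_1}{\Gamma}$ (resp.\ $\deriv{\eta_2}{\Gamma'}$), not of $\Gamma$ (resp.\ $\Gamma'$) itself.

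With the identities in hand I would apply Theorem~\ref{thm:hom-prop} twice to the single run $\prefixstep{\eta}{e}{}{e_\star}{p_\star}$. For the decomposition $\eta = \prefixConcat{\alpha}{\beta}$, premises (1)--(7) of the Homomorphism Theorem are discharged by: (1)--(2) as given; (3) the typing $\envHasType{\beta}{\deriv{\alpha}{\commactx{\Gamma}{\Gamma'}}}$, checked componentwise using $\deriv{\emp{\Gamma'}}{\Gamma'} = \Gamma'$; (4)--(5) the prefix-typings $\prefixHasType{p_1}{s}$ and $\prefixHasType{p_2}{\deriv{p_1}{s}}$, which come from Soundness applied to the two runs of strategy (3); and (6)--(7) the two given steps. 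The theorem then identifies the combined run with the two-step result, giving $e_\star = e_2$ and $p_\star = \prefixConcat{p_1}{p_2}$. Applying it again to the decomposition $\eta = \prefixConcat{\beta}{\alpha}$ with the runs of strategy (4) yields $e_\star = e_2'$ and $p_\star = \prefixConcat{p_1'}{p_2'}$. Since both conclusions describe the same $e_\star$ and $p_\star$, we obtain $e_2 = e_2'$ and $\prefixConcat{p_1}{p_2} = \prefixConcat{p_1'}{p_2'}$, as required.

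The main obstacle is not the overall shape of the argument---which is essentially forced---but verifying the two concatenation identities at the correct derivative types: proving the unit laws for $\prefixConcat{\cdot}{\cdot}$ over both prefixes and environments, and confirming that the staggered inputs genuinely recompose to $\eta$ rather than to some merely isomorphic environment. Everything else is the bookkeeping of lining up the seven hypotheses of the Homomorphism Theorem, for which Soundness supplies all the missing typing facts.
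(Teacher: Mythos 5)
Your proposal is correct and takes essentially the same route as the paper's own proof (Appendix~\ref{app:determinism}): establish the environment identities $\prefixConcat{\left(\eta|_{\Gamma} \cup \emp{\Gamma'}\right)}{\left(\eta|_{\Gamma'} \cup \emp{\Gamma}\right)} = \eta$ and its swap via the empty-concatenation lemma (Theorem~\ref{thm:env-concat-emp}), then apply the Homomorphism Theorem twice. The only (harmless) difference is that the paper's appendix statement assumes the combined run $\prefixstep{\eta}{e}{ }{e'}{p}$ as an extra hypothesis, whereas you correctly derive its existence, along with the needed prefix typings, from the Soundness Theorem.
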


To intuitively see how this theorem follows from homomorphism, note that prefixes are
canonical representatives of equivalence classes of sequences of stream elements, up to the possible
reorderings defined by their type \cite{stanford2022safe}. The homomorphism
theorem then guarantees that these normal forms are processed
compositionally, and so are
independent of the actual temporal ordering of parallel data---it suffices to
compute on the combined normal forms from the two steps.

\section{Full \core{}}
\label{sec:full-core-lang}
We now sketch the remaining types and terms of
\core{} that are not part of Kernel \core{}.

\subsection{Sums}
Sum types in \core{}, written $s + t$,
are {\em tagged} unions: a stream of type $s + t$ is either a stream of type $s$ or a stream of type $t$,
and a consumer can tell which. Streams of type $s$ are not the same as
streams of type $s + s$,
and streams of type $s + t$ are isomorphic to, but not identical to, streams of type $t + s$.
Operationally, a producer of a sum stream sends a tag bit before
sending the rest of the stream, to tell downstream consumers which
side to expect. Conversely,
a consumer of $s + t$ first reads the bit to learn which it is getting
next.

%
A prefix of $s + t$ can be a prefix of one of $s$ or one of $t$, written $\sumpA{p}$
or $\sumpB{p}$, or it can be $\sumpEmp$,
the empty prefix of type $s + t$, which does not even include the initial tag bit.
The derivatives with respect to these prefixes are defined as follows: (a) the empty prefix takes nothing off the type ($\deriv{\sumpEmp}{s + t} = s + t$)
and (b) the two injections reduce to taking the derivative of the corresponding branch of the sum ($\deriv{\sumpA{p}}{s + t} = \deriv{p}{s}$ and $\deriv{\sumpB{p}}{s + t} = \deriv{p}{t}$).



\begin{wrapfigure}{R}{11.2cm}
\vspace*{-3ex}
$
  \INFERRULE[T-Sum-R-1]{
    \Gamma \vdash e : s
  }{
    \Gamma \vdash \sumInlTm{e} : s + t
  }
  \hspace{.4in}
  \INFERRULE[T-Sum-L-Surf]{
    \Gamma(x : s) \vdash e_1 : r\\
    \Gamma(y : t) \vdash e_2 : r\\
  }{
    \Gamma(z : s + t) \vdash \texttt{case}_{r}(z,x.e_1,y.e_2) : r
  }
$
\vspace*{-3ex}
\end{wrapfigure}
The typing rules for sums are the
normal injections on the right (\ruleName{T-Sum-R-1} and a
symmetric rule \ruleName{T-Sum-R-2})
and a case analysis rule on the left (\ruleName{T-Sum-L-Surf}).
The right rules operate by prepending their respective tags and then
running the embedded terms. The left rule does
case analysis: if the incoming stream $z$ comes from the left
of the sum, it is processed with
$e_1$; if from the right, $e_2$.
To run a sum case term, the semantics must dispatch
on the tag that says if the stream $z$ being destructed is a left or a right. But the prefix $z$ might not include a
tag, if only data from the surrounding context has arrived. In this case, $z$ will map to $\sumpEmp$, and we have
no way of determining which branch to run. The solution is to
run neither!
Instead, we hold on to the environment, saving \emph{all} incoming data to the program
until the tag arrives. Once we get a prefix that includes the tag, we continue by running the corresponding branch
with the accumulated inputs.
Note that this buffering is necessarily
a blocking operation.%
\jwc{Fix wording here --- we don't define blocking really.}
\footnote{Depending on the rest of the context, it could also require unbounded memory! Fortunately, we can easily detect this, and flag
it as a warning to the user: running a case on $z : s + t$ in a context $\Gamma(z : s + t)$ could require buffering all variables to the left of $z$ or in parallel with $z$ in the context.
Unbounded memory is required if and only if any of those variables have star type.
}

\begin{wrapfigure}{R}{7.1cm}
\vspace*{-3ex}
$
 \INFERRULE[T-Sum-L]{
    \envHasType{\eta}{\Gamma(z : s + t)}\\
    \Gamma(x : s) \vdash e_1 : r\\
    \Gamma(y : t) \vdash e_2 : r
  }{
    \deriv{\eta}{\Gamma(z : s + t)} \vdash {\sumcaseTm{r}{\eta}{z}{x}{e_1}{y}{e_2}} : r
  }
$
\vspace*{-3ex}
\end{wrapfigure}
All this requires a slightly generalized typing
rule (\ruleName{T-Sum-L}) that includes a {\em buffer environment} $\envHasType{\eta}{\Gamma(z : s + t)}$ of the context
type in the term. This buffer holds all of the input data we've seen so far. As prefixes arrive, we append to this buffer until we
get the tag. Accordingly, the context in this rule is $\deriv{\eta}{\Gamma(z : s + t)}$: the term is typed in the context consisting of
everything after the part of the stream that has so far been buffered.

Fortunately, the only typing rule that a \core{} programmer needs to concern themselves with is \ruleName{T-Sum-L-Surf}.
While writing the program, and before it runs, the buffer is empty ($\eta = \emp{\Gamma(z : s + t)}$).
In this case, the $\deriv{\eta}{\Gamma(z : s + t)} = \Gamma(z : s + t)$, and so the generalized rule
\ruleName{T-Sum-L} simplifies to the ``surface'' rule, \ruleName{T-Sum-L-Surf}.
Full details \ifextended on the semantics of case analysis \fi
can be found in Appendix~\ref{app:semantics}.

\subsection{Star}
\label{sec:star}
Full \core{} also includes a type constructor for unbounded streams,
written $s^\star$ because it is inspired by the Kleene star from the
theory of regular languages.  (We do not need to distinguish between unbounded
finite streams and ``truly infinite'' ones, because our
operational semantics is based on prefixes: we're always only
operating on ``the first part'' of the input stream, and it doesn't
matter whether the part we haven't seen yet is finite or infinite.)
The type $s^\star$ describes a stream that consists of zero or more
sub-streams of type $s$, in sequence.

In ordinary regular languages, $r^\star$ is equal to $\varepsilon + r \cdot
r^\star$.  In the language of stream types, this equation says that
a stream of type $s^\star$ is either empty ($\varepsilon$) or a stream of type
$s$ followed by another stream of type $s^\star$---i.e., $s^\star$
can be understood as the least fixpoint of the stream type operator $x
\mapsto \varepsilon + s \cdot x$.
The definitions of prefixes and typing rules for star all
follow from this perspective.

In particular,
$\texttt{prefix}(s^\star) = \texttt{prefix}(\varepsilon + s \cdot
s^\star)$.
The empty prefix of
type $s^\star$, written $\stpEmp$, is effectively the empty prefix of the sum
that makes up $s^\star$. The second form of prefix---the ``done''
prefix of type $s^\star$---is written $\stpDone$.  It corresponds to the left injection of the
sum, and receiving it means that the stream has ended.  Note that, despite
containing no $s$ data, this prefix is not \emph{empty}: it conveys the information that
the stream is complete. The final two cases correspond to the right injection of
the sum, i.e., a prefix of type $s \cdot s^\star$.  This is either $\stpA{p}$, with $p$
a prefix of $s$, or $\stpB{p}{p'}$, with $p$ a maximal prefix of type $s$ and $p'$ another
prefix of $s^\star$.

For derivatives, the empty prefix leaves the type as-is
($\deriv{\stpEmp}{s^\star} = s^\star$).  Because no data will arrive after the
done prefix, the derivative of $s^\star$ with respect to $\stpDone$ is
$\varepsilon$. In the case for $\stpA{p}$, after
some of an $s$ has been received, the remainder of $s^\star$ looks like the
remainder of the first $s$ followed by some more $s^\star$, so the derivative is
defined as $\deriv{\stpA{p}}{s^\star} = \left(\deriv{p}{s}\right) \cdot s^\star$.
Finally, $\deriv{\stpB{p}{p'}}{s^\star} = \deriv{p'}{s^\star}$.

%

\begin{wrapfigure}{R}{8.6cm}
\vspace*{-3ex}
$
  \INFERRULE[T-Star-R-1]{ }{
    \Gamma \vdash \nilTm : s^\star
  }
$
\hspace*{2em}
$
  \INFERRULE[T-Star-R-2]{
    \Gamma \vdash e_1 : s\\
    \Delta \vdash e_2 : s^\star\\
  }{
    \Gamma ; \Delta \vdash \consTm{e_1}{e_2} : s^\star
  }
$
\vspace*{-2ex}
\end{wrapfigure}
The typing rules for star are again motivated by the analogy with
lists. There are right rules for \texttt{nil} and
\texttt{cons} and a case analysis principle for the left rule.
The ``nil'' rule \ruleName{T-Star-R-1} corresponds to the left injection into the
sum $s^\star = \varepsilon + s \cdot s^\star$:
from any context, we can produce $s^\star$ by simply ending the
stream.
The ``cons''
rule \ruleName{T-Star-R-2} is the right injection: from a context
$\semicctxcompact{\Gamma}{\Delta}$, we can produce an $s^\star$ by
producing one $s$ from $\Gamma$ and the remaining $s^\star$ from $\Delta$.
Operationally,
this should run the same way as the \ruleName{T-Cat-R} rule: by first running $e_1$, and if an entire $s$
is produced, continuing by running $e_2$ to produce some prefix of the tail.
\begin{wrapfigure}{R}{7.2cm}
\vspace*{-3ex}
$
  \INFERRULE[T-Star-L]{
    \Gamma(\cdot) \vdash e_1 : r\\
    \Gamma(x : s ; xs : s^\star) \vdash e_2 : r\\
    \envHasType{\eta}{\Gamma(z : s^\star)}\\
  }{
    \deriv{\eta}{\Gamma(z : s^\star)} \vdash
       \starcaseTm{s,r}{p}{z}{e_1}{x}{xs}{e_2} : r
  }
$
\vspace*{-3ex}
\end{wrapfigure}
The \ruleName{T-Star-L} rule is a case analysis principle for
streams of star type: either such a stream is empty, or else
it comprises one $s$ followed by an $s^\star$. The fact that the head
$s$ will come first and the tail $s^\star$ later tells us that
the variables $x : s$ and $xs : s^\star$  should be separated by a
semicolon in the context. Like \ruleName{T-Sum-L}, this rule
includes a buffer, collecting input environments until the prefix bound to $z$
is enough to make the decision for which branch of the case to run.

The semantics of the right rules are straightforward: the rules for \ruleName{T-Star-R-1} are like those for \ruleName{T-Eps-R},
while the rules for \ruleName{T-Star-R-2} are like those for \ruleName{T-Cat-R}.
The semantics of \ruleName{T-Star-L} is just like \ruleName{T-Sum-L}, buffering input prefixes until
either (a) we get $z \mapsto \stpDone$, at which point we run $e_1$, or (b) we get $z \mapsto \stpA{p}$ or $z \mapsto \stpB{p}{p'}$,
in which case we run $e_2$.
For full details\ifextended{} on the semantics of star\fi, see Appendix~\ref{app:semantics}.

\subsection{Let-Binding}
\label{sec:let-binding}
Full \core{} also allows for more general let-binding.
Given a transformer $e$ whose output is used in the input of another
term $e'$, we can compose
them to form a single term $\cutTm{x}{e}{e'}$ that operates as
the sequential composition of $e$ followed by $e'$. The rules for this construct are in Figure~\ref{fig:let-rules}.
Note that this sequencing is not the same kind of sequencing as in a concat-pair $\catpairTm{e}{e'}$.
The latter produces data that follows the sequential pattern $s \cdot t$, while the former is sequential composition of code.
When a let binding is run, both terms are evaluated, and the output of the first is passed to the input of the second.
An important point to note is that this semantics is non-blocking:
even if $e$ produces the empty prefix, we still run $e'$, potentially producing output.

\begin{wrapfigure}{R}{8cm}
\vspace*{-3px}
$
      \INFERRULE[T-Let]{
        \Delta \vdash e : s\\
        \Gamma(x : s) \vdash e' : t\\
        \inert{e}
      }{
        \Gamma(\Delta) \vdash \cutTm{x}{e}{e'} : t
      }
$
$
      \INFERRULE[S-Let]{
          \prefixstep{\eta}{e_1}{ }{e_1'}{p}\\
          \prefixstep{\eta[x\mapsto p]}{e_2}{ }{e_2'}{p'}
        }{
          \prefixstep{\eta}{\cutTm{x}{e_1}{e_2}}{ }{\cutTm{x}{e_1'}{e_2'}}{p'}
        }
$

  \caption{Rules for Let-Bindings}
  \label{fig:let-rules}
\vspace*{-10px}
\end{wrapfigure}

The semantic rule \ruleName{S-Let} for let-binding (in Figure~\ref{fig:let-rules}) is a straightforward encoding of this behavior.
Given the input environment $\eta$,
we run the term $e$, bind the resulting prefix $p$ to $x$, and run the continuation $e'$, returning its output.
The resultant term is another let-binding between the resultant terms of $e$ and $e'$.

The typing rule \ruleName{T-Let} says that if $e$ has type $s$ in context $\Delta$ and $e'$ has type $t$ in a
context $\Gamma(x : s)$ with a variable of type $s$, we can form the let-binding
term $\cutTm{x}{e}{e'}$, which has type $t$ in context
$\Gamma(\Delta)$. 
The soundness of the semantics rule \ruleName{S-Let} depends on a subtle requirement:
$e$ must not produce nonempty output until $e'$ is ready to accept it.
This is enforced by the third premise of the \ruleName{T-Let} rule, which
states that $e$ must be \emph{inert}: it only produces nonempty output when
given nonempty input.
This restriction rules out let-bindings such as $\cutTm{x}{\oneTm}{e'}$,
since the semantics of $\oneTm$ always produces nonempty output (namely $\onepB$), even when given
an environment mapping every variable to an empty prefix\footnote{Because such let-bindings are essentially
trivial, we expect that they can be eliminated --- see Section~\ref{sec:future-work} for more discussion.}.
In actuality, inertness is not a purely syntactic condition on terms, but depends also
on typing information. To this end, inertness is tracked like an effect through the type system:
see Appendix~\ref{app:inertness} for details.

\ifextended
\footnote{%
 Traditionally in sequent calculi, this rule, known as ``Cut,'' is
 introduced only to be
 immediately shown to be admissible. We expect
 the cut rule in \core{} will indeed be admissible---\citet{frumin}
 has proven that BI plus an arbitrary
 set of structural rules admits cut---but we have not proven it for \core{}.
 Because the point of cut elimination is to enable effective proof search,
 whereas we are most interested in the calculus from a programming
 perspective, we will not dwell on this point.
}
\fi

\subsection{Recursion}
\label{sec:recursion}
To write interesting transformers over $s^\star$ streams,
we provide a way to define transformers recursively.
Adding a traditional general recursion operator $\texttt{fix}(x.e)$ does not work in our context,
as arrow types are required to define functions this way. We instead
add explicit term-level recursion
and recursive call operators. The program $\fixTm{e_\texttt{args}}{e}$ defines
a recursive
transformer with body $e$ and initial arguments $e_\texttt{args}$. Recursive calls are made inside the body $e$ with a term
$\texttt{rec}\left(e_\texttt{args}\right)$, which calls the function being defined with arguments $e_\texttt{args}$.
This back-reference works in the same way that uses of the variable $x$ in the body of a traditional fix point $\texttt{fix}(x.e)$ refer to the term $\texttt{fix}(x.e)$ itself.
This function-free approach is
approach is inspired by the concept of {\em
cyclic proofs}~\cite{brotherston_cyclic_2005, fortier_cuts_2013, derakhshan_session-typed_2021} from proof theory, where derivations may refer back to themselves.
Alternatively, one can think of this construction as defining our terms and proof trees
as infinite {\em coinductive} trees; then the term-level $\texttt{fix}$
operator defines terms as {\em co}fixpoints.

Full details of the typing rules and semantics of fixpoints can be found in
Appendices~\ref{app:type-system} and~\ref{app:semantics}.
In brief, to typecheck a fixpoint term, we simply type its body $e$, assuming that all
instances of the \texttt{rec} in $e$ have the same type as the fixpoint itself.
Then, to run a fixpoint term $\fixTm{e_\texttt{args}}{e}$, the rule unfolds the recursion one
step by substituting the body $e$ for instances of \texttt{rec} in itself,
then runs the resulting term, binding all of the arguments to their variables.

Naturally, this can lead to non-termination, as $\texttt{fix}(\texttt{rec})$ unfolds to itself.%
\footnote{
Cyclic proof systems
usually ensure soundness by imposing a guardedness condition
\cite{brotherston_cyclic_2005} which
requires certain rules be applied before a back-edge can be inserted in the
derivation tree. Because we are not primarily concerned with \core{} as a logic
at the moment, we leave a guardedness condition to future work.}
To bound the depth of evaluation, we
{\em step index} both semantic judgments by adding a fuel parameter
that decreases when we unfold a \texttt{fix}.
The semantic judgment then looks like
$\prefixstep{\eta}{e}{n}{e'}{p}$:
when we run $e$ on $\eta$, it steps to $e'$ producing $p$
and unfolding at most $n$ uses of \texttt{fix} along the way.

\ifextended
The inclusion of a step index now means that there are well-typed
terms about which the \core{} semantics say nothing at all. In particular,
an ``infinite generator'' term $\cdot \vdash_{\norec} \texttt{fix}(\consTm{\oneTm}{\texttt{rec}}) : 1^\star$,
which runs forever and should produce an infinite sequence of unit values, has no meaning in \core{}.
Semanticists may find this behavior odd, but it mimics the incremental
semantics of present-day stream processing
systems, which wait for a step of computation to terminate before
sending out any of its results.
\fi

\subsection{Stateful Transformers}
\label{sec:wait}
In the \core{} typing judgment $\Gamma \vdash e : s$,
the variables in $\Gamma$ range over \emph{future values} that have yet to
arrive at the transformer $e$.
The ordered nature of semicolon contexts means
that variables further to the right
in $\Gamma$ correspond to data that will arrive further in the
future. This imposes a strong restriction on
programming: if earlier values in the stream are used at all, they must be
used {\em before} later values; once a value in the stream has ``gone
by,'' there is no way to refer to it again.
By using variables from the $\Gamma$
context, a term $e$ can refer to values that will arrive in the future; but it has no way
of referring to values that have arrived in the \emph{past}.
This limitation is by design:
from a programming perspective, referring to variables from the past
requires {\em memory}, which is a resource to be carefully managed in streaming contexts.
Of course, while some important streaming functions
(e.g., map and filter) can get by without state, but many others (e.g., ``running
sums'') require it. In this section, we add support for stateful
stream transformers.

To maintain state from the past,
we extend the typing judgment of
\core{} to include a second context, $\Omega$, called the
\emph{historical context}, which gives types to variables bound to values stored in memory. We write
$\Omega \mid \Gamma \vdash e : s$ to mean ``$e$ has type $s$ in context $\Gamma$ and historical context $\Omega$''.

What types do variables in the historical context have?
Once a complete stream
of type $\left(\texttt{Int}^\star \| \texttt{Int}^\star\right) \cdot
\texttt{Int}^\star$ has been received and is stored in memory, we may as well regard the data
as a value of the standard type
$\left(\texttt{list(Int)} \times \texttt{list(Int)}\right) \times
\texttt{list(Int)}$ from the simply typed lambda-calculus (STLC).
In other words, parts of streams that {\em will} arrive in the future have
stream types, parts of streams that {\em have} arrived in the past
can be given standard STLC types.
The ``flattening'' operation $\flatten{s}$
transforms stream types into STLC types. The interesting cases of its
definition are
$\flatten{s \cdot t} = \flatten{s \| t} = \flatten{s} \times \flatten{t}$ and
$\flatten{s^\star} = \texttt{list}\left(\flatten{s}\right)$.

The historical context is a fully structural: $\Omega \, ::= \, \cdot \,
|\, \Omega, x : A$, where the types $A$ are drawn from some set of
conventional lambda-calculus types
including at least products, sums, a unit, and a list type.
Operationally, the historical context behaves like a standard context
in a functional programming language: at the top level, terms
to be run must be typed in an empty
historical context; at runtime, historical variables get their values
by substitution.

\begin{wrapfigure}{R}{4.1cm}
\vspace*{-3ex}
$
  \INFERRULE[T-HistPgm]{
    \Omega \vdash M : \flatten{s}
  }{
    \Omega \mid \Gamma \vdash \histPgmTm{M}{s} : s
  }
$
\vspace*{-3ex}
\end{wrapfigure}
Rather than giving a specific set of ad-hoc rules for manipulating values from
the historical context, we parameterize the \core{} calculus over an arbitrary
language with terms $M$, typing judgment $\Omega \vdash M : A$, and
big-step semantics $M \downarrow v$. We call any such fixed choice of language
the \emph{history language}.  Programs from the history language can be embedded
in \core{} programs using the \ruleName{T-HistPgm} rule, which says
that a historical program $M$ of type $\Omega \vdash M :
\flatten{s}$ with access the historical context can be used in place
of a \core{} term of type $s$.
Operationally, as soon as any prefix of the input arrives, we
run the historical program to completion and yield the result
as its stream output (after converting it into a value of type $s$).

How does information get added to the historical context?
Intuitively, a variable in $\Gamma$ (a stream that will arrive in the
future) can be moved to $\Omega$, where streams that have arrived in the past are
saved, by waiting for the future to become the past!
Formally, we define an operation called ``wait,'' which allows the
programmer to specify
part of the incoming context and block this subcomputation until
that part of the input stream has
arrived in full. Once it has,
we can bind it to the variables in the historical context and continue
by running $e$.

\begin{wrapfigure}{R}{6cm}
\vspace*{-3ex}
$
  \INFERRULE[T-Wait-Surf]{
    \Omega, x : \flatten{s} \mid \Gamma(\cdot) \vdash_\Sigma e : s
  }{
    \Omega \mid \Gamma(x:s) \vdash_\Sigma \waitTmNOBUFFER{s}{x}{e} : s
  }
$
\vspace*{-3ex}
\end{wrapfigure}
The \ruleName{T-Wait-Surf} rule
encodes the typing content of this behavior. It allows us to
specify a variable $x$ of the input, flatten its type, and then move it to the historical context,
so that the continuation $e$ can refer to it in historical terms.
Semantically, this works by
buffering in environments until a maximal prefix for $x$ has arrived.
Once we have a full prefix for $x$, we substitute it into $e$ and continue running
the resulting term.\footnote{%
 The semantics of the \ruleName{T-Wait} rule is reminiscent of the ``blocking
 reads'' of Kahn Process Networks, where every read from a parallel stream
 blocks all other reads to ensure determinism. Here, we choose a variable and
 block the rest of the program until it is complete and in memory.
}
This buffering is implemented the same way as in the left rules for plus and star, by generalizing
the typing rule \ruleName{T-Wait-Surf} to a rule \ruleName{T-Wait} which includes an explicit prefix buffer.
As with plus and star, the generalized rule simplifies to the surface rule when the buffer is empty.
The generalized rule and the semantics of both the wait and historical program constructs
can be found in Appendix~\ref{app:type-system} and Appendix~\ref{app:semantics}. The remaining typing rules in \core{} change only by adding an
$\Omega$ to the typing judgment everywhere.




\subsubsection*{Updated Soundness Theorems}
Adding recursion and the historical context requires us to update
to the soundness theorem from that of Kernel \core{} to Full \core{}. If a well typed term has (a) closed historical context, and (b) no unbound recursive calls,  takes a step on
a well-typed input {\em using some amount of gas}, then the output and resulting
term are also well typed.
(The proof is by a large but routine induction, first on the derivation of $\prefixstep{\eta}{e}{ }{e'}{p}$,
and then on the derivation of $\cdot \mid \Gamma \vdash_\emptyset e : s$. See Appendix~\ref{app:incr-sem-thms} for cases.)

\begin{theorem}[Soundness of the \core{} Semantics]
  If $\cdot \mid \Gamma \vdash_\emptyset e : s$, and $\envHasType{\eta}{\Gamma}$,
  and $\prefixstep{\eta}{e}{n}{e'}{p}$, then $\prefixHasType{p}{s}$ and $\cdot \mid \deriv{\eta}{\Gamma} \vdash_\emptyset e' : \deriv{p}{s}$
\end{theorem}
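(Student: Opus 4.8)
The plan is to prove the statement by a \emph{nested induction}: the outer induction is on the derivation of the operational step $\prefixstep{\eta}{e}{n}{e'}{p}$, and the inner induction is on the typing derivation $\cdot \mid \Gamma \vdash_\emptyset e : s$. The outer induction on the semantics is what makes recursion tractable: the only rule that consumes gas is the one that unfolds a $\texttt{fix}$, and since that rule re-runs a strictly smaller-gas step on the unfolded body, the outer induction hypothesis applies there. The inner induction on typing is needed because the semantics is not syntax-directed at the level of typing derivations---in particular \ruleName{T-SubCtx} leaves the term unchanged, so inverting on the shape of $e$ alone does not determine the last typing rule. I would peel off \ruleName{T-SubCtx} in the inner induction, using that context subtyping commutes with derivatives and preserves environment well-typedness, and then recurse.

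For the bulk of the cases I would proceed uniformly: invert the semantic rule, invert the matching typing rule, apply the induction hypothesis to each premise, and reassemble the conclusion. For each right rule (\ruleName{S-Par-R}, \ruleName{S-Cat-R-1}, \ruleName{S-Cat-R-2}, the star and sum right rules, \ruleName{S-Eps-R}, \ruleName{S-One-R}) the induction hypothesis supplies well-typed component prefixes, the prefix-formation rules of Figure~\ref{fig:prefix-rules} assemble the output at type $s$, and the derivative clauses of Figure~\ref{fig:deriv-rules} rewrite $\deriv{p}{s}$ into exactly the type at which the reconstructed resultant term (e.g.\ $\parpairTm{e_1'}{e_2'}$ or $\catpairTm{e_1'}{e_2}$) is typed. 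The variable case is immediate from $\envHasType{\eta}{\Gamma}$ and the pointwise definition of $\deriv{\eta}{\Gamma}$. The left rules need a little more: \ruleName{S-Cat-L-2} rebinds $z$ to $y$ and seals off $x$ with a sink, so I would invoke a renaming/substitution lemma for typing to justify $e'[z/y]$, together with a lemma that $\sinkTm{p}$ is well-typed at the derivative type $\deriv{p}{s}$ (which is $\epst$ when $\isMaximal{p}$), so that the reconstructed $\cutTm{x}{\sinkTm{p}}{e'[z/y]}$ typechecks by \ruleName{T-Let}.

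Several supporting lemmas thread through these cases. First I would need a \emph{maximality-propagation} lemma: if $\maximalOn{\eta}{\Gamma}$ and $e$ steps to output $p$, then $\isMaximal{p}$. This reconciles the semicolon-context invariant $\maximalOn{\eta}{\Gamma} \vee \emptyOn{\eta}{\Delta}$ with the two concatenation cases: in \ruleName{S-Cat-R-1} the output is non-maximal, which forces the $\emptyOn{\eta}{\Delta}$ branch and hence, via the identity $\deriv{\emp{s}}{s}=s$, leaves $\Delta$ unchanged under the derivative---exactly the context at which the untouched $e_2$ must still be typed. Second, the buffering constructs (the semantic rules for sum, star, and wait) carry an explicit buffer environment in their generalized typing rules \ruleName{T-Sum-L}, \ruleName{T-Star-L}, and \ruleName{T-Wait}, typed in the derivative context $\deriv{\eta}{\Gamma(\cdot)}$; handling them requires showing that appending the freshly arrived input to the buffer by prefix concatenation and then re-deriving yields a well-typed term, which rests on the compatibility of derivatives with concatenation---the same algebra underlying the Homomorphism Theorem (Theorem~\ref{thm:hom-prop}). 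Note also that although the theorem fixes a closed historical context, the terms actually handed to the induction hypothesis always have closed $\Omega$: the \ruleName{T-HistPgm} and wait cases eliminate history variables by substituting concrete values before the continuation runs, and there I would appeal to soundness of the history language ($M \downarrow v$ with $\Omega \vdash M : \flatten{s}$ yields a value of type $\flatten{s}$) plus a lemma that the $\texttt{toPrefix}$ conversion of such a value is a well-typed, maximal prefix of type $s$.

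The main obstacle I anticipate is the group of buffering rules together with recursion, not the routine right and left rules. For the buffers, the delicate point is the bookkeeping that ties the accumulated buffer $\eta$ to the derivative context in which the continuation is typed: one must verify that after concatenating the new input into the buffer, the derivative of the enlarged buffer against the original context coincides with the derivative of the new input against the once-derived context, so that the generalized typing rule can be re-applied at the new buffer. For recursion the obstacle is organizational---establishing a substitution lemma that replacing each $\texttt{rec}$ by a copy of the enclosing (closed, top-level recursion-free) $\texttt{fix}$ preserves typing, after which the gas-decreasing outer induction hypothesis closes the case---but it is the one place where the two induction measures must be made to interact cleanly.
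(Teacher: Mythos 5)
Your overall architecture matches the paper's proof: an outer (mutual) induction on the step derivation with an inner induction on typing to absorb \ruleName{T-SubCtx} (discharged via subtyping-preserves-environments and derivatives-preserve-subtyping, Theorems~\ref{thm:subty-env} and~\ref{thm:deriv-subty}), the maximality-propagation lemma resolving \ruleName{S-Cat-R-1}, sink typing plus renaming for \ruleName{S-Cat-L-2}, concatenation/derivative compatibility for the buffering rules, and fixpoint substitution for \ruleName{S-Fix}. But there is a genuine gap at \ruleName{S-Let}---the one case you explicitly set aside as routine. After running $e_1$ to obtain $p$, you must show $\envHasType{\eta[x\mapsto p]}{\Gamma(x:s)}$ before the IH applies to $e_2$, and re-binding into a bunched hole (Theorem~\ref{thm:env-subctx-bind}) requires the agreement condition $\agree{\eta}{\{x\mapsto p\}}{\Delta}{x:s}$. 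When $\Delta$ sits to the right of a semicolon in $\Gamma(-)$, the semicolon invariant demands that if $\eta$ is empty on $\Delta$ then $p$ is empty. Your maximality lemma supplies only the other half (maximal inputs yield maximal outputs); the emptiness half is exactly \emph{inertness}, which is why \ruleName{T-Let} carries the premise $\inert{e}$, and it cannot be established on the fly: the paper threads it through the whole proof as an additional mutually-proved conclusion of the soundness statement itself (Goal C of Theorem~\ref{thm:prefix-pres}: if $e$ is typed with annotation $\I$ and $\emptyOn{\eta}{e}$, then $\isEmpty{p}$), together with the claim that every resultant term is typed inert. Without this strengthening of the induction hypothesis, the \ruleName{S-Let} case does not close.

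A secondary omission: the paper's induction is mutual with the recursive-arguments judgment $\argsstep{\eta}{A}{\Gamma}{n}{A'}{\eta'}$ and its typing judgment, because the premise of \ruleName{S-Fix} is a step of $\cutTm{\Gamma}{A}{\fixsubst{e}{e}[\theta]}$ evaluated through \ruleName{S-ArgsLet}. Your proposal states no induction hypothesis covering argument trees $A$, so the recursion case cannot be completed by the term-level IH alone. Both repairs are by strengthening and widening the statement rather than changing the method, but they are load-bearing, not bookkeeping.
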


A similarly updated statement of the homomorphism
theorem can be found in Appendix~\ref{app:semantics}.

\section{\lang{}}
\label{sec:examples}

\bcp{General comment: I still find ``delta'' as a proper name with a
  mandatory lc first letter to be quite awkward to read.  Even after
  all this time, my autocorrector flags every single instance.  Would
  it really be so bad to capitalize it? :-)}

We next show how \core{} addresses the problems that we identified in Section~\ref{sec:extended-motivation} of
(a) type-safe programming with
temporal patterns and (b) deterministic processing of parallel data. We also show how some other
characteristic streaming idioms can be expressed elegantly in \core{}.

The examples in this section are written in \lang{}\footnote{\lang{} is available at \url{http://www.github.com/alpha-convert/delta}},
an experimental language design based on \core{}.\lang{}
proposes a high-level functional syntax that, after typechecking,
elaborates to \core{} terms. It supports some features not included
in the \core{} calculus that we expect will be required in full-blown
language designs based on \core{}.

\subsection*{\lang{} Features}
While the proof terms of \core{} allow elimination forms (such as \texttt{let (x,y) = z in e}) to only be applied to variables (an artifact of the sequent calculus formalism),
\lang{}'s syntax is a standard one where elimination forms can be applied
to arbitrary expressions. \lang{} also includes more types than \core{}, adding base types \texttt{Int} and \texttt{Bool}.

\subsubsection*{Functions and Macros}
Top-level functions in \lang{} are simply open terms: a function definition \texttt{fun f(x : Int*) : Int* = e}
elaborates and typechecks to a core term $e$ which satisfies the typing judgment $x : \texttt{Int}^\star \vdash \texttt{e} : \texttt{Int}^\star$.
Higher order functions in \lang{} are implemented as \emph{macros}. A function written as \texttt{fun g<f : Int -> Int>(x : Int*) : Int* = e}
is a macro which takes another function \texttt{f : Int* -> Int*} as a
parameter. Calls to \texttt{g} in other functions then look like \texttt{g<f'>}, where \texttt{f'} is either (a)
another function defined at top level, or (b) a call to yet another macro.
If the macro \texttt{g} is recursive, its recursive calls do not receive a macro argument --- all recursive usages
of a macro get passed the initial macro parameter \texttt{f}.
This discipline ensures that the macro usage does not depend on runtime data, and so higher-order functions can be fully resolved to \core{} terms statically.

Neither of these features --- standard top-level functions and higher-order
macros --- require the use of first-class function types, which \core{} does not
currently support. Defining true higher-order functions would allow for streams of \emph{functions}, such as $\left(s \to t\right)^\star$. We hope to investigate these in future work; see
Section~\ref{sec:future-work}.

Functions in \lang{} can also be (prenex-) polymorphic \cite{milner78}.
Polymorphic functions definitions are annotated with an list of their type arguments, like \texttt{fun f[s,t](x : s*) : t* = e}. When such a
function is called, the type arguments must be passed explicitly
like \texttt{f[Int,Bool]}.

\subsubsection*{Historical Arguments and Generalized Wait}
Functions in \lang{} can also take arguments for their historical contexts: a function \texttt{fun f\{acc : Int\}(xs : Bool*) : Int* = e}
takes an in-memory \texttt{Int} argument, and elaborates to a core term that satisfies the typing judgment
$\texttt{acc} : \texttt{Int} \mid \texttt{xs} : \texttt{Bool}^\star \vdash e : \texttt{Int}^\star$, where \texttt{acc}
is in the historical context.
When \texttt{f} is called, the \texttt{acc} argument must be passed a historical program. For example, if \texttt{u : Int} is in the current historical
context (and \texttt{ys : Bool*} in the regular one), \texttt{f\{u + 1\}(ys)} is an acceptable call to \texttt{f}.

The \texttt{wait} construct is also slightly more general in \lang{}. Instead of just \texttt{wait}ing
on variables, programmers may \texttt{wait} on the result of some expression,
and then save its result into memory: this is accomplished with \texttt{wait e as x do e' end}.

\subsubsection*{\lang{} Implementation}
The implementation begins by elaborating a high-level surface syntax down to an
``elaborated syntax'', which eliminates shadowing, resolves function calls, and
transforms the syntax into the sequent calculus representation by introducing
intermediate variables for subexpressions.

The elaborated syntax is then typechecked, producing \emph{templates} of \core{} terms.
These templates serve two roles. First, they are monomorphizers. Since \core{} is a monomorphic calculus, typechecking produces a map from closed types (to plug in for type variables)
to raw terms. Second, the templates implement macro expansion.

The typechecker uses a (we believe novel)
algorithm for checking our variant of ordered \& bunched terms. While we have tested the typechecker
with many terms, we have not proved that the algorithm is sound and complete with respect to the declarative
type system presented in Appendix~\ref{app:technical}. The interpreter, on the other hand, is very straightforward: it is a direct translation of the \core{} semantics into code.

More details about the project structure of the \lang{} prototype can be found in Appendix~\ref{app:artifact}.

\subsection*{Examples}
Besides its type system, \lang's design differs from that of most
stream processing languages in another important respect.
In languages like Flink \cite{Flink},
Beam\cite{Beam}, and Spark \cite{SparkStreaming}, streaming programs must be
written using
a handful of provided combinators like \texttt{map} \texttt{filter}
and \texttt{fold} (or
possibly as SQL-style queries, in languages derived from CQL
\cite{arasu2003cql}).
By contrast, \lang{} programs are written in the style of functional
list processors.  Instead of
working to cram complex program behaviors into \texttt{map}s,
\texttt{filter}s, and \texttt{fold}s, programmers
can express their intent more directly in the form of more general recursive
functional programs. Of
course, this does not preclude the use of the aforementioned
combinators: they are directly implementable in \lang{}.






\subsubsection*{Map}
\begin{wrapfigure}{R}{5.9cm}
\begin{lstlisting}[style=wrapfig, aboveskip=-3.3ex, belowskip=-2ex]
fun map [s,t] <f : s -> t> (xs : s*) : t* =
  case xs of
    nil => nil
  | y :: ys => f(y) :: map(ys)
\end{lstlisting}
\end{wrapfigure}

Given a transformer from \texttt{s} to \texttt{t}, we can lift it to a transformer from \texttt{s$\superstar$}
to \texttt{t$\superstar$} with a \texttt{map} operation. The code for this function
is essentially identical to the familiar functional
program, but
its type is more general than the standard \texttt{map} function
on homogeneous streams, which has type $(\texttt{a} \to \texttt{b}) \to \left(\texttt{Stream a} \to \texttt{Stream b}\right)$:
the types $s$ and $t$ here can be arbitrary stream types: they need not be
singletons.

\begin{wrapfigure}{R}{7.5cm}
\begin{lstlisting}[style=wrapfig, aboveskip=-3.3ex, belowskip=-1ex]
fun mapMaybe[s,t]<f : s -> t + Eps> (xs : s*) : t*=
  case xs of
    nil => nil
  | y :: ys => case f y of
               | inl(t) => t :: mapMaybe(ys)
               | inr(_) => mapMaybe(ys)

fun liftP[s]<f : {s}(Eps) -> Bool>(x : s) : s + Eps =
  wait x, f{x}(sink) as b do
      if {b} then inl({x}) else inr(sink)
  end

fun filter<f : {s}(Eps) -> Bool>(xs : s*) : s* =
  mapMaybe[s,s]<liftPred<f>>(xs)
\end{lstlisting}
\end{wrapfigure}
\subsubsection*{Filter}
\label{sec:filter}

Similarly, given a ``predicate'' function $f$ from $s$ to $t + \varepsilon$ (the streaming version of $t \; \texttt{option}$),
we can transform an incoming stream of $s^\star$ to include just the transformed elements which pass the filter.

We can then recover a traditional predicate-based filter by lifting a predicate \texttt{f} that takes an in-memory \texttt{s} to \texttt{Bool}
to a streaming function \texttt{s -> s + Eps} with \texttt{liftP}. This program simply waits for its argument to arrive, then
applies the predicate to the in-memory \texttt{s}.





\begin{wrapfigure}{R}{8cm}
\begin{lstlisting}[style=wrapfig, aboveskip=-3ex, belowskip=-2ex]
fun fold [s,t] <f : {t}(s) -> t>{acc : t}(xs : s*) : t =
  case xs of
    nil => {acc}
  | y :: ys => wait f{acc}(y) as acc' do
                  fold{acc'}(ys)
               end
\end{lstlisting}
\end{wrapfigure}
\subsubsection*{Fold}
\core{} can express both {\em running folds}, which output a
stream of all their intermediate
states, and {\em functional folds}, which output only the final state.
Since functional folds that return only the final state
cannot be given this rich type in traditional stream processing languages (for the same reason as the \texttt{head} and \texttt{tail} functions),
we present one here. See Appendix~\ref{app:running-fold} for
discussion of a running fold; the code is similar except that it
outputs \texttt{y} at every step. \bcp{If we have a bit of space left,
I'd advocate using it to show the running fold.}

The \texttt{fold} transducer maintains an in-memory accumulator of type $t$; this gets
updated by a streaming
step function $\texttt{f : \{t\}(s) -> t}$ that takes the state $t$ and
the new element $s$ and produces a $t$. The whole fold takes a stream
\texttt{xs} of type $s^\star$ and an initial accumulator value $y :
t$, and it eventually
produces the final state $t$.  As for \texttt{map}, the code for \texttt{fold} is
very similar to the traditional functional program: the only distinction is
the inclusion of \texttt{wait}s to marshal data into memory.

\begin{wrapfigure}{R}{5.5cm}
\begin{lstlisting}[style=wrapfig, aboveskip=-3ex, belowskip=-3ex]
fun head [s] (xs : s*) : Eps + s =
  case xs of nil => inl(sink)
           | y :: _ => inr(y)
\end{lstlisting}
\end{wrapfigure}
\subsubsection*{Singletons, Head, Tail}
In the homogeneous model, stream types are always conceptually unbounded.
But in many practical situations, a stream will only be expected to contain
a single element---a constraint that cannot be expressed with homogeneous
streams. Using stream types, we can write stream transformers that
are statically known to only produce a single output. For example, the
``head'' function
is trivially expressible in the same manner as head on lists, as shown
on the right.

(Exercise:
try writing the term for \texttt{tail} on star streams.  This
 requires a use of \texttt{wait} and an accumulator argument like in
 \texttt{fold}. \jwc{Dependent on the version of fold we use})
\ifextended
If the stream is empty, we return \texttt{inl(sink)}. If there is a head, we return it and discard the tail.
\fi

\begin{wrapfigure}{R}{6.3cm}
\begin{lstlisting}[style=wrapfig, aboveskip=-3ex, belowskip=-3ex]

fun averageSingle (run : Int . Int*) : Int =
  let (x;xs) = run in
  let (sm,len) = (sum(xs), length(xs)) in
  wait x,sm,len do
    {(x + sm) / (1 + len)}
  end

fun averageAbove{t : Int}(xs : Int*) : Int* =
  map<averageSingle>(thresh{t}(xs))
\end{lstlisting}
\end{wrapfigure}

\subsubsection*{Brightness Levels}

The structured communication protocol from the brightness-levels example in Section~\ref{sec:extended-motivation}
can be encoded as the type $\left(\texttt{Int} \cdot \texttt{Int}^\star\right)^\star$:
a stream of nonempty sequences of \texttt{Int}s, representing ``runs'' of
light levels greater than some threshold.
Given such a stream, writing a program to compute the averages is easy:
we just \texttt{map} an average operation---taking $\texttt{Int} \cdot
\texttt{Int}^\star$ to \texttt{Int}---over the incoming stream to produce a
stream $\texttt{Int}^\star$ of averages. The per-run average
operation, \texttt{averageSingle}, is defined by computing its sum and length
in parallel, waiting for the results, then dividing the sum (plus the
first element) by the length (plus one).

Notice that, since each run is statically known to have at
least one element, \texttt{averageSingle} can omit the error handling that, with a
homogeneous stream type, would be needed to avoid a potential divide by zero.
By contrast, with a homogeneous stream type like $\left(\texttt{Start} + \texttt{Int} +
\texttt{End}\right)^\star$, this operation would need to be written in a
low-level, more stateful manner, remembering the current run of
\texttt{Int}s
until an \texttt{End} event arrives, averaging, and handling the divide-by-zero
error which could in principle (although not in practice) occur if no \texttt{Int}s
arrived between a \texttt{Start} and an \texttt{End}.

\jwc{Edit this patter.}

\begin{wrapfigure}{R}{8cm}
\begin{lstlisting}[style=wrapfig, aboveskip=-3ex, belowskip=-2ex]
fun thresh{t : Int}(xs : Int*) : (Int . Int*)* =
  case xs of
    nil => nil
  | y :: ys => wait y do
                 if {y > t} then
                   let (run;rest) = spanGt{t}(ys) in
                   ({y};run) :: thresh{t}(rest)
                 else
                   thresh{t}(ys)
               end
\end{lstlisting}
\end{wrapfigure}
The thresholding operation \texttt{thresh}, which
takes $\texttt{Int}^\star$ and
produces the runs of elements above the threshold $\left(\texttt{Int} \cdot
\texttt{Int}^\star\right)^\star$, is straightforward.
Whenever the incoming stream goes above the threshold \texttt{t}, we collect all
of the subsequent elements into a run, emit it, and recurse down the rest of the stream.
This uses an operation \texttt{spanGt : \{Int\} (Int*) -> Int . Int*} that returns the initial ``span'' of
elements above \texttt{t}, followed by the rest of the stream.
It's important to note that this program is completely
non-blocking: as soon as the first element above \texttt{t} arrives, it is
forwarded along, as are all subsequent elements until the stream drops below \texttt{t}.
By contrast with homogeneously typed streaming languages, \lang{}'s
type safety guarantees
that \texttt{thresh} {\em does in fact} output a stream that adheres to the protocol, and (2)
any downstream transformer does not have to replicate this parsing logic.

The complete program, first calling \texttt{thresh}, and then mapping \texttt{averageSingle} over the stream of runs,
is \texttt{averageAbove}.

\subsubsection*{Side Outputs \& Error Handling}
A common streaming idiom is the use of ``side outputs'' for reporting errors.
In languages that support this idiom, operations include extra output
streams where error messages are sent as they
arise at runtime.  These side outputs are always a second-class
mechanism: the error streams cannot be transformed or used in a manner other
than dumping them to a log somewhere.  \core{} provides a first-class account of
side outputs, encoding them as a parallel output type. A function $s \to t$
that may produce errors of type $e$ can have type $s \to t \| e^\star$.
Alternately, errors can be handled inline in the traditional functional
way, using a sum type $s + e$.

\subsubsection*{Partitioning and Merging}
Partitioning is a crucial streaming idiom where a single stream of
data is split into two or more parallel streams to be
routed to different downstream processing nodes, thus
exposing parallelism and increasing potential throughput.
Appendix~\ref{app:partitioning} shows how two different partitioning
strategies can be implemented in \core{}.
First is a {\em round-robin partitioner},
which fairly partitions an incoming stream of type
$s^\star$ into a parallel pair of streams $s^\star \| s^\star$
by sending the first element to the left branch, the second to the
right, and so on.

Second is a {\em decision-based partitioner}, which routes a stream of type $s^\star$
one direction or another into an output stream of type $t^\star \| r^\star$ based on the result
of a function from $s$ to $t + r$.
\ifextended
Like with filter, we can similarly recover a \emph{predicate-based} partitioner by lifting
a predicate to a function into sums.
\fi

\jwc{TODO: merge}


\subsubsection*{Windowing and Punctuation}
{\em Windowing} is another core concept in stream processing systems, where
aggregation operations like
moving averages or sums are defined over ``windows''---groupings of consecutive
events, gathered together into a set.  In \core{}, these transformers are just
\texttt{map}s over a stream whose elements are windows.  Given a per-window
aggregation transformer \texttt{f} from an individual window
$s^\star$ to a result type
$t$, plus a ``windowing strategy'' \texttt{win} which takes a stream $r^\star$
and turns it into a stream of windows $s^{\star\star}$, we can write
the windowed operation as \texttt{xs : r$^\star$ |-
map<f>(win(xs)) : t$^\star$}.  Appendix~\ref{app:windowing} defines both
sliding and tumbling size-based window operators,
as well as punctuation-based windowing, where windows are delimited by
punctuation marks inserted into the stream.

\section{Related Work}
\label{sec:related-work}

\noindent
Streams as a programming abstraction have their sources in early work
in the programming languages~\cite{gilles1974semantics,
  burge1975stream, stephens1997survey, thies2002streamit}
and database~\cite{Aurora, Borealis, TelegraphCQ, CACQ, STREAM2004,
  arasu2003cql, arasu2006cql}
communities.
Though streams have mostly been viewed as homogeneous
sequences, more interesting treatments have also been proposed.
%
For example, streams in the database literature are sometimes viewed as time-varying relations,
while the PL community has produced formalisms like process calculi and functional reactive programming.
To our knowledge, ours is first type system for
streams capturing both (1) heterogeneous patterns of
events over time and (2) combinations of parallel and sequential
data.

\SIMPLEHEADER{Sequential, homogeneous streams and dataflow programs}
Traditionally, streams have been viewed in the PL community as coinductive sequences~\cite{burge1975stream}: a stream of \texttt{A} has a single (co)constructor, $\texttt{cocons}: \texttt{Stream} A \to (A \times \texttt{Stream } A)$ and acts as a lazily evaluated infinite list.
In particular, this is the setting of traditional \emph{dataflow programming}~\cite{stephens1997survey}.
One major challenge in reasoning about dataflow over sequential
streams is the nondeterminism arising from operators
whose output may depend on the order in
which events arrive on multiple input streams.
Kahn's seminal ``process networks''~\cite{gilles1974semantics}
(including their restriction to synchronous networks~\cite{lee1987synchronous,thies2002streamit,BCEHlGdS2003SL})
avoid this problem by  allowing only \emph{blocking reads} of
messages on FIFO queues.
In contrast, the semantics of \core{} leverages its type structure to
guarantee deterministic parallel processing {\em without} blocking in many cases.
For example, in the context of a \ruleName{T-Let} rule,
if the type system can detect statically that a transformer is using
two parallel streams safely, it can read from them
simultaneously.

\SIMPLEHEADER{Partitioned streams}
Building on streams as homogeneous sequences,
modern stream processing systems such as
Flink~\cite{Flink,Flink2015},
Spark Streaming~\cite{SparkStreaming,Spark2013},
Samza~\cite{Samza,Samza2017},
Arc~\cite{arc-lang},
and Storm~\cite{Storm}
support {\em dynamic partitioning}: a stream type can define one stream
with many parallel substreams,
where the number of substreams and assignment of data to substreams is determined at runtime.
The type \texttt{Stream t} in these systems
is implicitly a parallel composition of homogeneous streams: $\texttt{t}^\star \| \cdots \| \texttt{t}^\star$.
Unlike in \core{}, these parallel substreams cannot have more general types.

Some which papers which attempt to build very general compile targets for stream
processing support parallelism in only restricted ways. For example,
Brooklet~\cite{soule2010universal} and the DON Calculus~\cite{dexter2022essence}
support data parallelism only as an optimization pass in limited cases.
This is because stream partitioning does not in general
preserve the semantics of the source program and can introduce
undesirable nondeterminism~\cite{mamouras2019data,schneider2013safe,hirzel2014catalog}.
While \core{} does not support dynamic partitioning, we hope to address it in future work; see Section~\ref{sec:future-work}.


\hyphenation{time-stamps}

\SIMPLEHEADER{Streams as time-varying relations}
In the database literature,
streams are often viewed as relations (sets of tuples) that vary over time.
Stream management systems in the early 2000s pioneered this paradigm,
including
Aurora~\cite{Aurora} and
Borealis~\cite{Borealis},
TelegraphCQ~\cite{TelegraphCQ} and
CACQ~\cite{CACQ},
and STREAM~\cite{STREAM2004}.
A time-varying relation can be viewed as either a function
from timestamps to finite relations
or an infinite set of timestamped values;
this correspondence was elegantly exploited by early streaming query languages such as CQL~\cite{arasu2003cql,arasu2006cql}
and remains popular today~\cite{jain2008towards, begoli2019one}.
Time-varying relations can be expressed in \core{} using
Kleene star and concatenation:
a relation of tuples of type \texttt{T} timestamped by \texttt{Time} can be expressed as
$\left(\texttt{T}^\star \cdot \texttt{\texttt{Time}}\right)^\star$.
We can also express the common pattern where parallel streams
are synchronized by a single timestamp (again, modulo dynamic
partitioning) with types like
$\left(\left(\texttt{T}^\star \|
\texttt{T}^\star\right) \cdot
\texttt{\texttt{Time}}\right)^\star$.
Each \texttt{Time} event is a punctuation mark containing the
timestamp of the prior set of
tuples~\cite{punctuation2003, heartbeats2005}.
Traditional systems include separate APIs for
operations that
modify punctuation (e.g., a \emph{delay} function that increments timestamps);
whereas in our system they are ordinary stream operators and
punctuation markers are ordinary events.

\SIMPLEHEADER{Streams as pomsets}
A sweet spot between the homogeneous sequential and relational
viewpoints is found in prior work treating streams as
{\em pomsets} (partially ordered multisets)~\cite{synch-schemas, mamouras2019data,
  kallas2020diffstream, kallas2022flumina, kappePomset},
inspired by work in concurrency
theory~\cite{mazurkiewicz1986trace, DiekertR1995}.
In a pomset,
data items may be completely ordered (a sequence), completely unordered (a bag), or somewhere in between.
%
Some recent works have proposed pomset-based types for streams~\cite{mamouras2019data,
synch-schemas}, but their types do not support concatenation
and do not come with type {\em systems}---programs must be shown to be
well typed semantically, rather than via syntactic typing
rules.

\emph{Functional reactive programming (FRP)}~\cite{elliott1997functional}
treats programs as incremental, reactive state machines
written using functional combinators.
The fundamental abstraction is a ``signal'':
a time-varying value \texttt{Sig(A) = Time -> A}.
Work on type systems for FRP has used modal and substructural types
~\cite{simplyratt, diamondsnotforever, fairreactive,
krishnaswami13}
to guarantee properties like causality, productivity,
and space leak freedom. While our type system is not
{\em designed} to address these issues, it does incidentally have bearing on
them. For one, our incremental semantics
demonstrates that \core{}'s type system enforces causality: since outputs that have been incrementally emitted cannot be retracted or changed,
the type system must ensure that past outputs cannot depend on future inputs. Similarly,
potential space leaks can be detected
statically by checking that only bounded-sized types are buffered using
\texttt{wait} or the buffering built into the left rules for sums and star.
Our current calculus does not guarantee productivity (new inputs must
eventually produce new outputs),
but in Section~\ref{sec:future-work}
we discuss how to remedy this by imposing guardedness conditions on
recursive calls.

\citet{ltlfrp} permits the type of a signal to vary over time, using dependent types
inspired by Linear Temporal Logic
\cite{pnueli1977LTL}.
This system includes an {\em until} type that behaves like our
concatenation type: a signal of type $A \,U\, B$ is a signal of type $A$,
followed by a signal of type $B$.
However, unlike parallel streams in our setting, time updates in
steps, discretely; i.e., parallel signals all present new values together, at the same time.
Concurrently with our work, \citet{bahr2023asynchronous} proposes a modal type system to weaken the synchronicity assumption;
however, it still treats signals as homogeneous: the type of data cannot change over time.
Lastly, \citet{event-driven} develop a modal type system which expresses low-level event handlers.
These are also purely synchronous, and the programs are written as event handlers
as opposed to high-level ``batch'' processors.

\emph{Stream Runtime Verification (SRV)} aims, broadly, to monitor streams
at runtime and provide boolean or numerical ``triggers'' that fire when they
satisfy some specification.  Many RV projects like LOLA
\cite{lola}, HLola \cite{hlola}, RTLola \cite{rtlola}, Striver \cite{striver},
HStriver \cite{hstriver} also provide high-level, declarative specification languages
for writing such monitors. Because these languages often use regular expressions or LTL
as a formalism, they often bear a resemblance to our stream types.
Despite this similarity, our goals and methods are quite different. Unlike
the dynamically-checked specifications of SRV, the types in \lang{} are static guarantees:
a stream program of type $s$ necessarily produces a stream of type $s$.

\SIMPLEHEADER{Streaming with Laziness}
It is folklore in the Haskell community that a ``sufficiently lazy'' list program
can be run as a streaming program using a clever trick with lazy IO
\cite{iteratees} \cite{jmct-chat}. This
``sufficient laziness'' condition is syntactically brittle, and requires an expert Haskell
programmer to carefully ensure that all functions involved are lazy in the just the
right way. Indeed, many Haskell programmers instead reach for combinator libraries like Pipes \cite{pipes}
FoldL \cite{foldl}, Conduit \cite{conduit}, Streamly \cite{streamly}, and others to ensure their programs have a streaming semantics.
In \lang{}, the type system takes care of this for you: all well-typed programs can be given a streaming
semantics. Moreover, the \core{} semantics gives a direct account of how pure
functions execute incrementally as state machines, as opposed to
the way that Haskell's non-strict semantics incidentally yields streaming behavior when combined with Lazy IO.


\SIMPLEHEADER{Session types and process calculi}
Another large body of work with similar vision is
session types for process calculi~\cite{honda2008multiparty}, where types describe complex
sequential protocols between communicating processes as they
evolve through time. A main difference from our work is that
the session type of a process describes the {\em protocol} for its
communications with other processes---i.e., the sequence of sends and
receives on different channels---while the stream type of a \core{}
program describes only the data that it communicates.
Indeed, a stream transformer might display many
patterns of communication with downstream transformers:
it can run in ``batch mode''---sending exactly one output after accepting
all available input---or in a sequence smaller steps, sending along
partial outputs as it receives partial inputs.
Also, a single channel in a process calculus
cannot carry parallel substreams: all events in a channel are ordered relative to each other.
Recently, \citet{frumin22session} proposed a session-types
interpretation of BI
that uses the bunched structure very
differently from \core.
In particular, processes of type $A * B$ and $A \wedge B$ both behave
semantically like a process of type $A$ in parallel with a process of
type $B$, while, in
\core{}, $s \cdot t$ and $s \| t$ describe very different streams.

\SIMPLEHEADER{Concurrent Kleene Algebras and regular expression types}
Stream types are partly inspired by Concurrent Kleene Algebras
(CKAs)~\cite{hoare2009concurrent} and related syntaxes for pomset
languages~\cite{kappePomset}, but we are apparently the first to use these formalisms as {\em types} in a programming
language rather than as a tool for reasoning about concurrency.
In particular, traditional applications of Kleene algebra such as NetKAT~\cite{anderson2014netkat} and Concurrent NetKAT~\cite{wagemaker2022concurrent}
use KA to model \emph{programs},
whereas
in \core{} we use the KA structure to describe the {\em data} that programs exchange,
while the programs themselves are written in a separate language.
We have also taken
inspiration from languages for programming with XML data
\cite[etc.]{HosoyaVouillonPierce2000,CDuce03,Castagna2002} using types
based on regular expressions.

\section{Conclusions and Future Work}
\label{sec:future-work}
We have proposed a new static type system for stream
programming, motivated by a novel variant of BI logic and
able to capture both complex temporal patterns and deterministic
parallel processing.

In the future, we hope to add more types to \core{}. Adding a support
for {\emph bags} --- unbounded parallelism, the parallel analog of Kleene star ---
would enable dynamic partitioning. \core{} also lacks function types.
The proof theory of BI would imply that there should be two (one for each context former),
but we have yet to investigate what these functions might mean in the streaming setting.

Further theoretical investigations include (1) alternate semantics for stream types, including
a denotational semantics as pomset morphisms, Kahn Process Networks \cite{gilles1974semantics}, or some category of state machines,
(2) eliminating the inertness restriction on let-bindings,
and (3) adding a {\em guardedness} condition on recursive calls
to ensure termination and hence productivity.

On the applied side, we plan to build a distributed implementation of \lang{}
by compiling \core{} terms to programs for an existing stream processing system like
Apache Storm \cite{Storm}, thus inheriting its desirable fault-tolerance and
delivery guarantees.  We hope
to build such a compiler and use it as a platform for experimenting with
type-enabled optimizations and resource usage analysis.




\begin{acks}
  We thank the reviewers for their feedback.
  We also thank Justin Lubin for feedback on drafts of this paper, and Alex
  Kavvos, Andrew Hirsch, Mae Milano, and Michael Arntzenius for helpful
  discussions about early versions of this work.
\end{acks}

\bibliographystyle{ACM-Reference-Format}
\bibliography{refs}


\begin{thebibliography}{77}


\ifx \showCODEN    \undefined \def \showCODEN     #1{\unskip}     \fi
\ifx \showDOI      \undefined \def \showDOI       #1{#1}\fi
\ifx \showISBNx    \undefined \def \showISBNx     #1{\unskip}     \fi
\ifx \showISBNxiii \undefined \def \showISBNxiii  #1{\unskip}     \fi
\ifx \showISSN     \undefined \def \showISSN      #1{\unskip}     \fi
\ifx \showLCCN     \undefined \def \showLCCN      #1{\unskip}     \fi
\ifx \shownote     \undefined \def \shownote      #1{#1}          \fi
\ifx \showarticletitle \undefined \def \showarticletitle #1{#1}   \fi
\ifx \showURL      \undefined \def \showURL       {\relax}        \fi
\providecommand\bibfield[2]{#2}
\providecommand\bibinfo[2]{#2}
\providecommand\natexlab[1]{#1}
\providecommand\showeprint[2][]{arXiv:#2}

\bibitem[Abadi et~al\mbox{.}(2005)]%
        {Borealis}
\bibfield{author}{\bibinfo{person}{Daniel~J Abadi}, \bibinfo{person}{Yanif Ahmad}, \bibinfo{person}{Magdalena Balazinska}, \bibinfo{person}{Uğur Çetintemel}, \bibinfo{person}{Mitch Cherniack}, \bibinfo{person}{Jeong-Hyon Hwang}, \bibinfo{person}{Wolfgang Lindner}, \bibinfo{person}{Anurag Maskey}, \bibinfo{person}{Alex Rasin}, \bibinfo{person}{Esther Ryvkina}, \bibinfo{person}{Nesime Tatbul}, \bibinfo{person}{Ying Xing}, {and} \bibinfo{person}{Stanley Zdonik}.} \bibinfo{year}{2005}\natexlab{}.
\newblock \showarticletitle{The Design of the {Borealis} Stream Processing Engine}. In \bibinfo{booktitle}{\emph{Second Biennial Conference on Innovative Data Systems Research (CIDR)}}.
\newblock


\bibitem[Abadi et~al\mbox{.}(2003)]%
        {Aurora}
\bibfield{author}{\bibinfo{person}{Daniel~J Abadi}, \bibinfo{person}{Don Carney}, \bibinfo{person}{Uğur Çetintemel}, \bibinfo{person}{Mitch Cherniack}, \bibinfo{person}{Christian Convey}, \bibinfo{person}{Sangdon Lee}, \bibinfo{person}{Michael Stonebraker}, \bibinfo{person}{Nesime Tatbul}, {and} \bibinfo{person}{Stan Zdonik}.} \bibinfo{year}{2003}\natexlab{}.
\newblock \showarticletitle{{Aurora}: A New Model and Architecture for Data Stream Management}.
\newblock \bibinfo{journal}{\emph{The VLDB Journal}} \bibinfo{volume}{12}, \bibinfo{number}{2} (\bibinfo{year}{2003}).
\newblock
\urldef\tempurl%
\url{https://doi.org/10.1007/s00778-003-0095-z}
\showDOI{\tempurl}


\bibitem[Alur et~al\mbox{.}(2021)]%
        {synch-schemas}
\bibfield{author}{\bibinfo{person}{Rajeev Alur}, \bibinfo{person}{Phillip Hilliard}, \bibinfo{person}{Zachary~G Ives}, \bibinfo{person}{Konstantinos Kallas}, \bibinfo{person}{Konstantinos Mamouras}, \bibinfo{person}{Filip Niksic}, \bibinfo{person}{Caleb Stanford}, \bibinfo{person}{Val Tannen}, {and} \bibinfo{person}{Anton Xue}.} \bibinfo{year}{2021}\natexlab{}.
\newblock \showarticletitle{Synchronization Schemas}.
\newblock \bibinfo{journal}{\emph{Invited contribution, Principles of Database Systems}}.
\newblock


\bibitem[Anderson et~al\mbox{.}(2014)]%
        {anderson2014netkat}
\bibfield{author}{\bibinfo{person}{Carolyn~Jane Anderson}, \bibinfo{person}{Nate Foster}, \bibinfo{person}{Arjun Guha}, \bibinfo{person}{Jean-Baptiste Jeannin}, \bibinfo{person}{Dexter Kozen}, \bibinfo{person}{Cole Schlesinger}, {and} \bibinfo{person}{David Walker}.} \bibinfo{year}{2014}\natexlab{}.
\newblock \showarticletitle{NetKAT: Semantic foundations for networks}.
\newblock \bibinfo{journal}{\emph{Acm sigplan notices}} \bibinfo{volume}{49}, \bibinfo{number}{1} (\bibinfo{year}{2014}), \bibinfo{pages}{113--126}.
\newblock


\bibitem[Arasu et~al\mbox{.}(2004)]%
        {STREAM2004}
\bibfield{author}{\bibinfo{person}{Arvind Arasu}, \bibinfo{person}{Brian Babcock}, \bibinfo{person}{Shivnath Babu}, \bibinfo{person}{John Cieslewicz}, \bibinfo{person}{Mayur Datar}, \bibinfo{person}{Keith Ito}, \bibinfo{person}{Rajeev Motwani}, \bibinfo{person}{Utkarsh Srivastava}, {and} \bibinfo{person}{Jennifer Widom}.} \bibinfo{year}{2004}\natexlab{}.
\newblock \bibinfo{booktitle}{\emph{{STREAM}: The {Stanford} Data Stream Management System}}.
\newblock \bibinfo{type}{{T}echnical {R}eport} 2004-20. \bibinfo{institution}{Stanford InfoLab}.
\newblock


\bibitem[Arasu et~al\mbox{.}(2003)]%
        {arasu2003cql}
\bibfield{author}{\bibinfo{person}{Arvind Arasu}, \bibinfo{person}{Shivnath Babu}, {and} \bibinfo{person}{Jennifer Widom}.} \bibinfo{year}{2003}\natexlab{}.
\newblock \showarticletitle{{CQL}: A language for continuous queries over streams and relations}. In \bibinfo{booktitle}{\emph{International Workshop on Database Programming Languages}}. Springer.
\newblock


\bibitem[Arasu et~al\mbox{.}(2006)]%
        {arasu2006cql}
\bibfield{author}{\bibinfo{person}{Arvind Arasu}, \bibinfo{person}{Shivnath Babu}, {and} \bibinfo{person}{Jennifer Widom}.} \bibinfo{year}{2006}\natexlab{}.
\newblock \showarticletitle{The {CQL} Continuous Query Language: Semantic Foundations and Query Execution}.
\newblock \bibinfo{journal}{\emph{The VLDB Journal}} \bibinfo{volume}{15}, \bibinfo{number}{2} (\bibinfo{year}{2006}).
\newblock
\urldef\tempurl%
\url{https://doi.org/10.1007/s00778-004-0147-z}
\showDOI{\tempurl}


\bibitem[Bahr et~al\mbox{.}(2019)]%
        {simplyratt}
\bibfield{author}{\bibinfo{person}{Patrick Bahr}, \bibinfo{person}{Christian~Uldal Graulund}, {and} \bibinfo{person}{Rasmus~Ejlers M\o{}gelberg}.} \bibinfo{year}{2019}\natexlab{}.
\newblock \showarticletitle{Simply RaTT: A Fitch-Style Modal Calculus for Reactive Programming without Space Leaks}.
\newblock \bibinfo{journal}{\emph{Proc. ACM Program. Lang.}} \bibinfo{volume}{3}, \bibinfo{number}{ICFP}, Article \bibinfo{articleno}{109} (\bibinfo{date}{jul} \bibinfo{year}{2019}), \bibinfo{numpages}{27}~pages.
\newblock
\urldef\tempurl%
\url{https://doi.org/10.1145/3341713}
\showDOI{\tempurl}


\bibitem[Bahr et~al\mbox{.}(2021)]%
        {diamondsnotforever}
\bibfield{author}{\bibinfo{person}{Patrick Bahr}, \bibinfo{person}{Christian~Uldal Graulund}, {and} \bibinfo{person}{Rasmus~Ejlers M\o{}gelberg}.} \bibinfo{year}{2021}\natexlab{}.
\newblock \showarticletitle{Diamonds Are Not Forever: Liveness in Reactive Programming with Guarded Recursion}.
\newblock \bibinfo{journal}{\emph{Proc. ACM Program. Lang.}} \bibinfo{volume}{5}, \bibinfo{number}{POPL}, Article \bibinfo{articleno}{2} (\bibinfo{date}{jan} \bibinfo{year}{2021}), \bibinfo{numpages}{28}~pages.
\newblock
\urldef\tempurl%
\url{https://doi.org/10.1145/3434283}
\showDOI{\tempurl}


\bibitem[Bahr and Møgelberg(2023)]%
        {bahr2023asynchronous}
\bibfield{author}{\bibinfo{person}{Patrick Bahr} {and} \bibinfo{person}{Rasmus~Ejlers Møgelberg}.} \bibinfo{year}{2023}\natexlab{}.
\newblock \bibinfo{title}{Asynchronous Modal FRP}.
\newblock
\newblock
\showeprint[arxiv]{2303.03170}~[cs.PL]


\bibitem[Begoli et~al\mbox{.}(2019)]%
        {begoli2019one}
\bibfield{author}{\bibinfo{person}{Edmon Begoli}, \bibinfo{person}{Tyler Akidau}, \bibinfo{person}{Fabian Hueske}, \bibinfo{person}{Julian Hyde}, \bibinfo{person}{Kathryn Knight}, {and} \bibinfo{person}{Kenneth Knowles}.} \bibinfo{year}{2019}\natexlab{}.
\newblock \showarticletitle{One {SQL} to Rule Them All-an Efficient and Syntactically Idiomatic Approach to Management of Streams and Tables}. In \bibinfo{booktitle}{\emph{International Conference on Management of Data (SIGMOD)}}.
\newblock


\bibitem[Benveniste et~al\mbox{.}(2003)]%
        {BCEHlGdS2003SL}
\bibfield{author}{\bibinfo{person}{Albert Benveniste}, \bibinfo{person}{Paul Caspi}, \bibinfo{person}{Stephen~A Edwards}, \bibinfo{person}{Nicolas Halbwachs}, \bibinfo{person}{Paul Le~Guernic}, {and} \bibinfo{person}{Robert De~Simone}.} \bibinfo{year}{2003}\natexlab{}.
\newblock \showarticletitle{The synchronous languages 12 years later}.
\newblock \bibinfo{journal}{\emph{Proc. IEEE}} \bibinfo{volume}{91}, \bibinfo{number}{1} (\bibinfo{year}{2003}).
\newblock


\bibitem[Benzaken et~al\mbox{.}(2003)]%
        {CDuce03}
\bibfield{author}{\bibinfo{person}{V\'eronique Benzaken}, \bibinfo{person}{Giuseppe Castagna}, {and} \bibinfo{person}{Alain Frisch}.} \bibinfo{year}{2003}\natexlab{}.
\newblock \showarticletitle{{CDuce}: {A}n {XML}-Centric General-Purpose Language}. In \bibinfo{booktitle}{\emph{{ACM} {SIGPLAN} {I}nternational {C}onference on {F}unctional {P}rogramming ({ICFP})}}. \bibinfo{pages}{51--63}.
\newblock
\showISBNx{1-58113-756-7}


\bibitem[Berry and Gonthier(1992)]%
        {esterel}
\bibfield{author}{\bibinfo{person}{G{\'e}rard Berry} {and} \bibinfo{person}{Georges Gonthier}.} \bibinfo{year}{1992}\natexlab{}.
\newblock \showarticletitle{The Esterel synchronous programming language: design, semantics, implementation}.
\newblock \bibinfo{journal}{\emph{Science of Computer Programming}} \bibinfo{volume}{19}, \bibinfo{number}{2} (\bibinfo{year}{1992}), \bibinfo{pages}{87--152}.
\newblock
\showISSN{0167-6423}
\urldef\tempurl%
\url{https://doi.org/10.1016/0167-6423(92)90005-V}
\showDOI{\tempurl}


\bibitem[Brotherston(2005)]%
        {brotherston_cyclic_2005}
\bibfield{author}{\bibinfo{person}{James Brotherston}.} \bibinfo{year}{2005}\natexlab{}.
\newblock \showarticletitle{Cyclic {Proofs} for {First}-{Order} {Logic} with {Inductive} {Definitions}}. In \bibinfo{booktitle}{\emph{Automated {Reasoning} with {Analytic} {Tableaux} and {Related} {Methods}}} \emph{(\bibinfo{series}{Lecture {Notes} in {Computer} {Science}})}, \bibfield{editor}{\bibinfo{person}{Bernhard Beckert}} (Ed.). \bibinfo{publisher}{Springer}, \bibinfo{address}{Berlin, Heidelberg}, \bibinfo{pages}{78--92}.
\newblock
\showISBNx{978-3-540-31822-4}
\urldef\tempurl%
\url{https://doi.org/10.1007/11554554_8}
\showDOI{\tempurl}


\bibitem[Brzozowski(1964)]%
        {brzozowski1964derivatives}
\bibfield{author}{\bibinfo{person}{Janusz~A Brzozowski}.} \bibinfo{year}{1964}\natexlab{}.
\newblock \showarticletitle{Derivatives of regular expressions}.
\newblock \bibinfo{journal}{\emph{J. ACM}} \bibinfo{volume}{11}, \bibinfo{number}{4} (\bibinfo{year}{1964}).
\newblock


\bibitem[Burge(1975)]%
        {burge1975stream}
\bibfield{author}{\bibinfo{person}{William~H Burge}.} \bibinfo{year}{1975}\natexlab{}.
\newblock \showarticletitle{Stream processing functions}.
\newblock \bibinfo{journal}{\emph{IBM Journal of Research and Development}} \bibinfo{volume}{19}, \bibinfo{number}{1} (\bibinfo{year}{1975}).
\newblock


\bibitem[Carbone et~al\mbox{.}(2015)]%
        {Flink2015}
\bibfield{author}{\bibinfo{person}{Paris Carbone}, \bibinfo{person}{Asterios Katsifodimos}, \bibinfo{person}{Stephan Ewen}, \bibinfo{person}{Volker Markl}, \bibinfo{person}{Seif Haridi}, {and} \bibinfo{person}{Kostas Tzoumas}.} \bibinfo{year}{2015}\natexlab{}.
\newblock \showarticletitle{{Apache Flink}: Stream and Batch Processing in a Single Engine}.
\newblock \bibinfo{journal}{\emph{Bulletin of the IEEE Computer Society Technical Committee on Data Engineering}} \bibinfo{volume}{36}, \bibinfo{number}{4} (\bibinfo{year}{2015}).
\newblock


\bibitem[Cave et~al\mbox{.}(2014)]%
        {fairreactive}
\bibfield{author}{\bibinfo{person}{Andrew Cave}, \bibinfo{person}{Francisco Ferreira}, \bibinfo{person}{Prakash Panangaden}, {and} \bibinfo{person}{Brigitte Pientka}.} \bibinfo{year}{2014}\natexlab{}.
\newblock \showarticletitle{Fair Reactive Programming}. In \bibinfo{booktitle}{\emph{Proceedings of the 41st ACM SIGPLAN-SIGACT Symposium on Principles of Programming Languages}} (San Diego, California, USA) \emph{(\bibinfo{series}{POPL '14})}. \bibinfo{publisher}{Association for Computing Machinery}, \bibinfo{address}{New York, NY, USA}, \bibinfo{pages}{361–372}.
\newblock
\showISBNx{9781450325448}
\urldef\tempurl%
\url{https://doi.org/10.1145/2535838.2535881}
\showDOI{\tempurl}


\bibitem[Ceresa et~al\mbox{.}(2020)]%
        {hlola}
\bibfield{author}{\bibinfo{person}{Martín Ceresa}, \bibinfo{person}{Felipe Gorostiaga}, {and} \bibinfo{person}{C\'esar S\'achez}.} \bibinfo{year}{2020}\natexlab{}.
\newblock \showarticletitle{Declarative Stream Runtime Verification (hLola)}. In \bibinfo{booktitle}{\emph{Proc. of the 18th Asian Symposium on Programming Languages and Systems (APLAS'20)}} \emph{(\bibinfo{series}{LNCS}, Vol.~\bibinfo{volume}{12470})}. \bibinfo{publisher}{Springer}, \bibinfo{pages}{25--43}.
\newblock
\showISBNx{978-3-030-64436-9}
\urldef\tempurl%
\url{https://doi.org/10.1007/978-3-030-64437-6_2}
\showDOI{\tempurl}


\bibitem[Chandrasekaran et~al\mbox{.}(2003)]%
        {TelegraphCQ}
\bibfield{author}{\bibinfo{person}{Sirish Chandrasekaran}, \bibinfo{person}{Owen Cooper}, \bibinfo{person}{Amol Deshpande}, \bibinfo{person}{Michael~J Franklin}, \bibinfo{person}{Joseph~M Hellerstein}, \bibinfo{person}{Wei Hong}, \bibinfo{person}{Sailesh Krishnamurthy}, \bibinfo{person}{Samuel~R Madden}, \bibinfo{person}{Fred Reiss}, {and} \bibinfo{person}{Mehul~A Shah}.} \bibinfo{year}{2003}\natexlab{}.
\newblock \showarticletitle{{TelegraphCQ}: continuous dataflow processing}. In \bibinfo{booktitle}{\emph{ACM SIGMOD International Conference on Management of Data (SIGMOD}}. \bibinfo{pages}{668--668}.
\newblock


\bibitem[D'Angelo et~al\mbox{.}(2005)]%
        {lola}
\bibfield{author}{\bibinfo{person}{B. D'Angelo}, \bibinfo{person}{S. Sankaranarayanan}, \bibinfo{person}{C. Sanchez}, \bibinfo{person}{W. Robinson}, \bibinfo{person}{B. Finkbeiner}, \bibinfo{person}{H.B. Sipma}, \bibinfo{person}{S. Mehrotra}, {and} \bibinfo{person}{Z. Manna}.} \bibinfo{year}{2005}\natexlab{}.
\newblock \showarticletitle{LOLA: runtime monitoring of synchronous systems}. In \bibinfo{booktitle}{\emph{12th International Symposium on Temporal Representation and Reasoning (TIME'05)}}. \bibinfo{pages}{166--174}.
\newblock
\urldef\tempurl%
\url{https://doi.org/10.1109/TIME.2005.26}
\showDOI{\tempurl}


\bibitem[Derakhshan(2021)]%
        {derakhshan_session-typed_2021}
\bibfield{author}{\bibinfo{person}{Farzaneh Derakhshan}.} \bibinfo{year}{2021}\natexlab{}.
\newblock \emph{\bibinfo{title}{Session-{Typed} {Recursive} {Processes} and {Circular} {Proofs}}}.
\newblock \bibinfo{thesistype}{Ph.\,D. Dissertation}. \bibinfo{school}{Caregie Mellon University}.
\newblock
\urldef\tempurl%
\url{https://www.andrew.cmu.edu/user/fderakhs/publications/Dissertation_Farzaneh.pdf}
\showURL{%
\tempurl}


\bibitem[Dexter et~al\mbox{.}(2022)]%
        {dexter2022essence}
\bibfield{author}{\bibinfo{person}{Philip Dexter}, \bibinfo{person}{Yu~David Liu}, {and} \bibinfo{person}{Kenneth Chiu}.} \bibinfo{year}{2022}\natexlab{}.
\newblock \showarticletitle{The essence of online data processing}.
\newblock \bibinfo{journal}{\emph{Proceedings of the ACM on Programming Languages}} \bibinfo{volume}{6}, \bibinfo{number}{OOPSLA2} (\bibinfo{year}{2022}), \bibinfo{pages}{899--928}.
\newblock


\bibitem[Diekert and Rozenberg(1995)]%
        {DiekertR1995}
\bibfield{author}{\bibinfo{person}{Volker Diekert} {and} \bibinfo{person}{Grzegorz Rozenberg}.} \bibinfo{year}{1995}\natexlab{}.
\newblock \bibinfo{booktitle}{\emph{{The Book of Traces}}}.
\newblock \bibinfo{publisher}{World Scientific}.
\newblock
\urldef\tempurl%
\url{https://doi.org/10.1142/2563}
\showDOI{\tempurl}


\bibitem[Elliott and Hudak(1997)]%
        {elliott1997functional}
\bibfield{author}{\bibinfo{person}{Conal Elliott} {and} \bibinfo{person}{Paul Hudak}.} \bibinfo{year}{1997}\natexlab{}.
\newblock \showarticletitle{Functional reactive animation}. In \bibinfo{booktitle}{\emph{Second ACM SIGPLAN International Conference on Functional Programming (ICFP)}}.
\newblock


\bibitem[Faymonville et~al\mbox{.}(2019)]%
        {rtlola}
\bibfield{author}{\bibinfo{person}{Peter Faymonville}, \bibinfo{person}{Bernd Finkbeiner}, \bibinfo{person}{Malte Schledjewski}, \bibinfo{person}{Maximilian Schwenger}, \bibinfo{person}{Marvin Stenger}, \bibinfo{person}{Leander Tentrup}, {and} \bibinfo{person}{Hazem Torfah}.} \bibinfo{year}{2019}\natexlab{}.
\newblock \showarticletitle{StreamLAB: Stream-based Monitoring of Cyber-Physical Systems}. In \bibinfo{booktitle}{\emph{Computer Aided Verification}}, \bibfield{editor}{\bibinfo{person}{Isil Dillig} {and} \bibinfo{person}{Serdar Tasiran}} (Eds.). \bibinfo{publisher}{Springer International Publishing}, \bibinfo{pages}{421--431}.
\newblock
\showISBNx{978-3-030-25540-4}


\bibitem[Fortier and Santocanale(2013)]%
        {fortier_cuts_2013}
\bibfield{author}{\bibinfo{person}{Jérôme Fortier} {and} \bibinfo{person}{Luigi Santocanale}.} \bibinfo{year}{2013}\natexlab{}.
\newblock \showarticletitle{Cuts for circular proofs: semantics and cut-elimination}. In \bibinfo{booktitle}{\emph{Computer {Science} {Logic} 2013 ({CSL} 2013)}} \emph{(\bibinfo{series}{Leibniz {International} {Proceedings} in {Informatics} ({LIPIcs})}, Vol.~\bibinfo{volume}{23})}, \bibfield{editor}{\bibinfo{person}{Simona Ronchi~Della Rocca}} (Ed.). \bibinfo{publisher}{Schloss Dagstuhl–Leibniz-Zentrum fuer Informatik}, \bibinfo{address}{Dagstuhl, Germany}, \bibinfo{pages}{248--262}.
\newblock
\showISBNx{978-3-939897-60-6}
\urldef\tempurl%
\url{https://doi.org/10.4230/LIPIcs.CSL.2013.248}
\showDOI{\tempurl}
\newblock
\shownote{ISSN: 1868-8969}.


\bibitem[Foundation(2019a)]%
        {Flink}
\bibfield{author}{\bibinfo{person}{Apache~Software Foundation}.} \bibinfo{year}{2019}\natexlab{a}.
\newblock \bibinfo{title}{{Apache Flink}}.
\newblock \bibinfo{howpublished}{\url{https://flink.apache.org/}}.
\newblock
\newblock
\shownote{(Accessed July 2022.)}.


\bibitem[Foundation(2019b)]%
        {Heron}
\bibfield{author}{\bibinfo{person}{Apache~Software Foundation}.} \bibinfo{year}{2019}\natexlab{b}.
\newblock \bibinfo{title}{{{Apache Heron} (originally {Twitter Heron})}}.
\newblock \bibinfo{howpublished}{\url{https://heron.incubator.apache.org/}}.
\newblock
\newblock
\shownote{(Accessed July 2022.)}.


\bibitem[Foundation(2019c)]%
        {Samza}
\bibfield{author}{\bibinfo{person}{Apache~Software Foundation}.} \bibinfo{year}{2019}\natexlab{c}.
\newblock \bibinfo{title}{{Apache Samza}}.
\newblock \bibinfo{howpublished}{\url{https://samza.apache.org/}}.
\newblock
\newblock
\shownote{(Accessed July 2022.)}.


\bibitem[Foundation(2019d)]%
        {SparkStreaming}
\bibfield{author}{\bibinfo{person}{Apache~Software Foundation}.} \bibinfo{year}{2019}\natexlab{d}.
\newblock \bibinfo{title}{{Apache Spark Streaming}}.
\newblock \bibinfo{howpublished}{\url{https://spark.apache.org/streaming/}}.
\newblock
\newblock
\shownote{(Accessed July 2022.)}.


\bibitem[Foundation(2019e)]%
        {Storm}
\bibfield{author}{\bibinfo{person}{Apache~Software Foundation}.} \bibinfo{year}{2019}\natexlab{e}.
\newblock \bibinfo{title}{{Apache Storm}}.
\newblock \bibinfo{howpublished}{\url{https://storm.apache.org/}}.
\newblock
\newblock
\shownote{(Accessed July 2022.)}.


\bibitem[Foundation(2021)]%
        {Beam}
\bibfield{author}{\bibinfo{person}{Apache~Software Foundation}.} \bibinfo{year}{2021}\natexlab{}.
\newblock \bibinfo{title}{{Apache Beam}}.
\newblock \bibinfo{howpublished}{\url{https://beam.apache.org/}}.
\newblock
\newblock
\shownote{(Accessed July 2022.)}.


\bibitem[Frisch et~al\mbox{.}(2002)]%
        {Castagna2002}
\bibfield{author}{\bibinfo{person}{Alain Frisch}, \bibinfo{person}{Giuseppe Castagna}, {and} \bibinfo{person}{Veronique Benzaken}.} \bibinfo{year}{2002}\natexlab{}.
\newblock \showarticletitle{Semantic Subtyping}. In \bibinfo{booktitle}{\emph{Logic in Computer Science (LICS)}}.
\newblock


\bibitem[Frumin(2022)]%
        {frumin}
\bibfield{author}{\bibinfo{person}{Dan Frumin}.} \bibinfo{year}{2022}\natexlab{}.
\newblock \showarticletitle{Semantic Cut Elimination for the Logic of Bunched Implications, Formalized in Coq}. In \bibinfo{booktitle}{\emph{Proceedings of the 11th ACM SIGPLAN International Conference on Certified Programs and Proofs}} (Philadelphia, PA, USA) \emph{(\bibinfo{series}{CPP 2022})}. \bibinfo{publisher}{Association for Computing Machinery}, \bibinfo{address}{New York, NY, USA}, \bibinfo{pages}{291–306}.
\newblock
\showISBNx{9781450391825}
\urldef\tempurl%
\url{https://doi.org/10.1145/3497775.3503690}
\showDOI{\tempurl}


\bibitem[Frumin et~al\mbox{.}(2022)]%
        {frumin22session}
\bibfield{author}{\bibinfo{person}{Dan Frumin}, \bibinfo{person}{Emanuele D’Osualdo}, \bibinfo{person}{Bas van~den Heuvel}, {and} \bibinfo{person}{Jorge~A. P\'{e}rez}.} \bibinfo{year}{2022}\natexlab{}.
\newblock \showarticletitle{A Bunch of Sessions: A Propositions-as-Sessions Interpretation of Bunched Implications in Channel-Based Concurrency}.
\newblock \bibinfo{journal}{\emph{Proc. ACM Program. Lang.}} \bibinfo{volume}{6}, \bibinfo{number}{OOPSLA2}, Article \bibinfo{articleno}{155} (\bibinfo{date}{oct} \bibinfo{year}{2022}), \bibinfo{numpages}{29}~pages.
\newblock
\urldef\tempurl%
\url{https://doi.org/10.1145/3563318}
\showDOI{\tempurl}


\bibitem[Gonzalez(2022)]%
        {pipes}
\bibfield{author}{\bibinfo{person}{Gabriella Gonzalez}.} \bibinfo{year}{2022}\natexlab{}.
\newblock \bibinfo{title}{Pipes}.
\newblock \bibinfo{howpublished}{\url{https://hackage.haskell.org/package/pipes}}.
\newblock


\bibitem[Gonzalez(2024)]%
        {foldl}
\bibfield{author}{\bibinfo{person}{Gabriella Gonzalez}.} \bibinfo{year}{2024}\natexlab{}.
\newblock \bibinfo{title}{FoldL}.
\newblock \bibinfo{howpublished}{\url{https://hackage.haskell.org/package/foldl}}.
\newblock


\bibitem[Gorostiaga and S\'anchez(2021)]%
        {hstriver}
\bibfield{author}{\bibinfo{person}{Felipe Gorostiaga} {and} \bibinfo{person}{C\'esar S\'anchez}.} \bibinfo{year}{2021}\natexlab{}.
\newblock \showarticletitle{HStriver: A Very Functional Extensible Tool for the Runtime Verification of Real-Time Event Streams}. In \bibinfo{booktitle}{\emph{Proc. of the 24th Int'l Symp. on Formal Methods (FM'21)}} \emph{(\bibinfo{series}{LNCS}, Vol.~\bibinfo{volume}{13047})}. \bibinfo{publisher}{Springer}, \bibinfo{pages}{563--580}.
\newblock
\urldef\tempurl%
\url{https://doi.org/10.1007/978-3-030-90870-6_30}
\showDOI{\tempurl}


\bibitem[Gorostiaga and Sánchez(2021)]%
        {striver}
\bibfield{author}{\bibinfo{person}{Felipe Gorostiaga} {and} \bibinfo{person}{César Sánchez}.} \bibinfo{year}{2021}\natexlab{}.
\newblock \showarticletitle{Stream runtime verification of real-time event streams with the Striver language}.
\newblock \bibinfo{journal}{\emph{International Journal on Software Tools for Technology Transfer}}  \bibinfo{volume}{23} (\bibinfo{year}{2021}), \bibinfo{pages}{157--183}.
\newblock
\urldef\tempurl%
\url{https://doi.org/10.1007/s10009-021-00605-3}
\showDOI{\tempurl}


\bibitem[Hirzel et~al\mbox{.}(2014)]%
        {hirzel2014catalog}
\bibfield{author}{\bibinfo{person}{Martin Hirzel}, \bibinfo{person}{Robert Soulé}, \bibinfo{person}{Scott Schneider}, \bibinfo{person}{Buğra Gedik}, {and} \bibinfo{person}{Robert Grimm}.} \bibinfo{year}{2014}\natexlab{}.
\newblock \showarticletitle{A catalog of stream processing optimizations}.
\newblock \bibinfo{journal}{\emph{ACM Computing Surveys (CSUR)}} \bibinfo{volume}{46}, \bibinfo{number}{4} (\bibinfo{year}{2014}).
\newblock


\bibitem[Hoare et~al\mbox{.}(2009)]%
        {hoare2009concurrent}
\bibfield{author}{\bibinfo{person}{CAR~(Tony) Hoare}, \bibinfo{person}{Bernhard M{\"o}ller}, \bibinfo{person}{Georg Struth}, {and} \bibinfo{person}{Ian Wehrman}.} \bibinfo{year}{2009}\natexlab{}.
\newblock \showarticletitle{Concurrent {K}leene {A}lgebra}. In \bibinfo{booktitle}{\emph{CONCUR 2009-Concurrency Theory: 20th International Conference, CONCUR 2009, Bologna, Italy, September 1-4, 2009. Proceedings 20}}. Springer, \bibinfo{pages}{399--414}.
\newblock


\bibitem[Honda et~al\mbox{.}(2008)]%
        {honda2008multiparty}
\bibfield{author}{\bibinfo{person}{Kohei Honda}, \bibinfo{person}{Nobuko Yoshida}, {and} \bibinfo{person}{Marco Carbone}.} \bibinfo{year}{2008}\natexlab{}.
\newblock \showarticletitle{Multiparty asynchronous session types}. In \bibinfo{booktitle}{\emph{Proceedings of the 35th annual ACM SIGPLAN-SIGACT symposium on Principles of programming languages}}. \bibinfo{pages}{273--284}.
\newblock


\bibitem[Hosoya et~al\mbox{.}(2005)]%
        {HosoyaVouillonPierce2000}
\bibfield{author}{\bibinfo{person}{Haruo Hosoya}, \bibinfo{person}{J\'er\^ome Vouillon}, {and} \bibinfo{person}{Benjamin~C. Pierce}.} \bibinfo{year}{2005}\natexlab{}.
\newblock \showarticletitle{Regular Expression Types for {XML}}.
\newblock \bibinfo{journal}{\emph{ACM Transactions on Programming Languages and Systems (TOPLAS)}} \bibinfo{volume}{27}, \bibinfo{number}{1} (\bibinfo{date}{Jan.} \bibinfo{year}{2005}), \bibinfo{pages}{46--90}.
\newblock
\newblock
\shownote{Preliminary version in ICFP 2000}.


\bibitem[Jain et~al\mbox{.}(2008)]%
        {jain2008towards}
\bibfield{author}{\bibinfo{person}{Namit Jain}, \bibinfo{person}{Shailendra Mishra}, \bibinfo{person}{Anand Srinivasan}, \bibinfo{person}{Johannes Gehrke}, \bibinfo{person}{Jennifer Widom}, \bibinfo{person}{Hari Balakrishnan}, \bibinfo{person}{Uğur Çetintemel}, \bibinfo{person}{Mitch Cherniack}, \bibinfo{person}{Richard Tibbetts}, {and} \bibinfo{person}{Stan Zdonik}.} \bibinfo{year}{2008}\natexlab{}.
\newblock \showarticletitle{Towards a streaming {SQL} standard}.
\newblock \bibinfo{journal}{\emph{Proceedings of the VLDB Endowment}} \bibinfo{volume}{1}, \bibinfo{number}{2} (\bibinfo{year}{2008}).
\newblock


\bibitem[Jeffrey(2012)]%
        {ltlfrp}
\bibfield{author}{\bibinfo{person}{Alan Jeffrey}.} \bibinfo{year}{2012}\natexlab{}.
\newblock \showarticletitle{LTL Types FRP: Linear-Time Temporal Logic Propositions as Types, Proofs as Functional Reactive Programs}. In \bibinfo{booktitle}{\emph{Proceedings of the Sixth Workshop on Programming Languages Meets Program Verification}} (Philadelphia, Pennsylvania, USA) \emph{(\bibinfo{series}{PLPV '12})}. \bibinfo{publisher}{Association for Computing Machinery}, \bibinfo{address}{New York, NY, USA}, \bibinfo{pages}{49–60}.
\newblock
\showISBNx{9781450311250}
\urldef\tempurl%
\url{https://doi.org/10.1145/2103776.2103783}
\showDOI{\tempurl}


\bibitem[Johnson et~al\mbox{.}(2005)]%
        {heartbeats2005}
\bibfield{author}{\bibinfo{person}{Theodore Johnson}, \bibinfo{person}{Shanmugavelayutham Muthukrishnan}, \bibinfo{person}{Vladislav Shkapenyuk}, {and} \bibinfo{person}{Oliver Spatscheck}.} \bibinfo{year}{2005}\natexlab{}.
\newblock \showarticletitle{A heartbeat mechanism and its application in {Gigascope}}. In \bibinfo{booktitle}{\emph{31st International Conference on Very Large Data Bases (VLDB)}}. VLDB Endowment.
\newblock


\bibitem[Kahn(1974)]%
        {gilles1974semantics}
\bibfield{author}{\bibinfo{person}{Gilles Kahn}.} \bibinfo{year}{1974}\natexlab{}.
\newblock \showarticletitle{The semantics of a simple language for parallel programming}.
\newblock \bibinfo{journal}{\emph{Information Processing}}  \bibinfo{volume}{74} (\bibinfo{year}{1974}).
\newblock


\bibitem[Kallas et~al\mbox{.}(2020)]%
        {kallas2020diffstream}
\bibfield{author}{\bibinfo{person}{Konstantinos Kallas}, \bibinfo{person}{Filip Niksic}, \bibinfo{person}{Caleb Stanford}, {and} \bibinfo{person}{Rajeev Alur}.} \bibinfo{year}{2020}\natexlab{}.
\newblock \showarticletitle{{DiffStream}: differential output testing for stream processing programs}.
\newblock \bibinfo{journal}{\emph{Proceedings of the ACM on Programming Languages}} \bibinfo{volume}{4}, \bibinfo{number}{OOPSLA} (\bibinfo{year}{2020}).
\newblock


\bibitem[Kallas et~al\mbox{.}(2022)]%
        {kallas2022flumina}
\bibfield{author}{\bibinfo{person}{Konstantinos Kallas}, \bibinfo{person}{Filip Niksic}, \bibinfo{person}{Caleb Stanford}, {and} \bibinfo{person}{Rajeev Alur}.} \bibinfo{year}{2022}\natexlab{}.
\newblock \showarticletitle{Stream Processing With Dependency-Guided Synchronization}. In \bibinfo{booktitle}{\emph{Principles and Practice of Parallel Programming (PPoPP)}}.
\newblock


\bibitem[Kapp{\'e} et~al\mbox{.}(2019)]%
        {kappePomset}
\bibfield{author}{\bibinfo{person}{Tobias Kapp{\'e}}, \bibinfo{person}{Paul Brunet}, \bibinfo{person}{Bas Luttik}, \bibinfo{person}{Alexandra Silva}, {and} \bibinfo{person}{Fabio Zanasi}.} \bibinfo{year}{2019}\natexlab{}.
\newblock \showarticletitle{On series-parallel pomset languages: Rationality, context-freeness and automata}.
\newblock \bibinfo{journal}{\emph{Journal of Logical and Algebraic Methods in Programming}}  \bibinfo{volume}{103} (\bibinfo{year}{2019}), \bibinfo{pages}{130--153}.
\newblock
\showISSN{2352-2208}
\urldef\tempurl%
\url{https://doi.org/10.1016/j.jlamp.2018.12.001}
\showDOI{\tempurl}


\bibitem[Kiselyov(2012)]%
        {iteratees}
\bibfield{author}{\bibinfo{person}{Oleg Kiselyov}.} \bibinfo{year}{2012}\natexlab{}.
\newblock \showarticletitle{Iteratees}. In \bibinfo{booktitle}{\emph{Functional and Logic Programming}}, \bibfield{editor}{\bibinfo{person}{Tom Schrijvers} {and} \bibinfo{person}{Peter Thiemann}} (Eds.). \bibinfo{publisher}{Springer Berlin Heidelberg}, \bibinfo{address}{Berlin, Heidelberg}, \bibinfo{pages}{166--181}.
\newblock
\showISBNx{978-3-642-29822-6}


\bibitem[Krishnaswami(2013)]%
        {krishnaswami13}
\bibfield{author}{\bibinfo{person}{Neelakantan~R. Krishnaswami}.} \bibinfo{year}{2013}\natexlab{}.
\newblock \showarticletitle{Higher-Order Functional Reactive Programming without Spacetime Leaks}. In \bibinfo{booktitle}{\emph{Proceedings of the 18th ACM SIGPLAN International Conference on Functional Programming}} (Boston, Massachusetts, USA) \emph{(\bibinfo{series}{ICFP '13})}. \bibinfo{publisher}{Association for Computing Machinery}, \bibinfo{address}{New York, NY, USA}, \bibinfo{pages}{221–232}.
\newblock
\showISBNx{9781450323260}
\urldef\tempurl%
\url{https://doi.org/10.1145/2500365.2500588}
\showDOI{\tempurl}


\bibitem[Kroll et~al\mbox{.}(2019)]%
        {arc-lang}
\bibfield{author}{\bibinfo{person}{Lars Kroll}, \bibinfo{person}{Klas Segeljakt}, \bibinfo{person}{Paris Carbone}, \bibinfo{person}{Christian Schulte}, {and} \bibinfo{person}{Seif Haridi}.} \bibinfo{year}{2019}\natexlab{}.
\newblock \showarticletitle{Arc: an IR for batch and stream programming}. In \bibinfo{booktitle}{\emph{Proceedings of the 17th ACM SIGPLAN International Symposium on Database Programming Languages}} (Phoenix, AZ, USA) \emph{(\bibinfo{series}{DBPL 2019})}. \bibinfo{publisher}{Association for Computing Machinery}, \bibinfo{address}{New York, NY, USA}, \bibinfo{pages}{53–58}.
\newblock
\showISBNx{9781450367189}
\urldef\tempurl%
\url{https://doi.org/10.1145/3315507.3330199}
\showDOI{\tempurl}


\bibitem[Lee and Messerschmitt(1987)]%
        {lee1987synchronous}
\bibfield{author}{\bibinfo{person}{Edward~A Lee} {and} \bibinfo{person}{David~G Messerschmitt}.} \bibinfo{year}{1987}\natexlab{}.
\newblock \showarticletitle{Synchronous data flow}.
\newblock \bibinfo{journal}{\emph{Proc. IEEE}} \bibinfo{volume}{75}, \bibinfo{number}{9} (\bibinfo{year}{1987}).
\newblock


\bibitem[Madden et~al\mbox{.}(2002)]%
        {CACQ}
\bibfield{author}{\bibinfo{person}{Samuel Madden}, \bibinfo{person}{Mehul Shah}, \bibinfo{person}{Joseph~M Hellerstein}, {and} \bibinfo{person}{Vijayshankar Raman}.} \bibinfo{year}{2002}\natexlab{}.
\newblock \showarticletitle{Continuously Adaptive Continuous Queries over Streams}. In \bibinfo{booktitle}{\emph{ACM SIGMOD International Conference on Management of Data (SIGMOD)}}. \bibinfo{numpages}{12}~pages.
\newblock
\urldef\tempurl%
\url{https://doi.org/10.1145/564691.564698}
\showDOI{\tempurl}


\bibitem[Mamouras et~al\mbox{.}(2019)]%
        {mamouras2019data}
\bibfield{author}{\bibinfo{person}{Konstantinos Mamouras}, \bibinfo{person}{Caleb Stanford}, \bibinfo{person}{Rajeev Alur}, \bibinfo{person}{Zachary~G Ives}, {and} \bibinfo{person}{Val Tannen}.} \bibinfo{year}{2019}\natexlab{}.
\newblock \showarticletitle{Data-trace types for distributed stream processing systems}. In \bibinfo{booktitle}{\emph{40th ACM SIGPLAN Conference on Programming Language Design and Implementation (PLDI)}}.
\newblock


\bibitem[Mazurkiewicz(1986)]%
        {mazurkiewicz1986trace}
\bibfield{author}{\bibinfo{person}{Antoni Mazurkiewicz}.} \bibinfo{year}{1986}\natexlab{}.
\newblock \showarticletitle{Trace theory}. In \bibinfo{booktitle}{\emph{Advanced course on Petri nets}}. Springer.
\newblock


\bibitem[Milner(1978)]%
        {milner78}
\bibfield{author}{\bibinfo{person}{Robin Milner}.} \bibinfo{year}{1978}\natexlab{}.
\newblock \showarticletitle{A theory of type polymorphism in programming}.
\newblock \bibinfo{journal}{\emph{J. Comput. System Sci.}} \bibinfo{volume}{17}, \bibinfo{number}{3} (\bibinfo{year}{1978}), \bibinfo{pages}{348--375}.
\newblock
\showISSN{0022-0000}
\urldef\tempurl%
\url{https://doi.org/10.1016/0022-0000(78)90014-4}
\showDOI{\tempurl}


\bibitem[Noghabi et~al\mbox{.}(2017)]%
        {Samza2017}
\bibfield{author}{\bibinfo{person}{Shadi~A Noghabi}, \bibinfo{person}{Kartik Paramasivam}, \bibinfo{person}{Yi Pan}, \bibinfo{person}{Navina Ramesh}, \bibinfo{person}{Jon Bringhurst}, \bibinfo{person}{Indranil Gupta}, {and} \bibinfo{person}{Roy~H Campbell}.} \bibinfo{year}{2017}\natexlab{}.
\newblock \showarticletitle{{Samza}: Stateful Scalable Stream Processing at {LinkedIn}}.
\newblock \bibinfo{journal}{\emph{Proceedings of the VLDB Endowment}} \bibinfo{volume}{10}, \bibinfo{number}{12} (\bibinfo{year}{2017}).
\newblock


\bibitem[O'Hearn and Pym(1999)]%
        {o1999logic}
\bibfield{author}{\bibinfo{person}{Peter~W O'Hearn} {and} \bibinfo{person}{David~J Pym}.} \bibinfo{year}{1999}\natexlab{}.
\newblock \showarticletitle{The logic of bunched implications}.
\newblock \bibinfo{journal}{\emph{Bulletin of Symbolic Logic}} \bibinfo{volume}{5}, \bibinfo{number}{2} (\bibinfo{year}{1999}), \bibinfo{pages}{215--244}.
\newblock


\bibitem[Paykin et~al\mbox{.}(2016)]%
        {event-driven}
\bibfield{author}{\bibinfo{person}{Jennifer Paykin}, \bibinfo{person}{Neelakantan~R. Krishnaswami}, {and} \bibinfo{person}{Steve Zdancewic}.} \bibinfo{year}{2016}\natexlab{}.
\newblock \bibinfo{title}{The Essence of Event-Driven Programming}.  (\bibinfo{year}{2016}).
\newblock


\bibitem[Pnueli(1977)]%
        {pnueli1977LTL}
\bibfield{author}{\bibinfo{person}{Amir Pnueli}.} \bibinfo{year}{1977}\natexlab{}.
\newblock \showarticletitle{The temporal logic of programs}. In \bibinfo{booktitle}{\emph{18th Annual Symposium on Foundations of Computer Science (sfcs 1977)}}. ieee, \bibinfo{pages}{46--57}.
\newblock


\bibitem[Reynolds(2002)]%
        {reynolds2002separation}
\bibfield{author}{\bibinfo{person}{John~C Reynolds}.} \bibinfo{year}{2002}\natexlab{}.
\newblock \showarticletitle{Separation logic: A logic for shared mutable data structures}. In \bibinfo{booktitle}{\emph{Proceedings 17th Annual IEEE Symposium on Logic in Computer Science}}. IEEE, \bibinfo{pages}{55--74}.
\newblock


\bibitem[Schneider et~al\mbox{.}(2013)]%
        {schneider2013safe}
\bibfield{author}{\bibinfo{person}{Scott Schneider}, \bibinfo{person}{Martin Hirzel}, \bibinfo{person}{Buğra Gedik}, {and} \bibinfo{person}{Kun-Lung Wu}.} \bibinfo{year}{2013}\natexlab{}.
\newblock \showarticletitle{Safe data parallelism for general streaming}.
\newblock \bibinfo{journal}{\emph{IEEE Trans. Comput.}} \bibinfo{volume}{64}, \bibinfo{number}{2} (\bibinfo{year}{2013}).
\newblock


\bibitem[Snoyman(2023)]%
        {conduit}
\bibfield{author}{\bibinfo{person}{Michael Snoyman}.} \bibinfo{year}{2023}\natexlab{}.
\newblock \bibinfo{title}{Conduit}.
\newblock \bibinfo{howpublished}{\url{https://hackage.haskell.org/package/conduit}}.
\newblock


\bibitem[Soulé et~al\mbox{.}(2010)]%
        {soule2010universal}
\bibfield{author}{\bibinfo{person}{Robert Soulé}, \bibinfo{person}{Martin Hirzel}, \bibinfo{person}{Robert Grimm}, \bibinfo{person}{Buğra Gedik}, \bibinfo{person}{Henrique Andrade}, \bibinfo{person}{Vibhore Kumar}, {and} \bibinfo{person}{Kun-Lung Wu}.} \bibinfo{year}{2010}\natexlab{}.
\newblock \showarticletitle{A universal calculus for stream processing languages}. In \bibinfo{booktitle}{\emph{European Symposium on Programming (ESOP)}}. \bibinfo{publisher}{Springer}.
\newblock


\bibitem[Stanford(2022)]%
        {stanford2022safe}
\bibfield{author}{\bibinfo{person}{Caleb Stanford}.} \bibinfo{year}{2022}\natexlab{}.
\newblock \emph{\bibinfo{title}{Safe Programming over Distributed Streams}}.
\newblock \bibinfo{thesistype}{Ph.\,D. Dissertation}. \bibinfo{school}{University of Pennsylvania}.
\newblock


\bibitem[Stephens(1997)]%
        {stephens1997survey}
\bibfield{author}{\bibinfo{person}{Robert Stephens}.} \bibinfo{year}{1997}\natexlab{}.
\newblock \showarticletitle{A survey of stream processing}.
\newblock \bibinfo{journal}{\emph{Acta Informatica}} \bibinfo{volume}{34}, \bibinfo{number}{7} (\bibinfo{year}{1997}).
\newblock


\bibitem[Technologies(2023)]%
        {streamly}
\bibfield{author}{\bibinfo{person}{Composewell Technologies}.} \bibinfo{year}{2023}\natexlab{}.
\newblock \bibinfo{title}{StreamLy}.
\newblock \bibinfo{howpublished}{\url{https://hackage.haskell.org/package/streamly-core}}.
\newblock


\bibitem[Thies et~al\mbox{.}(2002)]%
        {thies2002streamit}
\bibfield{author}{\bibinfo{person}{William Thies}, \bibinfo{person}{Michal Karczmarek}, {and} \bibinfo{person}{Saman Amarasinghe}.} \bibinfo{year}{2002}\natexlab{}.
\newblock \showarticletitle{{StreamIt}: A language for streaming applications}. In \bibinfo{booktitle}{\emph{International Conference on Compiler Construction}}. Springer.
\newblock


\bibitem[Trilla(2024)]%
        {jmct-chat}
\bibfield{author}{\bibinfo{person}{Jose Manuel~Calderon Trilla}.} \bibinfo{year}{2024}\natexlab{}.
\newblock \bibinfo{howpublished}{personal communication}.
\newblock


\bibitem[Tucker et~al\mbox{.}(2003a)]%
        {tuckerPunc}
\bibfield{author}{\bibinfo{person}{Peter~A. Tucker}, \bibinfo{person}{David Maier}, \bibinfo{person}{Tim Sheard}, {and} \bibinfo{person}{Leonidas Fegaras}.} \bibinfo{year}{2003}\natexlab{a}.
\newblock \showarticletitle{Exploiting Punctuation Semantics in Continuous Data Streams}.
\newblock \bibinfo{journal}{\emph{IEEE Trans. on Knowl. and Data Eng.}} \bibinfo{volume}{15}, \bibinfo{number}{3} (\bibinfo{date}{mar} \bibinfo{year}{2003}), \bibinfo{pages}{555–568}.
\newblock
\showISSN{1041-4347}
\urldef\tempurl%
\url{https://doi.org/10.1109/TKDE.2003.1198390}
\showDOI{\tempurl}


\bibitem[Tucker et~al\mbox{.}(2003b)]%
        {punctuation2003}
\bibfield{author}{\bibinfo{person}{Peter~A Tucker}, \bibinfo{person}{David Maier}, \bibinfo{person}{Tim Sheard}, {and} \bibinfo{person}{Leonidas Fegaras}.} \bibinfo{year}{2003}\natexlab{b}.
\newblock \showarticletitle{Exploiting punctuation semantics in continuous data streams}.
\newblock \bibinfo{journal}{\emph{IEEE Transactions on Knowledge and Data Engineering}} \bibinfo{volume}{15}, \bibinfo{number}{3} (\bibinfo{year}{2003}).
\newblock


\bibitem[Wagemaker et~al\mbox{.}(2022)]%
        {wagemaker2022concurrent}
\bibfield{author}{\bibinfo{person}{Jana Wagemaker}, \bibinfo{person}{Nate Foster}, \bibinfo{person}{Tobias Kapp{\'e}}, \bibinfo{person}{Dexter Kozen}, \bibinfo{person}{Jurriaan Rot}, {and} \bibinfo{person}{Alexandra Silva}.} \bibinfo{year}{2022}\natexlab{}.
\newblock \showarticletitle{Concurrent NetKAT: Modeling and analyzing stateful, concurrent networks}. In \bibinfo{booktitle}{\emph{European Symposium on Programming}}. Springer International Publishing Cham, \bibinfo{pages}{575--602}.
\newblock


\bibitem[Zaharia et~al\mbox{.}(2013)]%
        {Spark2013}
\bibfield{author}{\bibinfo{person}{Matei Zaharia}, \bibinfo{person}{Tathagata Das}, \bibinfo{person}{Haoyuan Li}, \bibinfo{person}{Timothy Hunter}, \bibinfo{person}{Scott Shenker}, {and} \bibinfo{person}{Ion Stoica}.} \bibinfo{year}{2013}\natexlab{}.
\newblock \showarticletitle{Discretized Streams: Fault-tolerant Streaming Computation at Scale}. In \bibinfo{booktitle}{\emph{24th Symposium on Operating Systems Principles (SOSP)}}. ACM.
\newblock
\urldef\tempurl%
\url{https://doi.org/10.1145/2517349.2522737}
\showDOI{\tempurl}


\end{thebibliography}

\newpage
\appendix
\section{\lang{} Implementation}
\label{app:artifact}

The \lang{} implementation is available at \url{http://www.github.com/anonymous-github-user/delta},
and has been tested with GHC version 9.2.7 and Stack version 2.9.3.

\begin{table}[h]
    \caption{Overview of the \lang{} Implementation}
    \begin{tabular}{lll}
        Name        & Description & Location \\\hline
        Var         & Various kinds of variables &  \texttt{Var.hs} \\
        Values      & Prefixes and environments &  \texttt{Values.hs} \\
        Types       & Types and contexts &  \texttt{Types.hs} \\
        HistPgm     & Types and semantics for historical programs &  \texttt{HistPgm.hs} \\
        CoreSyntax  & Syntax of \core{} terms and operations on them &  \texttt{CoreSyntax.hs} \\
        SurfaceSyntax & ASTs for the surface syntax &  \texttt{Frontend/SurfaceSyntax.hs} \\
        Parser      & Parser for the surface syntax &  \texttt{Frontend/Parser.y} \\
        ElabSyntax  & Elaborated syntax, and the elaborator code &  \texttt{Frontend/ElabSyntax.hs} \\
        Typechecker & Typechecker from elab syntax to core terms &  \texttt{Frontend/Typecheck.hs} \\
        Template    & Macros and Monomorphization for \core{} terms &  \texttt{Backend/Monomorphizer.hs} \\
        EnvSemantics   & Implementation of the \core{} semantics &  \texttt{Backend/EnvSemantics.hs} \\
        ErrUtil     & Error handling utilities &  \texttt{Util/ErrUtil.hs} \\
        PartialOrder & A partial order data structure &  \texttt{Util/PartialOrder.hs} \\
        PrettyPrint & Pretty printer typeclass &  \texttt{Util/PrettyPrint.hs} \\
    \end{tabular}
\end{table}

\newpage

\section{Examples}
\label{app:examples}
This appendix collects some additional examples of programming with \lang{}

\subsection{Running Fold}
\label{app:running-fold}

We can also define a {\em running} fold operation on star streams,
which outputs its partial results as it goes.

\begin{lstlisting}
fun runningFold[s,t]<f : {t}(s) -> t>{acc : t} (xs : s*) : t* =
  case xs of
    nil => nil
  | y :: ys => wait f{acc}(y) as acc' do
                 {acc'} :: runningFold{acc'}(ys)
               end
\end{lstlisting}

\subsection{Partitioning}
\label{app:partitioning}

A crucial streaming idiom is partitioning, where a homogeneous stream of
data is split into two or more parallel streams, which are then
routed to different downstream nodes in the dataflow graph.
The purpose of partitioning is to expose parallelism: the different
downstream operators can be run separately, potentially on different physical machines.
Depending on the situation, a programmer may choose to use different partitioning strategies.
In \core{}, some common partitioning strategies are implementable as transformers.

\subsubsection*{Round Robin Partitioning}
A round-robin partitioner fairly distributes an incoming stream of type
$s^\star$ into a parallel pair of streams $s^\star \| s^\star$.
It does this by sending the first element to the left branch, the second to the right, the third to the left, and so on.
In \core{}, we write this by maintaining a Boolean accumulator,
and negating after each item. If the Boolean is true, we send the element
left, if it's false, we send it right.

\begin{lstlisting}
fun roundRobin[s]{b : Bool}(xs : s*) : s* || s* =
  case xs of
    nil => (nil,nil)
  | y::ys =>
      let (zs,ws) = roundRobin{!b}(ys) in
      if {b} then (y::zs,ws) else (zs,y::ws)
\end{lstlisting}

\subsubsection*{Decision-Based Partitioning}
A decision-based partitioner routes stream elements based on the result of a predicate.

\begin{lstlisting}
fun decPartition[s,t,r]<f : (s) -> t + r>(xs : s*) : t* || r* =
  case xs of
    nil => (nil,nil)
  | y::ys =>
          let (ts,rs) = decPartition(ys) in
          case f(y) of
               inl t => (t::ts,rs)
             | inr r => (ts,r::rs)
\end{lstlisting}

\subsection{Windowing and Punctuation}
\label{app:windowing}
Many kinds of windows have been considered in the literature.
The most common windows are event-based --- windows defined by the number of
elements they'll contain --- and time-based --- windows which contain
all the events from a fixed length of time. Windows can also be tumbling ---
the next window starts after the previous ends --- or sliding ---
every event could begin a new window.

In \core{}, windowed operators are just \texttt{map}s over a stream whose
elements are windows. Given a per-window stream transformer \texttt{f}
which takes windows $s^\star$ to a result type $t$, and a ``windowing strategy'' \texttt{win}
which takes a stream $r^\star$ and turns it into a stream of windows $\left(s^\star\right)^\star$,
we can write a windowed operation of type $r^\star \to t^\star$ as follows:
\texttt{xs : r$^\star$ |- map(f)(win(xs)) : t$^\star$}.

For example, if we wanted to compute a size-3 sliding sum of a stream of
\texttt{Int}s, we would use a windower \texttt{win} which takes $\texttt{Int}^\star$
to $\left(\texttt{Int}^\star\right)^\star$ where the inner streams are the windows,
and \texttt{f} from $\texttt{Int}^\star$ to \texttt{Int} is the sum operation.

Every per-window function commonly used in stream processing practice operates on entire windows at once,
which is accomplished in \core{} by \texttt{wait}-ing on
the whole window, and then
aggregating it with an embedded historical program.
For this reason, we focus primarily on the window construction aspect.

\subsubsection*{Fixed-Size Tumbling Windows}
The $k$-size tumbling windower creates windows of size $k$,
where each new window starts immediately after the last window ended.
For instance when $k=2$, a stream $1,2,4,7,3,8,\dots$ turns into a stream
$\langle 1,2 \rangle, \langle 4,7 \rangle, \langle 3,8 \rangle,\dots$.
The code for a fixed-size tumbling window is exactly the functional code for
computing $k$-strides of a list, by grouping together the first $k$ elements,
and recursing down the rest of the stream.

\begin{lstlisting}
fun firstN[s]{n : Int}(xs : s*) : s* . s* =
  case xs of
    nil => (nil;nil)
   | y::ys => if {n > 0} then
                let (predN;rest) = firstN{n-1}(ys) in
                (y::predN;rest)
              else (nil;y::ys)
  
fun tumble[s]{k : Int}(xs : s*) : s** =
  let (first;rest) = firstN[s]{k}(xs) in first :: tumble{k}(rest)

\end{lstlisting}

$k$-size window transformers can actually have the even stronger output
type $\left(s^k\right)^\star$, where $s^k$ is the $k$-fold concatenation of $s$. If the window function being used
requires that the windows all have exactly size $k$ (like taking pairwise differences for $k = 2$),
this type can be used instead. The following program implements size-2 windows with this stronger type
by casing two-deep into the stream at a time, and pairing up elements into concatenation pairs.

\begin{lstlisting}
fun parsepairs[s](xs : s*) : (s . s)* =
  case xs of
    nil => nil
  | y :: ys => case ys of
                 nil => nil
               | z :: zs => (y;z) :: parsepairs(zs)
\end{lstlisting}

\subsubsection*{Fixed-Size Sliding Windows}
A $k$-sized sliding windower produces a new window for each
new element, including both the new element and the $k-1$ previous ones.
The code for this windower keeps the current window under construction in
memory.  When each new stream element arrives, we emit the current window.  For
the first $k$ elements, we only add to the window. After $k$, we start evicting
from the window.
\begin{lstlisting}
fun slidingWindower(acc : s$\superstar$; xs : s$\superstar$) : s$^{\star\star}$ =
  case xs of
    nil => acc :: nil
  | y::ys => wait y do
               let next = {if |acc| < k then y :: acc else y :: (init acc)} in
               next :: slidingWindower(next;xs)
             end
\end{lstlisting}

\subsubsection*{Punctuation-Based Windows}
Time-based windows are commonly implemented
by way of {\em punctuation}: unit elements
inserted into a stream to authoritatively mark
that a period of time has ended.
This is required because in the presence
of network delays, it's impossible to know
if a time period is over (and so a window can be emitted)
or if there are more elements in the period to arrive.
A punctuated stream has type $(1 + s)^\star$,
where the punctuation events mark the end of each time period.

The following code computes a windowed stream $\left(s^\star\right)^\star$ from
a punctuated stream $(\varepsilon+s)^\star$ by emitting windows which are the (potentially empty) runs
of $s$s between punctuation marks.

\begin{lstlisting}
fun tilFirstPunc[s](xs : (Eps + s)*) : s* . (Eps + s)* =
  case xs of
    nil => (nil;nil)
  | y::ys => case y of
               inl _ => (nil;ys)
             | inr s => let (cur;rest) = tilFirstPunc(ys) in
                        (s::cur;rest)

fun puncWindow[s](xs : (Eps + s)*) : s** =
  let (run;rest) = tilFirstPunc[s](xs) in
  run :: puncWindow(rest)
\end{lstlisting}

\subsubsection*{Merging Streams and Synchronizing Punctuation}
\label{app:merge}
Parallel streams of star type can be {\em synchronized}, pairing off one element
from one stream with one element of another. Given a stream of type $s^\star \| t^\star$,
we can produce a stream of type $\left(s \| t\right)^\star$.
This type's similarity to the standard functional program \texttt{zip} is more than just surface level:
the program below has essentially the same code.

\begin{lstlisting}
    fun sync[s,t](xs : s*, ys : t*) : (s || t)* =
    case xs of
      nil => nil
    | x'::xs' => case ys of
                   nil => nil
                 | y'::ys' => wait x',y' do
                                {(x',y')} :: sync(xs',ys')
                              end

\end{lstlisting}

Semantically, this program \texttt{wait}s until a full element from each of the
parallel input streams has arrived, sends them both out, and then continues
with zipping the two tails. This is necessarily blocking: the output type
guarantees that exactly one $s$ and $t$ will be produced before the next pair
begins, and so we must wait for both to arrive before sending the other out.
The upshot is that because this program is well typed in \core{}, it is
necessarily deterministic. This gives us the deterministic merge operation that
was needed to prevent the bug when averaging data from a pair of sensors in
Section~\ref{sec:extended-motivation}.

Moreover, for parallel streams of windows, synchronization enables
databases-style {\em streaming joins}.  Given parallel
streams $\left(s^\star\right)^\star$ and $\left(t^\star\right)^\star$, we can
synchronize to get $\left(s^\star \| t^\star\right)^\star$, and then apply a
join operation to each parallel pair of windows.

\newpage

\section{Technicalities}
\label{app:technical}

This appendix collects technical definitions that did not fit in the
main body of the paper.

\subsection{Basics}

Stream types are defined by the following grammar. The base types included are the unit type $\onet$ which types streams that contain
exactly one unit element, the type of the empty stream $\epst$, and the type of streams consisting of a single integer, $\intt$.
Larger types include the concatenation type $s \cdot t$, the sum type $s + t$, the parallel stream type $s \| t$, and the star type $s^\star$.

$$
\begin{array}{ll}
    s,t,r :=& 1 \mid \varepsilon \mid \intt \mid s \cdot t \mid s + t \mid s \| t \mid s^\star
\end{array}
$$

Contexts in the stream types calculus system have a bunched structure. The context former $\commactx{\Gamma}{\Delta}$
corresponds to the parallel type, while the context former $\semicctx{\Gamma}{\Delta}$ corresponds to the concatenation type.
The two context formers share a unit, written as ``$\cdot$''.

$$
\begin{array}{ll}
    \Gamma ::= & \cdot \mid \commactx{\Gamma}{\Gamma} \mid \semicctx{\Gamma}{\Gamma} \mid x : s \\
\end{array}
$$

A stream type is \emph{null} if it includes no data.
Null types are parallel combinations of $\varepsilon$s.

\begin{definition}[Nullable]
    \label{def:nullable}
    We define a judgment $\nullable{s}$ as follows:

    \begin{mathpar}
      \inferrule{ }{\nullable{\varepsilon}}

      \inferrule{\nullable{s} \\ \nullable{t}}{\nullable{s \| t}}
    \end{mathpar}

    We extend to contexts pointwise.
    \begin{mathpar}
      \inferrule{ }{\nullable{\cdot}}

      \inferrule{\nullable{s}}{\nullable{x : s}}

      \inferrule{\nullable{\Gamma} \\ \nullable{\Gamma'}}{\nullable{\commactx{\Gamma}{\Gamma'}}}

      \inferrule{\nullable{\Gamma} \\ \nullable{\Gamma'}}{\nullable{\semicctx{\Gamma}{\Gamma'}}}
    \end{mathpar}
\end{definition}

Prefixes are also like in the main paper,
with a definition $\isMaximal{p}$ for ``complete'' prefixes, and a typing relation $\prefixHasType{p}{s}$.

\begin{definition}[Prefixes]
  \label{def:prefix}
  The grammar of prefixes is given by:
    \begin{equation*}
        \begin{split}
        p ::= \onepA \ |\  \onepB \ |\  \epsp \ |\  \parp{p}{p'} \\
                        \ |\  \catpA{p} \ |\  \catpB{p}{p'} \\
                        \ |\  \sumpEmp \ |\  \sumpA{p} \ |\  \sumpB{p} \\
                        \ |\  \stpEmp \ |\  \stpDone \\
                        \ |\  \stpA{p} \ |\  \stpB{p}{p'}
        \end{split}
\end{equation*}
\end{definition}

\begin{definition}[Maximal Prefix]
    \label{app:maximal}
    \begin{mathpar}
        \infer{ }{
            \isMaximal{\epsp}
        }

        \infer{ }{
            \isMaximal{\onepB}
        }

        \infer{
            \isMaximal{p_1}\\
            \isMaximal{p_2}
        }{
            \isMaximal{\parp{p_1}{p_2}}
        }

        \ENDOFLINE

        \infer{
            \isMaximal{p_1}\\
            \isMaximal{p_2}
        }{
            \isMaximal{\catpB{p_1}{p_2}}
        }

        \infer{
            \isMaximal{p}
        }{
            \isMaximal{\sumpA{p}}
        }

        \ENDOFLINE

        \infer{
            \isMaximal{p}
        }{
            \isMaximal{\sumpB{p}}
        }

        \infer{ }{\isMaximal{\stpDone}}

        \infer{
            \isMaximal{p}\\
            \isMaximal{p'}
        }{
            \isMaximal{\stpB{p}{p'}}
        }
    \end{mathpar}
\end{definition}

\begin{definition}[Well-Typed Prefixes]
\begin{mathpar}
    \infer{ }{
        \prefixHasType{\epsp}{\epst}
    }

    \infer{ }{
        \prefixHasType{\onepA}{\onet}
    }

    \infer{ }{
        \prefixHasType{\onepB}{\onet}
    }

        \ENDOFLINE

    \infer{
        \prefixHasType{p_1}{s}\\
        \prefixHasType{p_2}{t}\\
    }{
        \prefixHasType{\parp{p_1}{p_2}}{s \| t}
    }

    \infer{
        \prefixHasType{p}{s}
    }{
        \prefixHasType{\catpA{p}}{s \cdot t}
    }

        \ENDOFLINE

    \infer{
        \prefixHasType{p_1}{s}\\
        \isMaximal{p_1}\\
        \prefixHasType{p_2}{t}
    }{
        \prefixHasType{\catpB{p_1}{p_2}}{s \cdot t}
    }

    \infer{ }{
        \prefixHasType{\sumpEmp}{s+t}
    }

    \infer{
        \prefixHasType{p}{s}
    }{
        \prefixHasType{\sumpA{p}}{s+t}
    }

    \infer{
        \prefixHasType{p}{t}
    }{
        \prefixHasType{\sumpB{p}}{s+t}
    }

    \infer{ }{
        \prefixHasType{\stpEmp}{s^\star}
    }

    \infer{ }{
        \prefixHasType{\stpDone}{s^\star}
    }

    \infer{
        \prefixHasType{p}{s}
    }{
        \prefixHasType{\stpA{p}}{s^\star}
    }

    \infer{
        \prefixHasType{p}{s}\\
        \isMaximal{p}\\
        \prefixHasType{p'}{s^\star}
    }{
        \prefixHasType{\stpB{p}{p'}}{s^\star}
    }
\end{mathpar}
\end{definition}

For each type $s$, we define the ``empty'' prefix $\emp{s}$ inductively on the structure of $s$.

\label{app:emp}
\begin{definition}[Empty Prefix]
    \label{def:emptyPrefix}
    The empty prefix is defined as follows:
    \begin{quote}
    \begin{itemize}
        \item[($\epst$)] $\emp{\epst} = \epsp$
        \item[($\onet$)] $\emp{\onet} = \onepA$
        \item[($s \| t$)] $\emp{s \| t} = \parp{\emp{s}}{\emp{t}}$
        \item[($s + t$)] $\emp{s + t} = \sumpEmp$
        \item[($s \cdot t$)] $\emp{s \cdot t} = \catpA{\emp{s}}$
        \item[($s^\star$)] $\emp{s^\star} = \stpEmp$
    \end{itemize}
    \end{quote}

    We lift this to contexts in the natural way,
    with $\emp{\cdot} = \epsp$, and
    $\emp{\semicctx{\Gamma}{\Delta}} = \catpA{\emp{\Gamma}}$, and
    $\emp{\commactx{\Gamma}{\Delta}} = \parp{\emp{\Gamma}}{\emp{\Delta}}$.
\end{definition}

\begin{theorem}[Empty Prefix is Well-Typed]
\label{thm:empty-prefix-correct}
$\prefixHasType{\emp{s}}{s}$
\end{theorem}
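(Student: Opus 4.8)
The plan is to prove $\prefixHasType{\emp{s}}{s}$ by straightforward structural induction on $s$, following exactly the recursive clauses in the definition of $\emp{s}$ (Definition~\ref{def:emptyPrefix}). In each case I would unfold $\emp{s}$ to its defining prefix and then exhibit the matching rule from the ``Well-Typed Prefixes'' judgment. The base cases require no induction hypothesis: for $s = \epst$ we have $\emp{\epst} = \epsp$, closed by the axiom $\prefixHasType{\epsp}{\epst}$; for $s = \onet$ we have $\emp{\onet} = \onepA$, closed by $\prefixHasType{\onepA}{\onet}$; for $s = s_1 + s_2$ we have $\emp{s_1 + s_2} = \sumpEmp$, closed by the axiom $\prefixHasType{\sumpEmp}{s_1 + s_2}$; and for $s = s_1^\star$ we have $\emp{s_1^\star} = \stpEmp$, closed by $\prefixHasType{\stpEmp}{s_1^\star}$. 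Each of these is a single rule instance.

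The only two cases that use the induction hypothesis are the product types. For $s = s_1 \| s_2$, we have $\emp{s_1 \| s_2} = \parp{\emp{s_1}}{\emp{s_2}}$; applying the induction hypothesis to $s_1$ and $s_2$ gives $\prefixHasType{\emp{s_1}}{s_1}$ and $\prefixHasType{\emp{s_2}}{s_2}$, and the parallel-prefix rule assembles these into $\prefixHasType{\parp{\emp{s_1}}{\emp{s_2}}}{s_1 \| s_2}$. For $s = s_1 \cdot s_2$, we have $\emp{s_1 \cdot s_2} = \catpA{\emp{s_1}}$; the induction hypothesis gives $\prefixHasType{\emp{s_1}}{s_1}$, and the $\catpA{\cdot}$ rule requires only a well-typed prefix of the \emph{first} component, so it immediately yields $\prefixHasType{\catpA{\emp{s_1}}}{s_1 \cdot s_2}$.

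The one point worth checking carefully — and the closest thing to an obstacle — is precisely this last case: it is important that the empty prefix of a concatenation type is chosen to be the left injection $\catpA{\cdot}$ rather than $\catpB{\cdot}{\cdot}$. The $\catpB{\cdot}{\cdot}$ rule carries the side condition $\isMaximal{p_1}$, which $\emp{s_1}$ will not in general satisfy, whereas the $\catpA{\cdot}$ rule has no maximality premise. Since the definition of $\emp{\cdot}$ uses $\catpA{\cdot}$, the side condition never arises and the induction goes through cleanly. No other case imposes premises beyond well-typedness of sub-prefixes, so the proof is a routine induction with no delicate steps.
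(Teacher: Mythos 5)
Your proof is correct and matches the paper's intended argument: the paper states this theorem without an explicit proof, treating it as the same routine structural induction on $s$ used for the adjacent result (Theorem~\ref{thm:emp-is-empty}, proved ``Induction on $s$''), and your case analysis lines up exactly with the clauses of Definition~\ref{def:emptyPrefix} and the prefix typing rules. Your observation that the $s \cdot t$ case must use $\catpA{\cdot}$ to avoid the $\isMaximal{\cdot}$ side condition of the $\catpB{\cdot}{\cdot}$ rule is exactly the one point of substance, and you handle it correctly.
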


\begin{definition}[Prefix is Empty]
    \begin{mathpar}

        \infer{ }{
            \isEmpty{\epsp}
        }

        \infer{ }{
            \isEmpty{\onepA}
        }

        \infer{
            \isEmpty{p_1}\\
            \isEmpty{p_2}
        }{
            \isEmpty{\parp{p_1}{p_2}}
        }

        \ENDOFLINE

        \infer{
            \isEmpty{p}
        }{
            \isEmpty{\catpA{p}}
        }

        \infer{ }{
            \isEmpty{\sumpEmp}
        }

        \infer{ }{
            \isEmpty{\stpEmp}
        }
    \end{mathpar}
\end{definition}

\begin{theorem}[Empty Prefix Is Empty]
    \label{thm:emp-is-empty}
    $\isEmpty{\emp{s}}$
\end{theorem}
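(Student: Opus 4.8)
The plan is to prove this by structural induction on the type $s$, matching each clause in the definition of the empty prefix $\emp{s}$ against an introduction rule for the $\isEmpty{\cdot}$ judgment. The two definitions are deliberately set up to mirror one another, so I expect every case to be immediate.

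First I would handle the base types. For $s = \epst$ we have $\emp{\epst} = \epsp$, and $\isEmpty{\epsp}$ holds by axiom; for $s = \onet$ we have $\emp{\onet} = \onepA$, with $\isEmpty{\onepA}$ again by axiom. Any further base type (such as $\intt$) is treated exactly as $\onet$. Next come the two compound types whose empty prefix is a bare ``empty'' tag rather than a recursive construction: for $s = s_1 + s_2$ we have $\emp{s_1 + s_2} = \sumpEmp$ with $\isEmpty{\sumpEmp}$ by axiom, and for $s = s_1^\star$ we have $\emp{s_1^\star} = \stpEmp$ with $\isEmpty{\stpEmp}$ by axiom. Notably, these cases do not invoke the induction hypothesis, since $\emp{\cdot}$ does not descend into the subcomponents of a sum or a star.

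The only two cases that use the induction hypothesis are parallel and concatenation. For $s = s_1 \| s_2$, the definition gives $\emp{s_1 \| s_2} = \parp{\emp{s_1}}{\emp{s_2}}$; applying the induction hypotheses yields $\isEmpty{\emp{s_1}}$ and $\isEmpty{\emp{s_2}}$, and the rule for parallel prefixes then concludes $\isEmpty{\parp{\emp{s_1}}{\emp{s_2}}}$. For $s = s_1 \cdot s_2$, we have $\emp{s_1 \cdot s_2} = \catpA{\emp{s_1}}$, and here only the left induction hypothesis $\isEmpty{\emp{s_1}}$ is required before applying the rule for $\catpA{\cdot}$. Since each clause of the definition of $\emp{s}$ lines up with precisely one rule of $\isEmpty{\cdot}$, there is no genuine obstacle: the whole argument is a routine structural induction, and the closest thing to a subtlety is simply observing that the sum, star, and concatenation cases descend into at most one subcomponent, so the induction hypothesis is invoked only where $\emp{\cdot}$ actually recurses.
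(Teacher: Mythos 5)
Your proof is correct and matches the paper's own argument, which is simply a structural induction on $s$ (the paper states only ``Induction on $s$''); your case analysis fills in exactly the routine details, with each clause of Definition~\ref{def:emptyPrefix} discharged by the corresponding rule for $\isEmpty{\cdot}$ and the induction hypothesis invoked only in the $\|$ and $\cdot$ cases.
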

\begin{proof}
    Induction on $s$.
\end{proof}

\begin{theorem}[Empty And Maximal Means Nullable]
    \label{thm:empty-and-maximal-imply-nullable}
    If $\prefixHasType{p}{s}$, and simultaneously $\isEmpty{p}$ and $\isMaximal{p}$, then $\nullable{s}$.
\end{theorem}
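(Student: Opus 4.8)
The plan is to induct on the derivation of $\prefixHasType{p}{s}$, casing on the final typing rule, and in each case use inversion on the two hypotheses $\isEmpty{p}$ and $\isMaximal{p}$ either to discharge the case immediately or to expose a contradiction. The guiding observation is that, apart from two prefix constructors, the \texttt{empty} and \texttt{maximal} judgments never overlap: for each prefix head constructor, at most one of the two judgments has a rule whose conclusion matches it. Concretely, $\onepA$, $\catpA{p}$, $\sumpEmp$, and $\stpEmp$ are empty but admit no \texttt{maximal} rule, while $\onepB$, $\catpB{p_1}{p_2}$, $\sumpA{p}$, $\sumpB{p}$, $\stpDone$, $\stpA{p}$, and $\stpB{p}{p'}$ are (possibly) maximal but admit no \texttt{empty} rule. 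In every one of these cases, one of the two hypotheses is underivable by inversion, so the case is vacuous.

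That leaves exactly two surviving cases. First, when the prefix is $\epsp$ at type $\epst$, the goal $\nullable{\epst}$ holds directly by the base rule for nullability. Second, when the prefix is $\parp{p_1}{p_2}$ at type $s \| t$ with $\prefixHasType{p_1}{s}$ and $\prefixHasType{p_2}{t}$, inversion on $\isEmpty{\parp{p_1}{p_2}}$ yields $\isEmpty{p_1}$ and $\isEmpty{p_2}$, and inversion on $\isMaximal{\parp{p_1}{p_2}}$ yields $\isMaximal{p_1}$ and $\isMaximal{p_2}$. Applying the induction hypotheses to the two subderivations gives $\nullable{s}$ and $\nullable{t}$, whence $\nullable{s \| t}$ follows by the parallel rule for nullability. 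This is the only case that actually consumes the inductive hypothesis.

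I expect no serious obstacle: the result is essentially a bookkeeping exercise confirming that ``empty'' and ``maximal'' are mutually exclusive on every prefix shape except $\epsp$ and $\parp{p_1}{p_2}$, which are precisely the shapes whose types are nullable. The only point demanding care is making the vacuity arguments precise, since each one rests on the \emph{absence} of a matching rule in one of the two inductively-defined judgments; these inversions must be read off carefully from the definitions of \texttt{empty} and \texttt{maximal} rather than asserted. One could equally organize the argument as a case analysis on the derivation of $\isEmpty{p}$, which has only six cases, checking maximality against each; this is marginally shorter, but the induction on the typing derivation is the most uniform presentation and is the version I would write.
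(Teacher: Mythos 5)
Your proof is correct and follows exactly the paper's approach: the paper proves this theorem ``by induction on $\prefixHasType{p}{s}$,'' which is precisely your induction on the typing derivation, with the vacuous cases discharged by inversion on $\isEmpty{p}$ or $\isMaximal{p}$ and the only substantive cases being $\epsp$ at $\epst$ and $\parp{p_1}{p_2}$ at $s \| t$. Your case accounting against the \texttt{empty} and \texttt{maximal} rules checks out (including the observation that $\stpA{p}$ admits no \texttt{empty} rule, which suffices for vacuity even though it is not maximal either).
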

\begin{proof}
    By induction on $\prefixHasType{p}{s}$
\end{proof}

\subsection{Derivatives}

We define a 3-place relation $\derivrel{p}{s}{s'}$ between a prefix and two types.

\begin{definition}[Derivatives]
    \label{tdef:derivrel}
    \begin{mathpar}
        \infer{ }{
            \derivrel{\epsp}{\epst}{\epst}
        }

        \infer{ }{
            \derivrel{\onepA}{\onet}{\onet}
        }

        \infer{ }{
            \derivrel{\onepB}{\onet}{\epst}
        }

        \infer{
            \derivrel{p_1}{s}{s'}\\
            \derivrel{p_2}{t}{t'}
        }{
            \derivrel{\parp{p_1}{p_2}}{s \| t}{s' \| t'}
        }

        \ENDOFLINE

        \infer{
            \derivrel{p}{s}{s'}
        }{
            \derivrel{\catpA{p}}{s \cdot t}{s' \cdot t}
        }

        \infer{
            \derivrel{p_2}{t}{t'}
        }{
            \derivrel{\catpB{p_1}{p_2}}{s \cdot t}{t'}
        }

        \infer{ }{
            \derivrel{\sumpEmp}{s+t}{s+t}
        }

        \ENDOFLINE

        \infer{
            \derivrel{p}{s}{s'}
        }{
            \derivrel{\sumpA{p}}{s+t}{s'}
        }

        \infer{
            \derivrel{p}{t}{t'}
        }{
            \derivrel{\sumpA{p}}{s+t}{t'}
        }

        \infer{ }{
            \derivrel{\stpEmp}{s^\star}{s^\star}
        }

        \infer{ }{
            \derivrel{\stpDone}{s^\star}{\varepsilon}
        }

        \ENDOFLINE

        \infer{
            \derivrel{p}{s}{s'}
        }{
            \derivrel{\stpA{p}}{s^\star}{s' \cdot s^\star}
        }

        \infer{
            \derivrel{p'}{s^\star}{s'}
        }{
            \derivrel{\stpB{p}{p'}}{s^\star}{s'}
        }

    \end{mathpar}
\end{definition}

\begin{definition}[Context Derivatives]
    \begin{mathpar}
        \infer{ }{
            \derivrel{\eta}{\cdot}{\cdot}
        }

        \infer{
            \eta(x) \mapsto p\\
            \derivrel{p}{s}{s'}
        }{
            \derivrel{\eta}{x : s}{x : s'}
        }

        \infer{
            \derivrel{\eta}{\Gamma}{\Gamma'}\\
            \derivrel{\eta}{\Delta}{\Delta'}
        }{
            \derivrel{\eta}{\commactx{\Gamma}{\Delta}}{\commactx{\Gamma'}{\Delta'}}
        }

        \ENDOFLINE

        \infer{
            \derivrel{\eta}{\Gamma}{\Gamma'}\\
            \derivrel{\eta}{\Delta}{\Delta'}
        }{
            \derivrel{\eta}{\semicctx{\Gamma}{\Delta}}{\semicctx{\Gamma'}{\Delta'}}
        }
    \end{mathpar}
\end{definition}

Derivatives are functions defined when the prefix input is well-typed.
\begin{theorem}[Derivative Function]
   \label{thm:derivrel-fun}
   For any $p$ and $s$, there is at most one $s'$ such that $\derivrel{p}{s}{s'}$.
   If $\prefixHasType{p}{s}$, then such an $s'$ exists.
\end{theorem}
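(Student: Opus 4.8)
The plan is to prove the two claims separately, both by structural induction, exploiting the fact that the rules defining $\derivrel{p}{s}{s'}$ are directed by the shape of the prefix $p$ together with the shape of the type $s$.

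For uniqueness, I would induct on the structure of $p$ and invert a second derivation $\derivrel{p}{s}{s''}$. The central observation is that each prefix constructor appears in the conclusion of exactly one rule, and that rule additionally pins down the shape of $s$ (e.g.\ $\onepA$ forces $s = \onet$, $\catpA{p}$ forces $s = s_0 \cdot t_0$, and so on). Hence for fixed $(p,s)$ at most one rule can have fired. In the base cases ($\epsp$, $\onepA$, $\onepB$, $\sumpEmp$, $\stpEmp$, $\stpDone$) the output is read off directly from the rule and is manifestly unique. In the compound cases the rule computes $s'$ from the derivatives of strictly smaller sub-prefixes; inverting the second derivation yields the same sub-prefixes and types, so the induction hypothesis forces the recursive outputs---and therefore $s'$ and $s''$---to coincide.

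For existence, I would induct on the derivation of $\prefixHasType{p}{s}$, since the derivative is only claimed to exist on well-typed prefixes. Each prefix-typing rule lines up with exactly one derivative rule over the same $(p,s)$: its well-typedness premises on the sub-prefixes are precisely the hypotheses the induction needs, and feeding them to the induction hypothesis produces the sub-derivatives that the derivative rule requires. For instance, $\prefixHasType{\catpB{p_1}{p_2}}{s \cdot t}$ yields $\prefixHasType{p_2}{t}$, from which the hypothesis gives some $t'$ with $\derivrel{p_2}{t}{t'}$, and the $\catpB{p_1}{p_2}$ derivative rule then delivers $\derivrel{\catpB{p_1}{p_2}}{s \cdot t}{t'}$; the rules for $\parp{p_1}{p_2}$, $\stpB{p_1}{p_2}$, and the sum and $\stpA{p}$ prefixes are analogous, while the leaf typing rules match leaf derivative rules with no recursive obligation.

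I expect the only real obstacle to be bookkeeping rather than mathematical depth: one must check that the derivative rules are exhaustive over well-typed prefixes and pairwise non-overlapping in their conclusions. The single place that genuinely needs attention is the pair of rules for the sum-prefix injections, which should be read as treating $\sumpA{p}$ (derivative of the left branch) and $\sumpB{p}$ (derivative of the right branch) separately; once each injection is associated with its own rule, the conclusions partition cleanly by the head constructor of $p$, and both inductions go through without surprises. The analogous statement for contexts and environments then follows pointwise from the per-variable case.
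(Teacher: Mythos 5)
Your proof is correct and matches the paper's (one-line) argument: the paper proves uniqueness by induction on the derivation of $\derivrel{p}{s}{s'}$ and existence by induction on $\prefixHasType{p}{s}$, which is interchangeable with your structural induction on $p$ since the relation is syntax-directed. Your observation about the sum-injection rules is also well taken---the paper's figure contains a typo writing $\sumpA{p}$ in the conclusion of both rules, and your reading (one rule per injection, so the conclusions partition by the head constructor of $p$) is exactly what is intended and what the proof needs.
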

\begin{proof}
   Induction on the derivation of $\derivrel{p}{s}{s'}$ for uniqueness, and $\prefixHasType{p}{s}$ for existence.
\end{proof}
When it's guaranteed to exist, we write this $s'$ simply as $\deriv{p}{s}$.
The empty prefix is the identity for the derivative operator.

\begin{theorem}[Empty Prefix Derivative]
    \label{thm:derivrel-emp}
    $\deriv{\emp{s}}{s} = s$.
\end{theorem}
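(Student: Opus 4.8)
The plan is to prove the equation by structural induction on the stream type $s$, reading off each case directly from the definition of the empty prefix (Definition~\ref{def:emptyPrefix}) and the derivative relation (Definition~\ref{tdef:derivrel}). Before starting the induction, I would first observe that the claim is well-posed: by Theorem~\ref{thm:empty-prefix-correct} we have $\prefixHasType{\emp{s}}{s}$, so Theorem~\ref{thm:derivrel-fun} guarantees that $\deriv{\emp{s}}{s}$ exists and is the unique type related to $s$ via $\emp{s}$ under the derivative relation. It therefore suffices to exhibit, in each case, a derivation of $\derivrel{\emp{s}}{s}{s}$, and the equation $\deriv{\emp{s}}{s} = s$ follows.

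The base cases are immediate. For $s = \epst$, we have $\emp{\epst} = \epsp$ and the rule for $\epsp$ gives $\derivrel{\epsp}{\epst}{\epst}$. For $s = \onet$, we have $\emp{\onet} = \onepA$, and the rule $\derivrel{\onepA}{\onet}{\onet}$ (the empty-prefix rule for $\onet$, as opposed to the full prefix $\onepB$, which would instead yield $\epst$) closes the case. Two of the inductive-shaped cases also need no induction hypothesis: for $s = s_1 + s_2$, $\emp{s_1 + s_2} = \sumpEmp$ and $\derivrel{\sumpEmp}{s_1 + s_2}{s_1 + s_2}$; and for $s = s_1^\star$, $\emp{s_1^\star} = \stpEmp$ and $\derivrel{\stpEmp}{s_1^\star}{s_1^\star}$. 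In each of these the empty prefix sits on the ``nothing yet received'' constructor, whose derivative rule leaves the type untouched.

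The only two cases that genuinely use the induction hypothesis are the two products. For $s = s_1 \| s_2$, we have $\emp{s_1 \| s_2} = \parp{\emp{s_1}}{\emp{s_2}}$; applying the induction hypothesis to each component gives $\derivrel{\emp{s_1}}{s_1}{s_1}$ and $\derivrel{\emp{s_2}}{s_2}{s_2}$, and the parallel derivative rule assembles these into $\derivrel{\parp{\emp{s_1}}{\emp{s_2}}}{s_1 \| s_2}{s_1 \| s_2}$. For $s = s_1 \cdot s_2$, we have $\emp{s_1 \cdot s_2} = \catpA{\emp{s_1}}$, so only the first component has been (partially) consumed; the induction hypothesis gives $\derivrel{\emp{s_1}}{s_1}{s_1}$, and the $\catpA{\cdot}$ derivative rule produces $\derivrel{\catpA{\emp{s_1}}}{s_1 \cdot s_2}{s_1 \cdot s_2}$, keeping the trailing $s_2$ intact.

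There is no serious obstacle: this is a routine structural induction, essentially a matter of pairing each clause of the empty-prefix definition with its matching derivative rule. The one point worth stating explicitly --- and the closest thing to a subtlety --- is the well-definedness step above, since $\deriv{\cdot}{\cdot}$ is only a partial operation, so the notation $\deriv{\emp{s}}{s}$ is meaningful only once Theorems~\ref{thm:empty-prefix-correct} and~\ref{thm:derivrel-fun} are invoked to justify it.
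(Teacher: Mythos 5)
Your proof is correct and is exactly the routine structural induction on $s$ that the paper intends (it states Theorem~\ref{thm:derivrel-emp} without an explicit proof, treating it as immediate from Definitions~\ref{def:emptyPrefix} and~\ref{tdef:derivrel}); each of your cases correctly pairs the empty-prefix clause with its matching derivative rule, and your preliminary appeal to Theorems~\ref{thm:empty-prefix-correct} and~\ref{thm:derivrel-fun} to make the partial notation $\deriv{\emp{s}}{s}$ well-defined is a careful touch the paper leaves implicit.
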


\begin{theorem}[Empty Context Derivative]
    \label{thm:empty-context-deriv}
    If $\emptyOn{\eta}{\Gamma}$ and $\envHasType{\eta}{\Gamma}$ then
    $\deriv{\eta}{\Gamma} = \Gamma$.
\end{theorem}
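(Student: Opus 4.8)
The plan is to prove this by induction on the structure of $\Gamma$ (equivalently, on the derivation of $\envHasType{\eta}{\Gamma}$), since the three judgments involved---$\emptyOn{\eta}{\Gamma}$, $\envHasType{\eta}{\Gamma}$, and the context derivative $\deriv{\eta}{\Gamma}$---are all defined by recursion on the shape of $\Gamma$. The inductive cases for the two context formers are essentially bookkeeping, so the real content lives entirely in the single-variable base case, where I must connect ``$\eta(x)$ is an empty prefix'' to ``the type $s$ is unchanged by its derivative.''

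First I would dispatch the structural cases. For $\Gamma = \cdot$, the context derivative rule forces $\deriv{\eta}{\cdot} = \cdot$ immediately. For $\Gamma = \commactx{\Gamma_1}{\Gamma_2}$, inversion on $\emptyOn{\eta}{\commactx{\Gamma_1}{\Gamma_2}}$ and on $\envHasType{\eta}{\commactx{\Gamma_1}{\Gamma_2}}$ yields $\emptyOn{\eta}{\Gamma_i}$ and $\envHasType{\eta}{\Gamma_i}$ for $i \in \{1,2\}$; the induction hypothesis then gives $\deriv{\eta}{\Gamma_i} = \Gamma_i$, and the comma rule for context derivatives assembles these into $\deriv{\eta}{\commactx{\Gamma_1}{\Gamma_2}} = \commactx{\Gamma_1}{\Gamma_2}$. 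The case $\Gamma = \semicctx{\Gamma_1}{\Gamma_2}$ is identical, using the semicolon rule instead (note that the ordering side condition in $\envHasType{\eta}{\semicctx{\Gamma_1}{\Gamma_2}}$ plays no role here, since I only need the two component typing derivations, which are always available).

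The crux is the base case $\Gamma = \sngctx{x}{s}$. Here $\envHasType{\eta}{\sngctx{x}{s}}$ gives $\eta(x) \mapsto p$ with $\prefixHasType{p}{s}$, and $\emptyOn{\eta}{\sngctx{x}{s}}$ gives $\isEmpty{p}$. By the variable rule for context derivatives, $\deriv{\eta}{\sngctx{x}{s}} = \sngctx{x}{\deriv{p}{s}}$, so it suffices to show $\deriv{p}{s} = s$. The main obstacle is that the available Theorem~\ref{thm:derivrel-emp} only covers the \emph{canonical} empty prefix $\emp{s}$, whereas $p$ is merely \emph{some} prefix satisfying $\isEmpty{p}$. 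I would therefore first establish the stronger pointwise lemma: \emph{if $\prefixHasType{p}{s}$ and $\isEmpty{p}$, then $\deriv{p}{s} = s$.} This is proved by a short induction on the derivation of $\isEmpty{p}$ (or equivalently $\prefixHasType{p}{s}$), checking each empty-prefix form against the corresponding derivative rule: the cases $\epsp$, $\onepA$, $\sumpEmp$, and $\stpEmp$ match the reflexive derivative rules directly, while $\catpA{p}$ and $\parp{p_1}{p_2}$ recurse into their (empty) subprefixes and reassemble $s \cdot t$ and $s \| t$ unchanged via the induction hypothesis. With this lemma in hand, $\deriv{p}{s} = s$ follows, completing the base case and the induction. (If $\emptyOn{\eta}{\Gamma}$ is instead defined so that $\eta(x) = \emp{s}$ on the nose, this lemma is unnecessary and Theorem~\ref{thm:derivrel-emp} applies verbatim; the argument is otherwise unchanged.)
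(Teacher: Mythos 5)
Your proof is correct. The paper states this theorem with no proof at all, and the evident intended argument is exactly the structural induction you give; your one substantive addition---the pointwise lemma that $\prefixHasType{p}{s}$ and $\isEmpty{p}$ imply $\deriv{p}{s} = s$---correctly fills the only real gap, since $\emptyOn{\eta}{\Gamma}$ (Definition~\ref{def:agree}) asserts only $\isEmpty{\eta(x)}$ rather than $\eta(x) = \emp{s}$, so Theorem~\ref{thm:derivrel-emp} alone would not suffice. (Equivalently, you could first show that every well-typed empty prefix is the canonical one, by a short induction in the spirit of the paper's ``Only Prefix of a Null Type is Empty'' theorem, and then invoke Theorem~\ref{thm:derivrel-emp} verbatim; either route closes the case.)
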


\begin{theorem}[Context Derivatives Function]
    \label{thm:derivrel-env-fun}
    There is at most one $\Gamma'$ such that $\derivrel{\eta}{\Gamma}{\Gamma'}$,
    and the $\Gamma'$ exists when $\envHasType{\eta}{\Gamma}$.
\end{theorem}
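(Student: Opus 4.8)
The plan is to prove the two halves—uniqueness and existence—separately, each by a routine induction that bottoms out in the already-established Derivative Function theorem (Theorem~\ref{thm:derivrel-fun}).

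For uniqueness, I would induct on the structure of the context $\Gamma$ (equivalently, on the derivation of $\derivrel{\eta}{\Gamma}{\Gamma'}$), observing that the context-derivative rules are syntax-directed by the shape of $\Gamma$, so that exactly one rule can apply in each case. In the base case $\Gamma = \sngctx{x}{s}$, any derivation must look up $\eta(x) \mapsto p$ and then invoke $\derivrel{p}{s}{s'}$; since $\eta$ is a function the prefix $p$ is determined, and the uniqueness clause of Theorem~\ref{thm:derivrel-fun} forces $s'$ to be unique, whence $\Gamma' = \sngctx{x}{s'}$ is unique. The empty-context case gives $\Gamma' = \cdot$ with no choice. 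For $\Gamma = \commactx{\Gamma_1}{\Gamma_2}$ and $\Gamma = \semicctx{\Gamma_1}{\Gamma_2}$, the decomposition into $\Gamma_1$ and $\Gamma_2$ is fixed by $\Gamma$, so the inductive hypotheses pin down the derivatives $\Gamma_1'$ and $\Gamma_2'$ uniquely, and recombining them with the same context former yields a unique $\Gamma'$.

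For existence, I would instead induct on the derivation of $\envHasType{\eta}{\Gamma}$ (Figure~\ref{fig:env-rules}), since this is precisely the hypothesis guaranteeing that every variable lookup succeeds with a well-typed prefix. The empty-context rule immediately yields $\derivrel{\eta}{\cdot}{\cdot}$. In the single-variable case the typing derivation supplies $\eta(x) \mapsto p$ with $\prefixHasType{p}{s}$, so the existence clause of Theorem~\ref{thm:derivrel-fun} produces an $s'$ with $\derivrel{p}{s}{s'}$, and the single-variable rule assembles $\derivrel{\eta}{\sngctx{x}{s}}{\sngctx{x}{s'}}$. For the comma and semicolon rules, the subderivations give $\envHasType{\eta}{\Gamma_1}$ and $\envHasType{\eta}{\Gamma_2}$, so the inductive hypotheses furnish $\Gamma_1'$ and $\Gamma_2'$, and the corresponding context-derivative rule combines them. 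Note that the ordering side condition $\left(\maximalOn{\eta}{\Gamma_1}\right) \vee \left(\emptyOn{\eta}{\Gamma_2}\right)$ appearing in the semicolon typing rule plays no role here—the derivative rule imposes no such constraint—so it is simply discarded.

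There is essentially no deep obstacle: the entire argument is bookkeeping, and the only non-trivial input is Theorem~\ref{thm:derivrel-fun}, which is assumed. The single point requiring a moment of care is the choice of what to induct on in each half: uniqueness must hold for \emph{all} $\eta$, so it is cleanest to induct on $\Gamma$ (which alone determines the applicable rule), whereas existence genuinely depends on the well-typedness hypothesis to ensure that each $\eta(x)$ is defined and carries a prefix of the right type, so there the induction is driven by the derivation of $\envHasType{\eta}{\Gamma}$.
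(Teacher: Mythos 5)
Your proposal is correct and follows exactly the paper's proof: uniqueness by induction on the derivation of $\derivrel{\eta}{\Gamma}{\Gamma'}$ (equivalently the structure of $\Gamma$, since the rules are syntax-directed) and existence by induction on the derivation of $\envHasType{\eta}{\Gamma}$, in both cases reducing the single-variable case to Theorem~\ref{thm:derivrel-fun}. Your observation that the semicolon context's ordering side condition is simply discarded in the existence direction is also accurate, as the context-derivative rules impose no such constraint.
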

\begin{proof}
    Uniqueness by induction on the derivation of $\derivrel{\eta}{\Gamma}{\Gamma'}$,
    existence by induction on the derivation of $\envHasType{\eta}{\Gamma}$.
\end{proof}

\begin{theorem}[Maximal Derivative iff Nullable]
    \label{thm:maximal-deriv-null}
    If $\derivrel{p}{s}{s'}$ then $\isMaximal{p}$ if and only if $\nullable{s'}$
\end{theorem}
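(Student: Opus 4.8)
The plan is to proceed by induction on the derivation of $\derivrel{p}{s}{s'}$, establishing the biconditional case-by-case against the rules of Definition~\ref{tdef:derivrel}. Throughout, I would also carry the well-typedness hypothesis $\prefixHasType{p}{s}$: this turns out to be needed (see below), and it is precisely the regime in which $\deriv{p}{s}$ is a genuine function of $p$ and $s$ by Theorem~\ref{thm:derivrel-fun}, so assuming it costs nothing in the intended application.

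For the base cases I would simply read off both sides. For $\derivrel{\epsp}{\epst}{\epst}$, $\derivrel{\onepB}{\onet}{\epst}$, and $\derivrel{\stpDone}{s^\star}{\varepsilon}$ the prefix is maximal and the resulting type ($\epst$ or $\varepsilon$) is nullable, so both sides hold. For $\derivrel{\onepA}{\onet}{\onet}$, $\derivrel{\sumpEmp}{s+t}{s+t}$, and $\derivrel{\stpEmp}{s^\star}{s^\star}$ the prefix is not maximal and the resulting type ($\onet$, a sum, a star) admits no nullable derivation, so both sides are false.

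The inductive step splits according to whether top-level structure is preserved. The cases $\derivrel{\catpA{p}}{s\cdot t}{s'\cdot t}$ and $\derivrel{\stpA{p}}{s^\star}{s'\cdot s^\star}$ are again ``both false'': $\catpA{\cdot}$ and $\stpA{\cdot}$ prefixes are never maximal, and a concatenation type is never nullable (the $\nullable{-}$ judgment has rules only for $\varepsilon$ and $\|$). The parallel case $\derivrel{\parp{p_1}{p_2}}{s\|t}{s'\|t'}$ is the prototypical inductive step: both $\isMaximal{-}$ and $\nullable{-}$ distribute over $\|$ as a conjunction, so combining the two induction hypotheses $\isMaximal{p_i}\iff\nullable{\cdot}$ gives the result directly. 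The sum-injection cases $\derivrel{\sumpA{p}}{s+t}{s'}$ and its symmetric partner are equally direct: $\isMaximal{\sumpA{p}}\iff\isMaximal{p}$, and the derivative is exactly that of the chosen branch, so the single induction hypothesis transfers.

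The one delicate point, and the main obstacle, is the ``cross-over'' cases $\derivrel{\catpB{p_1}{p_2}}{s\cdot t}{t'}$ and $\derivrel{\stpB{p}{p'}}{s^\star}{s'}$, where the derivative discards the first component while the induction hypothesis speaks only about the second. Maximality of the whole prefix requires maximality of \emph{both} components ($\isMaximal{\catpB{p_1}{p_2}}\iff\isMaximal{p_1}\wedge\isMaximal{p_2}$), whereas the hypothesis supplies only $\isMaximal{p_2}\iff\nullable{t'}$. The biconditional would genuinely fail were $p_1$ allowed to be non-maximal (e.g.\ $\catpB{\onepA}{\onepB}$ has nullable derivative $\epst$ yet is not maximal). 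This is exactly where the well-typedness hypothesis earns its keep: inverting $\prefixHasType{\catpB{p_1}{p_2}}{s\cdot t}$ yields $\isMaximal{p_1}$ together with $\prefixHasType{p_2}{t}$, so the spurious conjunct $\isMaximal{p_1}$ is always true and $\isMaximal{\catpB{p_1}{p_2}}\iff\isMaximal{p_2}\iff\nullable{t'}$, while the premise $\prefixHasType{p_2}{t}$ licenses the induction hypothesis. The $\stpB{p}{p'}$ case is identical, using that $\prefixHasType{\stpB{p}{p'}}{s^\star}$ forces $\isMaximal{p}$.
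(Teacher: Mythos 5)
Your proof is correct and follows the route the paper evidently intends: the paper states Theorem~\ref{thm:maximal-deriv-null} without an explicit proof, and the natural argument is exactly your induction on the derivation of $\derivrel{p}{s}{s'}$, with the parallel case splitting into two induction hypotheses and the cross-over cases handled by inversion on prefix typing. Your one departure is substantive and to your credit: you observed that the theorem is false as literally stated, since $\derivrel{\catpB{\onepA}{\onepB}}{\onet\cdot\onet}{\epst}$ is derivable and $\nullable{\epst}$ holds, yet $\catpB{\onepA}{\onepB}$ is not maximal because $\onepA$ is not; adding the hypothesis $\prefixHasType{p}{s}$ (the regime in which the paper uses derivatives anyway, cf.\ Theorem~\ref{thm:derivrel-fun}) repairs precisely the $\catpB{p_1}{p_2}$ and $\stpB{p}{p'}$ cases by forcing $\isMaximal{p_1}$, exactly as you argue.
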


\begin{theorem}[Only Prefix of a Null Type is Empty]
    If $\prefixHasType{p}{s}$ and $\nullable{s}$, then $p = \emp{s}$
\end{theorem}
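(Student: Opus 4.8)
The plan is to proceed by induction on the derivation of $\nullable{s}$, since the nullability judgment pins down the shape of $s$ precisely: by its two rules, a null type is built only from $\epst$ and the parallel former $\|$. This is cleaner than inducting on $\prefixHasType{p}{s}$, because the null derivation immediately restricts attention to just two shapes of $s$, and in each case I would invert the typing derivation $\prefixHasType{p}{s}$. That inversion is the crux of the proof, and it is easy because the prefix-typing rules are syntax-directed on $s$, so for each of these two shapes exactly one rule can have produced the derivation.

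For the base case $\nullable{\epst}$, inverting $\prefixHasType{p}{\epst}$ leaves only the rule concluding $\prefixHasType{\epsp}{\epst}$, so $p = \epsp$, which is exactly $\emp{\epst}$ by the definition of the empty prefix. For the inductive case $s = s_1 \| s_2$ with subderivations $\nullable{s_1}$ and $\nullable{s_2}$, inverting $\prefixHasType{p}{s_1 \| s_2}$ forces $p = \parp{p_1}{p_2}$ with $\prefixHasType{p_1}{s_1}$ and $\prefixHasType{p_2}{s_2}$. Applying the induction hypothesis to each component yields $p_1 = \emp{s_1}$ and $p_2 = \emp{s_2}$, whence $p = \parp{\emp{s_1}}{\emp{s_2}} = \emp{s_1 \| s_2}$, again by the definition of the empty prefix on parallel types.

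There is essentially no real obstacle here; the only point requiring care is confirming that at each inversion step no other prefix-typing rule has a conclusion matching the type in question. This is immediate: the only types appearing anywhere in a null derivation are $\epst$ and $s \| t$, and the sole applicable prefix-typing rules for these are the ones for $\epsp$ and $\parp{-}{-}$. In particular the cases for $\onet$, $s \cdot t$, $s + t$, and $s^\star$ never arise, so the non-empty prefixes such as $\onepB$ and $\catpB{-}{-}$ need never be considered, and the induction closes without any side conditions.
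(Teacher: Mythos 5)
Your proof is correct. The paper states this theorem without giving any proof, and your induction on the derivation of $\nullable{s}$---using that the nullability rules confine $s$ to the shapes $\epst$ and $s_1 \| s_2$, then inverting the syntax-directed prefix-typing rules (which force $p = \epsp$ and $p = \parp{p_1}{p_2}$ respectively) and matching against the definition $\emp{\epst} = \epsp$, $\emp{s_1 \| s_2} = \parp{\emp{s_1}}{\emp{s_2}}$---is exactly the routine argument the paper leaves implicit.
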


\subsection{Environments}

\begin{definition}[Environments and Typing]
    An environment is a partial map $\eta : \texttt{Var} \to \texttt{Prefix}$.
    We write $\envHasType{\eta}{\Gamma}$ to mean that $\eta$ is a well-typed environment for $\Gamma$.

    \begin{mathpar}
        \infer{ }{
            \envHasType{\eta}{\cdot}
        }

        \infer{
            \eta(x) \mapsto p\\
            \prefixHasType{p}{s}
        }{
            \envHasType{\eta}{x : s}
        }

        \infer{
            \envHasType{\eta}{\Gamma}\\
            \envHasType{\eta}{\Delta}\\
        }{
            \envHasType{\eta}{\commactx{\Gamma}{\Delta}}
        }

        \ENDOFLINE

        \infer{
            \envHasType{\eta}{\Gamma}\\
            \envHasType{\eta}{\Delta}\\
            \emptyOn{\eta}{\Delta} \vee \maximalOn{\eta}{\Gamma}
        }{
            \envHasType{\eta}{\semicctx{\Gamma}{\Delta}}
        }
    \end{mathpar}
\end{definition}

\begin{definition}[All Maximal, All Empty, Agreement]
    \label{def:agree}
    For a set $S$, we say $\emptyOn{\eta}{S}$ if for all $x \in S$, there is some $p$ such that $\eta(x) \mapsto p$, and $\isEmpty{p}$.
    We say $\maximalOn{\eta}{S}$ if for all $x \in S$, there is some $p$ such that $\eta(x) \mapsto p$, and $\isMaximal{p}$.
    We write $\emptyOn{\eta}{\Gamma}$ and $\maximalOn{\eta}{\Gamma}$ to mean $\emptyOn{\eta}{\dom{\Gamma}}$ and $\maximalOn{\eta}{\dom{\Gamma}}$, respectively.
    We also write $\emptyOn{\eta}{e}$ and $\maximalOn{\eta}{e}$ to mean $\emptyOn{\eta}{\text{fv}(e)}$ and $\maximalOn{\eta}{\text{fv}(e)}$, respectively.

    Finally, we say that $\eta$ and $\eta'$ agree on $\Delta$ and $\Delta'$, written $\agree{\eta}{\eta'}{\Delta}{\Delta'}$ if
    $\maximalOn{\eta}{\Delta} \implies \maximalOn{\eta'}{\Delta'}$, and $\emptyOn{\eta}{\Delta} \implies \maximalOn{\eta'}{\Delta'}$
\end{definition}

An environment is also an environment for every subcontext.
\begin{theorem}[Environment Subcontext Lookup]
    \label{thm:env-subctx-lookup}
    If $\envHasType{\eta}{\Gamma(\Delta)}$, then $\envHasType{\eta}{\Delta}$
\end{theorem}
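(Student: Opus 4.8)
The plan is to proceed by induction on the structure of the context-with-a-hole $\Gamma(-)$. Recall that such contexts are generated by the grammar in which the hole $(-)$ is either at the top, or nested inside the left or right branch of a comma or a semicolon. Plugging $\Delta$ into the hole yields $\Gamma(\Delta)$, and the goal is to extract $\envHasType{\eta}{\Delta}$ from $\envHasType{\eta}{\Gamma(\Delta)}$. The single unifying observation is that every inference rule defining the environment-typing judgment (Figure~\ref{fig:env-rules}) has, among its premises, well-typedness of \emph{each} immediate subcontext; inversion therefore always hands back the typing judgment for whichever branch contains the hole, and the induction hypothesis finishes the job.

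First I would dispatch the base case, where $\Gamma(-)$ is just the hole $(-)$. Here $\Gamma(\Delta) = \Delta$, so the hypothesis $\envHasType{\eta}{\Delta}$ \emph{is} the goal. For the inductive step there are four cases according to where the hole sits. If $\Gamma(-) = \commactx{\Gamma'(-)}{\Theta}$, then $\Gamma(\Delta) = \commactx{\Gamma'(\Delta)}{\Theta}$; inverting the comma rule gives $\envHasType{\eta}{\Gamma'(\Delta)}$, and the induction hypothesis applied to $\Gamma'(-)$ yields $\envHasType{\eta}{\Delta}$. The case $\Gamma(-) = \commactx{\Theta}{\Gamma'(-)}$ is symmetric. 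If $\Gamma(-) = \semicctx{\Gamma'(-)}{\Theta}$, then $\Gamma(\Delta) = \semicctx{\Gamma'(\Delta)}{\Theta}$; inverting the semicolon rule again gives $\envHasType{\eta}{\Gamma'(\Delta)}$ (and additionally the disjunction $\maximalOn{\eta}{\Gamma'(\Delta)} \vee \emptyOn{\eta}{\Theta}$, which we simply discard), so the induction hypothesis closes the case. The remaining case $\Gamma(-) = \semicctx{\Theta}{\Gamma'(-)}$ is symmetric.

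The proof is essentially bookkeeping, and the only thing worth flagging is the direction of the implication. Because we are extracting well-typedness of an embedded subcontext rather than \emph{reassembling} a larger one, the ordering side conditions attached to the semicolon rule (the $\maximalOn{\eta}{-} \vee \emptyOn{\eta}{-}$ premise) are never needed: they would be obstacles only if we were trying to build $\envHasType{\eta}{\semicctx{\Gamma}{\Delta}}$ from the pieces, which is not what is asked here. Consequently there is no genuine difficulty; the mild care required is just to phrase the induction over the hole-context grammar rather than over $\Gamma(\Delta)$ directly, so that ``the branch containing the hole'' is well-defined at each step.
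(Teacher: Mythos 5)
Your proof is correct and follows exactly the paper's approach: the paper's proof is simply ``Induction on $\Gamma(-)$,'' and your case analysis over the position of the hole, using inversion on the environment-typing rules and discarding the semicolon rule's $\maximalOn{\eta}{-} \vee \emptyOn{\eta}{-}$ side condition, is precisely the bookkeeping that induction entails. Your closing remark about the direction of the implication---that the ordering side conditions matter only when reassembling a context, as in Theorem~\ref{thm:env-subctx-bind}, not when projecting out of one---is also accurate.
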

\begin{proof}
    Induction on $\Gamma(-)$.
\end{proof}

Moreover, replacing a the environment $\eta|_\Delta$ for a subcontext $\Delta$ with another environment $\eta'$ for another context $\Delta'$
yields a well-typed context, so long as $\eta$ and $\eta'$ agree on $\Delta$ and $\Delta'$. If $\eta$ was maximal (on $\Delta$) then $\eta'$ must also be
(on $\Delta'$), and if $\eta$ was empty (on $\Delta$), then $\eta'$ must also be empty (on $\Delta'$).
\begin{theorem}[Environment Subcontext Bind]
    \label{thm:env-subctx-bind}
    If $\envHasType{\eta}{\Gamma(\Delta)}$ and $\envHasType{\eta'}{\Delta'}$
    such that $\agree{\eta}{\eta'}{\Delta}{\Delta'}$
    then $\envHasType{\eta \cdot \eta'}{\Gamma(\Delta')}$
\end{theorem}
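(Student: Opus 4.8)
The plan is to induct on the structure of the context-with-a-hole $\Gamma(-)$, i.e., on the position at which the distinguished subcontext sits. Two small bookkeeping facts are used throughout. First, a \textbf{transfer fact}: the combined environment $\eta \cdot \eta'$ coincides with $\eta'$ on the variables of $\Delta'$ and with $\eta$ on every variable outside the hole, and environment well-typedness and the predicates $\maximalOn{}{}$ / $\emptyOn{}{}$ depend only on the prefixes assigned to the variables of a context; hence for any subcontext $\Gamma_0$ of $\Gamma(-)$ not containing the hole, $\envHasType{\eta}{\Gamma_0}$ yields $\envHasType{\eta \cdot \eta'}{\Gamma_0}$, and $\maximalOn{\eta}{\Gamma_0}$, $\emptyOn{\eta}{\Gamma_0}$ transfer to $\eta \cdot \eta'$ verbatim. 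Second, a \textbf{domain-splitting} observation: since $\dom{\Gamma_1(\Delta)} = \dom{\Gamma_1(-)} \uplus \dom{\Delta}$, we have $\maximalOn{\eta}{\Gamma_1(\Delta)}$ iff both $\maximalOn{\eta}{\Gamma_1(-)}$ and $\maximalOn{\eta}{\Delta}$ (and dually for $\emptyOn{}{}$).

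In the base case $\Gamma(-) = (-)$ we have $\Gamma(\Delta) = \Delta$ and $\Gamma(\Delta') = \Delta'$, and $\eta \cdot \eta'$ restricted to $\dom{\Delta'}$ is exactly $\eta'$, so the goal $\envHasType{\eta \cdot \eta'}{\Delta'}$ is immediate from the hypothesis $\envHasType{\eta'}{\Delta'}$. For a comma hole, say $\Gamma(-) = \commactx{\Gamma_1(-)}{\Gamma_2}$, I invert the (unique) comma rule on $\envHasType{\eta}{\commactx{\Gamma_1(\Delta)}{\Gamma_2}}$ to get $\envHasType{\eta}{\Gamma_1(\Delta)}$ and $\envHasType{\eta}{\Gamma_2}$. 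Since $\Delta$, $\Delta'$, $\eta$, $\eta'$, and the agreement hypothesis are unchanged, the induction hypothesis on $\Gamma_1(-)$ gives $\envHasType{\eta \cdot \eta'}{\Gamma_1(\Delta')}$, while $\envHasType{\eta \cdot \eta'}{\Gamma_2}$ follows by the transfer fact; reassembling with the comma rule gives the goal. The hole-on-the-right comma case is symmetric, and the comma rule carries no side condition, so nothing further is needed.

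The substance lies in the semicolon cases, which is where I expect the main difficulty. Take $\Gamma(-) = \semicctx{\Gamma_1(-)}{\Gamma_2}$. Inverting the semicolon rule gives $\envHasType{\eta}{\Gamma_1(\Delta)}$, $\envHasType{\eta}{\Gamma_2}$, and the disjunction $\maximalOn{\eta}{\Gamma_1(\Delta)} \vee \emptyOn{\eta}{\Gamma_2}$; the first two premises are discharged by the IH and transfer as above. For the disjunction: if $\emptyOn{\eta}{\Gamma_2}$ holds it transfers directly to $\emptyOn{\eta \cdot \eta'}{\Gamma_2}$; if instead $\maximalOn{\eta}{\Gamma_1(\Delta)}$ holds, domain-splitting gives $\maximalOn{\eta}{\Gamma_1(-)}$ and $\maximalOn{\eta}{\Delta}$, and the maximality clause of $\agree{\eta}{\eta'}{\Delta}{\Delta'}$ upgrades $\maximalOn{\eta}{\Delta}$ to $\maximalOn{\eta'}{\Delta'}$; combining the transferred $\maximalOn{\eta \cdot \eta'}{\Gamma_1(-)}$ with $\maximalOn{\eta'}{\Delta'}$ (which equals the status of $\eta \cdot \eta'$ on $\dom{\Delta'}$) rebuilds $\maximalOn{\eta \cdot \eta'}{\Gamma_1(\Delta')}$. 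The semicolon rule then reassembles $\envHasType{\eta \cdot \eta'}{\semicctx{\Gamma_1(\Delta')}{\Gamma_2}}$. The dual case $\Gamma(-) = \semicctx{\Gamma_1}{\Gamma_2(-)}$ is handled mirror-image: the left disjunct $\maximalOn{\eta}{\Gamma_1}$ transfers directly, whereas the right disjunct $\emptyOn{\eta}{\Gamma_2(\Delta)}$ splits into $\emptyOn{\eta}{\Gamma_2(-)}$ and $\emptyOn{\eta}{\Delta}$, and here I invoke the emptiness clause of agreement to obtain $\emptyOn{\eta'}{\Delta'}$, hence $\emptyOn{\eta \cdot \eta'}{\Gamma_2(\Delta')}$.

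The only genuine obstacle is checking that the agreement hypothesis is exactly strong enough to cross the hole in both semicolon orientations: the maximality clause of $\agree{\eta}{\eta'}{\Delta}{\Delta'}$ is precisely what is needed when the hole lies to the \emph{left} of a semicolon, and the emptiness clause is what is needed when it lies to the \emph{right}. (For the right-hole case to go through, the second clause of Definition~\ref{def:agree} must read $\emptyOn{\eta}{\Delta} \implies \emptyOn{\eta'}{\Delta'}$.) Everything else — the comma cases, the transfer fact, and the domain splitting — is routine pointwise bookkeeping over $\dom{\Gamma(-)}$.
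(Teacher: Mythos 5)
Your proof is correct and takes essentially the same approach as the paper's, whose entire proof is ``Induction on the structure of $\Gamma(-)$, inverting everything in sight''; your case analysis fills in exactly that induction, correctly locating the maximality clause of agreement in the hole-left-of-semicolon case and the emptiness clause in the hole-right-of-semicolon case. Your parenthetical catch is also right: the second clause of Definition~\ref{def:agree} as printed ($\emptyOn{\eta}{\Delta} \implies \maximalOn{\eta'}{\Delta'}$) is a typo for $\emptyOn{\eta}{\Delta} \implies \emptyOn{\eta'}{\Delta'}$, as confirmed by the paper's later remark that the \emph{emptiness} component of agreement is what inertness must supply in the \textsc{T-Let} soundness case.
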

\begin{proof}
    Induction on the structure of $\Gamma(-)$, inverting everything in sight.
\end{proof}

\begin{theorem}[Environment Par Bind]
    \label{thm:env-par-bind}
    If $\envHasType{\eta}{\Gamma(z : s \| t)}$ and $\eta(z) \mapsto \parp{p_1}{p_2}$ 
    then $\envHasType{\eta[x \mapsto p_1, y \mapsto p_2]}{\Gamma(\commactx{x:s}{y:t})}$
\end{theorem}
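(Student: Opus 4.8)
The plan is to obtain this as a direct corollary of Environment Subcontext Bind (Theorem~\ref{thm:env-subctx-bind}), instantiating its generic subcontext $\Delta$ with the single binding $z : s \| t$ and its replacement $\Delta'$ with the comma context $\commactx{x:s}{y:t}$. This sidesteps a fresh induction on the shape of the context-with-a-hole, since that induction is already carried out inside the subcontext-bind lemma; the only remaining work is to verify its three hypotheses and to identify the combined environment with $\eta[x \mapsto p_1, y \mapsto p_2]$.

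First I would extract the typing of the two components. Applying Environment Subcontext Lookup (Theorem~\ref{thm:env-subctx-lookup}) to $\envHasType{\eta}{\Gamma(z : s\|t)}$ yields $\envHasType{\eta}{z : s \| t}$, and hence $\prefixHasType{\parp{p_1}{p_2}}{s \| t}$. Inverting the parallel rule for well-typed prefixes gives $\prefixHasType{p_1}{s}$ and $\prefixHasType{p_2}{t}$. Taking $\eta'$ to be the environment $\{x \mapsto p_1,\, y \mapsto p_2\}$ (with $x,y$ fresh, as usual for bound variables), the comma rule for environments—which imposes no ordering side-condition—immediately gives $\envHasType{\eta'}{\commactx{x:s}{y:t}}$.

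Next I would discharge the agreement obligation $\agree{\eta}{\eta'}{z : s\|t}{\commactx{x:s}{y:t}}$ from Definition~\ref{def:agree}. Both of its clauses reduce to the componentwise characterization of parallel prefixes: $\isMaximal{\parp{p_1}{p_2}}$ holds exactly when $\isMaximal{p_1}$ and $\isMaximal{p_2}$ (Definition~\ref{app:maximal}), and $\isEmpty{\parp{p_1}{p_2}}$ holds exactly when $\isEmpty{p_1}$ and $\isEmpty{p_2}$ (the definition of Prefix is Empty). Thus whenever $\eta$ is maximal (respectively empty) on $z$, the prefixes $p_1$ and $p_2$ are both maximal (respectively empty), so $\eta'$ is maximal (respectively empty) on both $x$ and $y$, which is precisely what agreement requires. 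With all three hypotheses established, Theorem~\ref{thm:env-subctx-bind} delivers $\envHasType{\eta \cdot \eta'}{\Gamma(\commactx{x:s}{y:t})}$; and since $\eta'$ only introduces the bindings $x \mapsto p_1$ and $y \mapsto p_2$, the combined environment $\eta \cdot \eta'$ is exactly $\eta[x \mapsto p_1, y \mapsto p_2]$, which closes the argument.

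I do not anticipate a serious obstacle, as the statement is essentially a specialization of the subcontext-bind lemma to the parallel case. The one point demanding care is the agreement check, which succeeds only because maximality and emptiness are each defined componentwise on $\parp{-}{-}$ and therefore transfer in lockstep to $p_1$ and $p_2$. This is exactly what makes the comma (parallel) case clean, in contrast to a semicolon context, where the ordering disjunction in the environment typing rule would instead have to be reestablished.
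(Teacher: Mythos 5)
Your proposal is correct and matches the paper's proof, which likewise derives this result as an immediate corollary of Theorem~\ref{thm:env-subctx-bind} (Environment Subcontext Bind); you have simply spelled out the hypothesis-checking (prefix inversion, the comma environment rule, and the componentwise agreement argument) that the paper leaves implicit. Your observation that agreement transfers cleanly because maximality and emptiness are both defined componentwise on $\parp{-}{-}$ is exactly the reason the paper can cite the lemma without further argument.
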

\begin{proof}
    By Theorem~\ref{thm:env-subctx-bind}.
\end{proof}

\begin{theorem}[Environment Cat Bind 1]
    \label{thm:env-cat-bind-1}
    If $\envHasType{\eta}{\Gamma(z : s \cdot t)}$ and $\eta(z) \mapsto \catpA{p}$ 
    then $\envHasType{\eta[x \mapsto p, y \mapsto \emp{t}]}{\Gamma(\semicctx{x:s}{y:t})}$
\end{theorem}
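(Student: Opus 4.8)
The plan is to obtain this as an immediate corollary of Environment Subcontext Bind (Theorem~\ref{thm:env-subctx-bind}), mirroring the one-line proof of Environment Par Bind (Theorem~\ref{thm:env-par-bind}). I would instantiate Subcontext Bind with the distinguished subcontext $\Delta = (z : s \cdot t)$, the replacement $\Delta' = \semicctx{x:s}{y:t}$, and the small environment $\eta' = [x \mapsto p, y \mapsto \emp{t}]$. Its conclusion $\envHasType{\eta \cdot \eta'}{\Gamma(\semicctx{x:s}{y:t})}$ is then exactly the goal: since $x$ and $y$ are fresh, $\eta \cdot \eta'$ agrees with $\eta[x \mapsto p, y \mapsto \emp{t}]$ on every variable of $\Gamma(\semicctx{x:s}{y:t})$, and the leftover binding for $z$ is irrelevant because $z$ no longer occurs in that context.

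Discharging the hypotheses of Subcontext Bind is routine. The first, $\envHasType{\eta}{\Gamma(z : s\cdot t)}$, is given. For the second, $\envHasType{\eta'}{\semicctx{x:s}{y:t}}$, I would first apply Environment Subcontext Lookup (Theorem~\ref{thm:env-subctx-lookup}) to obtain $\envHasType{\eta}{z : s\cdot t}$, hence $\prefixHasType{\catpA{p}}{s\cdot t}$; inverting the prefix-typing rule for $\catpA{-}$ then yields $\prefixHasType{p}{s}$. Combined with $\prefixHasType{\emp{t}}{t}$ from Theorem~\ref{thm:empty-prefix-correct}, both singleton bindings typecheck, and the ordering side-condition of the semicolon environment rule holds via its left disjunct $\emptyOn{\eta'}{y:t}$, since $\isEmpty{\emp{t}}$ by Theorem~\ref{thm:emp-is-empty}.

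The one clause that requires a moment's thought is the agreement premise $\agree{\eta}{\eta'}{(z:s\cdot t)}{\semicctx{x:s}{y:t}}$. Its maximality obligation is vacuous: there is no maximality rule for a prefix of the form $\catpA{p}$, so $\maximalOn{\eta}{z:s\cdot t}$ never holds. Its emptiness obligation carries the content: if $\emptyOn{\eta}{z:s\cdot t}$, i.e.\ $\isEmpty{\catpA{p}}$, then inverting the emptiness rule gives $\isEmpty{p}$, and since $\emp{t}$ is always empty we obtain $\emptyOn{\eta'}{\semicctx{x:s}{y:t}}$. With all three hypotheses in hand, Subcontext Bind closes the goal. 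I expect the whole argument to be short; the only genuinely $\catpA{-}$-specific observation is the vacuity of the maximal case, reflecting the fact that a not-yet-crossed-over prefix of $s \cdot t$ can never be complete.
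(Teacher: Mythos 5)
Your proposal is correct and is exactly the paper's route: the paper proves this theorem with the one-line citation ``By Theorem~\ref{thm:env-subctx-bind}.'', and you have simply spelled out the instantiation and side conditions (including the vacuity of the maximality obligation, since no rule makes $\catpA{p}$ maximal) that the paper leaves implicit. Nothing to change.
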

\begin{proof}
    By Theorem~\ref{thm:env-subctx-bind}.
\end{proof}

\begin{theorem}[Environment Cat Bind 2]
    \label{thm:env-cat-bind-2}
    If $\envHasType{\eta}{\Gamma(z : s \cdot t)}$ and $\eta(z) \mapsto \catpB{p_1}{p_2}$
    then $\envHasType{\eta[x \mapsto p_1, y \mapsto p_2]}{\Gamma(\semicctx{x:s}{y:t})}$
\end{theorem}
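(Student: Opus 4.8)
The plan is to derive this as an immediate consequence of Environment Subcontext Bind (Theorem~\ref{thm:env-subctx-bind}), mirroring the one-line proofs of Environment Par Bind (Theorem~\ref{thm:env-par-bind}) and Environment Cat Bind 1 (Theorem~\ref{thm:env-cat-bind-1}). I instantiate that theorem with subcontext $\Delta = (z : s \cdot t)$, replacement context $\Delta' = \semicctx{x : s}{y : t}$, and replacement environment $\eta' = [x \mapsto p_1, y \mapsto p_2]$. The first hypothesis of Theorem~\ref{thm:env-subctx-bind} is exactly the given $\envHasType{\eta}{\Gamma(z : s \cdot t)}$, and the combined environment $\eta \cdot \eta'$ it produces agrees with $\eta[x \mapsto p_1, y \mapsto p_2]$ on the domain of $\Gamma(\semicctx{x:s}{y:t})$ (the stale binding for $z$ is irrelevant, since $z$ no longer occurs in the resulting context). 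It thus remains to discharge the two remaining premises: that $\eta'$ is well-typed for $\semicctx{x:s}{y:t}$, and that $\eta$ and $\eta'$ agree on $\Delta$ and $\Delta'$.

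For well-typedness of $\eta'$, I first recover the shape of the prefix bound to $z$. By Environment Subcontext Lookup (Theorem~\ref{thm:env-subctx-lookup}) applied to the given hypothesis, $\envHasType{\eta}{z : s \cdot t}$, so $\prefixHasType{\catpB{p_1}{p_2}}{s \cdot t}$. Inverting the typing rule for $\catpB$ yields $\prefixHasType{p_1}{s}$, $\isMaximal{p_1}$, and $\prefixHasType{p_2}{t}$. These give the two pointwise typing obligations for $\eta'$, and the order side-condition on the semicolon context is satisfied by the left disjunct $\maximalOn{\eta'}{x:s}$, which is precisely $\isMaximal{p_1}$. This is the one structural point where the present case differs from Cat Bind 1: there the tail variable is bound to $\emp{t}$ and one invokes the empty disjunct, whereas here the crossover has already occurred, so $p_1$ is genuinely maximal and supplies the left disjunct directly.

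For the agreement $\agree{\eta}{\eta'}{z : s \cdot t}{\semicctx{x:s}{y:t}}$, I check both implications. If $\maximalOn{\eta}{z : s \cdot t}$, i.e.\ $\isMaximal{\catpB{p_1}{p_2}}$, then inverting the maximality rule for $\catpB$ gives $\isMaximal{p_1}$ and $\isMaximal{p_2}$, hence $\maximalOn{\eta'}{\semicctx{x:s}{y:t}}$. The empty implication is vacuous: there is no inference rule concluding $\isEmpty{\catpB{p_1}{p_2}}$, so $\emptyOn{\eta}{z : s \cdot t}$ can never hold. With both premises in hand, Theorem~\ref{thm:env-subctx-bind} delivers $\envHasType{\eta \cdot \eta'}{\Gamma(\semicctx{x:s}{y:t})}$, which is the desired conclusion. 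I expect no real obstacle beyond the inversion bookkeeping; the only subtlety worth flagging is that the empty-agreement obligation is discharged vacuously rather than proved.
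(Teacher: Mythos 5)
Your proof is correct and follows exactly the paper's route: the paper's entire proof is the one-line appeal ``By Theorem~\ref{thm:env-subctx-bind}.'', and your elaboration---inverting $\prefixHasType{\catpB{p_1}{p_2}}{s \cdot t}$ to get $\prefixHasType{p_1}{s}$, $\isMaximal{p_1}$, $\prefixHasType{p_2}{t}$, discharging the semicolon side-condition via the $\maximalOn{\eta'}{x:s}$ disjunct, and handling agreement with the maximality inversion plus the vacuous emptiness case (no rule concludes $\isEmpty{\catpB{p_1}{p_2}}$)---is precisely the bookkeeping the paper leaves implicit. Nothing to correct.
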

\begin{proof}
    By Theorem~\ref{thm:env-subctx-bind}.
\end{proof}

Lastly, the structure of the above subcontext replacement operation is compatible with derivatives.
Taking the derivative of $\Gamma(\Delta)$ by $\eta$ yields $\Gamma'(\deriv{\eta}{\Delta})$ for some $\Gamma'(-)$,
and for \emph{any} other filler $\Delta_0$ and environment $\envHasType{\eta_0}{\Delta_0}$,
the outer derivative bit of the derivative remains unchanged: $\deriv{\eta \cup \eta_0}{\Gamma(\Delta_0)}$ is $\Gamma'(\deriv{\eta_0}{\Delta_0})$
\begin{theorem}[Environment Subcontext Bind Derivative]
    \label{thm:env-subctx-bind-deriv}
    If $\derivrel{\eta}{\Gamma(\Delta)}{\Gamma_0}$ then there is some $\Gamma'(-)$ such that
    for all $\Delta'$ and $\Delta''$ and $\eta'$, if $\derivrel{\eta'}{\Delta'}{\Delta''}$ and $\agree{\eta}{\eta'}{\Delta}{\Delta'}$
    then $\derivrel{\eta \cup \eta'}{\Gamma(\Delta')}{\Gamma'(\Delta'')}$
\end{theorem}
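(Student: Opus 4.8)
The plan is to induct on the structure of the context-with-hole $\Gamma(-)$, exploiting the fact that the context-derivative judgment (the Context Derivatives definition) is \emph{compositional}: the derivative of $\commactx{\Gamma_1}{\Gamma_2}$ (resp. $\semicctx{\Gamma_1}{\Gamma_2}$) is assembled solely from the derivatives of $\Gamma_1$ and $\Gamma_2$, and the derivative of a singleton $x : s$ consults only $\eta(x)$. Consequently, the derivative of any subcontext depends only on the bindings $\eta$ assigns to the variables occurring in that subcontext. This single observation is what makes the whole argument go through.

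First I would handle the base case $\Gamma(-) = (-)$, where $\Gamma(\Delta) = \Delta$ and the hypothesis reads $\derivrel{\eta}{\Delta}{\Gamma_0}$. Here I take $\Gamma'(-) = (-)$; given any $\Delta'$, $\Delta''$, $\eta'$ with $\derivrel{\eta'}{\Delta'}{\Delta''}$, the goal $\derivrel{\eta \cup \eta'}{\Delta'}{\Delta''}$ follows because the derivative of $\Delta'$ inspects only the bindings of $\dom{\Delta'}$, on which $\eta \cup \eta'$ agrees with $\eta'$. For the inductive step the hole lies under one of the two context formers, e.g. $\Gamma(-) = \semicctx{\Gamma_1(-)}{\Delta_2}$ (the comma case and the mirror images are identical). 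Inverting the semicolon rule of the context-derivative judgment on $\derivrel{\eta}{\semicctx{\Gamma_1(\Delta)}{\Delta_2}}{\Gamma_0}$ splits $\Gamma_0 = \semicctx{\Gamma_{01}}{\Delta_{02}}$ with $\derivrel{\eta}{\Gamma_1(\Delta)}{\Gamma_{01}}$ and $\derivrel{\eta}{\Delta_2}{\Delta_{02}}$. Applying the induction hypothesis to $\Gamma_1(-)$ produces a hole-context $\Gamma_1'(-)$, and I set $\Gamma'(-) = \semicctx{\Gamma_1'(-)}{\Delta_{02}}$.

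To close the inductive step I reassemble the derivative with the semicolon rule: the left premise $\derivrel{\eta \cup \eta'}{\Gamma_1(\Delta')}{\Gamma_1'(\Delta'')}$ is exactly the conclusion delivered by the induction hypothesis, while the right premise $\derivrel{\eta \cup \eta'}{\Delta_2}{\Delta_{02}}$ follows from $\derivrel{\eta}{\Delta_2}{\Delta_{02}}$ together with the compositionality observation, since $\dom{\Delta_2}$ is disjoint from $\dom{\Delta'}$ and hence $\eta \cup \eta'$ and $\eta$ restrict to the same map there. The agreement hypothesis $\agree{\eta}{\eta'}{\Delta}{\Delta'}$ is not what forces the derivative to exist---every needed subderivative comes either from inversion or from the hypothesis on $\Delta'$---but it guarantees that $\eta \cup \eta'$ remains a coherent (in particular well-typed) environment on the affected semicolon bunches, which is what downstream uses of the lemma require. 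Uniqueness of context derivatives (Theorem~\ref{thm:derivrel-env-fun}) then certifies that the reassembled context really is \emph{the} derivative $\Gamma'(\Delta'')$.

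The only genuine subtlety---and the step I would be most careful about---is the bookkeeping around the union $\eta \cup \eta'$: I must ensure that replacing the filler $\Delta$ by $\Delta'$ and the bindings $\eta|_{\Delta}$ by $\eta'$ leaves the outer environment untouched, so that the sibling derivatives $\Delta_{02}$ are \emph{literally} unchanged rather than merely isomorphic. This is where the disjointness of hole-variables from the surrounding context, implicit in the $\Gamma(-)$ notation, is essential; with that in hand, the induction is otherwise a routine inversion at each node.
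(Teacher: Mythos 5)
Your proof is correct and matches the paper's approach exactly: the paper's entire proof is ``Induction on $\Gamma(-)$,'' and your proposal is precisely that induction carried out in detail, including the key observations that the context-derivative judgment is compositional with no maximality side conditions (so the agreement hypothesis plays no role in constructing the derivative itself) and that disjointness of the hole's variables from the surrounding context keeps the sibling derivatives literally unchanged under $\eta \cup \eta'$.
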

\begin{proof}
    Induction on $\Gamma(-)$.
\end{proof}

\begin{theorem}[Environment Par Derivative]
    \label{thm:env-par-bind-deriv}
    If $\derivrel{\eta}{\Gamma(z : s \| t)}{\Gamma'(z : s' \| t')}$
    and $\eta(z) \mapsto \parp{p_1}{p_2}$ 
    then $\derivrel{\eta[x \mapsto p_1, y \mapsto p_2]}{\Gamma(\commactx{x:s}{y:t})}{\Gamma'(\commactx{x : s'}{y : t'})}$
\end{theorem}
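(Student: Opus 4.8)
The plan is to obtain this as a direct specialization of the general subcontext result, Theorem~\ref{thm:env-subctx-bind-deriv}, in exactly the way the non-derivative analogue (Theorem~\ref{thm:env-par-bind}) follows from Theorem~\ref{thm:env-subctx-bind}. First I would set the distinguished subcontext to be $\Delta = (z : s \| t)$, so that $\Gamma(z : s \| t)$ is literally $\Gamma(\Delta)$ and the hypothesis reads $\derivrel{\eta}{\Gamma(\Delta)}{\Gamma'(z : s' \| t')}$. Theorem~\ref{thm:env-subctx-bind-deriv} then supplies a context-with-hole describing the derivative of everything \emph{outside} the hole; since that outer part depends only on $\eta$ and the frame $\Gamma(-)$ and not on the filler, and since filling it with $z : s \| t$ must reproduce $\Gamma'(z : s' \| t')$, the uniqueness half of Theorem~\ref{thm:derivrel-env-fun} identifies this context-with-hole with the $\Gamma'(-)$ named in the statement.

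Next I would instantiate the general theorem at the new filler $\Delta' = \commactx{x:s}{y:t}$ and the extended environment $\eta' = [x \mapsto p_1,\, y \mapsto p_2]$, so that $\eta \cup \eta'$ is precisely $\eta[x \mapsto p_1, y \mapsto p_2]$ on the variables of $\Gamma(\commactx{x:s}{y:t})$. To discharge its derivative premise, I note that the context-derivative rule for the comma former acts pointwise, so $\derivrel{\eta'}{\commactx{x:s}{y:t}}{\commactx{x:s'}{y:t'}}$ reduces to the two componentwise obligations $\derivrel{p_1}{s}{s'}$ and $\derivrel{p_2}{t}{t'}$. These are recovered by inverting the parallel-prefix derivative rule on the $z$-component of the hypothesis: since $\eta(z) \mapsto \parp{p_1}{p_2}$, the derivative $\derivrel{\parp{p_1}{p_2}}{s \| t}{s' \| t'}$ is forced, and its sole rule yields exactly $\derivrel{p_1}{s}{s'}$ and $\derivrel{p_2}{t}{t'}$. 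This is also the step that guarantees the $s'$ and $t'$ appearing here are the same ones as in the goal.

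It then remains to verify the agreement premise $\agree{\eta}{\eta'}{z:s\|t}{\commactx{x:s}{y:t}}$. Unfolding the definition of \texttt{agree}, this reduces to transferring emptiness and maximality of the single binding $\eta(z) = \parp{p_1}{p_2}$ to the pair of bindings $\eta'(x) = p_1$, $\eta'(y) = p_2$. By the rules for maximal and empty prefixes at parallel type, $\isMaximal{\parp{p_1}{p_2}}$ holds iff $\isMaximal{p_1}$ and $\isMaximal{p_2}$, and $\isEmpty{\parp{p_1}{p_2}}$ holds iff both components are empty; so the agreement conditions hold (as equivalences, in fact). Feeding the derivative and agreement facts into Theorem~\ref{thm:env-subctx-bind-deriv} delivers $\derivrel{\eta \cup \eta'}{\Gamma(\commactx{x:s}{y:t})}{\Gamma'(\commactx{x:s'}{y:t'})}$, which is the desired conclusion.

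I do not expect a substantive obstacle here: the statement is essentially a bookkeeping instance of the general lemma, mirroring the pattern already used for Theorems~\ref{thm:env-par-bind}, \ref{thm:env-cat-bind-1}, and \ref{thm:env-cat-bind-2}. The only point needing a little care is the identification of the outer context-with-hole produced by Theorem~\ref{thm:env-subctx-bind-deriv} with the $\Gamma'(-)$ of the statement, which hinges on the uniqueness of context derivatives (Theorem~\ref{thm:derivrel-env-fun}); everything else is forced by inverting the parallel-prefix rules for derivative, maximality, and emptiness.
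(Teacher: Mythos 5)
Your proposal is correct and follows exactly the paper's route: the paper proves this theorem by a one-line appeal to Theorem~\ref{thm:env-subctx-bind-deriv}, and your argument is precisely that invocation with the bookkeeping spelled out (instantiating $\Delta' = \commactx{x:s}{y:t}$ and $\eta' = [x \mapsto p_1, y \mapsto p_2]$, discharging the derivative premise by inverting the parallel-prefix derivative rule, checking agreement via the componentwise maximality/emptiness rules for $\parp{p_1}{p_2}$, and pinning down $\Gamma'(-)$ via uniqueness from Theorem~\ref{thm:derivrel-env-fun}). No gaps; your elaboration is, if anything, more careful than the paper's terse citation.
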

\begin{proof}
    By Theorem~\ref{thm:env-subctx-bind-deriv}.
\end{proof}

\begin{theorem}[Environment Cat Derivative 1]
    \label{thm:env-cat-bind-deriv-1}
    If $\derivrel{\eta}{\Gamma(z : s \cdot t)}{\Gamma'(z : s' \cdot t)}$
    and $\eta(z) \mapsto \catpA{p}$ 
    then $\derivrel{\eta[x \mapsto p, y \mapsto \emp{t}]}{\Gamma(\semicctx{x:s}{y:t})}{\Gamma'(\semicctx{x : s'}{y : t})}$
\end{theorem}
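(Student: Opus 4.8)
The plan is to follow the same strategy as the proof of Theorem~\ref{thm:env-par-bind-deriv} (Environment Par Derivative) and reduce the goal to Theorem~\ref{thm:env-subctx-bind-deriv} (Environment Subcontext Bind Derivative). Concretely, I would apply that theorem with the filled hole $\Delta = (z : s \cdot t)$, so that the hypothesis $\derivrel{\eta}{\Gamma(z : s \cdot t)}{\Gamma'(z : s' \cdot t)}$ furnishes an outer hole context $\Gamma'(-)$ --- the ``derivative of everything but the hole.'' The outer context produced by Theorem~\ref{thm:env-subctx-bind-deriv} is forced to coincide with the $\Gamma'(-)$ named in the statement, since context derivatives are functional (Theorem~\ref{thm:derivrel-env-fun}) and the full derivatives agree on the original filler. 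It then remains to instantiate the universally-quantified part with the new filler $\Delta' = \semicctx{x:s}{y:t}$, its derivative $\Delta'' = \semicctx{x:s'}{y:t}$, and $\eta'$ the two new bindings $\{x \mapsto p, y \mapsto \emp{t}\}$, so that $\eta \cup \eta'$ is precisely $\eta[x \mapsto p, y \mapsto \emp{t}]$ (by freshness of $x,y$); the desired conclusion $\derivrel{\eta[x \mapsto p, y\mapsto \emp{t}]}{\Gamma(\semicctx{x:s}{y:t})}{\Gamma'(\semicctx{x:s'}{y:t})}$ then falls out directly.

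To apply the theorem I must discharge its two hypotheses. First, the derivative of the new filler: I would establish $\derivrel{\eta'}{\semicctx{x:s}{y:t}}{\semicctx{x:s'}{y:t}}$ by the semicolon rule for context derivatives, reducing to a per-variable check. For $x$, inverting the hypothesis (first on the context-derivative rule isolating $z$, then on the derivative rule for $\catpA{\cdot}$) extracts exactly $\derivrel{p}{s}{s'}$, which is what is needed since $\eta'(x) = p$. For $y$, since $\eta'(y) = \emp{t}$, I invoke Theorem~\ref{thm:derivrel-emp} (Empty Prefix Derivative) to get $\derivrel{\emp{t}}{t}{t}$, leaving the $t$ component unchanged as required.

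The crux --- and the only step needing genuine care --- is the agreement premise $\agree{\eta}{\eta'}{(z : s \cdot t)}{(\semicctx{x:s}{y:t})}$. For the maximality half, I observe that $\eta(z) = \catpA{p}$ is never maximal (there is no maximality rule for $\catpA{\cdot}$; only $\catpB{\cdot}{\cdot}$ qualifies), so $\maximalOn{\eta}{(z : s \cdot t)}$ is false and that implication holds vacuously. For the emptiness half, if $\catpA{p}$ is empty then by inversion $p$ is empty, so $\eta'(x) = p$ is empty; and $\eta'(y) = \emp{t}$ is empty by Theorem~\ref{thm:emp-is-empty} (Empty Prefix Is Empty); hence $\emptyOn{\eta'}{(\semicctx{x:s}{y:t})}$, discharging the implication. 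With both premises in hand, Theorem~\ref{thm:env-subctx-bind-deriv} delivers the result. I expect this agreement check --- in particular the observation that $\catpA{p}$ can never be maximal, which makes the maximal case vacuous --- to be the main (albeit small) obstacle; everything else is routine inversion and appeal to the empty-prefix lemmas.
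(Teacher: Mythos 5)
Your proposal is correct and matches the paper exactly: the paper's entire proof is ``By Theorem~\ref{thm:env-subctx-bind-deriv},'' and your instantiation ($\Delta' = \semicctx{x:s}{y:t}$, $\eta' = \{x \mapsto p, y \mapsto \emp{t}\}$, agreement discharged via the non-maximality of $\catpA{p}$ and emptiness via inversion plus Theorem~\ref{thm:emp-is-empty}) is precisely the elided detail. Your careful handling of the agreement premise is the right reading of Definition~\ref{def:agree} (whose second implication evidently contains a typo, $\texttt{maximalOn}$ for $\texttt{emptyOn}$), so nothing is missing.
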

\begin{proof}
    By Theorem~\ref{thm:env-subctx-bind-deriv}.
\end{proof}

\begin{theorem}[Environment Cat Derivative 2]
    \label{thm:env-cat-bind-deriv-2}
    If $\derivrel{\eta}{\Gamma(z : s \cdot t)}{\Gamma'(z : t')}$
    and $\eta(z) \mapsto \catpB{p_1}{p_2}$ 
    then $\derivrel{\eta[x \mapsto p_1, y \mapsto p_2]}{\Gamma(\semicctx{x:s}{y:t})}{\Gamma'(\semicctx{x : s'}{y : t'})}$
\end{theorem}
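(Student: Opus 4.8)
The plan is to obtain this, exactly as in the two preceding corollaries, as an instance of Theorem~\ref{thm:env-subctx-bind-deriv} (Environment Subcontext Bind Derivative), with the distinguished subcontext taken to be the singleton $\Delta = (z : s \cdot t)$ and the replacement filler $\Delta' = \semicctx{x:s}{y:t}$. That lemma already produces the outer context $\Gamma'(-)$ from the given derivative $\derivrel{\eta}{\Gamma(z:s\cdot t)}{\Gamma'(z:t')}$, so all that remains is to compute the derivative of the new filler and to discharge the lemma's agreement side condition.

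First I would unpack the prefix bound to $z$. From $\eta(z)\mapsto\catpB{p_1}{p_2}$ together with the well-typedness of that prefix at $s\cdot t$, inversion of $\prefixHasType{\catpB{p_1}{p_2}}{s\cdot t}$ yields $\prefixHasType{p_1}{s}$, $\isMaximal{p_1}$, and $\prefixHasType{p_2}{t}$. The inner derivative recorded by the hypothesis is $\derivrel{\catpB{p_1}{p_2}}{s\cdot t}{t'}$, and inverting the $\catpB$ rule of Definition~\ref{tdef:derivrel} gives $\derivrel{p_2}{t}{t'}$. The type $s'$ named in the conclusion is then supplied by Theorem~\ref{thm:derivrel-fun}: since $\prefixHasType{p_1}{s}$, there is a unique $s'$ with $\derivrel{p_1}{s}{s'}$. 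Note that, in contrast to the $\catpA$ case handled in Theorem~\ref{thm:env-cat-bind-deriv-1}, this $s'$ is \emph{not} handed to us by the inner derivative and must be produced from the well-typedness of $p_1$. Combining $\derivrel{p_1}{s}{s'}$ and $\derivrel{p_2}{t}{t'}$ through the semicolon rule of the context-derivative relation gives $\derivrel{\eta[x\mapsto p_1,y\mapsto p_2]}{\semicctx{x:s}{y:t}}{\semicctx{x:s'}{y:t'}}$, so the replacement filler derivative $\Delta''$ is exactly $\semicctx{x:s'}{y:t'}$.

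The one substantive step, and the expected obstacle, is the agreement condition $\agree{\eta}{\eta'}{(z:s\cdot t)}{\semicctx{x:s}{y:t}}$ required by Theorem~\ref{thm:env-subctx-bind-deriv}, where $\eta'$ binds $x\mapsto p_1$ and $y\mapsto p_2$ (Definition~\ref{def:agree}). Its empty clause is vacuous, because a $\catpB$ prefix is never empty, so $\emptyOn{\eta}{z:s\cdot t}$ simply fails. For the maximal clause, $\maximalOn{\eta}{z:s\cdot t}$ unfolds to $\isMaximal{\catpB{p_1}{p_2}}$, i.e.\ to $\isMaximal{p_1}$ and $\isMaximal{p_2}$, while the required $\maximalOn{\eta'}{\semicctx{x:s}{y:t}}$ unfolds to those same two facts; since $\isMaximal{p_1}$ is already in hand from the prefix inversion, the implication holds. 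With agreement established, Theorem~\ref{thm:env-subctx-bind-deriv} delivers $\derivrel{\eta\cup\eta'}{\Gamma(\semicctx{x:s}{y:t})}{\Gamma'(\semicctx{x:s'}{y:t'})}$, and since $\eta\cup\eta'$ coincides with $\eta[x\mapsto p_1,y\mapsto p_2]$ on the variables of $\Gamma(\semicctx{x:s}{y:t})$ (the now-stale binding of $z$ being irrelevant), this is precisely the claim. Everything besides the maximality observation is routine unfolding of the definitions.
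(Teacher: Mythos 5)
Your proposal is correct and matches the paper's proof, which is exactly the one-line appeal to Theorem~\ref{thm:env-subctx-bind-deriv} with the singleton subcontext $z : s \cdot t$ replaced by $\semicctx{x:s}{y:t}$; you have merely spelled out the details (inverting $\prefixHasType{\catpB{p_1}{p_2}}{s \cdot t}$, producing $s'$ via Theorem~\ref{thm:derivrel-fun}, and discharging the agreement side condition using $\isMaximal{p_1}$) that the paper leaves implicit. Your observation that $s'$ must be manufactured from the well-typedness of $p_1$ rather than read off the inner derivative is accurate and is the one place where the paper's terse statement quietly relies on the ambient hypothesis $\envHasType{\eta}{\Gamma(z : s \cdot t)}$ available at every use site.
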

\begin{proof}
    By Theorem~\ref{thm:env-subctx-bind-deriv}.
\end{proof}

\begin{theorem}[Environment Lookup]
    \label{thm:env-lookup}
    For any $\eta$, there is at most one $p$ so that $\eta(x) \mapsto p$.
    When $\envHasType{\eta}{\Gamma(x:s)}$, this $p$ exists, and $\prefixHasType{p}{s}$.
\end{theorem}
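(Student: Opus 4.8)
The plan is to dispatch the two halves of the statement separately, since they follow from quite different facts. The uniqueness claim---that for any $\eta$ there is at most one $p$ with $\eta(x) \mapsto p$---is immediate from the definition of an environment as a \emph{partial map} $\eta : \texttt{Var} \to \texttt{Prefix}$. The judgment $\eta(x) \mapsto p$ is nothing more than the assertion that the partial function $\eta$ is defined at $x$ and sends it to $p$, and a function has at most one output on any given input. So this half requires no induction at all; I would simply appeal to functionality of $\eta$.

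For the second half, I would exploit the bunched hole notation: the context $\Gamma(x:s)$ is obtained by filling the hole of $\Gamma(-)$ with the single-variable context $(x:s)$, so $(x:s)$ is a genuine subcontext of $\Gamma(x:s)$. First I would apply Theorem~\ref{thm:env-subctx-lookup} (Environment Subcontext Lookup), instantiating its $\Delta$ with $(x:s)$, to pass from the hypothesis $\envHasType{\eta}{\Gamma(x:s)}$ to $\envHasType{\eta}{x:s}$. Then I would invert this single-variable environment typing judgment. Its only applicable rule is
\[
  \inferrule{\eta(x) \mapsto p \\ \prefixHasType{p}{s}}{\envHasType{\eta}{x:s}},
\]
so inversion yields exactly a prefix $p$ with $\eta(x) \mapsto p$ and $\prefixHasType{p}{s}$, which is the desired existence-and-typing conclusion.

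There is no real obstacle here: the proof is a one-line invocation of the subcontext lookup lemma followed by a single inversion, and the heavy lifting (the induction on the shape of the hole $\Gamma(-)$) has already been discharged in the proof of Theorem~\ref{thm:env-subctx-lookup}. The only point requiring any care is the bookkeeping observation that filling the hole with $(x:s)$ is a legitimate instance of the subcontext relation, so that the lemma applies verbatim.
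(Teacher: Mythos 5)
Your proposal is correct and matches the paper's own proof exactly: uniqueness from $\eta$ being a partial function, then Theorem~\ref{thm:env-subctx-lookup} instantiated at the subcontext $x:s$ followed by inversion of the singleton environment-typing rule. No gaps---the paper's argument is the same one-liner.
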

\begin{proof}
    The ``at most one'' $p$ is immediate from the fact that $\eta$ is a deterministic partial function.
    If $\envHasType{\eta}{\Gamma(x:s)}$ then $\envHasType{\eta}{x : s}$ by Theorem~\ref{thm:env-subctx-lookup}.
    By inversion, there is some $\prefixHasType{p}{s}$ such that $\eta(x) \mapsto p$.
\end{proof}

\begin{theorem}[Environment Lookup Derivative]
    \label{thm:env-lookup-deriv}
    Suppose:
    \begin{enumerate}
        \item $\eta(x) = p$
        \item $\envHasType{\eta}{\Gamma(x:s)}$
        \item $\derivrel{p}{s}{s'}$
        \item $\derivrel{\eta}{\Gamma(x:s)}{\Gamma_0}$
    \end{enumerate}
    Then there is some $\Gamma'(-)$ such that $\Gamma_0 = \Gamma'(x : s')$.
\end{theorem}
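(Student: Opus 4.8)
The plan is to proceed by structural induction on the context-with-hole $\Gamma(-)$, inverting the context-derivative relation $\derivrel{\eta}{\Gamma(x:s)}{\Gamma_0}$ at each step and threading the distinguished hole through. Since $\Gamma(-)$ is built from the hole $(-)$ by the two context formers $\commactx{-}{-}$ and $\semicctx{-}{-}$ (with a sibling context filling the other branch), there is one base case and four essentially identical inductive cases. Throughout, hypotheses (1) and (3) concern only the fixed data $\eta, p, s, s'$ and so are preserved unchanged by every recursive call; only the context and its derivative shrink.

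In the base case $\Gamma(-) = (-)$, the context is the singleton $x : s$, so the only applicable context-derivative rule forces $\derivrel{\eta}{x:s}{x:s_0'}$ with $\eta(x) \mapsto p_0$ and $\derivrel{p_0}{s}{s_0'}$ for some $p_0, s_0'$. Because $\eta$ is a deterministic partial map (Theorem~\ref{thm:env-lookup}), $p_0 = p$, and because the derivative relation is functional (Theorem~\ref{thm:derivrel-fun}), hypothesis (3) then gives $s_0' = s'$. Hence $\Gamma_0 = (x : s')$, and taking $\Gamma'(-) = (-)$ yields $\Gamma_0 = \Gamma'(x : s')$.

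For an inductive case such as $\Gamma(-) = \commactx{\Gamma_1(-)}{\Delta}$, inverting the comma rule of the context-derivative relation splits the derivation into $\derivrel{\eta}{\Gamma_1(x:s)}{\Gamma_1'}$ and $\derivrel{\eta}{\Delta}{\Delta'}$ with $\Gamma_0 = \commactx{\Gamma_1'}{\Delta'}$, while inverting the corresponding environment-typing rule (using Theorem~\ref{thm:env-subctx-lookup}) supplies $\envHasType{\eta}{\Gamma_1(x:s)}$. The induction hypothesis applied to $\Gamma_1(-)$ then produces a $\Gamma_1''(-)$ with $\Gamma_1' = \Gamma_1''(x:s')$, and setting $\Gamma'(-) = \commactx{\Gamma_1''(-)}{\Delta'}$ gives $\Gamma_0 = \Gamma'(x:s')$. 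The remaining three cases $\commactx{\Delta}{\Gamma_1(-)}$, $\semicctx{\Gamma_1(-)}{\Delta}$, and $\semicctx{\Delta}{\Gamma_1(-)}$ are identical modulo which branch holds the hole and which context former is used.

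There is no genuinely hard step: the argument is a mechanical structural induction whose only non-bookkeeping ingredient is the base case, where matching the type $s'$ produced by the context derivative against the $s'$ from hypothesis (3) requires the uniqueness half of the derivative-function property (Theorem~\ref{thm:derivrel-fun}) together with the determinism of $\eta$. I expect the environment-typing hypothesis (2) to be used only to keep the inversions clean; it is otherwise redundant given the explicit derivation (4).
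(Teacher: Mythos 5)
Your proof is correct, but it takes a different route from the paper's. The paper disposes of this theorem in one line, as an immediate corollary of Theorem~\ref{thm:env-subctx-bind-deriv} (Environment Subcontext Bind Derivative): from $\derivrel{\eta}{\Gamma(x:s)}{\Gamma_0}$ that lemma produces a single $\Gamma'(-)$ that works \emph{uniformly for every filler}, and instantiating it with $\Delta' = x:s$, $\eta' = \{x \mapsto p\}$ (agreement is trivial here), and $\Delta'' = x:s'$ via hypothesis (3), followed by uniqueness of context derivatives (Theorem~\ref{thm:derivrel-env-fun}), identifies $\Gamma_0$ with $\Gamma'(x:s')$. Your argument in effect inlines the induction on $\Gamma(-)$ that proves that general lemma, specialized to the singleton filler: your base case's appeal to the determinism of $\eta$ and the uniqueness half of Theorem~\ref{thm:derivrel-fun} plays exactly the role of the determinism step hidden in the paper's ``immediate,'' and your inductive cases are the same hole-threading. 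What your route buys is self-containedness and economy of machinery: you never need the agreement condition (Definition~\ref{def:agree}), and your closing observation that hypothesis (2) is redundant is accurate---the derivation (4) already forces $\eta(x) \mapsto p_0$ with $\derivrel{p_0}{s}{s_0'}$ at the hole, so (2) only smooths the inversions. What the paper's route buys is reuse and a genuinely stronger invariant: Theorem~\ref{thm:env-subctx-bind-deriv} is needed anyway in the soundness proof, and its ``one $\Gamma'(-)$ for all fillers'' formulation is exploited there (e.g., in the \ruleName{S-Par-L} and \ruleName{S-Plus-L-2} cases, where the \emph{same} $\Gamma'(-)$ must serve both for the variable filler and for the context $\commactx{x:s'}{y:t'}$ after binding). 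Your specialized induction yields a $\Gamma'(-)$ for the singleton filler only; that suffices for the statement as given, but if you later need it to coincide with the hole produced by Theorem~\ref{thm:env-par-bind-deriv}, you would owe an extra (easy, since variable occurrences are unique) argument that the two holes agree---a step the paper's factoring makes unnecessary.
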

\begin{proof}
    Immediate by Theorem~\ref{thm:env-subctx-bind-deriv}
\end{proof}

\subsection{Concatenation}
\label{app:concatenation}

More generally, we often want to concatenate a prefix $p$ of $s$
with a prefix $p'$ of $\deriv{p}{s}$. This is defined with another 3-place,
type-indexed relation.

\begin{definition}[Prefix Concatenation]
    \label{def:prefix-concat}
    We define a relation $\prefixConcatRel{p}{p'}{p''}$.

    \begin{mathpar}
        \inferrule{ }{\prefixConcatRel{\epsp}{\epsp}{\epsp}}

        \ENDOFLINE

        \inferrule{\prefixHasType{p}{\onet}}{\prefixConcatRel{\onepA}{p}{p}}

        \inferrule{ }{\prefixConcatRel{\onepB}{\epsp}{\onepB}}

        \ENDOFLINE

        \inferrule{
            \prefixConcatRel{p_1}{p_1'}{p_1''}\\
            \prefixConcatRel{p_2}{p_2'}{p_2''}\\
        }{\prefixConcatRel{\parp{p_1}{p_2}}{\parp{p_1'}{p_2'}}{\parp{p_1''}{p_2''}}}

        \ENDOFLINE

        \inferrule{
            \prefixConcatRel{p}{p'}{p''}
        }{
            \prefixConcatRel{\catpA{p}}{\catpA{p'}}{\catpA{p''}}
        }

        \inferrule{
            \prefixConcatRel{p}{p'}{p''}
        }{
            \prefixConcatRel{\catpA{p}}{\catpB{p'}{p'''}}{\catpB{p''}{p'''}}
        }

        \ENDOFLINE

        \inferrule{
            \prefixConcatRel{p'}{p''}{p'''}
        }{
            \prefixConcatRel{\catpB{p}{p'}}{p''}{\catpB{p}{p'''}}
        }

        \ENDOFLINE

        \inferrule{ }{
            \prefixConcatRel{\sumpEmp}{p}{p}
        }

        \inferrule{
            \prefixConcatRel{p'}{p''}{p''}
        }{
            \prefixConcatRel{\sumpA{p}}{p'}{\sumpA{p''}}
        }

        \inferrule{
            \prefixConcatRel{p}{p'}{p''}
        }{
            \prefixConcatRel{\sumpB{p}}{p'}{\sumpB{p''}}
        }

        \ENDOFLINE

        \inferrule{ }{
            \prefixConcatRel{\stpEmp}{p}{p}
        }

        \inferrule{ }{
            \prefixConcatRel{\stpDone}{\epsp}{\stpDone}
        }

        \ENDOFLINE

        \inferrule{
            \prefixConcatRel{p}{p'}{p''}
        }{
            \prefixConcatRel{\stpA{p}}{\catpA{p'}}{\stpA{p''}}
        }

        \ENDOFLINE

        \inferrule{
            \prefixConcatRel{p}{p'}{p''}
        }{
            \prefixConcatRel{\stpA{p}}{\catpB{p'}{p'''}}{\stpB{p''}{p'''}}
        }

        \ENDOFLINE

        \inferrule{
            \prefixConcatRel{p'}{p''}{p'''}
        }{
            \prefixConcatRel{\stpB{p}{p'}}{p''}{\stpB{p}{p'''}}
        }

    \end{mathpar}

\end{definition}

This relation is a function when the inputs are well-typed.
Because of this, when $\prefixHasType{p}{s}$ and $\prefixHasType{p'}{\deriv{p}{s}}$,
we write $\prefixConcat{p}{p'}$ for the unique $p''$ that the following theorem guarantees.

\begin{theorem}[Prefix Concatenation Function]
    \label{thm:prefix-concat-fun}
    For all $p,p'$ and $s$, there is at most one $p''$ such that $\prefixConcatRel{p}{p'}{p''}$.
    If $\prefixHasType{p}{s}$ and $\prefixHasType{p'}{\deriv{p}{s}}$, then
    such a $p''$ exists, and satisfies:
    \begin{enumerate}
        \item $\prefixHasType{p''}{s}$
        \item $\deriv{p''}{s} = \deriv{p'}{\deriv{p}{s}}$
    \end{enumerate}
\end{theorem}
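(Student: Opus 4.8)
The plan is to treat the three claims together but to recognize that the well-typedness conclusion (1) and the derivative conclusion (2) must be established by a \emph{single} simultaneous induction, because (1) genuinely depends on (2) in the concatenation and star cases. First, though, I would dispatch the ``at most one $p''$'' claim on its own, by a routine induction on the structure of $p$ with case analysis on $p'$. The defining rules of $\prefixConcatRel{p}{p'}{p''}$ are syntax-directed in the pair $(p,p')$: once the shapes of $p$ and $p'$ are fixed, exactly one rule can fire, and its output $p''$ is determined by the (unique, by the induction hypothesis) recursive results. No typing hypothesis is needed for uniqueness.

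For existence together with properties (1) and (2), I would induct on the derivation of $\prefixHasType{p}{s}$. In each case I would: invert the typing of $p$ to fix its shape; compute $s' = \deriv{p}{s}$ via Theorem~\ref{thm:derivrel-fun}; invert $\prefixHasType{p'}{s'}$ to fix the shape of $p'$; select the matching concatenation rule, applying the induction hypothesis to the strictly smaller first-prefix subterms to build the concatenated subcomponents; and finally verify (1) and (2) by unfolding the definitions of $\prefixHasType{-}{-}$ and $\derivrel{-}{-}{-}$. The base cases ($\epsp$, $\onepA$, $\onepB$, $\sumpEmp$, $\stpEmp$, $\stpDone$) and the structural cases for $\|$ and $+$ are immediate once the induction hypothesis is applied componentwise. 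Crucially, the induction is on the first prefix rather than on $s$, since in the $\stpB{p}{p'}$ case the recursive call lands at the same type $s^\star$ but on the strictly smaller subprefix $p'$.

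The main obstacle will be the interaction of concatenation with maximality in the $s \cdot t$ and $s^\star$ cases. The delicate subcase is where $p = \catpA{p_1}$ is not yet maximal on the $s$-component but $p'$ has the form $\catpB{p_1'}{p_1'''}$, so that the concatenation crosses the boundary and must produce $\catpB{p_1''}{p_1'''}$ with $p_1'' = \prefixConcat{p_1}{p_1'}$. For this result to be well-typed, the rule for $\catpB$ demands $\isMaximal{p_1''}$, which is not obvious from the raw concatenation. This is exactly where the simultaneous induction pays off: property (2) on the subcomponents gives $\deriv{p_1''}{s} = \deriv{p_1'}{\deriv{p_1}{s}}$, and since $p_1'$ is maximal (being the first component of a $\catpB$ prefix), Theorem~\ref{thm:maximal-deriv-null} makes $\deriv{p_1'}{\deriv{p_1}{s}}$ nullable, hence $\deriv{p_1''}{s}$ nullable, hence---by the same theorem in the reverse direction---$p_1''$ maximal. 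The $s^\star$ case is handled by the identical argument, comparing $\stpA{p_1} \cdot \catpA{p_1'}$ (no maximality needed) against $\stpA{p_1} \cdot \catpB{p_1'}{p_1'''}$ (which forces the $\stpB$ crossover and the same maximality obligation).

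Property (2) itself is comparatively mechanical: in each case one matches the recursive derivative computation supplied by the induction hypothesis against the shape of $p''$ dictated by the concatenation rule, and reads off the equality from the definition of $\derivrel{-}{-}{-}$. I expect the whole argument to be a large but routine case analysis once the maximality-preservation lemma embedded in the induction is isolated as above, so the only real insight required is bundling (1) and (2) into one statement so that the nullability bridge from Theorem~\ref{thm:maximal-deriv-null} is available precisely when the boundary-crossing subcases need it.
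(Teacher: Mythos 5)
Your proposal is correct and takes essentially the same route as the paper, which likewise proves existence together with (1) and (2) by a single simultaneous induction on the derivation of $\prefixHasType{p}{s}$ and dispatches uniqueness separately by the syntax-directedness of the concatenation rules in $(p,p')$. The paper's proof is only a one-line sketch, and your bridge via property (2) and Theorem~\ref{thm:maximal-deriv-null} to discharge the $\isMaximal{p''}$ obligation in the boundary-crossing $\catpB$/$\stpB$ subcases is a sound way to fill in precisely the step the paper glosses over.
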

\begin{proof}
    Existence, (1), and (2) follow by induction on the derivation of $\prefixHasType{p}{s}$. Uniqueness is immediate by the fact that the relation is a function.
\end{proof}

\begin{theorem}[Prefix Concatenation Empty]
    \label{thm:prefix-concat-emp}
    If $\prefixHasType{p}{s}$, then $\prefixConcatRel{\emp{s}}{p}{p}$ and $\prefixConcatRel{p}{\emp{\deriv{p}{s}}}{p}$
\end{theorem}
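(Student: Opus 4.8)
The plan is to establish both conjuncts by a single induction on the derivation of $\prefixHasType{p}{s}$, since the empty prefix of Definition~\ref{def:emptyPrefix}, the derivative of Definition~\ref{tdef:derivrel}, and the concatenation relation of Definition~\ref{def:prefix-concat} are all structured around the same case split on the shapes of $p$ and $s$. At the outset I would appeal to Theorem~\ref{thm:derivrel-fun} to guarantee that $\deriv{p}{s}$ exists (so that the second conjunct is even well-posed), and then proceed case by case, reading off $\emp{s}$ and $\emp{\deriv{p}{s}}$ from the definitions and selecting the concatenation rule whose left input matches.

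For the first conjunct, $\prefixConcatRel{\emp{s}}{p}{p}$, most cases are immediate from a single ``left-unit'' rule. When $s$ is $\epst$ or $\onet$, the empty prefix is $\epsp$ or $\onepA$, and the axioms $\prefixConcatRel{\epsp}{\epsp}{\epsp}$ and $\prefixConcatRel{\onepA}{p}{p}$ finish the case (the latter covers both $\onepA$ and $\onepB$). When $s$ is a sum or a star, $\emp{s}$ is $\sumpEmp$ or $\stpEmp$, and the absorbing rules $\prefixConcatRel{\sumpEmp}{p}{p}$ and $\prefixConcatRel{\stpEmp}{p}{p}$ apply to \emph{any} $p$ with no recursion needed. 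The only cases that invoke the induction hypothesis are the structured empty prefixes: for $s = s_1 \| t_1$ we have $\emp{s} = \parp{\emp{s_1}}{\emp{t_1}}$ and invert $p = \parp{p_1}{p_2}$, applying the IH to each side; for $s = s_1 \cdot t_1$ we have $\emp{s} = \catpA{\emp{s_1}}$, and both $p = \catpA{q}$ and $p = \catpB{q}{q'}$ are discharged by the two $\catpA{\cdot}$-on-the-left rules using the IH $\prefixConcatRel{\emp{s_1}}{q}{q}$.

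The second conjunct, $\prefixConcatRel{p}{\emp{\deriv{p}{s}}}{p}$, follows the same case split, but now the shape of $\emp{\deriv{p}{s}}$ is dictated by the derivative. The non-recursive prefixes are again direct: $\deriv{\onepB}{\onet} = \epst$ gives the goal $\prefixConcatRel{\onepB}{\epsp}{\onepB}$, $\deriv{\stpDone}{s_1^\star} = \epst$ gives $\prefixConcatRel{\stpDone}{\epsp}{\stpDone}$, and the identity-like prefixes ($\onepA$, $\sumpEmp$, $\stpEmp$) leave the type unchanged so that a left-unit rule closes the case. For the recursive prefixes I compute the derivative, observe that $\emp{\deriv{p}{s}}$ is exactly the right input demanded by the matching concatenation rule, and apply the IH to the relevant sub-prefix: e.g. $p = \catpA{q}$ uses $\deriv{\catpA{q}}{s_1\cdot t_1} = \left(\deriv{q}{s_1}\right) \cdot t_1$ with $\emp{\cdot} = \catpA{\emp{\deriv{q}{s_1}}}$ and IH $\prefixConcatRel{q}{\emp{\deriv{q}{s_1}}}{q}$; the sum- and star-head prefixes $\sumpA{q}$, $\sumpB{q}$, $\stpA{q}$ are analogous.

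The step I expect to be the crux is the pair of ``crossover'' cases $p = \catpB{q}{q'}$ and $p = \stpB{q}{q'}$. Here the derivative discards the first component entirely ($\deriv{\catpB{q}{q'}}{s_1\cdot t_1} = \deriv{q'}{t_1}$ and $\deriv{\stpB{q}{q'}}{s_1^\star} = \deriv{q'}{s_1^\star}$), so $\emp{\deriv{p}{s}}$ carries no trace of $q$, and the induction hypothesis must be applied to the \emph{second} component $q'$ rather than to the whole prefix. The concatenation rules $\prefixConcatRel{\catpB{p}{p'}}{p''}{\catpB{p}{p'''}}$ and $\prefixConcatRel{\stpB{p}{p'}}{p''}{\stpB{p}{p'''}}$ are precisely what thread the untouched first component $q$ through to the output while feeding $\emp{\deriv{q'}{t_1}}$ (resp. $\emp{\deriv{q'}{s_1^\star}}$) into the recursive premise. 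Checking that these line up—and that the sub-derivation for $q'$ is the one the induction actually supplies—is the only genuinely non-mechanical bookkeeping in the argument.
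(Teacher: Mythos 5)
Your proposal is correct and follows exactly the paper's proof, which is stated simply as ``Induction on the derivation of $\prefixHasType{p}{s}$''; your case analysis (including the left-unit/absorbing rules for $\emp{s}$ in the first conjunct and the crossover cases $\catpB{q}{q'}$ and $\stpB{q}{q'}$ in the second, where the IH is applied to the second component) fills in precisely the details the paper leaves implicit.
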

\begin{proof}
    Induction on the derivation of $\prefixHasType{p}{s}$.
\end{proof}

\begin{theorem}[Maximal Prefix Concatenation]
    \label{thm:concat-maximal}
    Suppose $\prefixConcatRel{p}{p'}{p''}$.
    If $p''$ is maximal, then $p'$ is maximal. If $p$ or $p'$ is maximal, then $p''$ is maximal.
    Moreover, if $p$ is maximal, then $p'' = p$.
\end{theorem}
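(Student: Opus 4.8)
The plan is to prove all four implications by a single induction on the derivation of $\prefixConcatRel{p}{p'}{p''}$ (Definition~\ref{def:prefix-concat}), reading off the shapes of $p$, $p'$, and $p''$ from the concatenation rule applied at the root and appealing to the inductive characterization of $\isMaximal{\cdot}$ (Appendix~\ref{app:maximal}) in each case. The driving observation is that maximality of the product-shaped prefixes is componentwise — $\isMaximal{\parp{p_1}{p_2}}$, $\isMaximal{\catpB{p_1}{p_2}}$, and $\isMaximal{\stpB{p_1}{p_2}}$ each hold exactly when both components are maximal — while the ``incomplete'' constructors $\catpA{\cdot}$ and $\stpA{\cdot}$, together with $\onepA$, $\sumpEmp$, and $\stpEmp$, are never maximal. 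Every case is then dispatched by matching the constructor produced by the rule against these maximality clauses and feeding the corresponding sub-claim through the induction hypothesis.

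First I would handle the three implications that go through cleanly. For ``$p''$ maximal $\Rightarrow p'$ maximal,'' each recursive rule gives its result the shape of its inputs, so maximality of $p''$ forces maximality of the relevant sub-result, and the induction hypothesis transports this back to $p'$; the non-recursive rules and the cases built from $\catpA{\cdot}$ or $\stpA{\cdot}$ (never maximal) are vacuous or immediate. The implication ``$p$ maximal $\Rightarrow p''$ maximal'' is symmetric: maximality of $p$ hands us maximality of both components of any $\parp{\cdot}{\cdot}$, $\catpB{\cdot}{\cdot}$, or $\stpB{\cdot}{\cdot}$ at the root, and the induction hypothesis supplies maximality of the recursively computed part of $p''$. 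The equational implication ``$p$ maximal $\Rightarrow p'' = p$'' follows the same pattern, using the corresponding equational induction hypothesis on the sub-derivation together with the fact that the left component of a maximal $\catpB{\cdot}{\cdot}$ or $\stpB{\cdot}{\cdot}$ is copied unchanged into $p''$ by those rules.

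The one delicate point — and the main obstacle — is the remaining implication ``$p'$ maximal $\Rightarrow p''$ maximal'' in exactly the two rules whose first argument has the form $\catpB{p}{p'}$ or $\stpB{p}{p'}$. There the rule carries the left component through to the result unchanged (e.g.\ $\prefixConcatRel{\catpB{p}{p'}}{p''}{\catpB{p}{p'''}}$), so maximality of the result additionally demands that this retained left component be maximal. The concatenation derivation alone does not supply this: one can prepend an arbitrary non-maximal left component to any valid concatenation and still fire the rule, so the naked implication is in fact false. The resolution is that this lemma is only ever applied to well-typed inputs — $\prefixConcat{p}{p'}$ is defined only under the hypotheses of Theorem~\ref{thm:prefix-concat-fun} — and the Well-Typed Prefixes judgment for $\catpB{\cdot}{\cdot}$ (and for $\stpB{\cdot}{\cdot}$) requires its first component to be maximal. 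I would therefore strengthen the induction to carry the hypothesis $\prefixHasType{p}{s}$; inverting it in these two cases yields maximality of the retained component, while the induction hypothesis on the sub-derivation supplies maximality of the recursively built component, and together they give $\isMaximal{p''}$. The hypothesis propagates to the sub-derivation because the right component of a well-typed $\catpB{\cdot}{\cdot}$ or $\stpB{\cdot}{\cdot}$ is itself well-typed, so the induction closes.
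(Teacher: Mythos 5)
Your proof is essentially the paper's: the paper's entire argument is the one line ``Induction on the derivation of $\prefixConcatRel{p}{p'}{p''}$,'' which is exactly the induction you carry out, with the case dispatch driven by the componentwise characterization of $\isMaximal{\cdot}$ just as you describe. The first, third, and fourth implications go through untyped, as you note. Where you go beyond the paper is the delicate case, and you are right on the substance: for the rules whose first argument is $\catpB{p}{p'}$ or $\stpB{p}{p'}$, the retained left component must be maximal for the ``$p'$ maximal $\Rightarrow p''$ maximal'' direction, and the raw relation does not enforce this. Concretely, $\prefixConcatRel{\catpB{\onepA}{\epsp}}{\epsp}{\catpB{\onepA}{\epsp}}$ is derivable, its second argument $\epsp$ is maximal, yet the result is not, since $\onepA$ is not maximal --- so the theorem as literally stated is false for ill-typed prefixes. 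The paper silently relies on the convention, established around Theorem~\ref{thm:prefix-concat-fun}, that concatenation is only ever invoked with $\prefixHasType{p}{s}$ and $\prefixHasType{p'}{\deriv{p}{s}}$, under which the prefix-typing rules for $\catpB{\cdot}{\cdot}$ and $\stpB{\cdot}{\cdot}$ force the left component to be maximal. Your strengthened induction, threading $\prefixHasType{p}{s}$ through the derivation (with inversion supplying well-typedness of the right component for the recursive call), is the correct way to make the one-line proof airtight, and is arguably how the lemma should have been stated.
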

\begin{proof}
    Induction on the derivation of $\prefixConcatRel{p}{p'}{p''}$.
\end{proof}

\begin{theorem}[Prefix Concatenation Associativity]
    \label{thm:prefix-concat-assoc}
    $\prefixConcat{p}{\left(\prefixConcat{p'}{p''}\right)} = \prefixConcat{\left(\prefixConcat{p}{p'}\right)}{p''}$,
    when defined.
\end{theorem}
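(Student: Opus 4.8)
The plan is to prove this by induction on the structure of the first prefix $p$ — more precisely, on the derivation of $\prefixHasType{p}{s}$, exactly as in the proof of Theorem~\ref{thm:prefix-concat-fun}. First I would unpack the phrase ``when defined.'' By Theorem~\ref{thm:prefix-concat-fun}, $\prefixConcat{p}{q}$ is defined precisely when $\prefixHasType{q}{\deriv{p}{s}}$, in which case $\prefixHasType{\prefixConcat{p}{q}}{s}$ and $\deriv{\prefixConcat{p}{q}}{s} = \deriv{q}{\deriv{p}{s}}$. Chasing these type equalities shows that both groupings are defined under the same hypotheses, namely $\prefixHasType{p}{s}$, $\prefixHasType{p'}{\deriv{p}{s}}$, and $\prefixHasType{p''}{\deriv{p'}{\deriv{p}{s}}}$; so I would adopt these as the standing assumptions and prove the two sides equal.

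Each case of the induction inverts the typing and derivative judgments to pin down the shapes of $p'$ and $p''$ forced by the shape of $p$, computes both groupings directly from Definition~\ref{def:prefix-concat}, and appeals to the inductive hypothesis on the immediate sub-prefix of $p$. The base cases are immediate: when $p$ is an empty prefix ($\epsp$, $\onepA$, $\sumpEmp$, or $\stpEmp$), Theorem~\ref{thm:prefix-concat-emp} makes $\prefixConcat{p}{-}$ the identity, so both groupings collapse to $\prefixConcat{p'}{p''}$; when $p$ is $\onepB$ or $\stpDone$ the derivative of its type is null, forcing $p'$ and $p''$ empty, and both sides evaluate to $p$ itself. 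The parallel case $p = \parp{p_1}{p_2}$ forces $p'$ and $p''$ to be parallel pairs as well, and associativity splits componentwise into two applications of the inductive hypothesis. The sum cases $\sumpA{q}$ and $\sumpB{q}$ likewise strip the tag and recurse on $q$.

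The real work is in the concatenation cases $p = \catpA{q}$ and $p = \catpB{q}{r}$, together with their star analogues $\stpA{q}$ and $\stpB{q}{r}$, which behave identically because $s^\star$ unfolds to $\varepsilon + s \cdot s^\star$. When $p = \catpA{q}$, the prefix $p'$ may be either $\catpA{q'}$ (the $s \cdot t$ boundary has not yet been crossed) or $\catpB{q'}{r'}$ (it has), and similarly for $p''$, giving several sub-cases distinguished by how many of $p,p',p''$ have crossed the boundary. In the ``no crossover'' sub-cases both groupings reduce to a $\catpA{-}$ wrapping a triple concatenation of the $q$-components, and equality follows from the inductive hypothesis on $q$; in the ``crossover'' sub-cases one grouping fires the rule sending a $\catpA{-}$ and a $\catpB{-}{-}$ to a $\catpB{-}{-}$, and I must check the boundary is crossed consistently on both sides. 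This is exactly where Theorem~\ref{thm:concat-maximal} is needed: since a well-typed $\catpB{q'}{r'}$ forces $q'$ maximal, that theorem guarantees $\prefixConcat{q}{q'}$ is maximal, so the crossover occurs at the same point in either association and the outputs coincide (modulo the inductive hypothesis on the remaining $q$-components and the shared tail concatenation on $r$). The $\stpB{q}{r}$ and $\stpA{q}$ cases go through verbatim, recursing on $r$ or $q$ respectively.

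The main obstacle is purely the bookkeeping of this case split: correctly matching up which of the many concatenation rules of Definition~\ref{def:prefix-concat} fires on each side of the equation, and invoking maximality (Theorem~\ref{thm:concat-maximal}) to certify that the ``$\catpA$ becomes $\catpB$'' (and ``$\stpA$ becomes $\stpB$'') transition happens coherently in both associations. Once the crossover cases are organized this way, every remaining equality is either definitional or a direct appeal to the inductive hypothesis.
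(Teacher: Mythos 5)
Your proof is correct, but it takes a different route from the paper's. The paper proves this by induction on the concatenation derivations themselves (one line: ``Induction on derivations of concatenation''), whereas you induct on the typing derivation $\prefixHasType{p}{s}$ and use inversion on typing and derivative judgments to enumerate the possible shapes of $p'$ and $p''$. The two case analyses line up nearly one-to-one, but the paper's measure is slightly more elementary and more general: since the rules of Definition~\ref{def:prefix-concat} are syntax-directed in the shapes of the two arguments, inducting on a derivation of $\prefixConcatRel{p}{p'}{q}$ hands you the shapes of both arguments directly, with no typing hypotheses at all; the statement then holds whenever the four concatenation judgments are derivable, even for ill-typed prefixes (e.g., a $\catpB{q'}{r'}$ with non-maximal $q'$, for which concatenation can still be defined). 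Your reading of ``when defined'' via Theorem~\ref{thm:prefix-concat-fun} restricts the claim to well-typed prefixes --- harmless for every use in the paper, but strictly a weaker statement, and it forces you to carry typing assumptions through the induction. One minor inaccuracy worth flagging: your claim that Theorem~\ref{thm:concat-maximal} is ``exactly where the work is'' in the crossover cases overstates its role. If you chase the rules, the $\catpA$/$\catpB$ (and $\stpA$/$\stpB$) crossover sub-cases close definitionally once the firing rules are matched --- e.g., $\prefixConcat{\catpA{q}}{\catpB{q'}{r'}}$ and the re-associated side both compute to $\catpB{\prefixConcat{q}{q'}}{r'}$-shaped terms whose equality is the inductive hypothesis (or, in the $p'=\catpB{q'}{r'}$ sub-case, needs no IH at all). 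Maximality is only needed to certify well-typedness of intermediate results, which your standing assumptions already package via Theorem~\ref{thm:prefix-concat-fun}; it plays no role in the equality itself.
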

\begin{proof}
    Induction on derivations of concatenation.
\end{proof}

\begin{definition}[Environment Concatenation]
    \label{def:env-cat}
    We write $\prefixConcatRel{\eta}{\eta'}{\eta''}$ to mean that
    $\eta''$ is the function defined on the largest subset $S$ of $\dom{\eta} \cap \dom{\eta'}$
    such that for all $x \in S$, the prefix concatenation $\prefixConcatRel{\eta(x)}{\eta'(x)}{p}$ exists, and $\eta''(x) = p$, for all $x \in S$.
\end{definition}

\begin{theorem}[Environment Concatenation Function]
    For any $\eta$ and $\eta'$, there is at most one $\eta''$ such that $\prefixConcatRel{\eta}{\eta'}{\eta''}$,
    and such an $\eta''$ exists when $\envHasType{\eta}{\Gamma}$ and $\derivrel{\eta}{\Gamma}{\Gamma'}$ and $\prefixHasType{\eta'}{\Gamma'}$.
\end{theorem}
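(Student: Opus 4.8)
The plan is to establish the two claims separately, both leaning on the already-established Prefix Concatenation Function theorem (Theorem~\ref{thm:prefix-concat-fun}).

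For \emph{uniqueness}, I would argue directly from Definition~\ref{def:env-cat}: the environment $\eta''$ is pinned down entirely by its domain $S$ and its values on $S$. The domain $S$ is specified as the \emph{largest} subset of $\dom{\eta} \cap \dom{\eta'}$ on which the pointwise concatenations are defined, so it is determined by $\eta$ and $\eta'$ alone. For each $x \in S$, the value $\eta''(x)$ is the unique $p$ with $\prefixConcatRel{\eta(x)}{\eta'(x)}{p}$, which exists and is unique by the ``at most one'' part of Theorem~\ref{thm:prefix-concat-fun}. Hence $\eta''$ is uniquely determined.

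For \emph{existence}, the key observation is that environment concatenation is purely pointwise, so the bunched structure of $\Gamma$ is irrelevant; it suffices to show that the pointwise concatenation $\prefixConcat{\eta(x)}{\eta'(x)}$ is defined for every $x \in \dom{\Gamma}$ (so that $S = \dom{\Gamma}$). Fix such an $x$, say $x : s$ in $\Gamma$. First I would use Environment Lookup (Theorem~\ref{thm:env-lookup}) to obtain $\eta(x) = p$ with $\prefixHasType{p}{s}$. Since $\prefixHasType{p}{s}$, the derivative $\derivrel{p}{s}{s'}$ exists by Theorem~\ref{thm:derivrel-fun}. Then Environment Lookup Derivative (Theorem~\ref{thm:env-lookup-deriv}), applied to the hypotheses $\eta(x) = p$, $\envHasType{\eta}{\Gamma}$, $\derivrel{p}{s}{s'}$, and $\derivrel{\eta}{\Gamma}{\Gamma'}$, tells us that $x : s'$ occupies the corresponding position in $\Gamma'$, i.e.\ $s' = \deriv{p}{s}$. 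Applying Environment Lookup once more, this time to $\envHasType{\eta'}{\Gamma'}$, yields $\eta'(x) = p'$ with $\prefixHasType{p'}{s'}$; since $s' = \deriv{p}{s}$, this is $\prefixHasType{p'}{\deriv{p}{s}}$. These are exactly the two preconditions of Theorem~\ref{thm:prefix-concat-fun}, which therefore guarantees that $\prefixConcat{p}{p'}$ exists. As $x$ was arbitrary, every pointwise concatenation is defined and $\eta''$ exists on all of $\dom{\Gamma}$.

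The proof is mostly a matter of invoking the right earlier lemmas, so the only delicate point --- the ``main obstacle'' --- is correctly matching the type of $\eta'(x)$ against the type demanded by prefix concatenation. Prefix concatenation requires its second argument to have type $\deriv{p}{s}$, \emph{not} merely some type appearing in $\Gamma'$; closing this gap is precisely what Environment Lookup Derivative provides, by certifying that the position of $x$ in $\Gamma'$ carries exactly the derivative type $\deriv{p}{s}$. (I read the third hypothesis $\prefixHasType{\eta'}{\Gamma'}$ as the environment-typing judgment $\envHasType{\eta'}{\Gamma'}$, since prefix typing does not apply to environments.) Once this alignment is in place, existence follows immediately, and no separate induction on the context formers is even needed.
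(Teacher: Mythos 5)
Your proposal is correct and follows essentially the same route as the paper, whose entire proof reads ``Uniqueness by the `greatest' property, existence by Theorem~\ref{thm:prefix-concat-fun}''; you have simply made explicit the pointwise bookkeeping (via Theorems~\ref{thm:env-lookup}, \ref{thm:derivrel-fun}, and \ref{thm:env-lookup-deriv}) that the paper leaves implicit, including the one genuinely delicate step of aligning the type of $\eta'(x)$ with the derivative type $\deriv{p}{s}$. Your reading of $\prefixHasType{\eta'}{\Gamma'}$ as the environment-typing judgment $\envHasType{\eta'}{\Gamma'}$ is also the intended one.
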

\begin{proof}
    Uniqueness by the "greatest" property, existence by Theorem~\ref{thm:prefix-concat-fun}.
\end{proof}

\begin{theorem}[Environment Concatenation Correctness]
    \label{thm:env-concat-correct}
    If $\envHasType{\eta}{\Gamma}$ and $\derivrel{\eta}{\Gamma}{\Gamma'}$ and $\envHasType{\eta'}{\Gamma'}$, and $\prefixConcatRel{\eta}{\eta'}{\eta''}$,
    then $\envHasType{\eta''}{\Gamma}$, and if $\derivrel{\eta'}{\Gamma'}{\Gamma''}$ then $\derivrel{\eta''}{\Gamma}{\Gamma''}$.
\end{theorem}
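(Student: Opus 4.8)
The plan is to prove both conclusions simultaneously by induction on the structure of $\Gamma$ (equivalently, on the derivation of $\envHasType{\eta}{\Gamma}$). Throughout, I rely on the fact that $\eta''$ is defined on every variable of $\Gamma$ with $\eta''(x) = \prefixConcat{\eta(x)}{\eta'(x)}$: for each $x : s$ in $\Gamma$ we have $\prefixHasType{\eta(x)}{s}$, and since $\Gamma'$ carries $x : \deriv{\eta(x)}{s}$ in the same position we have $\prefixHasType{\eta'(x)}{\deriv{\eta(x)}{s}}$, so Theorem~\ref{thm:prefix-concat-fun} guarantees the pointwise concatenation exists. This is what lets the induction proceed pointwise.

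The leaf and comma cases are routine. For $\Gamma = \cdot$ both goals are immediate. For $\Gamma = x : s$, inversion gives $\Gamma' = x : s'$ with $s' = \deriv{\eta(x)}{s}$ and $\prefixHasType{\eta'(x)}{s'}$; part~(1) of Theorem~\ref{thm:prefix-concat-fun} yields $\prefixHasType{\eta''(x)}{s}$ for the first goal, while part~(2), $\deriv{\eta''(x)}{s} = \deriv{\eta'(x)}{\deriv{\eta(x)}{s}}$, is exactly what matches $\Gamma''$ for the second. For $\Gamma = \commactx{\Gamma_1}{\Gamma_2}$ I would invert all three hypotheses, apply the induction hypothesis to each component, and reassemble, using that the comma former imposes no side condition on environment typing and that context derivatives act componentwise.

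The semicolon case $\Gamma = \semicctx{\Gamma_1}{\Gamma_2}$ is the crux and the only real obstacle: beyond recombining the component typings via the induction hypothesis, I must re-establish the ordering side condition $\emptyOn{\eta''}{\Gamma_2} \vee \maximalOn{\eta''}{\Gamma_1}$. I would case split on which disjunct holds for $\eta$ and $\eta'$. If $\maximalOn{\eta}{\Gamma_1}$, then for each $x$ in $\Gamma_1$ the ``if $p$ is maximal then $p'' = p$'' clause of Theorem~\ref{thm:concat-maximal} gives $\eta''(x) = \eta(x)$, still maximal, so $\maximalOn{\eta''}{\Gamma_1}$. Otherwise $\emptyOn{\eta}{\Gamma_2}$, and I split on the side condition for $\eta'$: if $\maximalOn{\eta'}{\Gamma_1'}$ (note $\Gamma_1$ and $\Gamma_1'$ share the same variables), the ``if $p'$ is maximal then $p''$ is maximal'' clause of Theorem~\ref{thm:concat-maximal} again gives $\maximalOn{\eta''}{\Gamma_1}$; if instead $\emptyOn{\eta'}{\Gamma_2'}$, I establish $\emptyOn{\eta''}{\Gamma_2}$. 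For this last subcase I use that an empty well-typed prefix of type $s$ must be $\emp{s}$ (uniqueness of the empty prefix), so $\eta(x) = \emp{s}$; since $\emptyOn{\eta}{\Gamma_2}$ forces $\deriv{\eta}{\Gamma_2} = \Gamma_2$ by Theorem~\ref{thm:empty-context-deriv}, the type of $x$ in $\Gamma_2'$ is again $s$ and $\eta'(x) = \emp{s}$; then Theorem~\ref{thm:prefix-concat-emp} gives $\eta''(x) = \prefixConcat{\emp{s}}{\emp{s}} = \emp{s}$, which is empty by Theorem~\ref{thm:emp-is-empty}.

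Finally, the second goal $\derivrel{\eta''}{\Gamma}{\Gamma''}$ follows along the same induction with no further difficulty, since the context-derivative judgment carries no side conditions: the variable case is handled by part~(2) of Theorem~\ref{thm:prefix-concat-fun} as above, and the bunched cases simply reassemble the componentwise derivatives returned by the induction hypothesis. The one auxiliary fact I expect to isolate is the preservation of emptiness under concatenation used in the semicolon case; it is a short induction on the concatenation derivation, or, as above, a direct consequence of uniqueness of empty prefixes together with Theorem~\ref{thm:prefix-concat-emp}.
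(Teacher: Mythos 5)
Your proof is correct, and there is in fact no paper proof to compare it against: the paper states Theorem~\ref{thm:env-concat-correct} with no accompanying proof. Your argument---simultaneous induction on the structure of $\Gamma$ (equivalently the derivation of $\envHasType{\eta}{\Gamma}$), with Theorem~\ref{thm:prefix-concat-fun} discharging the singleton case via its parts (1) and (2), and a case analysis re-establishing the ordering side condition at semicolon bunches---is exactly the argument the paper leaves implicit, and it matches the style of the adjacent lemmas (e.g., Theorem~\ref{thm:env-concat-emp} is proved by the same induction using Theorem~\ref{thm:prefix-concat-emp}). Your semicolon case analysis is exhaustive and each step checks out: the clause ``if $p$ is maximal then $p'' = p$'' of Theorem~\ref{thm:concat-maximal} handles $\maximalOn{\eta}{\Gamma_1}$; the clause ``if $p$ or $p'$ is maximal then $p''$ is maximal'' handles $\maximalOn{\eta'}{\Gamma_1'}$ (using that context derivatives preserve variable positions, so $\dom{\Gamma_1'} = \dom{\Gamma_1}$); and in the doubly-empty subcase your chain through Theorem~\ref{thm:empty-context-deriv} (giving $\Gamma_2' = \Gamma_2$), canonicity of the empty well-typed prefix, Theorem~\ref{thm:prefix-concat-emp}, and Theorem~\ref{thm:emp-is-empty} is sound. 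The one auxiliary fact you flag---that $\prefixHasType{p}{s}$ and $\isEmpty{p}$ force $p = \emp{s}$---is indeed unstated in the paper and would need its own (easy) induction on $\prefixHasType{p}{s}$; note the paper's ``Only Prefix of a Null Type is Empty'' is a different statement and does not supply it.

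One streamlining worth adopting: the paper's environment-level lemmas let you bypass that auxiliary entirely. In the subcase where $\emptyOn{\eta}{\Gamma_2}$ and $\emptyOn{\eta'}{\Gamma_2'}$, Theorem~\ref{thm:env-concat-emp} gives $\eta''|_{\dom{\Gamma_2}} = \eta'|_{\dom{\Gamma_2}}$ directly, and since $\dom{\Gamma_2'} = \dom{\Gamma_2}$ this yields $\emptyOn{\eta''}{\Gamma_2}$ with no appeal to canonicity of empty prefixes or to $\Gamma_2' = \Gamma_2$; likewise both maximality subcases are immediate instances of Theorem~\ref{thm:env-concat-maximal}. This trims the proof to the induction skeleton plus three one-line lemma applications, which is presumably the proof the authors had in mind.
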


\begin{theorem}[Environment Concatenation Empty]
    \label{thm:env-concat-emp}
    If $\envHasType{\eta}{\Gamma}$ and $\prefixConcatRel{\eta}{\eta'}{\eta''}$, then:
    \begin{itemize}
        \item If $\emptyOn{\eta}{S}$ and then $\eta''|_{S} = \eta'|_{S}$
        \item If $\emptyOn{\eta'}{S}$, then $\eta''|_{S} = \eta|_{S}$
    \end{itemize}
\end{theorem}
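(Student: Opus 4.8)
The plan is to reduce the statement to two pointwise facts about prefix concatenation and then apply them variable by variable. By Definition~\ref{def:env-cat}, $\eta''$ is the pointwise prefix concatenation of $\eta$ and $\eta'$ on the largest subset of $\dom{\eta}\cap\dom{\eta'}$ where it is defined, so for every $x$ in its domain we have $\prefixConcatRel{\eta(x)}{\eta'(x)}{\eta''(x)}$. Hence both bullets follow once I establish, at the level of individual prefixes, (L) if $\isEmpty{p}$ and $\prefixConcatRel{p}{p'}{p''}$ then $p'' = p'$, and (R) if $\isEmpty{p'}$ and $\prefixConcatRel{p}{p'}{p''}$ (with appropriate typing) then $p'' = p$.

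For (L) I would induct directly on the derivation of $\prefixConcatRel{p}{p'}{p''}$. In each rule whose left operand is an empty form ($\epsp$, $\onepA$, $\sumpEmp$, $\stpEmp$, $\parp{p_1}{p_2}$, or $\catpA{p}$), emptiness of $p$ either lets the rule read off $p'' = p'$ immediately (the $\epsp$, $\onepA$, $\sumpEmp$, $\stpEmp$ cases) or forces the sub-prefixes of $p$ to be empty so that the induction hypothesis applies componentwise (the $\parp{p_1}{p_2}$ and $\catpA{p}$ cases); every rule whose left operand is non-empty ($\onepB$, $\catpB{p_1}{p_2}$, $\sumpA{p}$, $\sumpB{p}$, $\stpDone$, $\stpA{p}$, $\stpB{p}{p'}$) is vacuous under $\isEmpty{p}$. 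This uses no typing information, which is exactly the strength the first bullet needs at the value level.

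For (R) I would take a cleaner route using a small lemma dual to Theorem~\ref{thm:emp-is-empty}: if $\prefixHasType{q}{r}$ and $\isEmpty{q}$ then $q = \emp{r}$, proved by a routine induction on the typing derivation (for every type the only empty well-typed prefix is the canonical one). Applying this to $p'$ gives $p' = \emp{\deriv{p}{s}}$, whereupon Theorem~\ref{thm:prefix-concat-emp} yields $\prefixConcatRel{p}{\emp{\deriv{p}{s}}}{p}$ and uniqueness of concatenation (Theorem~\ref{thm:prefix-concat-fun}) forces $p'' = p$. The same lemma handles the first bullet uniformly: if $\eta(x)$ is empty then $\eta(x) = \emp{s}$, so Theorem~\ref{thm:prefix-concat-emp} and uniqueness give $\eta''(x) = \eta'(x)$. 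To assemble, I fix $x \in S$ with $x : s$ in $\Gamma$, obtaining $\prefixHasType{\eta(x)}{s}$ by Theorem~\ref{thm:env-lookup}; in the intended usage (as in Theorem~\ref{thm:env-concat-correct}) $\eta'$ is well-typed at $\deriv{\eta}{\Gamma}$, so $\prefixHasType{\eta'(x)}{\deriv{\eta(x)}{s}}$, and when $\eta(x)$ is empty $\deriv{\eta(x)}{s} = s$ by Theorem~\ref{thm:derivrel-emp}. Theorem~\ref{thm:prefix-concat-emp} also witnesses that the relevant concatenation exists, so the domains line up and we conclude $\eta''|_S = \eta'|_S$ (respectively $= \eta|_S$).

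The main obstacle is the handful of degenerate rules in which the left operand is an empty prefix of a sum, star, or unit type — for example $\prefixConcatRel{\sumpEmp}{p}{p}$ and $\prefixConcatRel{\stpEmp}{p}{p}$. These rules place no constraint on the shape of $p$, so the untyped concatenation relation by itself does not force $p$ to be the canonical empty prefix; this is precisely why the second bullet must appeal to typing (through the $q = \emp{r}$ lemma) rather than a purely syntactic induction, and why well-typedness of $\eta'$ at $\deriv{\eta}{\Gamma}$ is the hypothesis doing the real work there.
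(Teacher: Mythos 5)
Your proposal is correct, and it takes a somewhat different route from the paper's. The paper disposes of this theorem in one line: induction on the derivation of $\envHasType{\eta}{\Gamma}$, using Theorem~\ref{thm:prefix-concat-emp}. You instead observe that environment concatenation is pointwise by Definition~\ref{def:env-cat}, so the whole statement reduces to two prefix-level facts, and you prove those with different tools: your lemma (L) goes by induction on the derivation of $\prefixConcatRel{p}{p'}{p''}$ itself and, as you note, needs no typing information at all --- a genuinely sharper observation than the paper's sketch, since it shows the first bullet is a purely syntactic fact about the concatenation relation; your lemma (R) routes through the canonical-form lemma ($\prefixHasType{q}{r}$ and $\isEmpty{q}$ imply $q = \emp{r}$, the converse of Theorem~\ref{thm:emp-is-empty}, which the paper uses implicitly but never states) together with $\prefixConcatRel{p}{\emp{\deriv{p}{s}}}{p}$ and uniqueness from Theorem~\ref{thm:prefix-concat-fun}. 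Your closing observation is the most valuable part: the rules $\prefixConcatRel{\sumpEmp}{p}{p}$ and $\prefixConcatRel{\stpEmp}{p}{p}$ place no constraint on $p$, so the second bullet is actually false as literally stated --- take $\eta(z) = \sumpEmp$ and $\eta'(z) = \parp{\epsp}{\epsp}$, which is empty, yet $\eta''(z) = \parp{\epsp}{\epsp} \neq \eta(z)$ --- and it is rescued only by the additional hypothesis $\envHasType{\eta'}{\deriv{\eta}{\Gamma}}$, which holds at every use site (e.g.\ Theorem~\ref{thm:env-concat-correct}) but is missing from the theorem's stated assumptions. The paper's induction on $\envHasType{\eta}{\Gamma}$ silently glosses over exactly this point, so your version is not just an alternative proof but a repair of the statement.
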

\begin{proof}
    Induction on the derivation of $\envHasType{\eta}{\Gamma}$, using Theorem~\ref{thm:prefix-concat-emp}.
\end{proof}

\begin{theorem}[Maximal Environment Concatenation]
    \label{thm:env-concat-maximal}
    If $\prefixConcatRel{\eta}{\eta'}{\eta''}$, then $\maximalOn{\eta}{S}$ or $\maximalOn{\eta'}{S}$ if and only if $\maximalOn{\eta''}{S}$.
\end{theorem}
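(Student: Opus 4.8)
The plan is to reduce this statement to the pointwise \emph{prefix}-level fact already established as Theorem~\ref{thm:concat-maximal}, since by Definition~\ref{def:env-cat} environment concatenation is computed variable-by-variable. Recall that $\prefixConcatRel{\eta}{\eta'}{\eta''}$ makes $\eta''$ the function whose domain is the set of $x \in \dom{\eta} \cap \dom{\eta'}$ for which $\prefixConcat{\eta(x)}{\eta'(x)}$ exists, with $\eta''(x) = \prefixConcat{\eta(x)}{\eta'(x)}$ on that set. Consequently, for any $x \in \dom{\eta''}$ we have the pointwise relation $\prefixConcatRel{\eta(x)}{\eta'(x)}{\eta''(x)}$, and may invoke each clause of Theorem~\ref{thm:concat-maximal} at $x$. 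No induction is needed: the whole argument is a manipulation of the universal quantifier over $x \in S$.

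For the backward direction, I would assume $\maximalOn{\eta''}{S}$. This forces $S \subseteq \dom{\eta''} \subseteq \dom{\eta} \cap \dom{\eta'}$, so for each $x \in S$ the relation $\prefixConcatRel{\eta(x)}{\eta'(x)}{\eta''(x)}$ holds with $\eta''(x)$ maximal. The first clause of Theorem~\ref{thm:concat-maximal} (if a concatenation is maximal, so is its second argument) then gives that $\eta'(x)$ is maximal for every $x \in S$, i.e. $\maximalOn{\eta'}{S}$, which a fortiori establishes the disjunction $\maximalOn{\eta}{S} \vee \maximalOn{\eta'}{S}$. Note it is only the second argument that we can conclude is maximal here, which is the expected asymmetry.

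For the forward direction I would split on the disjunct. In either case, once $x \in \dom{\eta''}$ is known, the second clause of Theorem~\ref{thm:concat-maximal} (if \emph{either} argument is maximal, the result is maximal) immediately yields that $\eta''(x)$ is maximal, so $\maximalOn{\eta''}{S}$; in the case $\maximalOn{\eta}{S}$ the third clause additionally pins $\eta''(x) = \eta(x)$, consistent with maximality. The genuinely delicate point, and the main obstacle, is \emph{definedness}: the conclusion $\maximalOn{\eta''}{S}$ requires each $x \in S$ to actually lie in $\dom{\eta''}$, not merely that the concatenation is maximal when it happens to exist. I would discharge this by appealing to the implicit well-typedness of the environments (as in Definition~\ref{def:env-cat}'s companion existence theorem): since $\eta(x)$ has some type $s$ and $\eta'(x)$ has type $\deriv{\eta(x)}{s}$, Theorem~\ref{thm:prefix-concat-fun} guarantees that $\prefixConcat{\eta(x)}{\eta'(x)}$ exists, so $\dom{\eta''}$ is exactly the common domain and contains $S$. (Alternatively, when $\eta(x)$ is maximal one can argue more explicitly that $\deriv{\eta(x)}{s}$ is nullable by Theorem~\ref{thm:maximal-deriv-null}, forcing $\eta'(x) = \emp{\deriv{\eta(x)}{s}}$ and invoking Theorem~\ref{thm:prefix-concat-emp}.) This definedness step is where the well-typedness hypotheses are essential — a purely relational reading of the statement would fail, e.g. $\prefixConcat{\onepB}{\onepB}$ is undefined although $\onepB$ is maximal — but it is cheaply dispatched, leaving the maximality bookkeeping entirely to Theorem~\ref{thm:concat-maximal}.
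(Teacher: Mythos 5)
Your proof is essentially the paper's: the paper dispatches this theorem in one line as an ``Immediate corollary of Theorem~\ref{thm:concat-maximal},'' which is exactly your pointwise reduction through Definition~\ref{def:env-cat}. Your additional care about definedness of $\eta''(x)$ on $S$ --- which the paper's one-liner silently elides, and which your $\prefixConcat{\onepB}{\onepB}$ example shows genuinely requires the implicit well-typedness of the environments (discharged via Theorem~\ref{thm:prefix-concat-fun}) --- is a correct refinement of the same argument, not a divergence from it.
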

\begin{proof}
    Immediate corollary of Theorem~\ref{thm:concat-maximal}
\end{proof}

\begin{theorem}[Prefix Concatenation Associativity]
    \label{thm:env-concat-assoc}
    $\prefixConcat{\eta}{\left(\prefixConcat{\eta'}{\eta''}\right)} = \prefixConcat{\left(\prefixConcat{\eta}{\eta'}\right)}{\eta''}$,
    when defined.
\end{theorem}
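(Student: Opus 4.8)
The plan is to lift the prefix-level associativity of Theorem~\ref{thm:prefix-concat-assoc} to environments pointwise, exploiting the fact that environment concatenation in Definition~\ref{def:env-cat} is nothing more than variable-wise prefix concatenation. First I would unfold both sides of the claimed equation using that definition. Fixing a variable $x$ in the relevant domain, the inner environment concatenation $\prefixConcat{\eta'}{\eta''}$ takes the value $\prefixConcat{\eta'(x)}{\eta''(x)}$ at $x$, so the left-hand side evaluates at $x$ to $\prefixConcat{\eta(x)}{\left(\prefixConcat{\eta'(x)}{\eta''(x)}\right)}$; symmetrically, the right-hand side evaluates at $x$ to $\prefixConcat{\left(\prefixConcat{\eta(x)}{\eta'(x)}\right)}{\eta''(x)}$. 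At each variable the two sides are therefore exactly the two parenthesizations of a threefold prefix concatenation, and Theorem~\ref{thm:prefix-concat-assoc} equates them wherever both are defined. No fresh induction on the structure of environments is needed: all of the inductive work has already been discharged at the prefix level.

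The second step is to reconcile the domains on which the two sides are defined, which is where the content beyond the prefix case actually lives. By Definition~\ref{def:env-cat}, each environment concatenation is defined on the largest subset of $\dom{\eta} \cap \dom{\eta'}$ on which the underlying pointwise concatenation exists, so a priori the left- and right-hand sides could be defined on different variable sets. I would argue these sets coincide by showing that, at each $x$, the nested prefix concatenation $\prefixConcat{\eta(x)}{\left(\prefixConcat{\eta'(x)}{\eta''(x)}\right)}$ is defined exactly when $\prefixConcat{\left(\prefixConcat{\eta(x)}{\eta'(x)}\right)}{\eta''(x)}$ is. This symmetry of definedness is implicit in the typing clauses of Theorem~\ref{thm:prefix-concat-fun}: both groupings are defined precisely when the three prefixes are successively type-compatible---each the appropriate derivative of the preceding one, using part (2) of that theorem to witness that the intermediate prefix has the matching derivative type---a condition that does not depend on how one associates the concatenations.

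I expect this domain-matching to be the main obstacle, since the \emph{when defined} qualifier in the statement hides a genuine coherence requirement: associativity at the level of values (guaranteed by Theorem~\ref{thm:prefix-concat-assoc}) must be accompanied by associativity at the level of \emph{partiality}. Once one observes that type-compatibility of a threefold concatenation is associativity-agnostic, the two domains are forced to agree, and the pointwise value equality from the first step then yields the theorem on that common domain.
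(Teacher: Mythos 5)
Your proposal is correct and matches the paper's proof, which is stated in one line as a corollary of Theorem~\ref{thm:prefix-concat-assoc}: since Definition~\ref{def:env-cat} makes environment concatenation pointwise prefix concatenation, associativity lifts variable-by-variable exactly as you argue. Your additional care about the \emph{largest subset} clause and domain agreement is a sound elaboration of what the paper leaves implicit (and the ``when defined'' qualifier in the statement means even the weaker reading---agreement wherever both sides exist---already suffices), so nothing in your argument goes beyond or against the paper's intended route.
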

\begin{proof}
    Corollary of Theorem~\ref{thm:prefix-concat-assoc}
\end{proof}

\subsection{Historical Contexts}
\label{app:histctx}

\begin{definition}[Historical Context]
    Contexts $\Omega := \cdot \,|\, \Omega,x:A$ are fully structural contexts, where the $A$ are STLC types.
\end{definition}

A stream type is ``flattened'' into an STLC type by turning concatenations and parallels
into products, and stars into lists.

\begin{definition}[Type and Context Flatten]
    \label{def:type-ctx-flatten}
    For $s$ a stream type, we define its flattening into an STLC type, denoted $\flatten{s}$, inductively:
    \begin{itemize}
      \item $\flatten{1} = 1$
      \item $\flatten{\varepsilon} = 1$
      \item $\flatten{s \cdot t} = \flatten{s} \times \flatten{t}$
      \item $\flatten{s \| t} = \flatten{s} \times \flatten{t}$
      \item $\flatten{s + t} = \flatten{s} + \flatten{t}$
      \item $\flatten{s^\star} = \texttt{list}\left(\flatten{s}\right)$
    \end{itemize}

    For $\Gamma$ a bunched context, we define its flattening to a standard context, $\flatten{\Gamma}$ inductively:

    \begin{itemize}
        \item $\flatten{\cdot} = \cdot$
        \item $\flatten{x : s} = x : \flatten{s}$
        \item $\flatten{\Gamma ; \Gamma'} = \flatten{\Gamma},\flatten{\Gamma'}$
        \item $\flatten{\Gamma , \Gamma'} = \flatten{\Gamma},\flatten{\Gamma'}$
    \end{itemize}

    For an STLC value $v : \flatten{s}$, we write $\histValToPrefix{v}{s}$ for the maximal prefix of type $s$
    that it corresponds to. Dually, for a maximal prefix $\prefixHasType{p}{s}$, we write $\flatten{p}$ for the STLC value of type
    $\flatten{s}$ it corresponds to.
\end{definition}

\begin{definition}[Historical Programs and Substitutions]
    \label{def:histpgm}
    Fix a language of terms $M$, with type system $\lctck{\Omega}{M}{A}$. Write its semantics as $M \downarrow v$.
    We assume that this relation is a decidable partial function, in the sense that $M$ evaluates to at most one $v$, and it is
    decidable whether or not such a $v$ exists. We write substitutions $\histsubsttck{\theta}{\Omega'}{\Omega}$. Substitutions have a contravariant action
    on terms, written $M[\theta]$: if $\lctck{\Omega}{M}{A}$, then $\lctck{\Omega'}{M[\theta]}{A}$.
    We lift this substitution action to \core{} terms compositionally, substituting into all historical terms.
    We write a list of such terms as $\overline{M}$, and lift the typing relation and semantics to
    lists of terms, written $\lctck{\Omega}{\overline{M}}{\overline{A}}$ and $\overline{M} \downarrow \theta$.
\end{definition}

\subsection{Context Subtyping}
\label{app:ctxsub}

The following is a full listing of subtyping rules.

\begin{definition}[Subtyping]
    \label{def:ctxsubty}
    \begin{mathpar}
        \infer[Sub-Cong]{
            \subty{\Delta}{\Delta'}
        }{
            \subty{\Gamma(\Delta)}{\Gamma(\Delta')}
        }

        \infer[Sub-Refl]{ }{
            \subty{\Gamma}{\Gamma}
        }

        \infer[Sub-Comma-Exc]{ }{
            \subty{\commactx{\Gamma}{\Delta}}{\commactx{\Delta}{\Gamma}}
        }

        \ENDOFLINE

        \infer[Sub-Sng-Wkn]{ }{
            \subty{x : s}{\cdot}
        }

        \infer[Sub-Comma-Wkn]{ }{
            \subty{\commactx{\Gamma}{\Delta}}{\Gamma}
        }

        \infer[Sub-Semic-Wkn-1]{ }{
            \subty{\semicctx{\Gamma}{\Delta}}{\Gamma}
        }

        \infer[Sub-Semic-Wkn-2]{ }{
            \subty{\semicctx{\Gamma}{\Delta}}{\Delta}
        }

        \ENDOFLINE

        \infer[Sub-Comma-Unit]{ }{\subty{\Gamma}{\commactx{\Gamma}{\cdot}}}

        \infer[Sub-Semic-Unit-1]{ }{\subty{\Gamma}{\semicctx{\Gamma}{\cdot}}}

        \infer[Sub-Semic-Unit-2]{ }{\subty{\Gamma}{\semicctx{\cdot}{\Gamma}}}
    \end{mathpar}
\end{definition}

Environment typing is preserved by subtyping, and derivatives preserve subtyping relations between contexts.

\begin{theorem}[Subtyping Preserves Environments]
    \label{thm:subty-env}
    If $\envHasType{\eta}{\Gamma}$ and $\subty{\Gamma}{\Delta}$ then $\envHasType{\eta}{\Delta}$
\end{theorem}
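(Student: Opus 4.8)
The plan is to induct on the derivation of $\subty{\Gamma}{\Delta}$. Before the case analysis I would record one auxiliary observation that carries all the real weight: every subtyping rule can only \emph{shrink} the set of bound variables, so $\subty{\Gamma}{\Delta}$ implies $\dom{\Delta} \subseteq \dom{\Gamma}$. This follows by a one-line side-induction on the same derivation --- none of the rules introduces a new variable (in particular there is no contraction rule in the listing). Since $\maximalOn{\eta}{S}$ and $\emptyOn{\eta}{S}$ are both defined by universal quantification over the variables of $S$, they are downward closed under domain inclusion: from $\dom{\Delta} \subseteq \dom{\Gamma}$ we get that $\maximalOn{\eta}{\Gamma}$ implies $\maximalOn{\eta}{\Delta}$, and likewise for $\texttt{emptyOn}$.

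The structural rules are then discharged by inverting the derivation of $\envHasType{\eta}{\Gamma}$ and reassembling. \jtref{Sub-Refl} is immediate. For \jtref{Sub-Comma-Exc}, inverting the comma environment rule gives $\envHasType{\eta}{\Gamma}$ and $\envHasType{\eta}{\Delta}$, which recombine in the opposite order with no side condition to discharge. For the weakening rules \jtref{Sub-Sng-Wkn}, \jtref{Sub-Comma-Wkn}, \jtref{Sub-Semic-Wkn-1}, and \jtref{Sub-Semic-Wkn-2}, inversion exposes the surviving subderivation directly (or the target is $\cdot$, which is always well-typed). For the unit rules \jtref{Sub-Comma-Unit}, \jtref{Sub-Semic-Unit-1}, and \jtref{Sub-Semic-Unit-2}, I recombine $\envHasType{\eta}{\Gamma}$ with the trivial $\envHasType{\eta}{\cdot}$; the ordering premise of the semicolon environment rule is met vacuously, since $\maximalOn{\eta}{\cdot}$ and $\emptyOn{\eta}{\cdot}$ both hold over the empty domain.

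The only case with genuine content is \jtref{Sub-Cong}, where the derivation ends with a frame $\Gamma_0(-)$ and a subderivation $\subty{\Delta_0}{\Delta_0'}$, the endpoints being $\Gamma = \Gamma_0(\Delta_0)$ and $\Delta = \Gamma_0(\Delta_0')$. Here I would invoke Environment Subcontext Bind (Theorem~\ref{thm:env-subctx-bind}). From $\envHasType{\eta}{\Gamma_0(\Delta_0)}$, Subcontext Lookup (Theorem~\ref{thm:env-subctx-lookup}) yields $\envHasType{\eta}{\Delta_0}$, and the induction hypothesis applied to $\subty{\Delta_0}{\Delta_0'}$ upgrades this to $\envHasType{\eta}{\Delta_0'}$. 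The remaining premise is the agreement condition $\agree{\eta}{\eta}{\Delta_0}{\Delta_0'}$, i.e.\ that $\eta$ maximal (resp.\ empty) on $\Delta_0$ forces it maximal (resp.\ empty) on $\Delta_0'$; this is exactly the downward-closure observation applied to $\dom{\Delta_0'} \subseteq \dom{\Delta_0}$. Taking the replacement environment in Theorem~\ref{thm:env-subctx-bind} to be $\eta$ itself (so the composite $\eta \cdot \eta$ is just $\eta$) then gives $\envHasType{\eta}{\Gamma_0(\Delta_0')}$, as required.

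The main obstacle is precisely this congruence step: re-establishing the ordering constraint attached to each semicolon context former after replacing a sub-context by a weaker one. If one chooses to unfold Theorem~\ref{thm:env-subctx-bind} rather than citing it, the obstacle reappears as an inner induction on the frame $\Gamma_0(-)$, where at every semicolon layer one must re-derive the ``maximal on the earlier part or empty on the later part'' disjunct; the domain-shrinking lemma is what prevents the replacement from invalidating whichever disjunct held originally. All the other cases are mechanical inversion and reassembly.
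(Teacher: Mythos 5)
Your proposal is correct and matches the paper's proof, which is stated in one line as induction on $\subty{\Gamma}{\Delta}$ together with inversion on $\envHasType{\eta}{\Gamma}$ --- exactly your top-level structure. The details you supply (the domain-shrinking observation to discharge the agreement condition, and the appeal to Theorem~\ref{thm:env-subctx-bind} in the \jtref{Sub-Cong} case) are a faithful filling-in of what the paper's terse proof elides, not a different route.
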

\begin{proof}
    By induction on $\subty{\Gamma}{\Delta}$, and inversion on $\envHasType{\eta}{\Gamma}$.
\end{proof}

\begin{theorem}[Derivatives Preserve Subtyping]
    \label{thm:deriv-subty}
    Suppose $\subty{\Gamma}{\Delta}$ and $\derivrel{\eta}{\Gamma}{\Gamma'}$ and $\derivrel{\eta}{\Delta}{\Delta'}$.
    Then, $\subty{\Gamma'}{\Delta'}$.
\end{theorem}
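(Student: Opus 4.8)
The plan is to proceed by induction on the derivation of $\subty{\Gamma}{\Delta}$, showing in each case that applying the derivative by $\eta$ lands back in an instance of the very same subtyping rule. The one structural fact I will lean on throughout is that the context-derivative relation is syntax-directed on the shape of the context and is a partial function (Theorem~\ref{thm:derivrel-env-fun}): whenever a subcontext appears on both sides of a subtyping judgment, its derivative under the fixed $\eta$ is uniquely determined, so a shared subcontext derivates to the same result on both sides, and all the side conditions about matching components are discharged by uniqueness.

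For the leaf and context-former rules this is essentially bookkeeping. In \ruleName{Sub-Refl} we have $\Gamma = \Delta$, so $\Gamma' = \Delta'$ by Theorem~\ref{thm:derivrel-env-fun} and we reapply \ruleName{Sub-Refl}. For \ruleName{Sub-Comma-Exc}, \ruleName{Sub-Sng-Wkn}, \ruleName{Sub-Comma-Wkn}, \ruleName{Sub-Semic-Wkn-1}, \ruleName{Sub-Semic-Wkn-2}, \ruleName{Sub-Comma-Unit}, \ruleName{Sub-Semic-Unit-1}, and \ruleName{Sub-Semic-Unit-2}, I will invert the two derivative derivations. Since the comma and semicolon derivatives are taken componentwise and $\delta_\eta(\cdot) = \cdot$, each side derivates to a context of exactly the same shape with the component derivatives plugged in (e.g.\ $\derivrel{\eta}{\commactx{\Gamma_0}{\Delta_0}}{\commactx{\Gamma_0'}{\Delta_0'}}$), and the required relation is then literally another instance of the same rule, with shared components matched by Theorem~\ref{thm:derivrel-env-fun}.

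The case that needs real work is the congruence rule \ruleName{Sub-Cong}, where $\subty{\Gamma(\Delta)}{\Gamma(\Delta')}$ comes from $\subty{\Delta}{\Delta'}$. Here I want to factor both whole-context derivatives through a common derivative of the frame. Concretely, I will appeal to Theorem~\ref{thm:env-subctx-bind-deriv}: from $\derivrel{\eta}{\Gamma(\Delta)}{C_1}$ it yields a frame $\Gamma''(-)$ with $C_1 = \Gamma''(\Delta_1)$ where $\derivrel{\eta}{\Delta}{\Delta_1}$, and since the frame $\Gamma(-)$ and the relevant part of $\eta$ are identical on the other side, the same $\Gamma''(-)$ gives $C_2 = \Gamma''(\Delta_2)$ with $\derivrel{\eta}{\Delta'}{\Delta_2}$. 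The hole derivatives $\Delta_1, \Delta_2$ exist as subderivations of $C_1, C_2$ by syntax-directedness. Applying the induction hypothesis to $\subty{\Delta}{\Delta'}$ yields $\subty{\Delta_1}{\Delta_2}$, and one further use of \ruleName{Sub-Cong} delivers $\subty{\Gamma''(\Delta_1)}{\Gamma''(\Delta_2)}$, i.e.\ $\subty{C_1}{C_2}$.

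I expect the main obstacle to be exactly this frame-factoring step: making precise that the derivative of $\Gamma(\Delta)$ splits as the derivative of the frame applied to the derivative of the hole, uniformly in the hole. The subtlety is that $\Delta$ and $\Delta'$ may bind different variables (weakening can occur beneath the frame), so I must invoke Theorem~\ref{thm:env-subctx-bind-deriv} with its agreement side condition $\agree{\eta}{\eta}{\Delta}{\Delta'}$ in hand. This is discharged from the fact that a single $\eta$ drives both derivatives together with the observation that every subtyping rule keeps $\dom{\Delta'} \subseteq \dom{\Delta}$, so any maximality or emptiness of $\eta$ on $\Delta$ carries to $\Delta'$. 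Should that side condition prove awkward to verify cleanly, the fallback is to establish the factoring directly as a short auxiliary induction on the frame $\Gamma(-)$, peeling off each comma/semicolon former and using Theorem~\ref{thm:derivrel-env-fun} to keep the off-hole components synchronized; everything else is routine rule-by-rule inversion.
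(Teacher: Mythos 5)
Your proposal is correct and follows essentially the same route as the paper's proof: a rule-by-rule analysis of $\subty{\Gamma}{\Delta}$, inverting the two derivative derivations and using determinism of the derivative relation (Theorem~\ref{thm:derivrel-env-fun}) to synchronize shared components, which is exactly the paper's stated argument. Your more detailed treatment of \ruleName{Sub-Cong} via Theorem~\ref{thm:env-subctx-bind-deriv} (with the agreement condition discharged by $\dom{\Delta'} \subseteq \dom{\Delta}$, an invariant of all the subtyping rules) is sound and simply makes explicit the frame-factoring that the paper's one-line inversion argument leaves implicit; your fallback of inducting directly on the frame $\Gamma(-)$ is what that inversion amounts to.
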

\begin{proof}
    By cases on $\subty{\Gamma}{\Delta}$, inverting the derivations of
    $\derivrel{\eta}{\Gamma}{\Gamma'}$ and
    $\derivrel{\eta}{\Delta}{\Delta'}$, and using the determinism of the derivative relation.
\end{proof}

\subsection{Type System}
\label{app:type-system}

\subsubsection{Inertness}
\label{app:inertness}
Most terms, like variables or case expressions, require some non-empty amount of
input to arrive for them to produce a non-empty output. However, this is not
true of all terms: constants like $\oneTm$ and $\nilTm$, (some) sequential terms
like $\catpairTm{\oneTm}{e}$ and $\consTm{\epsTm}{e}$, and sum terms $\inl{e}$
produce nonempty output even when given an entirely empty input prefix. Terms
like these contain ``information'' that they are always ready to produce, even
if there is no input to drive them forward. We call terms that are always ready to produce
output \emph{jumpy}, and terms that are not \emph{inert}.

As described in Section~\ref{sec:let-binding}, the type system requires that
let-bound terms are always inert to guarantee soundness of the semantics.
In particular, inertness is what guarantees that the ``agreement'' (Definition~\ref{def:agree}) requirement
in Theorem~\ref{thm:env-subctx-bind} to hold in the soundness case for \ruleName{T-Let}.
For arbitrary terms, the maximality component of agreement always holds (this by Lemma~\ref{lemma:batch-sem-aux}),
but the emptiness component of agreement requires inertness.

To enforce that the bodies of let-bindings are inert, we track a syntactic over-approximation of inertness with the type system, essentially as an
\emph{effect}. This is accomplished by giving every typing judgment an
\emph{inertness annotation}, $i ::= \I \mid \J$, and we ensure that if $e$ is
typed with annotation $\texttt{Inert}$, then $e$ produces empty output when
given an empty input. This invariant is proved as an additional consequence to
the soundness theorem. 

We note that the choice to include inertness in the type
system itself, as opposed to a predicate on (typed) terms, is essentially an
arbitrary one: we choose the former to minimize the number of assumptions
running around in our proofs.

For the most part, the inertness analysis is straightforward. Constants like
$\oneTm$ and $\nilTm$, and injections like $\sumInlTm{e}$, $\sumInrTm{e}$, and
$\consTm{e_1}{e_2}$ (secretly the right injection into $\epst + s\cdot s^\star$)
all have annotation $\J$.  Non-buffering elimination forms have the same
inertness as their bodies, and variables and $\epsTm$ are inert.  The most
important ones are in the rules \ruleName{T-Cat-R} and \ruleName{T-Plus-L} (and
the similar ones in \ruleName{T-Star-L} and \ruleName{T-Plus-L}).

The inertness requirement for \ruleName{T-Cat-R} says that if the resulting term
$\catpairTm{e_1}{e_2}$ is to be typed as inert, $e_1$ must be inert, and the type of $e_1$
must not be \texttt{null}. Otherwise, $\catpairTm{e_1}{e_2}$ could produce a maximal .

The rule for \ruleName{T-Plus-L} says that it is inert when the buffer
environment does not yet include a decision for which way to go ($\eta(z) =
\sumpEmp$). Note that in practice, this is always satisfied.  At the beginning of
execution, $\eta$ maps all variables to empty prefixes, and as soon as $\eta(z)$
gets either $\sumpA{p}$ or $\sumpB{p}$, we step to the corresponding branch. In
fact, the result of \emph{every} step is inert: otherwise we would've output a
larger prefix in that step!

\begin{definition}[Typing Rules] Figure~\ref{fig:typingrules-1} and Figure~\ref{fig:typingrules-2} present
the full typing rules.
\end{definition}

\begin{definition}[Recursion Signature]
A recursion signature $\Sigma$ is either empty (signaling that typechecking is not in the body of a recursive function),
or the signature $\Omega \mid \Gamma \to s \, @ \, i$ of a sequent which defines the recursive function we are currently checking the body of.
$\Sigma ::= \emptyset \mid \left(\Omega \mid \Gamma \to s \, @ \, i\right)$
\end{definition}

\begin{figure}[t]
\centering
    \begin{mathpar}
        \infer[T-Eps-R]{ }{
            \tck{\Omega}{\Gamma}{\Sigma}{\epsTm}{\epst}{i}
        }

        \infer[T-One-R]{ }{
            \tck{\Omega}{\Gamma}{\Sigma}{\oneTm}{\onet}{\J}
        }

        \infer[T-Var]{ }{
            \tck{\Omega}{\Gamma(x:s)}{\Sigma}{x}{s}{i}
        }

        \infer[T-Sub]{
            \tck{\Omega}{\Delta}{\Sigma}{e}{s}{i}\\
            \subty{\Gamma}{\Delta}
        }{
            \tck{\Omega}{\Gamma}{\Sigma}{e}{s}{i}
        }

        \ENDOFLINE

        \infer[T-Par-R]{
            \tck{\Omega}{\Gamma}{\Sigma}{e_1}{s}{i}\\
            \tck{\Omega}{\Gamma}{\Sigma}{e_2}{t}{i}
        }{
            \tck{\Omega}{\Gamma}{\Sigma}{\parpairTm{e_1}{e_2}}{s \| t}{i}
        }

        \infer[T-Cat-R]{
            \tck{\Omega}{\Gamma}{\Sigma}{e_1}{s}{i_1}\\
            \tck{\Omega}{\Delta}{\Sigma}{e_2}{t}{i_2}\\
            i_3 = \I \implies i_1 = \I \wedge \neg\left(\nullable{s}\right)
        }{
            \tck{\Omega}{\semicctx{\Gamma}{\Delta}}{\Sigma}{\catpairTm{e_1}{e_2}}{s \cdot t}{i_3}
        }

        \ENDOFLINE

        \infer[T-Par-L]{
            \tck{\Omega}{\Gamma(\commactx{x : s}{y : t})}{\Sigma}{e}{r}{i}
        }{
            \tck{\Omega}{\Gamma(z : s \| t)}{\Sigma}{\letparTm{x}{y}{z}{e}}{r}{i}
        }

        \infer[T-Cat-L]{
            \tck{\Omega}{\Gamma(\semicctx{x : s}{y : t})}{\Sigma}{e}{r}{i}
        }{
            \tck{\Omega}{\Gamma(z : s \cdot t)}{\Sigma}{\letcatTm{t}{x}{y}{z}{e}}{r}{i}
        }

        \ENDOFLINE

        \infer[T-Plus-R-1]{
            \tck{\Omega}{\Gamma}{\Sigma}{e}{s}{i}
        }{
            \tck{\Omega}{\Gamma}{\Sigma}{\inl{e}}{s+t}{\J}
        }

        \infer[T-Plus-R-2]{
            \tck{\Omega}{\Gamma}{\Sigma}{e}{t}{i}
        }{
            \tck{\Omega}{\Gamma}{\Sigma}{\inr{e}}{s+t}{i}
        }

        \ENDOFLINE

        \infer[T-Plus-L]{
            \envHasType{\eta}{\Gamma(z:s+t)}\\
            \derivrel{\eta}{\Gamma(z : s+t)}{\Gamma'}\\
            \tck{\Omega}{\Gamma(x:s)}{\Sigma}{e_1}{r}{i_1}\\
            \tck{\Omega}{\Gamma(y:t)}{\Sigma}{e_2}{r}{i_2}\\
            i = \I \implies \eta(z) = \sumpEmp
        }{
            \tck{\Omega}{\Gamma'}{\Sigma}{\sumcaseTm{r}{\eta}{z}{x}{e_1}{y}{e_2}}{r}{i}
        }

    \end{mathpar}
\caption{Full Typing Rules (Part 1)}
\label{fig:typingrules-1}
\end{figure}

\begin{figure}

\begin{mathpar}
        \inferrule[T-Star-R-1]{ }{
            \tck{\Omega}{\Gamma}{\Sigma}{\nilTm}{s^\star}{\J}
        }

        \inferrule[T-Star-R-2]{
            \tck{\Omega}{\Gamma}{\Sigma}{e_1}{s}{i_1}\\
            \tck{\Omega}{\Delta}{\Sigma}{e_2}{s^\star}{i_2}
        }{
            \tck{\Omega}{\semicctx{\Gamma}{\Delta}}{\Sigma}{\consTm{e_1}{e_2}}{s^\star}{\J}
        }

        \ENDOFLINE

        \inferrule[T-Star-L]{
            \envHasType{\eta}{\Gamma(z:s^\star)}\\
            \derivrel{\eta}{\Gamma(z : s^\star)}{\Gamma'}\\
            \tck{\Omega}{\Gamma(\cdot)}{\Sigma}{e_1}{r}{i_1}\\
            \tck{\Omega}{\Gamma(x:s;xs:s^\star)}{\Sigma}{e_2}{r}{i_2}\\
            i = \I \implies \eta(z) = \stpEmp
        }{
            \tck{\Omega}{\Gamma'}{\Sigma}{\starcaseTm{s,r}{\eta}{z}{e_1}{x}{xs}{e_2}}{r}{i}
        }

        \ENDOFLINE

        \inferrule[T-HistPgm]
        {
            \lctck{\Omega}{M}{\flatten{s}}
        }{
            \tck{\Omega}{\Gamma}{\Sigma}{\histPgmTm{M}{s}}{s}{\J}
        }

        \inferrule[T-Wait]{
            \envHasType{\eta}{\Gamma(x : s)}\\
            \derivrel{\eta}{\Gamma(x:s)}{\Gamma'}\\
            \tck{\Omega, x : \flatten{s}}{\Gamma(\cdot)}{\Sigma}{e}{t}{i}\\
            i' = \I \implies \neg\left(\isMaximal{\eta(z)}\right) \wedge \neg\left(\nullable{s}\right)
        }{
            \tck{\Omega}{\Gamma'}{\Sigma}{\waitTm{\eta}{t}{x}{e}}{t}{i'}
        }

        \ENDOFLINE

        \infer[T-Let]{
            \tck{\Omega}{\Delta}{\Sigma}{e_1}{s}{\I}\\
            \tck{\Omega}{\Gamma(x:s)}{\Sigma}{e_2}{t}{i}
        }{
            \tck{\Omega}{\Gamma(\Delta)}{\Sigma}{\cutTm{x}{e_1}{e_2}}{t}{i}
        }

        \infer[T-Rec]{
            \tck{\Omega'}{\Gamma'}{\Omega \mid \Gamma \to s \, @ \, i}{A}{\Gamma}{i}\\
            \lctck{\Omega'}{\overline{M}}{\Omega}
        }{
            \tck{\Omega'}{\Gamma'}{\Omega \mid \Gamma \to s \, @ \, i}{\recTm{\overline{M}}{A}}{s}{i}
        }

        \ENDOFLINE

        \infer[T-Fix]{
            \tck{\Omega}{\Gamma}{\Omega \mid \Gamma \to s \, @ \, i}{e}{s}{i}\\
            \lctck{\Omega'}{\overline{M}}{\Omega}\\
            \tck{\Omega'}{\Gamma'}{\Sigma}{A}{\Gamma}{i}
        }{
            \tck{\Omega'}{\Gamma'}{\Sigma}{\fixTmHistargs{\overline{M}}{A}{e}}{s}{i}
        }

        \infer[T-ArgsLet]{
            \tck{\Omega}{\Gamma'}{\Sigma}{A}{\Gamma}{i}\\
            \tck{\Omega}{\Gamma}{\Sigma}{e}{s}{i}
        }{
            \tck{\Omega}{\Gamma'}{\Sigma}{\cutTm{\Gamma}{A}{e}}{s}{i}
        }
\end{mathpar}
    
\caption{Full Typing Rules (Part 2)}
\label{fig:typingrules-2}
\end{figure}

These typing rules are mutually defined with another typing judgment
$\tck{\Omega}{\Gamma}{\Sigma}{A}{\Gamma'}{i}$, meaning that $A$
is a well-typed set of arguments (hence $A$) for a recursive call to a function accepting
inputs $\Gamma'$. Here, $A$ is an \emph{tree} of terms, with either
comma or semicolon nodes. This judgment ensures that $e_{\Gamma'}$ has
well-typed bindings for every variable $x : s$ in $\Gamma'$, and that the
variables that $e_{\Gamma'}$ uses are used in accordance with $\Gamma$, its
context.

\begin{definition}[Recursive Argument Typing]
    $$
    A ::= \cdot \mid e \mid \parpairArgsTm{A}{A'} \mid \catpairArgsTmA{A}{A'} \mid \catpairArgsTmB{A}
    $$

    \begin{mathpar}
        \infer[T-Args-Emp]{ }{
            \tck{\Omega}{\Gamma}{\Sigma}{\cdot}{\cdot}{i}
        }

        \infer[T-Args-Sng]{
            \tck{\Omega}{\Gamma}{\Sigma}{e}{s}{i}
        }{
            \tck{\Omega}{\Gamma}{\Sigma}{e}{\left(x : s\right)}{i}
        }

        \infer[T-Args-Semic-1]{
            \tck{\Omega}{\Gamma}{\Sigma}{A}{\Delta}{i_1}\\
            \tck{\Omega}{\Gamma'}{\Sigma}{A'}{\Delta'}{i_2}
        }{
            \tck{\Omega}{\semicctx{\Gamma}{\Gamma'}}{\Sigma}{\catpairArgsTmA{A}{A'}}{\semicctx{\Delta}{\Delta'}}{i_3}
        }

        \infer[T-Args-Semic-2]{
            \tck{\Omega}{\Gamma'}{\Sigma}{A}{\Delta'}{i}\\
            \nullable{\Delta}
        }{
            \tck{\Omega}{\semicctx{\Gamma}{\Gamma'}}{\Sigma}{\catpairArgsTmB{A}}{\semicctx{\Delta}{\Delta'}}{i}
        }

        \infer[T-Args-Comma]{
            \tck{\Omega}{\Gamma}{\Sigma}{A}{\Delta}{i}\\
            \tck{\Omega}{\Gamma}{\Sigma}{A'}{\Delta'}{i}
        }{
            \tck{\Omega}{\Gamma}{\Sigma}{\parpairArgsTm{A}{A'}}{\commactx{\Delta}{\Delta'}}{i}
        }

    \end{mathpar}
\end{definition}

\subsubsection*{Buffering Rules}
The left rules for star and sums, as well as Wait, include a {\em
buffer} in the term: a prefix of the input context, where we store inputs until
we have received enough to run the term.  For example, the \ruleName{Wait} rule
has this buffer $\eta$, which we gather until it includes a maximal prefix of $x : s$.

\begin{mathpar}

    \inferrule[T-Wait]{
        \envHasType{\eta}{\Gamma(x : s)}\\
        \derivrel{\eta}{\Gamma(x:s)}{\Gamma'}\\
        \tck{\Omega, x : \flatten{s}}{\Gamma(\cdot)}{\Sigma}{e}{s}{i}
        i' = \I \implies \neg\left(\isMaximal{\eta(z)}\right) \wedge \neg\left(\nullable{s}\right)
    }{
        \tck{\Omega}{\Gamma'}{\Sigma}{\waitTm{\eta}{t}{x}{e}}{t}{i'}
    }
\end{mathpar}

The buffer is included in the syntax of the term. Additionally, the context in the conclusion is $\deriv{p}{\Gamma(\Delta)}$.
If we've buffered $\eta$ of the input, the term is expecting the rest of the context.
Users of the calculus need not worry about this detail: when writing programs and when the program starts running,
the buffer is empty: $\eta = \emp{\Gamma(\Delta)}$, and since
$\deriv{\eta}{\Gamma(\Delta)} = \Gamma(\Delta)$, this returns \ruleName{Wait} to the expected rule presented
in the body of the paper. The other rules that include buffers are \ruleName{Plus-L} and \ruleName{Star-L}.

\subsection{Sink Terms}
\label{app:sink-term}

Once we have produced an entire maximal prefix $\prefixHasType{p}{s}$, a program $e$ of type $s$
needs to transition to a program emitting nothing: we compute this term from $p$ with $\sinkTm{p}$.

\begin{definition}[Sink Terms]
    We define a term $\sinkTm{p}$ by induction on $p$.
    \begin{itemize}
        \item $\sinkTm{\epsp} = \epsTm$
        \item $\sinkTm{\onepA} = \epsTm$
        \item $\sinkTm{\onepB} = \epsTm$
        \item $\sinkTm{\parp{p_1}{p_2}} = \parpairTm{\sinkTm{p_1}}{\sinkTm{p_2}}$
        \item $\sinkTm{\catpA{p}} = \sinkTm{p}$
        \item $\sinkTm{\catpB{p_1}{p_2}} = \sinkTm{p_2}$
        \item $\sinkTm{\sumpEmp} = \epsTm$
        \item $\sinkTm{\sumpA{p}} = \sinkTm{p}$
        \item $\sinkTm{\sumpB{p}} = \sinkTm{p}$
        \item $\sinkTm{\stpEmp} = \epsTm$
        \item $\sinkTm{\stpDone} = \epsTm$
        \item $\sinkTm{\stpA{p}} = \sinkTm{p}$
        \item $\sinkTm{\stpB{p}{p'}} = \sinkTm{p'}$
    \end{itemize}
\end{definition}

Note that (because it's easier to have this be a function rather than a relation) sink terms are defined for \emph{all} prefixes
rather than just the maximal ones.

Sink terms are closed, and have the type we expect for a stream transformer that has just emitted an maximal $p$ of type $s$.

\begin{theorem}[Sink Terms Typing]
    \label{thm:sink-typing}
    If $\isMaximal{p}$ and $\prefixHasType{p}{s}$ and $\derivrel{p}{s}{s'}$, then
    $\tck{\cdot}{\Gamma}{\emptyset}{\sinkTm{p}}{s'}{\I}$
\end{theorem}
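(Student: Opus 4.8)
The plan is to proceed by induction on the derivation of $\isMaximal{p}$. Maximality is exactly the right thing to induct on: it carves out precisely the shapes of $p$ that can occur under the hypothesis, discarding the non-maximal prefixes (such as $\onepA$, $\catpA{p}$, $\stpEmp$, and $\stpA{p}$) on which $\sinkTm{-}$ is defined but which never arise here. In each case I would invert the derivations $\prefixHasType{p}{s}$ and $\derivrel{p}{s}{s'}$ to expose the structure of $s$ and $s'$, unfold the definition of $\sinkTm{p}$, and then apply the appropriate right rule. The guiding intuition (formalized by Theorem~\ref{thm:maximal-deriv-null}) is that a maximal $p$ forces $\nullable{s'}$, so $\sinkTm{p}$ only ever needs to build a nullable type out of $\epsTm$ and parallel pairs---never a genuine elimination form or a construct that consumes input.

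The base cases are immediate. For $\isMaximal{\epsp}$, $\isMaximal{\onepB}$, and $\isMaximal{\stpDone}$, the sink term is $\epsTm$ and the derivative $s'$ is $\epst$ (in the first two cases) or $\varepsilon$ (for $\stpDone$); in all three, \ruleName{T-Eps-R} applies. Crucially, \ruleName{T-Eps-R} is polymorphic in its inertness annotation, so we may instantiate it with $\I$, exactly as the goal demands. The parallel case $\isMaximal{\parp{p_1}{p_2}}$ is where the induction does real work: inverting maximality yields $\isMaximal{p_1}$ and $\isMaximal{p_2}$, inverting the typing and derivative judgments gives $s = s_1 \| s_2$ and $s' = s_1' \| s_2'$ with $\derivrel{p_1}{s_1}{s_1'}$ and $\derivrel{p_2}{s_2}{s_2'}$, and the two induction hypotheses supply $\tck{\cdot}{\Gamma}{\emptyset}{\sinkTm{p_1}}{s_1'}{\I}$ and its analogue for $p_2$. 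Since $\sinkTm{\parp{p_1}{p_2}} = \parpairTm{\sinkTm{p_1}}{\sinkTm{p_2}}$, rule \ruleName{T-Par-R} assembles these---and, because it preserves a common annotation, keeps the result inert.

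The remaining cases ($\catpB{p_1}{p_2}$, $\sumpA{p}$, $\sumpB{p}$, and $\stpB{p}{p'}$) all follow the same pattern and are even simpler, because by definition $\sinkTm{-}$ discards one component and recurses into a single subprefix. For instance, with $\isMaximal{\catpB{p_1}{p_2}}$ we get $\isMaximal{p_2}$ by inversion, $s = s_1 \cdot s_2$ with $\derivrel{p_2}{s_2}{s'}$, and $\sinkTm{\catpB{p_1}{p_2}} = \sinkTm{p_2}$, so the induction hypothesis on $p_2$ is literally the goal. The sum and star-cons cases are identical modulo which subprefix survives. Note that the context $\Gamma$ plays no role anywhere: every rule we use (\ruleName{T-Eps-R}, \ruleName{T-Par-R}) holds in an arbitrary context and $\sinkTm{p}$ mentions no variables, so the statement's universal quantification over $\Gamma$ comes for free.

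I do not expect a substantive obstacle; the proof is a routine structural induction. The only points requiring care are bookkeeping: (i) correctly inverting $\isMaximal{p}$ to obtain maximality of the relevant subprefixes, which is needed to fire the induction hypotheses, and (ii) matching the inverted form of $\derivrel{p}{s}{s'}$ against the type produced by the right rule, which is underwritten by the determinism of the derivative relation (Theorem~\ref{thm:derivrel-fun}). The one genuinely load-bearing design fact is that inertness behaves well under these constructors---\ruleName{T-Eps-R} admits any annotation and \ruleName{T-Par-R} propagates $\I$---which is exactly what lets us conclude the $\I$ annotation uniformly, rather than having to strengthen the induction hypothesis.
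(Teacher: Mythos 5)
Your proof is correct, and it matches what the paper intends: the paper states Theorem~\ref{thm:sink-typing} without an explicit proof, treating it as exactly the routine structural induction you carry out (inducting on the $\isMaximal{p}$ derivation, inverting $\prefixHasType{p}{s}$ and $\derivrel{p}{s}{s'}$, and closing each case with \ruleName{T-Eps-R} or \ruleName{T-Par-R}, both of which admit the $\I$ annotation in an arbitrary context). Your attention to the load-bearing details---maximality of subprefixes feeding the induction hypotheses, and inertness propagating through the only two rules used---is exactly right, so nothing is missing.
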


The relevant concatenation property of sink terms is that they only depend on the the shape of
the type $s$ \emph{after} the prefix has been emitted, so adding more to the beginning does not change anything.

\begin{theorem}[Sink Term Concatenation]
    \label{thm:sink-concat}
    If $\prefixConcatRel{p}{p'}{p''}$, then
    $\sinkTm{p'} = \sinkTm{p''}$.
\end{theorem}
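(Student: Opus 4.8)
The plan is to prove this by induction on the derivation of $\prefixConcatRel{p}{p'}{p''}$, since prefix concatenation is defined inductively and the statement carries no well-typedness hypothesis, so there is no need to simultaneously track types. The cases fall naturally into three groups according to the shape of the concatenation rule, and in every group the goal reduces either to a trivial syntactic equality or to a direct application of the induction hypothesis.

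The first group comprises the rules in which $p$ is an ``empty'' prefix and the second input is returned verbatim: those yielding $\prefixConcatRel{\epsp}{\epsp}{\epsp}$, $\prefixConcatRel{\onepA}{q}{q}$, $\prefixConcatRel{\sumpEmp}{q}{q}$, and $\prefixConcatRel{\stpEmp}{q}{q}$. In each of these $p' = p''$ outright, so $\sinkTm{p'} = \sinkTm{p''}$ is immediate. The two terminal rules $\prefixConcatRel{\onepB}{\epsp}{\onepB}$ and $\prefixConcatRel{\stpDone}{\epsp}{\stpDone}$ are equally direct, since $\sinkTm{\epsp}$, $\sinkTm{\onepB}$, and $\sinkTm{\stpDone}$ all reduce to $\epsTm$ by the defining clauses.

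The second group consists of the ``transparent-wrapper'' cases, where concatenation prepends or re-tags structure that $\sinkTm$ recurses straight through. For parallel, $\prefixConcatRel{\parp{q_1}{q_2}}{\parp{q_1'}{q_2'}}{\parp{q_1''}{q_2''}}$ unfolds $\sinkTm$ on both sides to a $\parpairTm{\cdot}{\cdot}$, and the result follows by applying the induction hypothesis componentwise. For the rules producing a $\catpA{\cdot}$, $\sumpA{\cdot}$, $\sumpB{\cdot}$, or $\stpA{\cdot}$ head on the right---for instance $\prefixConcatRel{\catpA{q}}{\catpA{q'}}{\catpA{q''}}$ or $\prefixConcatRel{\stpA{q}}{\catpA{q'}}{\stpA{q''}}$---the defining clauses for $\sinkTm$ erase the outer constructor on each side, reducing the goal to $\sinkTm{q'} = \sinkTm{q''}$, which is exactly the induction hypothesis on the sub-derivation. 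The tail-continuation rules $\prefixConcatRel{\catpB{q}{q'}}{r}{\catpB{q}{r'}}$ and $\prefixConcatRel{\stpB{q}{q'}}{r}{\stpB{q}{r'}}$ are analogous: $\sinkTm$ applied to a $\catpB{\cdot}{\cdot}$ or $\stpB{\cdot}{\cdot}$ prefix selects its second component, so the goal becomes $\sinkTm{r} = \sinkTm{r'}$, again the induction hypothesis.

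The only cases worth dwelling on---and as close as the proof comes to an obstacle---are the two in which concatenation changes the outer constructor, namely $\prefixConcatRel{\catpA{q}}{\catpB{q'}{r}}{\catpB{q''}{r}}$ and $\prefixConcatRel{\stpA{q}}{\catpB{q'}{r}}{\stpB{q''}{r}}$. Here $p'$ and $p''$ have syntactically different head constructors (in the second rule $\catpB{\cdot}{\cdot}$ versus $\stpB{\cdot}{\cdot}$), so one cannot simply match structure or invoke the induction hypothesis on the first components. What rescues these cases is that $\sinkTm{\catpB{q'}{r}}$, $\sinkTm{\catpB{q''}{r}}$, and $\sinkTm{\stpB{q''}{r}}$ all reduce to $\sinkTm{r}$---the shared, unchanged tail---so the two sides coincide definitionally without any appeal to the sub-derivation. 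This is precisely the informal reading stated just before the theorem: the sink term depends only on the shape of the type after the entire prefix has been emitted, and prepending $p$ to $p'$ leaves that trailing shape intact. With these cases dispatched, the induction is complete.
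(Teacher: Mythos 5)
Your proof is correct and follows exactly the paper's approach, which is a one-line induction on the derivation of $\prefixConcatRel{p}{p'}{p''}$; your case analysis matches the definitions of $\prefixConcatRel{\cdot}{\cdot}{\cdot}$ and $\sinkTm{\cdot}$, and you rightly identify the constructor-changing cases ($\prefixConcatRel{\catpA{q}}{\catpB{q'}{r}}{\catpB{q''}{r}}$ and $\prefixConcatRel{\stpA{q}}{\catpB{q'}{r}}{\stpB{q''}{r}}$) as the ones where the equality holds definitionally via the shared tail rather than via the induction hypothesis. One negligible imprecision: in the sum rules the second input $p'$ is bare (only the output carries the $\sumpA{\cdot}$/$\sumpB{\cdot}$ wrapper), so the constructor is erased only on the right-hand side, but the goal still reduces to the induction hypothesis exactly as you say.
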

\begin{proof}
    By induction on $\prefixConcatRel{p}{p'}{p''}$.
\end{proof}

\begin{theorem}[Fixpoint Substitution]
    \label{thm:fixsubst}
    For $e,e'$ terms, we define $\fixsubst{e}{e'}$ compositionally over the structure of $e$, with
    the only two interesting cases being:
    $$
    \fixsubst{\left(\recTm{\overline{M}}{A}\right)}{e'} = \fixTmHistargs{\overline{M}}{\fixsubst{A}{e'}}{e'}
    $$
    and
    $$
    \fixsubst{\left(\fixTmHistargs{\overline{M}}{A}{e}\right)}{e'} = \fixTmHistargs{\overline{M}}{\fixsubst{A}{e'}}{e}
    $$
    We define this mutually with a substitution for arguments $A$, with $\fixsubst{A}{e'}$ defined compositionally over the structure of $A$.

    Then if $\tck{\Omega}{\Gamma}{\Omega | \Gamma \to s \, @ \, i'}{e'}{s}{i'}$,
    we have:
    \begin{enumerate}
        \item If $\tck{\Omega'}{\Delta}{\Omega|\Gamma \to s \, @ \, i'}{e}{t}{i}$ then $\tck{\Omega'}{\Delta}{\cdot}{\fixsubst{e}{e'}}{t}{i}$
        \item If $\tck{\Omega'}{\Delta}{\Omega | \Gamma \to s \, @ \, i'}{A}{\Gamma'}{i}$, then $\tck{\Omega'}{\Delta}{\cdot}{\fixsubst{A}{e'}}{\Gamma'}{i}$
    \end{enumerate}
\end{theorem}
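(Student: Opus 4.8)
The plan is to prove statements (1) and (2) by a simultaneous structural induction on the typing derivations $\tck{\Omega'}{\Delta}{\Omega \mid \Gamma \to s \, @ \, i'}{e}{t}{i}$ and $\tck{\Omega'}{\Delta}{\Omega \mid \Gamma \to s \, @ \, i'}{A}{\Gamma'}{i}$, treating the fixed term $e'$ and its self-referential signature $\Omega \mid \Gamma \to s \, @ \, i'$ as parameters throughout. Since $\fixsubst{-}{e'}$ is defined compositionally on both terms and argument trees, the uniform strategy in each non-recursive case is: invert the final typing rule to obtain its subderivations (all of which are again under the signature $\Omega \mid \Gamma \to s \, @ \, i'$), apply the induction hypotheses to re-type the substituted subterms under the empty signature $\cdot$, and reassemble the conclusion using the very same rule, now with signature $\cdot$. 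The side conditions attached to these rules — the subtyping premise of \ruleName{T-Sub}, the inertness implication of \ruleName{T-Cat-R}, the buffer and derivative premises of \ruleName{T-Plus-L}, \ruleName{T-Star-L}, and \ruleName{T-Wait}, and every history-program premise $\lctck{\Omega'}{\overline{M}}{\Omega}$ — constrain only types, prefixes, and inertness annotations, never the recursion signature, so they transfer verbatim.

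The first interesting case is \ruleName{T-Rec}, where $e = \recTm{\overline{M}}{A}$. Inverting this rule forces the conclusion to have type $s$ and inertness $i'$ (to match the signature), and produces both $\lctck{\Omega'}{\overline{M}}{\Omega}$ and $\tck{\Omega'}{\Delta}{\Omega \mid \Gamma \to s \, @ \, i'}{A}{\Gamma}{i'}$. Since $\fixsubst{e}{e'} = \fixTmHistargs{\overline{M}}{\fixsubst{A}{e'}}{e'}$, I build a \ruleName{T-Fix} derivation under signature $\cdot$ whose three premises are supplied exactly: the body premise $\tck{\Omega}{\Gamma}{\Omega \mid \Gamma \to s \, @ \, i'}{e'}{s}{i'}$ is the hypothesis of the theorem itself; the history premise is the inverted $\lctck{\Omega'}{\overline{M}}{\Omega}$; and the argument premise $\tck{\Omega'}{\Delta}{\cdot}{\fixsubst{A}{e'}}{\Gamma}{i'}$ comes from induction hypothesis (2). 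This case is the crux: the theorem's hypothesis is precisely the ``body typechecks against its own signature'' fact needed to discharge an unfolding, and it slots directly into the \ruleName{T-Fix} body premise.

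The second interesting case, and the one I expect to demand the most care, is \ruleName{T-Fix}, where $e = \fixTmHistargs{\overline{N}}{A}{e_0}$ and the substitution descends only into the argument tree, leaving the inner body untouched: $\fixsubst{e}{e'} = \fixTmHistargs{\overline{N}}{\fixsubst{A}{e'}}{e_0}$. Soundness here rests on a scoping observation: inverting \ruleName{T-Fix} types $e_0$ under a fresh signature $\Omega_0 \mid \Gamma_0 \to t \, @ \, i$ that is entirely independent of the outer signature $\Omega \mid \Gamma \to s \, @ \, i'$, and the outer $\texttt{rec}$ cannot occur in $e_0$ because it is bound by the inner $\texttt{fix}$; hence there is nothing to substitute in $e_0$ and its derivation carries over unchanged. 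Only the argument tree $A$ still lives under the outer signature, so I apply induction hypothesis (2) to it and reassemble with \ruleName{T-Fix} under signature $\cdot$. This independence-of-the-inner-signature argument is the single delicate point; everything else is bookkeeping. Statement (2) itself is closed by the same mutual induction: the leaf rule \ruleName{T-Args-Sng} appeals to (1), the structural rules \ruleName{T-Args-Emp}, \ruleName{T-Args-Semic-1}, \ruleName{T-Args-Semic-2}, and \ruleName{T-Args-Comma} recurse through (2), and the term rule \ruleName{T-ArgsLet}, which combines an argument tree with an ordinary body, invokes (2) on the former and (1) on the latter.
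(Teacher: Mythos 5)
Your proposal is correct and matches the paper's proof, which is stated there simply as ``a routine simultaneous induction on typing derivations'': your case analysis, including discharging the \textsc{T-Rec} case by building a \textsc{T-Fix} derivation whose body premise is exactly the theorem's hypothesis on $e'$, and leaving the inner body untouched in the \textsc{T-Fix} case, is precisely how that routine induction goes. The only trivial slip is organizational --- \textsc{T-ArgsLet} concludes a \emph{term} judgment and so belongs to the induction for statement (1) rather than (2) --- but you handle the rule itself correctly (IH~(2) on the argument tree, IH~(1) on the body), so nothing in the argument is affected.
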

\begin{proof}

    (1) and (2) are proved by a routine simultaneous induction on typing derivations.
\end{proof}

\subsection{Semantics}
\label{app:semantics}

\begin{definition}[Semantics]
    We define relations $\prefixstep{p}{e}{n}{e'}{p'}$ (as shown in
    Figures~\ref{fig:semantics1} and \ref{fig:semantics2}), and 
    $\argsstep{\eta}{A}{\Gamma}{n}{A'}{\eta'}$ (as shown in Figure~\ref{fig:args-semantics}).
    \end{definition}

    \begin{figure}[t]
    \centering

    \begin{mathpar}

    \inferrule[S-Eps-R]{ }{
        \prefixstep{\eta}{\epsTm}{n}{\epsTm}{\epsp}
    }

    \inferrule[S-One-R]{ }{
        \prefixstep{\eta}{\oneTm}{n}{\epsTm}{\onepB}
    }

    \infer[S-Var]{
        \eta(x) \mapsto p
    }{
        \prefixstep{\eta}{x}{n}{x}{p}
    }

        \ENDOFLINE

    \infer[S-Par-R]{
        \prefixstep{\eta}{e_1}{n_1}{e_1'}{p_1}\\
        \prefixstep{\eta}{e_2}{n_2}{e_2'}{p_2}
    }{
        \prefixstep{\eta}{\parpairTm{e_1}{e_2}}{n_1+n_2}{\parpairTm{e_1'}{e_2'}}{\parp{p_1}{p_2}}
    }

    \infer[S-Cat-R-1]{
        \prefixstep{\eta}{e_1}{n}{e_1'}{p}\\
        \neg \left(\isMaximal{p}\right)
    }{
        \prefixstep{\eta}{\catpairTm{e_1}{e_2}}{n}{\catpairTm{e_1'}{e_2}}{\catpA{p}}
    }

        \ENDOFLINE

    \infer[S-Cat-R-2]{
        \prefixstep{\eta}{e_1}{n_1}{e_1'}{p_1}\\
        \isMaximal{p_1}\\
        \prefixstep{\eta}{e_2}{n_2}{e_2'}{p_2}\\
    }{
        \prefixstep{\eta}{\catpairTm{e_1}{e_2}}{n_1+n_2}{e_2'}{\catpB{p_1}{p_2}}
    }

        \ENDOFLINE

    \infer[S-Par-L]{
        \eta(z) \mapsto \parp{p_1}{p_2}\\
        \prefixstep{\eta[x\mapsto p_1,y\mapsto p_2]}{e}{n}{e'}{p'}
    }{
        \prefixstep{\eta}{\letparTm{x}{y}{z}{e}}{n}{\letparTm{x}{y}{z}{e}}{p}
    }

        \ENDOFLINE

    \infer[S-Cat-L-1]{
        \eta(z) \mapsto \catpA{p}\\
        \prefixstep{\eta[x\mapsto p,y\mapsto \emp{t}]}{e}{n}{e'}{p'}
    }{
        \prefixstep{\eta}{\letcatTm{t}{x}{y}{z}{e}}{n}{\letcatTm{t}{x}{y}{z}{e'}}{p'}
    }

        \ENDOFLINE

    \infer[S-Cat-L-2]{
        \eta(z) \mapsto \catpB{p_1}{p_2}\\
        \prefixstep{\eta[x\mapsto p_1,y\mapsto p_2]}{e}{n}{e'}{p}
    }{
        \prefixstep{\eta}{\letcatTm{t}{x}{y}{z}{e}}{n}{\cutTm{x}{\sinkTm{p_1}}{e'[z/y]}}{p}
    }

    \infer[S-Plus-R-1]{
        \prefixstep{\eta}{e}{n}{e'}{p}
    }{
        \prefixstep{\eta}{\inl{e}}{n}{e'}{\inl{p}}
    }

    \infer[S-Plus-R-2]{
        \prefixstep{\eta}{e}{n}{e'}{p}
    }{
        \prefixstep{\eta}{\inr{e}}{n}{e'}{\inr{p}}
    }

        \ENDOFLINE
    \infer[S-Plus-L-1]{
        \prefixConcatRel{\eta'}{\eta}{\eta''}\\
        \eta''(z) = \sumpEmp
    }{
        \prefixstep{\eta}{\sumcaseTm{r}{\eta'}{z}{x}{e_1}{y}{e_2}}{n}{\sumcaseTm{r}{\eta''}{z}{x}{e_1}{y}{e_2}}{\emp{r}}
    }

        \ENDOFLINE
    \infer[S-Plus-L-2]{
        \prefixConcatRel{\eta'}{\eta}{\eta''}\\
        \eta''(z) = \sumpA{p}\\
        \prefixstep{\eta''[x \mapsto p]}{e_1}{n}{e_1'}{p'}
    }{
        \prefixstep{\eta}{\sumcaseTm{r}{\eta'}{z}{x}{e_1}{y}{e_2}}{n}{e_1'[z/x]}{p'}
    }

        \ENDOFLINE
    \infer[S-Plus-L-3]{
        \prefixConcatRel{\eta'}{\eta}{\eta''}\\
        \eta''(z) = \sumpB{p}\\
        \prefixstep{\eta''[y \mapsto p]}{e_2}{n}{e_2'}{p'}
    }{
        \prefixstep{\eta}{\sumcaseTm{r}{\eta'}{z}{x}{e_1}{y}{e_2}}{n}{e_2'[z/y]}{p'}
    }
    \end{mathpar}

    \caption{Semantics (part 1)}
    \label{fig:semantics1}
    \end{figure}

    \begin{figure}[t]
    \centering

    \begin{mathpar}
    \inferrule[S-Star-R-1]{ }{
        \prefixstep{\eta}{\nilTm}{n}{\epsTm}{\stpDone}
    }

    \infer[S-Star-R-2-1]{
        \prefixstep{\eta}{e_1}{n}{e_1'}{p}\\
        \neg \left(\isMaximal{p}\right)
    }{
        \prefixstep{\eta}{\consTm{e_1}{e_2}}{n}{\catpairTm{e_1'}{e_2}}{\stpA{p}}
    }

        \ENDOFLINE
    \infer[S-Star-R-2-2]{
        \prefixstep{\eta}{e_1}{n_1}{e_1'}{p_1}\\
        \isMaximal{p_1}\\
        \prefixstep{\eta}{e_2}{n_2}{e_2'}{p_2}\\
    }{
        \prefixstep{\eta}{\consTm{e_1}{e_2}}{n_1+n_2}{e_2'}{\stpB{p_1}{p_2}}
    }

        \ENDOFLINE
    \infer[S-Star-L-1]{
        \prefixConcatRel{\eta'}{\eta}{\eta''}\\
        \eta''(z) = \stpEmp
    }{
        \prefixstep{\eta}{\starcaseTm{s,r}{\eta'}{z}{e_1}{x}{xs}{e_2}}{n}{\starcaseTm{s,r}{\eta''}{z}{e_1}{x}{xs}{e_2}}{\emp{r}}
    }

        \ENDOFLINE
    \infer[S-Star-L-2]{
        \prefixConcatRel{\eta'}{\eta}{\eta''}\\
        \eta''(z) = \stpDone\\
        \prefixstep{\eta''}{e_1}{n}{e_1'}{p}
    }{
        \prefixstep{\eta}{\starcaseTm{s,r}{\eta'}{z}{e_1}{x}{xs}{e_2}}{n}{e_1'}{p}
    }

        \ENDOFLINE
    \infer[S-Star-L-3]{
        \prefixConcatRel{\eta'}{\eta}{\eta''}\\
        \eta''(z) = \stpA{p}\\
        \prefixstep{\eta''[x \mapsto p,y\mapsto \emp{s^\star}]}{e_2}{n}{e_2'}{p'}
    }{
        \prefixstep{\eta}{\starcaseTm{s,r}{\eta'}{z}{e_1}{x}{xs}{e_2}}{n}{\letcatTm{s^\star}{x}{y}{z}{e_2'}}{p'}
    }

        \ENDOFLINE
    \infer[S-Star-L-4]{
        \prefixConcatRel{\eta'}{\eta}{\eta''}\\
        \eta''(z) = \stpB{p}{p'}\\
        \prefixstep{\eta''[x \mapsto p,y\mapsto p']}{e_2}{n}{e_2'}{p''}
    }{
        \prefixstep{\eta}{\starcaseTm{s,r}{\eta'}{z}{e_1}{x}{xs}{e_2}}{n}{\cutTm{x}{\sinkTm{p}}{e_2'[z/xs]}}{p''}
    }

        \ENDOFLINE
    \infer[S-Let]{
        \prefixstep{\eta}{e_1}{n_1}{e_1'}{p}\\
        \prefixstep{\eta[x\mapsto p]}{e_2}{n_2}{e_2'}{p'}
    }{
        \prefixstep{\eta}{\cutTm{x}{e_1}{e_2}}{n_1+n_2}{\cutTm{x}{e_1'}{e_2'}}{p'}
    }

    \inferrule[S-HistPgm]{
        M \downarrow v\\
        p = \histValToPrefix{v}{s}
    }{
        \prefixstep{\eta}{\histPgmTm{M}{s}}{n}{\sinkTm{p}}{p}
    }

        \ENDOFLINE
    \infer[S-Wait-1]{
        \prefixConcatRel{\eta'}{\eta}{\eta''}\\
        \eta''(x) = p\\
        \neg \left(\isMaximal{p}\right)
    }{
        \prefixstep{\eta}{\waitTm{\eta'}{t}{x}{e}}{n}{\waitTm{\eta''}{t}{x}{e}}{\emp{t}}
    }

        \ENDOFLINE
    \infer[S-Wait-2]{
        \prefixConcatRel{\eta'}{\eta}{\eta''}\\
        \eta''(x) = p\\
        \isMaximal{p}\\
        \prefixstep{\eta''}{e[\flatten{p}/x]}{n}{e'}{p'}
    }{
        \prefixstep{\eta}{\waitTm{\eta'}{t}{x}{e}}{n}{e'}{p'}
    }

        \ENDOFLINE
    \infer[S-Fix]{
        \overline{M} \downarrow \theta\\
        \prefixstep{\eta}{\cutTm{\Gamma}{A}{\fixsubst{e}{e}[\theta]}}{n}{e'}{p}
    }{
        \prefixstep{\eta}{\fixTmHistargs{\overline{M}}{A}{e}}{n+1}{e'}{p}
    }

    \infer[S-ArgsLet]{
        \argsstep{\eta}{A}{\Gamma}{n_1}{A'}{\eta'}\\
        \prefixstep{\eta'}{e}{n_2}{e'}{p}\\
        \derivrel{\eta'}{\Gamma}{\Gamma'}
    }{
        \prefixstep{\eta}{\cutTm{\Gamma}{A}{e}}{n_1 + n_2}{\cutTm{\Gamma'}{A'}{e'}}{p}
    }

    \end{mathpar}

    \caption{Semantics (part 2)}
    \label{fig:semantics2}
    \end{figure}

    \begin{figure}[t]
    \centering

    \caption{Arguments Semantics}
    \begin{mathpar}

        \INFERRULE[S-Args-Emp]{ }{
            \argsstep{\eta}{\cdot}{\cdot}{n}{\cdot}{\{\}}
        }

        \INFERRULE[S-Args-Sng]{
            \prefixstep{\eta}{e}{n}{e'}{p}
        }{
            \argsstep{\eta}{e}{\left(x : s\right)}{n}{e'}{\left\{x \mapsto p\right\}}
        }

        \INFERRULE[S-Args-Comma]{
            \argsstep{\eta}{A_1}{\Gamma}{n_1}{A_1'}{\eta_1}\\
            \argsstep{\eta}{A_2}{\Gamma'}{n_2}{A_2'}{\eta_2}
        }{
            \argsstep{\eta}{\parpairArgsTm{A_1}{A_2}}{\commactx{\Gamma}{\Gamma'}}{n_1+n_2}{\parpairArgsTm{A_1'}{A_2'}}{\eta_1 \cup \eta_2}
        }

        \INFERRULE[S-Args-Semic-1-1]{
            \argsstep{\eta}{A_1}{\Gamma}{n_1}{A_1'}{\eta_1}\\
            \neg\left(\maximalOn{\eta_1}{\Gamma}\right)\\
        }{
            \argsstep{\eta}{\catpairArgsTmA{A_1}{A_2}}{\semicctx{\Gamma}{\Gamma'}}{n_1}{\catpairArgsTmA{A_1'}{A_2}}{\eta_1 \cup \emp{\Gamma'}}
        }

        \INFERRULE[S-Args-Semic-1-2]{
            \argsstep{\eta}{A_1}{\Gamma}{n_1}{A_1'}{\eta_1}\\
            \maximalOn{\eta_1}{\Gamma}\\
            \argsstep{\eta}{A_2}{\Gamma'}{n_2}{A_2'}{\eta_2}\\
        }{
            \argsstep{\eta}{\catpairArgsTmA{A_1}{A_2}}{\semicctx{\Gamma}{\Gamma'}}{n_1+n_2}{\catpairArgsTmB{}{A_2'}}{\eta_1 \cup \eta_2}
        }

        \INFERRULE[S-Args-Semic-2]{
            \argsstep{\eta}{A}{\Gamma'}{n}{A'}{\eta'}
        }{
            \argsstep{\eta}{\catpairArgsTmB{A}}{\semicctx{\Gamma}{\Gamma'}}{n}{\catpairArgsTmB{A'}}{\emp{\Gamma} \cup \eta'}
        }

    \end{mathpar}
    \label{fig:args-semantics}
    \end{figure}

\subsubsection*{Recursive Argument Semantics}
The arguments semantics $\argsstep{\eta}{A}{\Gamma}{n}{A'}{\eta'}$ accepts an
environment $\eta$ and runs it through $A$ to produce an environment
$\envHasType{\eta'}{\Gamma}$. This relation is essentially the same as evaluating
a large nested tree \ruleName{T-Cat-R} \ruleName{T-Par-R} terms, structured
like the context $\Gamma$. The only difference is that, because context
derivatives do not remove the left component of a semicolon context (the
$\Gamma$ in $\semicctx{\Gamma}{\Delta}$) after a maximal prefix has arrived, we have a special term former $\catpairArgsTmB{A'}$
for cat-pair terms $\catpairArgsTmA{A}{A'}$ that have crossed over.
The context is required in the semantics so we can compute the empty environment in \ruleName{S-Args-Semic-1-1} and \ruleName{S-Args-Semic-2}.

\subsubsection*{Semantics of Buffering}
The semantics for \ruleName{Plus-L} and\ruleName{Star-L}
and \ruleName{Wait} buffer in their inputs until enough
of the input has arrived to run the term, where the particular value of ``enough''
depends on the rule in question.

To illustrate, consider the rules for Wait (\ruleName{S-Wait-1} and \ruleName{S-Wait-2} in Figure~\ref{fig:semantics2}).
In both cases, we take the incoming environment $\eta$, and concatenate it onto
the buffer $\eta'$, to get the combined
prefix $\eta''$. We then dispatch on whether $\eta''$ is enough input to run the
continuation $e$. In this case, ``enough'' means that $\eta''$ contains a
maximal prefix $p$ of $x : s$. If it does (\ruleName{P-Wait-2}), we run the continuation,
substituting the maximal prefix in for the (historical) occurrences of $x$. If it does not, we simply save $\eta''$ as the
new buffer in the resulting $\texttt{wait}$ term, and return the empty prefix in \ruleName{P-Wait-1}.

The semantics for \ruleName{Plus-L} and \ruleName{Star-L} are similar:
in all cases, we add the incoming prefix to the buffer, and then project
from the buffer. If not enough data has arrived, we return the empty prefix
and step to the same term but with an updated buffer.

\subsubsection*{Maximal Semantics Theorem}
If all input prefixes are maximal and the step terminates, then the output
prefixes are maximal. The contrapositive of this fact is crucial: if the output
of a step is not maximal, than some stream in the input must still be sending more
data.

\begin{lemma}[Maximal Semantics Auxiliary]
\label{lemma:batch-sem-aux}
$\,$
\begin{enumerate}
    \item If $\prefixstep{\eta}{e}{ }{e'}{p}$ and $\maximalOn{\eta}{e}$
we have that $\isMaximal{p}$.
    \item If $\argsstep{\eta}{A}{\Gamma}{ }{e'}{\eta'}$ and $\maximalOn{\eta}{A}$ then $\maximalOn{\eta'}{\Gamma}$
\end{enumerate}

\end{lemma}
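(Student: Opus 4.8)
The plan is to prove the two statements simultaneously by induction on the derivations of $\prefixstep{\eta}{e}{ }{e'}{p}$ and $\argsstep{\eta}{A}{\Gamma}{ }{A'}{\eta'}$, exploiting that $\maximalOn{\eta}{e}$ unfolds (Definition~\ref{def:agree}) to maximality of $\eta$ on every free variable of $e$. The recurring move is that the free variables of each immediate subterm are contained in those of $e$, up to the variables freshly bound by the step, so the induction hypothesis applies to each recursive premise once we check that the environment handed to it is still maximal on the relevant variables. Two structural facts about $\isMaximal{\cdot}$ drive almost everything: there is no maximality rule for the prefixes $\catpA{\cdot}$, $\stpA{\cdot}$, $\sumpEmp$, or $\stpEmp$, while $\parp{\cdot}{\cdot}$, $\catpB{\cdot}{\cdot}$, $\stpB{\cdot}{\cdot}$, $\sumpA{\cdot}$, and $\sumpB{\cdot}$ are maximal exactly when their components are.

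First I would dispatch the direct cases. \ruleName{S-Eps-R}, \ruleName{S-One-R}, \ruleName{S-Star-R-1}, and \ruleName{S-HistPgm} emit $\epsp$, $\onepB$, $\stpDone$, or the flattening of a value, each maximal by definition, and \ruleName{S-Var} returns $\eta(x)$, maximal by hypothesis. For the product right rules, maximality of $\eta$ on $\text{fv}(e_1)$ and $\text{fv}(e_2)$ lets the induction hypothesis conclude the components are maximal, so \ruleName{S-Par-R} yields a maximal $\parp{p_1}{p_2}$ and \ruleName{S-Cat-R-2}, \ruleName{S-Star-R-2-2} yield maximal $\catpB{p_1}{p_2}$, $\stpB{p_1}{p_2}$. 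The crucial point is that the ``incomplete'' rules are vacuous: in \ruleName{S-Cat-R-1} and \ruleName{S-Star-R-2-1} the premise runs $e_1$, and since $\eta$ is maximal on $\text{fv}(e_1)$ the induction hypothesis forces the produced prefix to be maximal, contradicting the side condition $\neg(\isMaximal{p})$. The left rules are handled dually: $z$ is free in the term, so $\eta(z)$ is maximal, which rules out $\eta(z) = \catpA{\cdot}$ in \ruleName{S-Cat-L-1} and, in \ruleName{S-Par-L} and \ruleName{S-Cat-L-2}, forces the decomposed components bound to $x,y$ to be maximal, keeping the rebound environment maximal on the continuation's free variables.

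The subtler cases are the buffering rules \ruleName{S-Plus-L-2}, \ruleName{S-Star-L-2}, and the wait rules, which first concatenate the stored buffer $\eta'$ with the incoming $\eta$ via $\prefixConcatRel{\eta'}{\eta}{\eta''}$ and then dispatch on $\eta''(z)$ or $\eta''(x)$. Here I would invoke Theorem~\ref{thm:concat-maximal}: since $\eta(z)$ is maximal (as $z$ is free in the term), the concatenation $\eta''(z)$ is maximal as well. This single observation makes every ``not enough input'' branch vacuous---$\eta''(z)$ cannot equal the non-maximal $\sumpEmp$, $\stpEmp$, or $\stpA{\cdot}$, nor can $\eta''(x)$ be non-maximal in \ruleName{S-Wait-1}---and in the remaining branches it tells us the matched prefix has maximal components, so the environment fed to the continuation is maximal on its free variables and the induction hypothesis closes the case. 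The sequencing rules \ruleName{S-Let}, \ruleName{S-Fix}, and \ruleName{S-ArgsLet} then follow routinely, the last crossing over into part (2).

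For part (2) the induction mirrors part (1): \ruleName{S-Args-Emp} and \ruleName{S-Args-Sng} are immediate (the latter by part (1)), \ruleName{S-Args-Comma} and \ruleName{S-Args-Semic-1-2} combine maximal sub-environments, and \ruleName{S-Args-Semic-1-1} is vacuous because its side condition $\neg(\maximalOn{\eta_1}{\Gamma})$ contradicts the induction hypothesis. The main obstacle is \ruleName{S-Args-Semic-2}, whose output $\emp{\Gamma} \cup \eta'$ binds the already-consumed left context $\Gamma$ to its \emph{empty} environment, and $\emp{\Gamma}$ is maximal only when $\nullable{\Gamma}$. This is exactly the invariant enforced by the companion typing rule \ruleName{T-Args-Semic-2}, whose premise requires the left component to be nullable, together with the easy auxiliary fact (a short induction on $s$ using the $\varepsilon$ and $\|$ cases) that $\emp{s}$ is maximal whenever $\nullable{s}$. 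Discharging this case cleanly---threading just enough well-typedness to know the consumed left context is nullable---is the one place the proof needs more than the bare shape of the derivation, and is where I would focus the care.
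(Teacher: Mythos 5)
Your proof matches the paper's almost case for case: the paper's proof of Lemma~\ref{lemma:batch-sem-aux} is likewise a mutual induction on the two semantics derivations, discharging the incomplete-prefix rules (\ruleName{S-Cat-R-1}, \ruleName{S-Cat-L-1}, \ruleName{S-Plus-L-1}, \ruleName{S-Star-L-1}, \ruleName{S-Star-L-3}, \ruleName{S-Wait-1}, \ruleName{S-Args-Semic-1-1}) vacuously via maximality contradictions and handling the buffering rules via Theorem~\ref{thm:env-concat-maximal}, the environment-level form of the Theorem~\ref{thm:concat-maximal} you cite. Your worry about \ruleName{S-Args-Semic-2} is well placed and is in fact a point the paper glosses: its case reads only ``Immediate by IH,'' which yields maximality of $\eta'$ on $\Gamma'$ but silently leaves the $\emp{\Gamma}$ half of the output environment unjustified, and since the lemma as stated carries no typing hypotheses, the nullability invariant from \ruleName{T-Args-Semic-2} that you propose threading in (together with the easy fact that $\nullable{s}$ implies $\isMaximal{\emp{s}}$) is genuinely needed to close that case.
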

    \jtheorem{}{
        By mutual induction on the derivation of $\prefixstep{\eta}{e}{ }{e'}{p}$
        and $\argsstep{\eta}{A}{\Gamma}{ }{e'}{\eta'}$.

        \jcase{1}{S-Var}{Immediate.}

        \jcase{2}{S-Eps-R}{Immediate.}

        \jcase{3}{S-One-R}{Immediate.}

        \jcase{4}{S-Par-R}{
            \jgivengoal{
                \caseFact{1} $\prefixstep{\eta}{e_1}{ }{e_1'}{p_1}$

                \caseFact{2} $\prefixstep{\eta}{e_2}{ }{e_2'}{p_2}$

                \caseFact{3} $\maximalOn{\eta}{\parpairTm{e_1}{e_2}}$
            }{
                $\isMaximal{\parp{p_1}{p_2}}$
            }

            \caseText{By (3) and the definition of $\text{fv}$}

            \caseFact{4} For $i=1,2$, for all $x \in \text{fv}(e_i)$, there is some $\isMaximal{p}$ such that $\eta(x) \mapsto p$.

            \caseText{By IH on (1)}

            \caseFact{5} $p_1$ is maximal

            \caseText{By IH on (2)}

            \caseFact{6} $p_1$ is maximal

            \caseText{The goal follows by (5) and (6).}
        }

        \jcase{5}{S-Cat-R-1}{
            \jgivengoal{
                \caseFact{1} $\prefixstep{\eta}{e_1}{ }{e_1'}{p_1}$

                \caseFact{2} $\neg\left(\isMaximal{p_1}\right)$

                \caseFact{3} $\maximalOn{\eta}{\catpairTm{e_1}{e_2}}$
            }{
                $\isMaximal{\catpA{p}}$
            }

            \caseText{By (3) and the definition of $\text{fv}$}

            \caseFact{4} For all $x \in \text{fv}(e_1)$, there is some $\isMaximal{p}$ such that $\eta(x) \mapsto p$.

            \caseText{By IH on (1)}

            \caseFact{5} $p_1$ is maximal

            \caseText{(2) and (5) are a contradiction}
        }

        \jcase{6}{S-Cat-R-2}{
            \jgivengoal{
                \caseFact{1} $\prefixstep{\eta}{e_1}{ }{e_1'}{p_1}$

                \caseFact{2} $\isMaximal{p_1}$

                \caseFact{3} $\prefixstep{\eta}{e_2}{ }{e_2'}{p_2}$

                \caseFact{5} $\maximalOn{\eta}{e_i}$ for $i=1,2$.
            }{
                $\isMaximal{\catpB{p_1}{p_2}}$
            }

            \caseText{By (4) and the definition of $\text{fv}$}

            \caseFact{5} For all $i=1,2$, for all $x \in \text{fv}(e_1)$, there is some $\isMaximal{p}$ such that $\eta(x) \mapsto p$.

            \caseText{By IH on (3)}

            \caseFact{6} $\isMaximal{p_2}$

            \caseText{The conclusion follows by (2) and (6)}
        }

        \jcase{7}{S-Par-L}{
            \jgivengoal{
                \caseFact{1} $\eta(z) \mapsto \parp{p_1}{p_2}$

                \caseFact{2} $\prefixstep{\eta[x\mapsto p_1,y\mapsto p_2]}{e}{ }{e'}{p}$

                \caseFact{3} $\maximalOn{\eta}{\letparTm{x}{y}{z}{e}}$
            }{
                $\isMaximal{p}$
            }
            \caseText{By (1) and (3), using the fact that $\text{fv}(\letparTm{x}{y}{z}{e}) = \{z\} \cup \text{fv}(e) \setminus \{x,y\}$}

            \caseFact{4} $\isMaximal{\parp{p_1}{p_2}}$

            \caseText{By inversion on (4)}

            \caseFact{5} $\isMaximal{p_1}$

            \caseFact{6} $\isMaximal{p_2}$

            \caseText{By (3), (5), and (6)}

            \caseFact{7} For all $u \in \text{fv}(e)$, there is some $\isMaximal{p'}$ such that $\eta[x \mapsto p_1,y\mapsto p_2](u) \mapsto p'$.

            \caseText{The conclusion follows by IH on (2), using (7)}
        }

        \jcase{8}{S-Cat-L-1}{
            \jgivengoal{
                \caseFact{1} $\eta(z) \mapsto \catpA{p}$

                \caseFact{2} $\prefixstep{\eta[x\mapsto p_1,y\mapsto \emp{t}]}{e}{ }{e'}{p'}$

                \caseFact{3} $\maximalOn{\eta}{\letcatTm{t}{x}{y}{z}{e}}$
            }{
                $\isMaximal{p'}$
            }
            \caseText{By (3), since $z \in \text{fv}(\letcatTm{t}{x}{y}{z}{e})$, we have}

            \caseFact{4} $\isMaximal{\catpA{p}}$

            \caseText{But (4) is a contradiction, so the conclusion follows.}
        }

        \jcase{9}{S-Cat-L-2}{
            \jgivengoal{
                \caseFact{1} $\eta(z) \mapsto \catpB{p_1}{p_2}$

                \caseFact{2} $\prefixstep{\eta[x\mapsto p_1,y\mapsto p_2]}{e}{ }{e'}{p'}$

                \caseFact{3} $\maximalOn{\eta}{\letcatTm{t}{x}{y}{z}{e}}$
            }{
                $\isMaximal{p'}$
            }

            \caseText{By (3), since $z \in \text{fv}(\letcatTm{t}{x}{y}{z}{e})$, we have}

            \caseFact{4} $\isMaximal{\catpB{p_1}{p_2}}$

            \caseText{Inverting (4)}

            \caseFact{5} $\isMaximal{p_1}$

            \caseFact{6} $\isMaximal{p_2}$

            \caseText{By (2), (5), and (6)}

            \caseFact{7} For all $u \in \text{fv}(e)$, there is some $\isMaximal{p'}$ such that $\eta[x \mapsto p_1,y\mapsto p_2](u) \mapsto p'$.

            \caseText{The goal follows by IH on (2), using (7)}
        }

        \jcase{10}{S-Plus-R-1}{
            \caseText{Immediate by IH, using the fact that $\text{fv}(\inr{e}) = \text{fv}(e)$}
        }

        \jcase{11}{S-Plus-R-2}{
            \caseText{Identical to previous}
        }

        \jcase{12}{S-Plus-L-1}{
            \jgivengoal{
                \caseFact{1} $\prefixConcatRel{\eta'}{\eta}{\eta''}$

                \caseFact{2} $\eta''(z) = \sumpEmp$

                \caseFact{3} $\maximalOn{\eta}{{\sumcaseTm{z}{\eta'}{r}{x}{e_1}{y}{e_2}}}$

            }{
                $\isMaximal{\emp{r}}$
            }

            \caseText{By (3), there exists some $p$ such that:}

            \caseFact{4} $\isMaximal{p}$

            \caseFact{5} $\eta(z) = p$.

            \caseText{By Theorem~\ref{thm:env-concat-maximal} on (1), (4) and (5)}

            \caseFact{6} $\isMaximal{\eta''(z)}$

            \caseText{But (2) and (6) are contradictory, since $\sumpEmp$ is not maximal.}
        }

        \jcase{13}{S-Plus-L-2}{
            \jgivengoal{
                \caseFact{1} $\prefixConcatRel{\eta'}{\eta}{\eta''}$

                \caseFact{2} $\eta''(z) = \sumpA{p}$

                \caseFact{3} $\prefixstep{\eta''[x \mapsto p]}{e_1}{ }{e_1'}{p'}$

                \caseFact{4} $\maximalOn{\eta}{{\sumcaseTm{z}{\eta'}{r}{x}{e_1}{y}{e_2}}}$

            }{
                \isMaximal{p'}
            }
                 \caseText{By Theorem~\ref{thm:env-concat-maximal} on
                   (1) and (4), we have}

                \caseFact{5} For all $x \in \text{fv}({\sumcaseTm{z}{\eta'}{r}{x}{e_1}{y}{e_2}})$, there is some $\isMaximal{p}$ such that $\eta''(x) \mapsto p$.

                \caseText{By (5) and (2)}

                \caseFact{6} $\isMaximal{\sumpA{p}}$

                \caseText{Inverting (6)}

                \caseFact{7} $\isMaximal{p}$

                \caseText{By (5) and (7)}

                \caseFact{8} For all $u \in \text{fv}(e_1)$, there is some $\isMaximal{p'}$ such that $\eta''[x \mapsto p](u) \mapsto p'$.

                \caseText{The goal follows immediately by IH.}
        }

        \jcase{14}{S-Plus-L-3}{
            \caseText{Identical to previous}
        }

        \jcase{15}{S-Star-R-1}{Immediate.}

        \jcase{16}{S-Star-R-2-1}{Identical to S-Cat-R-1}

        \jcase{17}{S-Star-R-2-2}{Identical to S-Cat-R-2}

        \jcase{18}{S-Star-L-1}{Identical to S-Plus-L-1}

        \jcase{19}{S-Star-L-2}{
            \jgivengoal{
                \caseFact{1} $\prefixConcatRel{\eta'}{\eta}{\eta''}$

                \caseFact{2} $\eta''(z) = \stpDone$

                \caseFact{3} $\prefixstep{\eta''}{e_1}{n}{e_1'}{p}$

                \caseFact{4} $\maximalOn{\eta}{{\starcaseTm{s,r}{\eta'}{z}{e_1}{x}{xs}{e_2}}}$
            }{
                \isMaximal{p}
            }

            \caseText{By Theorem~\ref{thm:env-concat-maximal} on (1) and (4), and specializing to $\text{fv}(e_1)$ have}

            \caseFact{5} For all $x \in \text{fv}(e_1)$, there is some $\isMaximal{p}$ so that $\eta''(x) = p$.

            \caseText{The goal follows immediately by (5) and IH on (3).}

        }

        \jcase{20}{S-Star-L-3}{
            \jgivengoal{
                \caseFact{1} $\prefixConcatRel{\eta'}{\eta}{\eta''}$

                \caseFact{2} $\eta''(z) = \stpA{p}$

                \caseFact{3} $\prefixstep{\eta''}{e_1}{n}{e_1'}{p}$

                \caseFact{4} $\maximalOn{\eta}{{\starcaseTm{s,r}{\eta'}{z}{e_1}{x}{xs}{e_2}}}$
            }{
                \isMaximal{p}
            }
                \caseText{By Theorem~\ref{thm:env-concat-maximal} on (1) and (4), we have}

                \caseFact{5} $\maximalOn{\eta''}{{\starcaseTm{s,r}{\eta'}{z}{e_1}{x}{xs}{e_2}}}$

                \caseText{In particular, $\isMaximal{\eta''(z)}$, but this is a contradiction with (2).}
        }

        \jcase{21}{S-Star-L-3}{Identical to S-Plus-L-2}

        \jcase{22}{S-Let}{
            \jgivengoal{
                \caseFact{1} $\prefixstep{\eta}{e_1}{n_1}{e_1'}{p}$

                \caseFact{2} $\prefixstep{\eta[x\mapsto p]}{e_2}{n_2}{e_2'}{p'}$

                \caseFact{4} $\maximalOn{\eta}{\cutTm{x}{e_1}{e_2}}$
            }{
                \isMaximal{p'}
            }
            \caseText{By IH on (1), using the fact that $\text{fv}(e_1) \subseteq \text{fv}(\cutTm{x}{e_1}{e_2})$}, we have

            \caseFact{4} $\isMaximal{p}$

            \caseText{By (3), using the fact that $\text{fv}(e_2) \setminus \{x\} \subseteq \text{fv}(\cutTm{x}{e_1}{e_2})$}

            \caseFact{5} For all $y \in \text{fv}(e_2)$, there exists $\isMaximal{p'}$ so that $\eta[x \mapsto p](y) \mapsto p'$

            \caseText{The goal follows immediately y IH on (2) and (5)}
        }

        \jcase{23}{S-HistPgm}{Immediate by Definition~\ref{def:type-ctx-flatten}}

        \jcase{24}{S-Wait-1}{Identical to Plus-L-1.}

        \jcase{24}{S-Wait-2}{
            \jgivengoal{
                \caseFact{1} $\prefixConcatRel{\eta'}{\eta}{\eta''}$

                \caseFact{2} $\eta''(x) = p$

                \caseFact{3} $\isMaximal{p}$

                \caseFact{4} $\prefixstep{\eta''}{e[\flatten{p}/x]}{n}{e'}{p'}$

                \caseFact{4} $\maximalOn{\eta}{\waitTm{\eta}{t}{x}{e}}$
            }{
                \isMaximal{p'}
            }

            \caseText{By Theorem~\ref{thm:env-concat-maximal} on (5) and (1):}

            \caseFact{6} For all $y \in \text{fv}(\waitTm{\eta}{t}{x}{e})$, there is some $\isMaximal{p}$ such that $\eta''(y)\mapsto p$.

            \caseText{But since $\text{fv}(e[\flatten{p}/x]) \subset \text{fv}(\waitTm{\eta}{t}{x}{e})$, we have:}

            \caseFact{7} For all $y \in \text{fv}(e[\flatten{p}/x])$, there is some $\isMaximal{p}$ such that $\eta''(y)\mapsto p$.

            \caseText{The goal follows immediately by IH on (4) with (7)}
        }

        \jcase{25}{S-Fix}{
            \jgivengoal{
                \caseFact{1} $\overline{M} \downarrow \theta$

                \caseFact{2} $\prefixstep{\eta}{\cutTm{\Gamma}{A}{\fixsubst{e}{e}[\theta]}}{n}{e'}{p}$

                \caseFact{3} $\maximalOn{\eta}{\fixTmHistargs{\overline{M}}{A}{e}}$
            }{
                $\isMaximal{p}$
            }
            \caseText{By noting that $\text{fv}(\fixTmHistargs{\overline{M}}{A}{e}) = \text{fv}(A)$ and $\text{fv}({\cutTm{\Gamma}{A}{\fixsubst{e}{e}[\theta]}}) = \text{fv}(A)$:}

            \caseText{4} $\maximalOn{\eta}{{\cutTm{\Gamma}{A}{\fixsubst{e}{e}[\theta]}}}$

            \caseText{Goal follows immediately by IH on (1) and (3)}
        }

        \jcase{26}{S-ArgsLet}{Immediate by two uses of IH.}

        \jcase{27}{S-Args-Emp}{Immediate.}
        \jcase{28}{S-Args-Sng}{Immediate by IH}
        \jcase{29}{S-Args-Comma}{Same as \ruleName{S-Par-R}}
        \jcase{30}{S-Args-Semic-1-1}{
            \jgivengoal{
                \caseFact{1} $\argsstep{\eta}{A_1}{\Gamma}{n_1}{A_1'}{\eta_1}$

                \caseFact{2} $\neg\left(\maximalOn{\eta_1}{\Gamma}\right)$

                \caseFact{3} $\maximalOn{\eta}{\catpairArgsTmA{A_1}{A_2}}$
            }{
                $\maximalOn{\eta_1 \cup \emp{\Gamma'}}{\commactx{\Gamma}{\Gamma'}}$

            }
                \caseText{By (3)}

                \caseFact{4} $\maximalOn{\eta}{A_1}$

                \caseText{By IH on (1) and (4)}

                \caseFact{5} $\maximalOn{\eta_1}{\Gamma}$

                \caseText{(2) and (5) are a contradiction.}
        }
        \jcase{31}{S-Args-Semic-1-2}{
            \jgivengoal{
                
                \caseFact{1} $\argsstep{\eta}{A_1}{\Gamma}{n_1}{A_1'}{\eta_1}$

                \caseFact{2} $\maximalOn{\eta_1}{\Gamma}$

                \caseFact{3} $\argsstep{\eta}{A_2}{\Gamma'}{n_2}{A_2'}{\eta_2}$

                \caseFact{4} $\maximalOn{\eta}{\catpairArgsTmA{A_1}{A_2}}$
            }{
                $\maximalOn{\eta_1 \cup \eta_2}{\commactx{\Gamma}{\Gamma'}}$
            }

            \caseText{By IH on (3) and (4)}

            \caseFact{5} $\maximalOn{\eta_2}{\Gamma'}$

            \caseText{The goal follows by (2) and (5)}
        }

        \jcase{32}{S-Args-Semic-2-1}{Immediate by IH.}
    }

\begin{theorem}[Maximal Semantics]
\label{thm:batch-sem}
Suppose:
\begin{enumerate}
    \item $\prefixstep{\eta}{e}{n}{e'}{p}$
    \item $\tck{\cdot}{\Gamma}{\emptyset}{e}{s}{i}$
    \item $\maximalOn{\eta}{\Gamma}$
\end{enumerate}
Then, $\isMaximal{p}$.
\end{theorem}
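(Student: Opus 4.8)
The plan is to derive this theorem as an immediate corollary of the Maximal Semantics Auxiliary lemma (Lemma~\ref{lemma:batch-sem-aux}), whose first part already shows that $\maximalOn{\eta}{e}$ together with the step relation of hypothesis (1) forces $\isMaximal{p}$. The only gap between that lemma and the present statement is that hypothesis (3) asserts maximality of $\eta$ on all of $\dom{\Gamma}$, whereas the lemma requires maximality merely on $\text{fv}(e)$. Since (by Definition~\ref{def:agree}) $\maximalOn{\eta}{\Gamma}$ abbreviates $\maximalOn{\eta}{\dom{\Gamma}}$ while $\maximalOn{\eta}{e}$ abbreviates $\maximalOn{\eta}{\text{fv}(e)}$, and maximality on a larger variable set plainly entails maximality on any subset, it suffices to observe the free-variable containment $\text{fv}(e) \subseteq \dom{\Gamma}$ whenever $\tck{\cdot}{\Gamma}{\emptyset}{e}{s}{i}$.

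First I would establish this containment by a routine induction on the typing derivation. The base cases \ruleName{T-Var}, \ruleName{T-Eps-R}, \ruleName{T-One-R}, and \ruleName{T-HistPgm} are immediate, and the ordinary structural cases thread the containment through their premises: the left rules replace a variable $z$ by one or two fresh variables in the same context position, so the free-variable set stays within $\dom{\Gamma}$, while \ruleName{T-Sub} only weakens the context, preserving the containment through the subtyping relation. The product right rules \ruleName{T-Par-R}, \ruleName{T-Cat-R}, and \ruleName{T-Star-R-2} combine the containments for their subterms.

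The one point needing care is the family of buffering rules --- \ruleName{T-Plus-L}, \ruleName{T-Star-L}, and \ruleName{T-Wait} --- whose conclusions are typed in a derivative context $\deriv{\eta}{\Gamma(\cdots)}$ rather than in $\Gamma(\cdots)$ itself. Here I would invoke the fact that taking a context derivative never renames or deletes a variable (by the context-derivative rules it only replaces each $x : s$ by $x : \deriv{\eta(x)}{s}$), so $\dom{\deriv{\eta}{\Gamma}} = \dom{\Gamma}$ and the containment is undisturbed. The same observation handles \ruleName{T-Fix}, whose free variables are exactly those of its argument tree $A$ (as already used in case~25 of Lemma~\ref{lemma:batch-sem-aux}), which in turn lie in the domain of the context against which $A$ is checked; note that \ruleName{T-Rec} cannot occur under the empty signature $\emptyset$, so no recursion-signature subtleties arise.

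With $\text{fv}(e) \subseteq \dom{\Gamma}$ in hand, hypothesis (3) specialises to $\maximalOn{\eta}{\text{fv}(e)}$, i.e. $\maximalOn{\eta}{e}$, and part~(1) of Lemma~\ref{lemma:batch-sem-aux} applied to hypothesis (1) yields $\isMaximal{p}$. I expect no substantive obstacle: the real work --- the case analysis over every semantic rule, including the delicate buffering and argument-evaluation cases --- has already been discharged in the auxiliary lemma, and what remains is only the bookkeeping containment argument, whose sole non-trivial ingredient is that derivatives preserve context domains.
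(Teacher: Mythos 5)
Your proposal is correct and matches the paper's own proof, which argues in exactly one step that $\tck{\cdot}{\Gamma}{\emptyset}{e}{s}{i}$ gives $\text{fv}(e) \subseteq \dom{\Gamma}$, hence $\maximalOn{\eta}{e}$, and then invokes part~(1) of Lemma~\ref{lemma:batch-sem-aux}. The only difference is that you spell out the free-variable containment (including the observation that context derivatives preserve domains in the buffering rules), which the paper asserts without proof.
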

\begin{proof}
    Because $\tck{\cdot}{\Gamma}{\emptyset}{e}{s}${i}, we have that $\text{fv}(e) \subseteq \dom{\Gamma}$. Thus,
    $\maximalOn{\eta}{e}$, and so the goal follows by Lemma~\ref{lemma:batch-sem-aux}
\end{proof}

\begin{theorem}[Maximal Semantics Extension]
\label{thm:max-sem-extend}
$\,$
\begin{enumerate}
    \item If $\prefixstep{\eta}{e}{n}{e'}{p}$ and $\isMaximal{p}$ and $\prefixConcatRel{\eta}{\eta'}{\eta''}$, and $\prefixstep{\eta''}{e}{n'}{e''}{p'}$,
then $p = p'$.
    \item If $\argsstep{\eta_1}{A}{\Gamma}{n}{A'}{\eta_2}$ and $\maximalOn{\eta_2}{\Gamma}$ and $\prefixConcatRel{\eta_1}{\eta_1'}{\eta_1''}$, and $\argsstep{\eta_1''}{A}{\Gamma}{n'}{A''}{\eta_2'}$,
then $\eta_2|_\Gamma = \eta_2|_\Gamma$.
\end{enumerate}

\end{theorem}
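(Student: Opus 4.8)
The plan is to prove both parts simultaneously by mutual induction, part 1 on the derivation of the ``first run'' $\prefixstep{\eta}{e}{n}{e'}{p}$ and part 2 on the derivation of $\argsstep{\eta_1}{A}{\Gamma}{n}{A'}{\eta_2}$, inverting in each case the derivation of the ``second run'' on the extended input and reading off which rule it must use. A simplifying observation is that we only have to match output prefixes (resp.\ the produced environment restricted to $\Gamma$; note the second conjunct of part 2 should read $\eta_2|_\Gamma = \eta_2'|_\Gamma$), and never the resultant terms, so all bookkeeping about $e'$ and $e''$ can be discarded. The workhorses will be: Maximal Prefix Concatenation (Theorem~\ref{thm:concat-maximal}), which says a maximal prefix absorbs anything concatenated after it (if $p$ is maximal and $\prefixConcatRel{p}{q}{p''}$ then $p'' = p$); Prefix Concatenation Empty (Theorem~\ref{thm:prefix-concat-emp}); Environment Concatenation Associativity (Theorem~\ref{thm:env-concat-assoc}); Maximal Environment Concatenation (Theorem~\ref{thm:env-concat-maximal}); and the Maximal Semantics Auxiliary lemma (Lemma~\ref{lemma:batch-sem-aux}).

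First I would dispatch the cases where the output is essentially forced. For $\epsTm$, $\oneTm$, $\nilTm$, and $\histPgmTm{M}{s}$ the output does not depend on the stream input at all (for the last, $M$ is typed only in $\Omega$ and evaluates to a fixed value by \ruleName{S-HistPgm}), so $p = p'$ immediately. For \ruleName{S-Var} we have $\eta(x) = p$ and $\eta''(x)$ is the concatenation of $p$ with $\eta'(x)$; since $p$ is maximal, Theorem~\ref{thm:concat-maximal} gives $\eta''(x) = p$, hence $p' = p$. The congruence cases \ruleName{S-Par-R} and \ruleName{S-Cat-R-2} follow by noting that a maximal $\parp{p_1}{p_2}$ (resp.\ $\catpB{p_1}{p_2}$) forces each component maximal, so the IH applies to each subderivation; \ruleName{S-Cat-R-1} is vacuous because $\catpA{p}$ is never maximal. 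For \ruleName{S-Fix} the history arguments evaluate independently of the stream input, so both runs unfold to the same term and the IH applies to the strictly smaller unfolded subderivation.

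The heart of the argument, and the main obstacle, is that the rule used on the extended input may differ from the one used originally, because the extra input can cross a concatenation/star boundary or resolve a sum tag. Concretely, \ruleName{S-Cat-L-1} (where $z$ has not crossed over and $y$ is bound to the empty prefix) can become \ruleName{S-Cat-L-2}, and the buffering rules \ruleName{S-Plus-L-1} and \ruleName{S-Star-L-1} can become their non-empty counterparts \ruleName{S-Plus-L-2}, \ruleName{S-Plus-L-3} (resp.\ the non-nil \ruleName{S-Star-L} rules). In each such case I would reconstruct the extended inner environment as a genuine concatenation-extension of the original inner environment---using Theorem~\ref{thm:concat-maximal} on the now-maximal first component, Prefix Concatenation Empty on the freshly bound empty prefix, and Environment Concatenation Associativity to re-associate the buffer (the buffer produced from $\prefixConcat{\eta}{\eta_{\mathit{more}}}$ equals the original post-concatenation buffer further concatenated with $\eta_{\mathit{more}}$)---and then invoke the IH on the continuation's subderivation, whose output was assumed maximal. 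One delicate subcase remains: when a buffering rule originally produces a \emph{maximal empty} output $\emp{r}$, Theorem~\ref{thm:empty-and-maximal-imply-nullable} shows $r$ is nullable, so any output of the corresponding branch is again a prefix of a nullable type and hence equal to $\emp{r}$ by ``Only Prefix of a Null Type is Empty''; this step leans on the well-typedness of the branches, which I would carry along as an ambient assumption exactly as in Theorem~\ref{thm:batch-sem}.

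For part 2 the semicolon rules are the interesting ones. \ruleName{S-Args-Semic-1-1} is vacuous under the hypothesis $\maximalOn{\eta_2}{\Gamma}$, since it records $\neg(\maximalOn{\eta_1}{\Gamma})$ for the very left component on which the result must be maximal; thus on the first run only \ruleName{S-Args-Semic-1-2} and \ruleName{S-Args-Semic-2} can fire, and the IH of part 2 on the first component shows the extended run stays on the same branch and preserves its restriction, after which the second component is handled by the IH again. \ruleName{S-Args-Sng} bridges to part 1 (a maximal singleton output triggers the part-1 IH), and \ruleName{S-Args-Comma} splits into independent components. Finally, the \ruleName{S-ArgsLet} case of part 1 is the one genuinely cross-cutting point: there $p$ is produced by running $e$ on the \emph{args-produced} environment $\eta'$, which need not be maximal, so to apply the part-1 IH to $e$ I first need that running $A$ on the extended input yields an environment extending $\eta'$. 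I would establish this monotonicity of the arguments semantics by a separate, straightforward induction on the argument tree $A$ (extending the input extends the produced environment pointwise by concatenation). This auxiliary fact, rather than any single hard calculation, is where I expect the most care to be required, since it must be proved before---and cannot appeal to---the homomorphism theorem, which in turn relies on the present lemma.
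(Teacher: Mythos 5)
Your skeleton is exactly the paper's proof: the paper disposes of this theorem in a single line---mutual induction on the two semantic judgments, using Theorem~\ref{thm:concat-maximal}---and your plan (induct on the first run's derivation, invert the second run, absorb extensions into maximal prefixes via Theorem~\ref{thm:concat-maximal}, kill the crossed-over cases with Theorem~\ref{thm:prefix-concat-emp} and re-associate buffers with Theorem~\ref{thm:env-concat-assoc}) is that argument, elaborated case by case. You also correctly repaired the statement's typo (the conclusion of part 2 should read $\eta_2|_\Gamma = \eta_2'|_\Gamma$).

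Two of your refinements go beyond what the paper records, and both are worth flagging. First, your typing patch is a genuine correction rather than a convenience: the theorem as stated carries no typing hypotheses, yet if a \ruleName{S-Plus-L-1} step emits $\emp{r}$ with $r$ nullable (so that $\emp{r}$ is maximal), the extended run may resolve the tag and execute a branch, and only soundness (Theorem~\ref{thm:prefix-pres}) together with ``Only Prefix of a Null Type is Empty'' forces that branch's output back to $\emp{r}$; for an ill-typed term (say a branch emitting $\onepB$ at claimed result type $\epst$) the untyped claim is simply false. Second, your \ruleName{S-ArgsLet} analysis identifies the real crux but understates the cost of the fix: the monotonicity lemma you invoke (extending the input of the args semantics extends the produced environment pointwise by concatenation) cannot be proved by ``a straightforward induction on the argument tree $A$'' alone, because its leaf case \ruleName{S-Args-Sng} is precisely term-level monotonicity---if $\prefixstep{\eta}{e}{n}{e'}{p}$ and $\prefixstep{\prefixConcat{\eta}{\eta'}}{e}{n'}{e''}{p''}$ then $p''$ extends $p$ by concatenation---which itself requires a full mutual induction over the term and argument semantics, an existential cousin of the homomorphism theorem (provable before it, as you note, but not local to $A$). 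With that lemma stated and proved at the right generality, and typing carried ambiently as in Theorem~\ref{thm:batch-sem}, your case analysis goes through and fills in gaps the paper's one-line proof leaves open.
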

\begin{proof}
    Mutual induction on $\prefixstep{\eta}{e}{n}{e'}{p}$ and $\argsstep{\eta_1}{A}{\Gamma}{n}{A'}{\eta_2}$, using Theorem~\ref{thm:concat-maximal}.
\end{proof}

\subsubsection*{Semantics Theorems}
\label{app:incr-sem-thms}

\begin{theorem}[Semantics Inputs Determine Outputs]
    \label{thm:prefix-sem-det}
$\,$
\begin{enumerate}
    \item If $\prefixstep{\eta}{e}{n}{e'}{p'}$ and
    $\prefixstep{\eta}{e}{n'}{e''}{p''}$, then
    $e' = e''$, and $p' = p''$.
    \item If $\argsstep{\eta}{A}{\Gamma}{n}{A'}{\eta'}$
    and
    $\argsstep{\eta}{A}{\Gamma}{n'}{A''}{\eta''}$
    then $A' = A''$ and $\eta' = \eta''$.
\end{enumerate}
\end{theorem}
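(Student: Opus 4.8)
The plan is to prove both parts simultaneously by mutual structural induction on the derivations of $\prefixstep{\eta}{e}{n}{e'}{p'}$ and $\argsstep{\eta}{A}{\Gamma}{n}{A'}{\eta'}$. The key point is that, holding the environment $\eta$ and the term $e$ (or the argument tree $A$) fixed, the applicable semantic rule is forced and all of its premises are determined, so the two given derivations must be built the same way and yield identical outputs. The step indices $n$ and $n'$ are permitted to differ, but this causes no difficulty: because the induction is on derivation structure rather than on the fuel, the induction hypothesis applies to any subderivation regardless of the index it carries. Fuel interacts with rule choice only in \ruleName{S-Fix}, where the decrement merely forces both derivations to again be \ruleName{S-Fix} instances (with the same substitution $\theta$, since $\overline{M} \downarrow \theta$ is functional by Definition~\ref{def:histpgm}), after which the hypothesis handles the unfolded body.

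For the purely syntax-directed rules --- \ruleName{S-Eps-R}, \ruleName{S-One-R}, \ruleName{S-Var}, \ruleName{S-Par-R}, the injection rules, \ruleName{S-Let}, \ruleName{S-HistPgm}, and the args rules \ruleName{S-Args-Emp}/\ruleName{S-Args-Sng}/\ruleName{S-Args-Comma} --- the head constructor of $e$ (or $A$) pins down the rule, so I would invert the second derivation to expose matching subderivations and conclude by the induction hypothesis. For \ruleName{S-Var} I additionally invoke that environment lookup is a deterministic partial function (Theorem~\ref{thm:env-lookup}), and for \ruleName{S-HistPgm} I use that $M \downarrow v$ is a partial function (Definition~\ref{def:histpgm}) together with the fact that $\histValToPrefix{v}{s}$ is a function of $v$.

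The substance of the argument is the handful of rule families whose selection depends on a value computed by a premise. For the concatenation pairs (\ruleName{S-Cat-R-1} versus \ruleName{S-Cat-R-2}, and identically \ruleName{S-Star-R-2-1} versus \ruleName{S-Star-R-2-2}), both candidates first run $e_1$; I would apply the induction hypothesis to that subderivation to learn that the prefix $p$ it yields is unique, and then observe that the side conditions $\neg(\isMaximal{p})$ and $\isMaximal{p}$ are complementary and decided by $p$, so exactly one rule can fire in each derivation and the two choices must agree. The left rules (\ruleName{S-Par-L}, \ruleName{S-Cat-L-1}/\ruleName{S-Cat-L-2}) dispatch on the shape of $\eta(z)$, which is determined since $\eta$ is a function (Theorem~\ref{thm:env-lookup}); and the buffering rules (\ruleName{S-Plus-L-1} through \ruleName{S-Plus-L-3}, the \ruleName{S-Star-L} rules, and \ruleName{S-Wait-1}/\ruleName{S-Wait-2}) dispatch on $\eta''(z)$ where $\prefixConcatRel{\eta'}{\eta}{\eta''}$ with $\eta'$ the buffer stored in the term. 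Here $\eta'$ and $\eta$ are both fixed, so $\eta''$ is determined by functionality of prefix concatenation (Theorem~\ref{thm:prefix-concat-fun}) lifted to environments, and its value at the scrutinee fixes the branch. In each such case, once the rule is pinned down, the hypothesis applied to the now-matching continuation premise yields equality of the resulting terms and prefixes; the \ruleName{S-Cat-L-2}, \ruleName{S-Star-L-4}, and \ruleName{S-Wait-2} cases additionally require that the sink term $\sinkTm{\cdot}$ and the substitutions (e.g.\ $e'[z/y]$) coincide, which is immediate once the underlying prefixes agree. The analogous split in the args judgment (\ruleName{S-Args-Semic-1-1} versus \ruleName{S-Args-Semic-1-2}) is handled by part~2 of the hypothesis, which determines $\eta_1$ and hence the predicate $\maximalOn{\eta_1}{\Gamma}$ selecting the rule.

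I expect the main obstacle to be purely bookkeeping: marshaling the induction hypothesis onto the correct subderivation \emph{before} the side conditions can be compared, and keeping the two mutually inductive parts synchronized across the rules (\ruleName{S-ArgsLet}, \ruleName{S-Fix}) that cross between them. No genuinely hard step arises, since every potential source of nondeterminism --- lookup, concatenation, history-language evaluation, derivative computation (Theorem~\ref{thm:derivrel-fun}), and the maximality predicate --- is already known to be functional or decidable.
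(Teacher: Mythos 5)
Your proposal is correct and is, in substance, exactly the argument the paper intends: the paper's entire proof of Theorem~\ref{thm:prefix-sem-det} is the single phrase ``By inspection,'' meaning precisely the observation---which your mutual induction spells out rule by rule---that for fixed $\eta$ and $e$ (or $A$) the applicable rule and all of its premises are forced. Your identification of the only genuine discrimination points (the complementary maximality side conditions in the \ruleName{S-Cat-R}/\ruleName{S-Star-R-2} and \ruleName{S-Wait} families, functionality of environment lookup, of buffer concatenation via Theorem~\ref{thm:prefix-concat-fun}, of history-language evaluation per Definition~\ref{def:histpgm}, and of the derivative relation, with fuel playing a role only in \ruleName{S-Fix}, where structural rather than fuel induction makes the differing indices harmless) is the right elaboration of that terse claim.
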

\begin{proof}
    By inspection.
\end{proof}

\begin{theorem}[Semantics Monotonicity]
    \label{thm:prefix-sem-mono}
$\,$
\begin{enumerate}
    \item If $\prefixstep{\eta}{e}{n}{e'}{p'}$ and $n' \geq n$, then
    $\prefixstep{\eta}{e}{n'}{e'}{p'}$.
    \item If $\argsstep{\eta}{A}{\Gamma}{n}{A'}{\eta'}$ and $n' \geq n$,
    then $\argsstep{\eta}{A}{\Gamma}{n'}{A'}{\eta'}$.
\end{enumerate}
\end{theorem}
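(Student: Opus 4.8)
The plan is to proceed by simultaneous induction on the derivations of $\prefixstep{\eta}{e}{n}{e'}{p'}$ and $\argsstep{\eta}{A}{\Gamma}{n}{A'}{\eta'}$, sorting the semantic rules into three groups according to how each relates the step index of its conclusion to those of its premises. Because the index only ever tracks uses of \ruleName{S-Fix}, every rule falls cleanly into exactly one of these groups, and in each group the argument is mechanical.

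The first group consists of the \emph{index-transparent} rules: the axioms (\ruleName{S-Var}, \ruleName{S-Eps-R}, \ruleName{S-One-R}, \ruleName{S-Star-R-1}, \ruleName{S-HistPgm}, \ruleName{S-Args-Emp}) together with the single-recursive-premise rules that thread one index $n$ unchanged to the conclusion (the left rules for par, cat, plus, and star; \ruleName{S-Cat-R-1}; \ruleName{S-Star-R-2-1}; \ruleName{S-Wait-1}, \ruleName{S-Wait-2}; \ruleName{S-Args-Sng}; \ruleName{S-Args-Semic-1-1}; and \ruleName{S-Args-Semic-2}). For these, given $n' \ge n$, I apply the induction hypothesis to the (at most one) recursive premise to raise its index to $n'$ and reapply the same rule; all non-recursive side conditions (lookups, maximality tests, prefix concatenations, historical evaluations) are carried over verbatim.

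The second group is the \emph{additive} rules---\ruleName{S-Par-R}, \ruleName{S-Cat-R-2}, \ruleName{S-Star-R-2-2}, \ruleName{S-Let}, \ruleName{S-ArgsLet}, \ruleName{S-Args-Comma}, and \ruleName{S-Args-Semic-1-2}---whose conclusion sits at $n_1 + n_2$ over two premises at $n_1$ and $n_2$. The only trick is to absorb the surplus fuel into a single premise: for $n' \ge n_1 + n_2$, set $n_1' = n_1 + \bigl(n' - (n_1 + n_2)\bigr)$, so that $n_1' \ge n_1$ and $n_1' + n_2 = n'$. The induction hypothesis raises the first premise to index $n_1'$, the second premise is reused unchanged, and reapplying the rule yields the conclusion at index $n_1' + n_2 = n'$. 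Finally, \ruleName{S-Fix} is the lone genuinely fuel-consuming rule, with conclusion index $n+1$ over a premise at $n$; since $n' \ge n+1$ gives $n' - 1 \ge n$, the induction hypothesis supplies the premise at $n' - 1$, and reapplying \ruleName{S-Fix} (keeping $\overline{M} \downarrow \theta$) produces the conclusion at $(n'-1)+1 = n'$.

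I expect no real obstacle here: the result is essentially bookkeeping, which is presumably why one might be tempted to discharge it ``by inspection.'' The only points requiring any care are the fuel-redistribution step for the additive group (verifying $n_1' \ge n_1$ and $n_1' + n_2 = n'$) and the off-by-one in the \ruleName{S-Fix} case; both are immediate once the three-way classification is in place.
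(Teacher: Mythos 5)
Your proposal is correct and takes essentially the same approach as the paper, which discharges this theorem with the single word ``Mutual induction''; your three-way classification (index-transparent rules, additive rules with the surplus fuel absorbed into one premise, and the genuinely fuel-consuming \ruleName{S-Fix} with its off-by-one) is precisely the case analysis that terse proof leaves implicit, and each group's recipe goes through. The only cosmetic slips---\ruleName{S-Wait-1} in fact has no recursive semantic premise, and \ruleName{S-Plus-R-1}/\ruleName{S-Plus-R-2} are missing from your explicit first-group enumeration---are already covered by your ``at most one recursive premise'' treatment and do not affect the argument.
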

\begin{proof}
    Mutual induction.
\end{proof}

\begin{theorem}[Soundness]
    \label{thm:prefix-pres}
$\,$
\begin{enumerate}
    \item Suppose
    \begin{enumerate}
        \item $\tck{\cdot}{\Gamma}{\emptyset}{e}{s}{i}$
        \item $\prefixstep{\eta}{e}{n}{e'}{p}$
        \item $\envHasType{\eta}{\Gamma}$
    \end{enumerate}
    Then,
    \begin{enumerate}
        \item $\prefixHasType{p}{s}$
        \item If $\derivrel{\eta}{\Gamma}{\Gamma'}$ and $\derivrel{p}{s}{s'}$, then $\tck{\cdot}{\Gamma'}{\emptyset}{e'}{s'}{\I}$
        \item If $i = \I$ and $\emptyOn{\eta}{e}$ then $\isEmpty{p}$
    \end{enumerate}
    \item Suppose
    \begin{enumerate}
        \item $\tck{\cdot}{\Gamma_1}{\emptyset}{A}{\Gamma_1}{i}$
        \item $\argsstep{\eta}{A}{\Gamma_2}{n}{A'}{\eta'}$
        \item $\envHasType{\eta}{\Gamma_1}$
    \end{enumerate}
    Then,
    \begin{enumerate}
        \item $\prefixHasType{\eta'}{\Gamma_2}$ 
        \item If $\derivrel{\eta}{\Gamma_1}{\Gamma_1'}$ and $\derivrel{\eta'}{\Gamma_2}{\Gamma_2'}$, then $\tck{\cdot}{\Gamma_1'}{\emptyset}{A'}{\Gamma_2'}{\I}$
        \item If $i = \I$ and $\emptyOn{\eta}{A}$ then $\emptyOn{\eta'}{\Gamma_2}$
    \end{enumerate}
\end{enumerate}
    
\end{theorem}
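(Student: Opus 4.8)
The plan is to prove parts (1) and (2) simultaneously by a lexicographic induction, with the outer induction on the derivation of the operational step ($\prefixstep{\eta}{e}{n}{e'}{p}$, respectively $\argsstep{\eta}{A}{\Gamma_2}{n}{A'}{\eta'}$) and an inner induction on the typing derivation. The inner induction is needed only to absorb the non-syntax-directed rule \ruleName{T-Sub}, where the step relation is unchanged while the typing derivation shrinks: there we invert \ruleName{T-Sub} to obtain $\subty{\Gamma}{\Delta}$ with $e$ typed in $\Delta$, promote the environment along the subtyping using Theorem~\ref{thm:subty-env}, apply the inner IH, and re-wrap the conclusion by transporting the subtyping across the derivative with Theorem~\ref{thm:deriv-subty}. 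For every other rule the step is syntax-directed, so we proceed by cases on the final step rule and invert the typing rule that must have produced a term of that shape.

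Each case must establish the three conjuncts: well-typedness of the output prefix, well-typedness of the resultant term in the derivative context with the uniformly-inert annotation $\I$, and the emptiness-preservation property. The straightforward cases — \ruleName{S-Var}, \ruleName{S-Eps-R}, \ruleName{S-One-R}, the right rules, and the parallel/concatenation left rules — follow by applying the IH to the premises and invoking the environment- and derivative-bind lemmas (Theorems~\ref{thm:env-par-bind} through~\ref{thm:env-cat-bind-deriv-2}) together with the derivative-function lemma (Theorem~\ref{thm:derivrel-fun}). The rule \ruleName{S-Cat-R-2} additionally needs Theorem~\ref{thm:maximal-deriv-null}: once $e_1$ has produced a maximal prefix, the type of the first component becomes nullable, so the derivative of $s \cdot t$ collapses to $\deriv{p_2}{t}$ and the resultant $e_2'$ has exactly the required type. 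For \ruleName{S-Cat-L-2} and the analogous \ruleName{S-Star-L-4}, the resultant replaces the consumed head by a sink and reconnects the tail through the substitution $e'[z/y]$; here we need the sink-typing lemma (Theorem~\ref{thm:sink-typing}), a routine substitution lemma for \core{} variables, and Theorem~\ref{thm:sink-concat} to align the types.

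The genuinely delicate cases are the buffering rules for sums, star, and wait, the let rule \ruleName{S-Let}, and the fixpoint rule \ruleName{S-Fix}. In the buffering rules the incoming environment is first concatenated onto the stored buffer via $\prefixConcatRel{\eta'}{\eta}{\eta''}$, so we rely on Theorems~\ref{thm:env-concat-correct} and~\ref{thm:prefix-concat-fun} to keep the combined buffer well-typed and to compute its derivative; when not enough has arrived we must re-type the \emph{same} case term with the enlarged buffer in the appropriate derivative context, and when the guard is resolved we pivot to the corresponding branch and apply the IH to it. \ruleName{S-Fix} is the one place where the step index strictly decreases (from $n+1$ to $n$): we unfold with the fixpoint-substitution lemma (Theorem~\ref{thm:fixsubst}), reduce to the \ruleName{S-ArgsLet} configuration, and appeal to the outer IH at the smaller index, with the arguments semantics handled by part (2) of the mutual statement.

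I expect the main obstacle to be \ruleName{S-Let} and, relatedly, the claim that every resultant term is inert (the conclusion of conjunct (b) is always $\I$). For \ruleName{S-Let}, building the environment that feeds $e_2$ requires the agreement hypothesis of Theorem~\ref{thm:env-subctx-bind} (Definition~\ref{def:agree}); its maximality half is free from Lemma~\ref{lemma:batch-sem-aux}, but its emptiness half is \emph{exactly} conjunct (c) applied to $e_1$, which is available only because \ruleName{T-Let} forces $e_1$ to carry annotation $\I$. This is where the three conjuncts are genuinely entangled: one cannot prove (a) or (b) for a let without (c) for its first subterm, so the three statements must be carried together through the induction rather than proved in sequence. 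The uniform $\I$ conclusion of conjunct (b) then follows by inspecting each resultant — constants and injections have either been consumed or turned into sinks, and any surviving case or buffer term can only step when its guard is $\sumpEmp$ or $\stpEmp$, precisely the side-condition the inertness rules demand.
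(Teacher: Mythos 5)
Your proposal matches the paper's proof essentially step for step: the same mutual induction on the two semantics judgments with an inner induction on typing solely to absorb \ruleName{T-Sub} (discharged via Theorems~\ref{thm:subty-env} and~\ref{thm:deriv-subty}), the same environment-bind and derivative lemmas for the left rules, the same unfolding of \ruleName{S-Fix} through Theorem~\ref{thm:fixsubst} into the \ruleName{S-ArgsLet} configuration, and---most importantly---the same identification that in \ruleName{S-Let} the emptiness half of agreement (Definition~\ref{def:agree}) is supplied by conjunct (c) applied to the inert $e_1$ while the maximality half comes from Lemma~\ref{lemma:batch-sem-aux}, which is exactly why the three conjuncts must be carried together. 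Two citations are cosmetically off but harmless: $\deriv{\catpB{p_1}{p_2}}{s \cdot t} = \deriv{p_2}{t}$ holds by definition of the derivative relation (the paper instead invokes Theorem~\ref{thm:empty-and-maximal-imply-nullable} in \ruleName{S-Cat-R-2} only to refute the inertness premise of Goal C), and Theorem~\ref{thm:sink-concat} is needed in the homomorphism proof rather than in the soundness case for \ruleName{S-Cat-L-2}, where Theorem~\ref{thm:sink-typing}, substitution, \ruleName{T-Let}, and \ruleName{T-Sub} suffice.
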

\jtheorem{ }{
    By mutual induction on the semantics. In the cases for the term (non-argument) semantics, we also do an inner induction on
    the typing derivation $\tck{\cdot}{\Gamma}{\emptyset}{e}{s}{i}$. All of these inner inductions have two cases: one for the corresponding
    syntax-directed rule, and one for \ruleName{T-Sub}. We handle all of the cases with \ruleName{T-Sub} simultaneously, in the first case of this proof.

    \jcase{1}{T-Sub}{
        \jgivengoalThree{
            \caseFact{1} $\tck{\cdot}{\Delta}{\emptyset}{e}{s}{i}$

            \caseFact{2} $\subty{\Gamma}{\Delta}$

            \caseFact{3} $\prefixstep{\eta}{e}{ }{e'}{p}$

            \caseFact{4} $\envHasType{\eta}{\Gamma}$
        }{
            $\prefixHasType{p}{s}$
        }{
            If $\derivrel{\eta}{\Gamma}{\Gamma'}$ and $\derivrel{p}{s}{s'}$, then $\tck{\cdot}{\Gamma'}{\emptyset}{e'}{s'}{\I}$
        }{
            If $i = \I$ and $\emptyOn{\eta}{e}$ then $\isEmpty{p}$
        }

        \caseText{By Theorem~\ref{thm:subty-env} on (2) and (4)}

        \caseFact{5} $\envHasType{\eta}{\Delta}$

        \caseText{By IH on (1), using (3) and (5)}

        \caseFact{6} $\prefixHasType{p}{s}$

        \caseFact{7} If $\derivrel{\eta}{\Delta}{\Delta'}$ and $\derivrel{p}{s}{s'}$, then $\tck{\cdot}{\Delta'}{\emptyset}{e'}{s'}{\I}$

        \caseFact{8} If $i = I$ and $\emptyOn{\eta}{e}$ then $\isEmpty{p}$

        \caseText{\textbf{Goal A} is complete by (6), and \textbf{Goal C} by (8). To prove \textbf{Goal B}, we suppose there are $\Gamma'$ and $s'$ so that:}

        \caseFact{9} $\derivrel{\eta}{\Gamma}{\Gamma'}$

        \caseFact{10} $\derivrel{p}{s}{s'}$

        \caseText{By Theorem~\ref{thm:derivrel-env-fun} on (5), there is some $\Delta'$ so that:}

        \caseFact{11} $\derivrel{\eta}{\Delta}{\Delta'}$

        \caseText{By (7), (10), and (11) we have:}

        \caseFact{12} $\tck{\cdot}{\Delta'}{\emptyset}{e'}{s'}{\I}$

        \caseText{By Theorem~\ref{thm:deriv-subty},}

        \caseFact{13} $\subty{\Gamma'}{\Delta'}$

        \caseText{\textbf{Goal B} follows by T-Sub on (12) and (13)}
    }

    \jcase{2}{S-Eps-R}{Immediate.}

    \jcase{3}{S-One-R}{Immediate.}

    \jcase{4}{S-Var}{
        \jgivengoalThree{
            \caseFact{1} $\tck{\cdot}{\Gamma(x:s)}{\emptyset}{x}{s}{i}$

            \caseFact{2} $\envHasType{\eta}{\Gamma(x:s)}$

            \caseFact{3} $\eta(x) \mapsto p$
        }{
            $\prefixHasType{p}{s}$
        }{
            If $\derivrel{\eta}{\Gamma(x:s)}{\Gamma'}$ and $\derivrel{p}{s}{s'}$, then $\tck{\cdot}{\Gamma'}{\emptyset}{x}{s'}{\I}$
        }{
            If $i = \I$ and $\emptyOn{\eta}{x}$ then $\isEmpty{p}$
        }

        \caseText{\textbf{Goal A} follows immediately by Theorem~\ref{thm:env-lookup} on (2) and (3). \textbf{Goal C} is immediate. For \textbf{Goal B}, we assume:}

        \caseFact{4} $\derivrel{\eta}{\Gamma(x:s)}{\Gamma'}$

        \caseFact{5} $\derivrel{p}{s}{s'}$

        \caseText{By Theorem~\ref{thm:env-lookup-deriv} on (3), (4), and (5), there is some $\Gamma''(-)$ so that:}

        \caseFact{6} $\Gamma' = \Gamma''(x : s')$

        \caseText{Then \textbf{Goal B} follows by (6) and T-Var.}

    }

    \jcase{4}{S-Par-R}{
        \jgivengoalThree{
            \caseFact{1} $\tck{\cdot}{\Gamma}{\emptyset}{e_1}{s}{i}$

            \caseFact{2} $\tck{\cdot}{\Gamma}{\emptyset}{e_2}{t}{i}$

            \caseFact{3} $\envHasType{\eta}{\Gamma}$

            \caseFact{4} $\prefixstep{\eta}{e_1}{ }{e_1'}{p_1}$

            \caseFact{5} $\prefixstep{\eta}{e_2}{ }{e_2'}{p_2}$
        }{
            $\prefixHasType{\parp{p_1}{p_2}}{s \| t}$
        }{
            If $\derivrel{\eta}{\Gamma}{\Gamma'}$ and $\derivrel{\parp{p_1}{p_2}}{s \| t}{s_0}$, then $\tck{\cdot}{\Gamma_0}{\emptyset}{\parpairTm{e_1'}{e_2'}}{s_0}{\I}$
        }{
            If $i = \I$ and $\emptyOn{\eta}{\parpairTm{e_1}{e_2}}$ then $\isEmpty{\parp{p_1}{p_2}}$
        }

        \caseText{By IH on (1), (3), and (4):}

        \caseFact{6} $\prefixHasType{p_1}{s}$

        \caseFact{7} If $\derivrel{\eta}{\Gamma}{\Gamma'}$ and $\derivrel{p_1}{s}{s'}$, then $\tck{\cdot}{\Gamma'}{\emptyset}{e_1'}{s'}{\I}$

        \caseFact{8} If $i = \I$ and $\emptyOn{\eta}{e_1}$ then $\isEmpty{p_1}$

        \caseText{By IH on (2), (4), and (5):}

        \caseFact{9} $\prefixHasType{p_2}{t}$

        \caseFact{10} If $\derivrel{\eta}{\Gamma}{\Gamma'}$ and $\derivrel{p_2}{t}{t'}$, then $\tck{\cdot}{\Gamma'}{\emptyset}{e_2'}{t'}$

        \caseFact{11} If $i = \I$ and $\emptyOn{\eta}{e_1}$ then $\isEmpty{p_1}$

        \caseText{\textbf{Goal A} follows by (6) and (9). \textbf{Goal C} follows by (8) and (11). For \textbf{Goal B}, we assume:}

        \caseFact{12} $\derivrel{\eta}{\Gamma}{\Gamma'}$

        \caseFact{13} $\derivrel{\parp{p_1}{p_2}}{s \| t}{s_0}$

        \caseText{Inverting (13), we have $s'$, and $t'$ so that $s_0 = s' \| t'$, and:}

        \caseFact{14} $\derivrel{p_1}{s}{s'}$

        \caseFact{15} $\derivrel{p_2}{t}{t'}$

        \caseText{By applying (7) to (12) and (14), we have:}

        \caseFact{16} $\tck{\cdot}{\Gamma'}{\emptyset}{e_1'}{s'}{\I}$

        \caseText{By applying (10) to (12) and (15), we have:}

        \caseFact{17} $\tck{\cdot}{\Gamma'}{\emptyset}{e_2'}{t'}{\I}$

        \caseText{\textbf{Goal B} follows by T-Par-R on (16) and (18)}
    }

    \jcase{5}{S-Cat-R-1}{
        \jgivengoalThree{
            \caseFact{1} $\prefixstep{\eta}{e_1}{ }{e_1'}{p}$

            \caseFact{2} $\neg \left(\isMaximal{p}\right)$

            \caseFact{3} $\envHasType{\eta}{\Gamma}$

            \caseFact{4} $\envHasType{\eta}{\Delta}$

            \caseFact{5} $(\exists x \in \dom{\Delta}. \neg \isEmpty{\eta(x)}) \implies (\forall x \in \dom{\Gamma}. \isMaximal{\eta(x)})$

            \caseFact{6} $\tck{\cdot}{\Gamma}{\emptyset}{e_1}{s}{i_1}$

            \caseFact{7} $\tck{\cdot}{\Delta}{\emptyset}{e_2}{t}{i_2}$

            \caseFact{8} $i_3 = \I \implies i_1 = \I \wedge \neg\left(\nullable{s}\right)$

        }{
            $\prefixHasType{\catpA{p}}{s \cdot t}$
        }{
            If $\derivrel{\eta}{\semicctx{\Gamma}{\Delta}}{\Gamma_0}$ and $\derivrel{\catpA{p}}{s \cdot t}{s_0}$, then $\tck{\cdot}{\Gamma_0}{\emptyset}{\catpairTm{e_1'}{e_2}}{s_0}{\I}$
        }{
            If $i_3 = \I$ and $\emptyOn{\eta}{\catpairTm{e_1}{e_2}}$, then $\isEmpty{\catpA{p}}$
        }

        \caseText{By IH on (1), (3), and (6)}

        \caseFact{9} $\prefixHasType{p}{s}$

        \caseFact{10} If $\derivrel{\eta}{\Gamma}{\Gamma'}$ and $\derivrel{p}{s}{s'}$, then $\tck{\cdot}{\Gamma'}{\emptyset}{e_1'}{s'}{\I}$

        \caseFact{11} If $i_1 = \I$ and $\emptyOn{\eta}{e_1}$ then $\isEmpty{p}$

        \caseText{\textbf{Goal A} follows by (9). For \textbf{Goal B}, we assume:}

        \caseFact{12} $\derivrel{\eta}{\semicctx{\Gamma}{\Delta}}{\Gamma_0}$

        \caseFact{13} $\derivrel{\catpA{p}}{s \cdot t}{s_0}$

        \caseText{Inverting (12) and (13), we have $\Gamma'$, $\Delta'$, and $s'$ so that $\Gamma_0 = \semicctx{\Gamma'}{\Delta'}$ and $s_0 = s' \cdot t$, and:}

        \caseFact{14} $\derivrel{\eta}{\Gamma}{\Gamma'}$

        \caseFact{15} $\derivrel{\eta}{\Delta}{\Delta'}$

        \caseFact{16} $\derivrel{p}{s}{s'}$

        \caseText{Applying (11) to (14) and (16)}

        \caseFact{17} $\tck{\cdot}{\Gamma'}{\emptyset}{e_1}{s'}{\I}$

        \caseText{By Theorem~\ref{thm:batch-sem} on (1) and (6), using (2)}

        \caseFact{18} $\neg \left(\maximalOn{\eta}{\Gamma}\right)$

        \caseText{Therefore, by (18) and (5)}

        \caseFact{19} $\emptyOn{\eta}{\Delta}$

        \caseText{Then by Theorem~\ref{thm:empty-context-deriv} and Theorem~\ref{thm:derivrel-env-fun} on (17) and (13)}

        \caseFact{20} $\Delta = \Delta'$

        \caseText{Then, \textbf{Goal B} follows by T-Cat-R on (15) and (7)}

        \caseText{For \textbf{Goal C}, assume:}

        \caseFact{21} $i_3 = \I$

        \caseFact{22} $\emptyOn{\eta}{\catpairTm{e_1}{e_2}}$

        \caseText{By (8) with (21)}

        \caseFact{23} $i_1 = \I$

        \caseText{\textbf{Goal C} follows by (11), with (22) and (23).}
    }

    \jcase{6}{S-Cat-R-2}{
        \jgivengoalThree{
            \caseFact{1} $\prefixstep{\eta}{e_1}{ }{e_1'}{p_1}$

            \caseFact{2} $\isMaximal{p_1}$

            \caseFact{3} $\prefixstep{\eta}{e_2}{ }{e_2'}{p_2}$

            \caseFact{4} $\envHasType{\eta}{\Gamma}$

            \caseFact{5} $\envHasType{\eta}{\Delta}$

            \caseFact{6} $(\exists x \in \dom{\Delta}. \neg \isEmpty{\eta(x)}) \implies (\forall x \in \dom{\Gamma}. \isMaximal{\eta(x)})$

            \caseFact{7} $\tck{\cdot}{\Gamma}{\emptyset}{e_1}{s}{i_1}$

            \caseFact{8} $\tck{\cdot}{\Delta}{\emptyset}{e_2}{t}{i_2}$

            \caseFact{9} $i_3 = \I \implies i_1 = \I \wedge \neg\left(\nullable{s}\right)$
        }{
            $\prefixHasType{\catpB{p_1}{p_2}}{s \cdot t}$
        }{
            If $\derivrel{\eta}{\semicctx{\Gamma}{\Delta}}{\Gamma_0}$ and $\derivrel{\catpB{p_1}{p_2}}{s \cdot t}{s_0}$ then $\tck{\cdot}{\Gamma_0}{\emptyset}{e_2'}{s_0}{\I}$
        }{
            If $i_3 = \I$ and $\emptyOn{\eta}{\catpairTm{e_1}{e_2}}$ then $\isEmpty{\catpB{p_1}{p_2}}$
        }

        \caseText{By IH on (1), (4), and (7)}

        \caseFact{10} $\prefixHasType{p_1}{s}$

        \caseFact{11} If $\derivrel{\eta}{\Gamma}{\Gamma'}$ and $\derivrel{p_1}{s}{s'}$, then $\tck{\cdot}{\Gamma'}{\emptyset}{e_1}{s'}{\I}$

        \caseFact{12} if $i_1 = \I$ and $\emptyOn{\eta}{e_1}$ then $\isEmpty{p_1}$

        \caseText{By IH on (3), (5), and (8)}

        \caseFact{12} $\prefixHasType{p_2}{t}$

        \caseFact{13} If $\derivrel{\eta}{\Delta}{\Delta'}$ and $\derivrel{p_2}{t}{t'}$, then $\tck{\cdot}{\Delta'}{\emptyset}{e_2}{t'}{\I}$

        \caseText{\textbf{Goal A} follows by (2), (9), and (11). For \textbf{Goal B}, we assume}

        \caseFact{14} $\derivrel{\eta}{\semicctx{\Gamma}{\Delta}}{\Gamma_0}$

        \caseFact{15} $\derivrel{\catpB{p_1}{p_2}}{s \cdot t}{s_0}$

        \caseText{By inversion on (14) and (15), there are $\Gamma'$, $\Delta'$, and $t'$ so that $\Gamma_0 = \semicctx{\Gamma'}{\Delta'}$, and $s_0 = t'$, and:}

        \caseFact{16} $\derivrel{\eta}{\Gamma}{\Gamma'}$

        \caseFact{17} $\derivrel{\eta}{\Delta}{\Delta'}$

        \caseFact{18} $\derivrel{p_2}{t}{t'}$

        \caseText{By (13) on (17) and (18)}

        \caseFact{18} $\tck{\cdot}{\Delta'}{\emptyset}{e_2'}{t'}{\I}$

        \caseText{\textbf{Goal B} follows by T-Sub with $\subty{\semicctx{\Gamma'}{\Delta'}}{\Delta'}$ For \textbf{Goal B}, assume:}

        \caseFact{19} $i_3 = \I$

        \caseFact{20} $\emptyOn{\eta}{\catpairTm{e_1}{e_2}}$

        \caseText{By (9) with (19) and (20)}

        \caseFact{21} $i_1 = \I$

        \caseFact{22} $\neg\left(\nullable{s}\right)$

        \caseText{By (12), with (21) and (20)}

        \caseFact{23} $\isEmpty{p_1}$

        \caseText{By Theorem~\ref{thm:empty-and-maximal-imply-nullable} with (2), (23), and (10)}

        \caseFact{24} $\nullable{s}$

        \caseText{But (22) and (24) are contradictory.}
    }

    \jcase{7}{S-Par-L}{
        \jgivengoalThree{
            \caseFact{1} $\eta(z) \mapsto \parp{p_1}{p_2}$

            \caseFact{2} $\prefixstep{\eta[x\mapsto p_1,y\mapsto p_2]}{e}{ }{e'}{p}$

            \caseFact{3} $\envHasType{\eta}{\Gamma(z : s \| t)}$

            \caseFact{4} $\tck{\cdot}{\Gamma(\commactx{x : s}{y : t})}{\emptyset}{e}{r}{i}$
        }{
            $\prefixHasType{p'}{r}$
        }{
            If $\derivrel{\eta}{\Gamma(z : s \| t)}{\Gamma_0}$ and $\derivrel{p}{r}{r'}$, then $\tck{\cdot}{\Gamma_0}{\emptyset}{\letparTm{x}{y}{z}{e'}}{r'}{\I}$
        }{
            If $i = \I$ and $\emptyOn{\eta}{\letparTm{x}{y}{z}{e}}$ then $\isEmpty{p}$
        }

        \caseText{By Theorem~\ref{thm:env-lookup} on (1) and (3):}

        \caseFact{5} $\prefixHasType{\parp{p_1}{p_2}}{s \| t}$

        \caseText{Inverting (5)}

        \caseFact{6} $\prefixHasType{p_1}{s}$

        \caseFact{7} $\prefixHasType{p_2}{t}$

        \caseText{By Theorem~\ref{thm:env-par-bind} on (3), (6) and (7)}

        \caseFact{8} $\envHasType{\eta[x \mapsto p_1,y\mapsto p_2]}{\Gamma(\commactx{x:s}{y:t})}$

        \caseText{By IH on (2), (4), and (8)}

        \caseFact{9} $\prefixHasType{p}{r}$

        \caseFact{10} For all $\Gamma_0'$ and $r'$, if $\derivrel{\eta[x \mapsto p_1,y \mapsto p_2]}{\Gamma(\commactx{x : s}{y : t})}{\Gamma_0'}$ and  $\derivrel{p}{r}{r'}$ then $\tck{\cdot}{\Gamma_0'}{\emptyset}{e'}{r'}{\I}$

        \caseFact{11} If $i = \I$ and $\emptyOn{\eta[x \mapsto p_1, y \mapsto p_2]}{e}$ then $\isEmpty{p}$

        \caseText{\textbf{Goal A} is complete by (9). For \textbf{Goal B}, we assume that there are $\Gamma_0$ and $r'$ so that:}

        \caseFact{12} $\derivrel{\eta}{\Gamma(z : s \| t)}{\Gamma_0}$

        \caseFact{13} $\derivrel{p}{r}{r'}$

        \caseText{By two uses of Theorem~\ref{thm:derivrel-fun}, we have $s'$ and $t'$ so that}

        \caseFact{14} $\derivrel{\parp{p_1}{p_2}}{s\|t}{s' \| t'}$

        \caseText{By Theorem~\ref{thm:env-lookup-deriv} on (1), (3), and (14), we have $\Gamma'(-)$ so that:}

        \caseFact{15} $\Gamma_0 = \Gamma'(z : s' \| t')$

        \caseText{By Theorem~\ref{thm:env-par-bind-deriv} on (1), (12), and (15)}

        \caseFact{16} $\derivrel{\eta[x\mapsto p_1,y\mapsto p_2]}{\Gamma(\commactx{x:s}{y:t})}{\Gamma'(\commactx{x:s'}{y:t'})}$

        \caseText{By (10) on (13) and (16), we have:}

        \caseFact{17} $\tck{\cdot}{\Gamma'(\commactx{x:s'}{y:t'})}{\emptyset}{e'}{r'}{\I}$

        \caseText{\textbf{Goal B} follows by T-Par-L on (17)}

        \caseText{For \textbf{Goal C}, assume:}

        \caseFact{18} $i = \I$
        
        \caseFact{19} $\emptyOn{\eta}{\letparTm{x}{y}{z}{e}}$

        \caseText{In particular with (19), since $z \in \text{fv}(\letparTm{x}{y}{z}{e})$, we have $\isEmpty{\eta(z)}$ and so:}

        \caseFact{20} $\isEmpty{p_1}$

        \caseFact{21} $\isEmpty{p_2}$

        \caseText{By (19), (20), and (21)}

        \caseFact{22} $\emptyOn{\eta[x \mapsto p_1,y \mapsto p_2]}{e}$

        \caseText{\textbf{Goal C} follows by (11), with (18) and (22)}
    }

    \jcase{8}{S-Cat-L-1}{
        \jgivengoalThree{
            \caseFact{1} $\tck{\cdot}{\Gamma(\semicctx{x : s}{y : t})}{\emptyset}{e}{r}{i}$

            \caseFact{2} $\envHasType{\eta}{\Gamma(z : s \cdot t)}$

            \caseFact{3} $\eta(z) \mapsto \catpA{p}$

            \caseFact{4} $\prefixstep{\eta[x\mapsto p,y\mapsto \emp{t}]}{e}{ }{e'}{p'}$
        }{
            $\prefixHasType{p'}{r}$
        }{
            $\forall \Gamma_0,r'$ if $\derivrel{\eta}{\Gamma(z : s \cdot t)}{\Gamma_0}$ and $\derivrel{p'}{r}{r'}$, then $\tck{\cdot}{\Gamma_0}{\emptyset}{\letcatTm{t}{x}{y}{z}{e'}}{r'}{\I}$
        }{
            If $i = \I$ and $\emptyOn{\eta}{\letcatTm{t}{x}{y}{z}{e}}$ then $\isEmpty{p'}$
        }

        \caseText{By Theorem~\ref{thm:env-lookup} on (2) and (3):}

        \caseFact{5} $\prefixHasType{\catpA{p}}{s \cdot t}$

        \caseText{Inverting (5)}

        \caseFact{6} $\prefixHasType{p}{s}$

        \caseText{By Theorem~\ref{thm:env-cat-bind-1} on (2) and (6)}

        \caseFact{7} $\envHasType{\eta[x \mapsto p,y\mapsto\emp{t}]}{\Gamma(\semicctx{x:s}{y:t})}$

        \caseText{By IH on (2), (4), and (7)}

        \caseFact{8} $\prefixHasType{p'}{r}$

        \caseFact{9} For all $\Gamma_0$ and $r'$, if $\derivrel{\eta[x \mapsto p,y\mapsto\emp{t}]}{\Gamma(\semicctx{x:s}{y:t})}{\Gamma_0}$ and $\derivrel{p'}{r}{r'}$, then $\tck{\cdot}{\Gamma_0}{\emptyset}{e'}{r'}{\I}$

        \caseFact{10} If $i = \I$ and $\emptyOn{\eta[x \mapsto p, y \mapsto\emp{t}]}{e}$ then $\isEmpty{p'}$

        \caseText{\textbf{Goal A} is completed by (8). For \textbf{Goal B}, we assume that there are $\Gamma_0$ and $r'$ such that:}

        \caseFact{11} $\derivrel{\eta}{\Gamma(z : s \cdot t)}{\Gamma_0}$

        \caseFact{12} $\derivrel{p'}{r}{r'}$

        \caseText{By Theorem~\ref{thm:derivrel-fun} on (6), there is some $s'$ so that}

        \caseFact{13} $\derivrel{p}{s}{s'}$

        \caseText{And so by definition from (13)}

        \caseFact{14} $\derivrel{\catpA{p}}{s \cdot t}{s' \cdot t}$

        \caseText{By Theorem~\ref{thm:env-lookup-deriv} on (3), (11), and (14), there is some $\Gamma'(-)$ so that}

        \caseFact{15} $\Gamma_0 = \Gamma'(z : s' \cdot t)$

        \caseText{By Theorem~\ref{thm:env-cat-bind-deriv-1} on (3) and (11) using (15)}

        \caseFact{16} $\derivrel{\eta[x \mapsto p,y\mapsto \emp{t}]}{\Gamma(\semicctx{x:s}{y:t})}{\Gamma'(\semicctx{x:s'}{y:t})}$

        \caseText{By (9), with (12) and (16)}

        \caseFact{17} $\tck{\cdot}{\Gamma'(\semicctx{x:s'}{y:t})}{\emptyset}{e'}{r'}$

        \caseText{\textbf{Goal B} follows by T-Cat-L on (17).}

        \caseText{For \textbf{Goal C}, assume:}

        \caseFact{18} $i = \I$

        \caseFact{19} $\emptyOn{\eta}{\letcatTm{t}{x}{y}{z}{e}}$

        \caseText{In particular with (19), since $z \in \text{fv}(\letcatTm{t}{x}{y}{z}{e})$, we have $\isEmpty{\eta(z)}$ and so:}

        \caseFact{20} $\isEmpty{\catpA{p}}$

        \caseText{Inverting (20)}

        \caseFact{21} $\isEmpty{p}$

        \caseText{By (21), (19) and Theorem~\ref{thm:emp-is-empty}}

        \caseFact{22} $\emptyOn{\eta[x \mapsto p,y \mapsto \emp{t}]}{e}$

        \caseText{\textbf{Goal C} follows by (10), with (18) and (22)}
    }

    \jcase{9}{S-Cat-L-2}{
        \jgivengoalThree{
            \caseFact{1} $\tck{\cdot}{\Gamma(\semicctx{x : s}{y : t})}{\emptyset}{e}{r}{i}$

            \caseFact{2} $\envHasType{\eta}{\Gamma(z : s \cdot t)}$

            \caseFact{3} $\eta(z) \mapsto \catpB{p_1}{p_2}$

            \caseFact{4} $\prefixstep{\eta[x\mapsto p_1,y\mapsto p_2]}{e}{ }{e'}{p}$
        }{
            $\prefixHasType{p}{r}$
        }{
            $\forall \Gamma_0,r'$ if $\derivrel{\eta}{\Gamma(z:s\cdot t)}{\Gamma_0}$ and $\derivrel{p}{r}{r'}$ then $\tck{\cdot}{\Gamma_0}{\emptyset}{{\cutTm{x}{\sinkTm{p_1}}{e'[z/y]}}}{r'}{\I}$
        }{
            If $i = \I$ and $\emptyOn{\eta}{\letcatTm{t}{x}{y}{z}{e}}$ then $\isEmpty{p}$
        }

        \caseText{By Theorem~\ref{thm:env-lookup} on (2) and (3):}

        \caseFact{5} $\prefixHasType{\catpB{p_1}{p_2}}{s \cdot t}$

        \caseText{Inverting (5)}

        \caseFact{6} $\prefixHasType{p_1}{s}$

        \caseFact{7} $\isMaximal{p_1}$

        \caseFact{8} $\prefixHasType{p_2}{t}$

        \caseText{By Theorem~\ref{thm:env-cat-bind-2} on (2), (6), (7), and (8),}

        \caseFact{9} $\envHasType{\eta[x\mapsto p_1,y \mapsto p_2]}{\Gamma(\semicctx{x:s}{y:t})}$

        \caseText{By IH on (1), (9), and (4)}

        \caseFact{10} $\prefixHasType{p}{r}$

        \caseFact{11} For all $\Gamma_0$ and $r'$, if $\derivrel{\eta[x \mapsto p_1,y\mapsto p_2]}{\Gamma(\semicctx{x : s}{y : t})}{\Gamma_0}$ and $\derivrel{p}{r}{r'}$, then $\tck{\cdot}{\Gamma_0}{\emptyset}{e'}{r'}{\I}$

        \caseFact{12} If $i = \I$ and $\emptyOn{\eta}{\letcatTm{t}{x}{y}{z}{e}}$ then $\isEmpty{p}$

        \caseText{\textbf{Goal A} is complete by (10). For \textbf{Goal B}, we assume that there are $\Gamma_0$ and $r'$ such that:}

        \caseFact{13} $\derivrel{\eta}{\Gamma(z : s \cdot t)}{\Gamma_0}$

        \caseFact{14} $\derivrel{p}{r}{r'}$

        \caseText{By Theorem~\ref{thm:derivrel-fun} on (8), there is some $t'$ so that}

        \caseFact{15} $\derivrel{p_2}{t}{t'}$

        \caseText{By definition from (15)}

        \caseFact{16} $\derivrel{\catpB{p_1}{p_2}}{s \cdot t}{t'}$

        \caseText{By Theorem~\ref{thm:env-lookup-deriv} on (13), (3), and (16), there is some $\Gamma'(-)$ so that:}

        \caseFact{17} $\Gamma_0 = \Gamma'(z : t')$

        \caseText{By Theorem~\ref{thm:env-cat-bind-deriv-2} on (13) and (3), there is some $s'$ so that}

        \caseFact{18} $\derivrel{\eta[x \mapsto p_1,y\mapsto p_2]}{\Gamma(\semicctx{x:s}{y:t})}{\Gamma'(\semicctx{x:s'}{y:t'})}$

        \caseFact{19} $\derivrel{p_1}{s}{s'}$

        \caseText{By (11) on (14) and (18), we have:}

        \caseFact{20} $\tck{\cdot}{\Gamma'(\semicctx{x:s'}{y:t'})}{\emptyset}{e'}{r'}{\I}$

        \caseText{By substitution on (20), we have:}

        \caseFact{21} $\tck{\cdot}{\Gamma'(\semicctx{x:s'}{z:t'})}{\emptyset}{e'[z/y]}{r'}{\I}$

        \caseText{By Theorem~\ref{thm:sink-typing} on (7), (6), and (19)}

        \caseFact{22} $\tck{\cdot}{\cdot}{\emptyset}{\sinkTm{p_1}}{s'}{\I}$

        \caseText{By T-Let on (21), (22).}

        \caseFact{23} $\tck{\cdot}{\Gamma'(\semicctx{\cdot}{z:t'})}{\emptyset}{\cutTm{x}{\sinkTm{p_1}}{e'[z/y]}}{r'}{\I}$

        \caseText{\textbf{Goal B} follows by (23) with the T-Sub using $\subty{\Gamma'(z : t')}{\Gamma'(\semicctx{\cdot}{z : t'})}$}.

        \caseText{For \textbf{Goal C}, assume:}

        \caseFact{24} $i = \I$

        \caseFact{25} $\emptyOn{\eta}{\letcatTm{t}{x}{y}{z}{e}}$

        \caseText{In particular with (25), since $z \in \text{fv}(\letcatTm{t}{x}{y}{z}{e})$, we have $\isEmpty{\eta(z)}$ and so:}

        \caseFact{26} $\isEmpty{\catpB{p_1}{p_2}}$

        \caseText{But (26) is impossible}
    }

    \jcase{10}{S-Plus-R-1}{
        \jgivengoalThree{
            \caseFact{1} $\tck{\cdot}{\Gamma}{\emptyset}{e}{s}{i}$

            \caseFact{2} $\envHasType{\eta}{\Gamma}$

            \caseFact{3} $\prefixstep{\eta}{e}{ }{e'}{p}$
        }{
            $\prefixHasType{\sumpA{p}}{s + t}$
        }{
            For all $\Gamma_0$ and $r$, if $\derivrel{\eta}{\Gamma}{\Gamma_0}$ and $\derivrel{\sumpA{p}}{s+t}{r}$ then $\tck{\cdot}{\Gamma_0}{\emptyset}{e'}{r}{\I}$
        }{
            If $\I = \J$ and $\emptyOn{\eta}{\sumInlTm{e}}$ then $\isEmpty{\sumpA{p}}$
        }

        \caseText{By IH on (1), (2), and (3)}

        \caseFact{4} $\prefixHasType{p}{s}$

        \caseFact{5} For all $\Gamma_0$ and $s'$, if $\derivrel{\eta}{\Gamma}{\Gamma_0}$ and $\derivrel{p}{s}{s'}$ then $\tck{\cdot}{\Gamma_0}{\emptyset}{e'}{s'}{\I}$

        \caseText{\textbf{Goal A} follows from (4), and \textbf{Goal B} follows from (5), inverting the derivation of $\derivrel{\sumpA{p}}{s+t}{r}$. The premise of \textbf{Goal C} is absurd.}
    }

    \jcase{11}{S-Plus-R-2}{
        \caseText{Identical to previous.}
    }

    \jcase{12}{S-Plus-L-1}{
        \jgivengoalThree{
            \caseFact{1} $\envHasType{\eta}{\Gamma(z:s+t)}$

            \caseFact{2} $\derivrel{\eta}{\Gamma(z:s+t)}{\Gamma'}$

            \caseFact{3} $\tck{\cdot}{\Gamma(x:s)}{\emptyset}{e_1}{r}{i_1}$

            \caseFact{4} $\tck{\cdot}{\Gamma(y:t)}{\emptyset}{e_2}{r}{i_2}$

            \caseFact{5} $\envHasType{\eta'}{\Gamma'}$

            \caseFact{6} $\prefixConcatRel{\eta'}{\eta}{\eta''}$

            \caseFact{7} $\eta''(z) = \sumpEmp$
        }{
            $\prefixHasType{\emp{r}}{r}$
        }{
            For all $\Gamma''$ and $r'$, if $\derivrel{\eta'}{\Gamma'}{\Gamma''}$ and $\derivrel{\emp{r}}{r}{r'}$ then $\tck{\cdot}{\Gamma''}{\emptyset}{\sumcaseTm{r}{\eta''}{z}{x}{e_1}{y}{e_2}}{r'}{\I}$
        }{
            If $i = \I$ and $\emptyOn{\eta}{\sumcaseTm{r}{\eta'}{z}{x}{e_1}{y}{e_2}}$ then $\isEmpty{\emp{r}}$
        }

        \caseText{\textbf{Goal A} is complete by Theorem~\ref{thm:empty-prefix-correct}, and \textbf{Goal C} is immediate by Theorem~\ref{thm:emp-is-empty}. For \textbf{Goal B}, assume there are $\Gamma''$ and $r'$ so that}

        \caseFact{8} $\derivrel{\eta'}{\Gamma'}{\Gamma''}$

        \caseFact{9} $\derivrel{\emp{r}}{r}{r'}$

        \caseText{By Theorem~\ref{thm:derivrel-emp}}

        \caseFact{10} $r = r'$

        \caseText{By Theorem~\ref{thm:env-concat-correct} on (1), (2), (5), (6), and (8)}

        \caseFact{11} $\envHasType{\eta''}{\Gamma(z : s + t)}$

        \caseFact{12} $\derivrel{\eta''}{\Gamma(z:s+t)}{\Gamma''}$

        \caseText{\textbf{Goal B} follows immediately by T-Plus-L on (11), (12), (3), (4), and (7).}
    }

    \jcase{13}{S-Plus-L-2}{
        \jgivengoalTwo{
            \caseFact{1} $\envHasType{\eta}{\Gamma(z:s+t)}$

            \caseFact{2} $\derivrel{\eta}{\Gamma(z:s+t)}{\Gamma'}$

            \caseFact{3} $\tck{\cdot}{\Gamma(x:s)}{\emptyset}{e_1}{r}{i_1}$

            \caseFact{4} $\tck{\cdot}{\Gamma(y:t)}{\emptyset}{e_2}{r}{i_2}$

            \caseFact{5} $i = \I \implies \eta(z) = \sumpEmp$

            \caseFact{6} $\envHasType{\eta'}{\Gamma'}$

            \caseFact{7} $\prefixConcatRel{\eta}{\eta'}{\eta''}$

            \caseFact{8} $\eta''(z) \mapsto \sumpA{p}$

            \caseFact{9} $\prefixstep{\eta''[x \mapsto p]}{e_1}{ }{e_1'}{p'}$
        }{
            $\prefixHasType{p'}{r}$
        }{
            If $\derivrel{\eta'}{\Gamma'}{\Gamma''}$ and $\derivrel{p'}{r}{r'}$ then $\tck{\cdot}{\Gamma''}{\emptyset}{e_1'[x/z]}{r'}{\I}$
        }{
            If $i = \I$ and $\emptyOn{\eta'}{\sumcaseTm{r}{\eta}{z}{x}{e_1}{y}{e_2}}$ then $\isEmpty{p'}$
        }

        \caseText{By Theorem~\ref{thm:env-concat-correct} on (1), (2), (5), (6), then:}

        \caseFact{10} $\prefixHasType{\eta''}{\Gamma(z:s+t)}$

        \caseFact{11} If $\derivrel{\eta'}{\Gamma'}{\Gamma''}$, then $\derivrel{\eta''}{\Gamma(z:s+t)}{\Gamma''}$

        \caseText{By Theorem~\ref{thm:env-lookup} on (10)}

        \caseFact{12} $\prefixHasType{\sumpA{p}}{s+t}$

        \caseText{By inversion on (12)}

        \caseFact{13} $\prefixHasType{p}{s}$

        \caseText{By Theorem~\ref{thm:env-subctx-bind} on (10) and (13)}

        \caseFact{14} $\prefixHasType{\eta''[x \mapsto p]}{\Gamma(x : s)}$

        \caseText{By IH on (3), (14), and (9)}

        \caseFact{15} $\prefixHasType{p'}{r}$

        \caseFact{16} If $\derivrel{\eta''[x \mapsto p]}{\Gamma(x:s)}{\Gamma_0}$ and $\derivrel{p'}{r}{r'}$, then $\tck{\cdot}{\Gamma_0}{\emptyset}{e_1'}{r'}$

        \caseText{\textbf{Goal A} is complete by (15). For \textbf{Goal B}, suppose that there are $\Gamma''$ and $r'$ such that:}

        \caseFact{17} $\derivrel{p'}{r}{r'}$

        \caseFact{18} $\derivrel{\eta'}{\Gamma'}{\Gamma''}$

        \caseText{By (11) and (18)}

        \caseFact{19} $\derivrel{\eta''}{\Gamma(z:s+t)}{\Gamma''}$

        \caseText{By Theorem~\ref{thm:derivrel-fun}, there is some $s'$ so that}

        \caseFact{20} $\derivrel{p}{s}{s'}$

        \caseText{By Theorem~\ref{thm:env-subctx-bind-deriv},  on (19), there is some $\Gamma'''(-)$ so that:}

        \caseFact{21} For all $\Delta'$ and $\Delta''$ and $\eta'$, if $\derivrel{\eta'}{\Delta'}{\Delta''}$ and $\agree{\eta}{\eta'}{\Delta}{\Delta'}$
                      then $\derivrel{\eta \cdot \eta'}{\Gamma(\Delta')}{\Gamma'''(\Delta'')}$

        \caseText{By (20) and (21), taking $\eta' = \{z \mapsto \stpA{p}\}$, we have:}

        \caseFact{22} $\Gamma'' = \Gamma'''(z : s')$

        \caseText{Also by (21):}

        \caseFact{23} $\derivrel{\eta''[x \mapsto p]}{\Gamma(x : s)}{\Gamma'''(x : s')}$

        \caseText{By (16) with (18), and (23)}

        \caseFact{24} $\tck{\cdot}{\Gamma'''(x : s')}{\emptyset}{e_1'}{r'}{\I}$

        \caseText{\textbf{Goal B} follows immediately by substituting $x$ for $z$ in (23).}

        \caseFact{For \textbf{Goal C}, assume:}

        \caseFact{25} $i = \I$

        \caseFact{26} $\emptyOn{\eta'}{\sumcaseTm{r}{\eta}{z}{x}{e_1}{y}{e_2}}$

        \caseText{In particular with 26, since $z \in \text{fv}(\sumcaseTm{r}{\eta}{z}{x}{e_1}{y}{e_2})$}

        \caseFact{27} $\isEmpty{\eta'(z)}$

        \caseText{By (5) with (25)}

        \caseFact{28} $\eta(z) = \sumpEmp$

        \caseText{By Definition of environment concatenation, using (7), (8), (28)}

        \caseFact{29} $\prefixConcatRel{\sumpEmp}{\eta'(z)}{\sumpA{p}}$

        \caseText{But by inversion, we note that the only case has $\eta'(z) = \sumpA{p_0}$ for some $p_0$, which contradicts (27). This completes \textbf{Goal C}.}
    }

    \jcase{14}{S-Plus-L-3}{Identical to previous.}

    \jcase{15}{S-Star-R-1}{Immediate.}

    \jcase{16}{S-Star-R-2-1}{Identical to S-Cat-R-1.}

    \jcase{17}{S-Star-R-2-2}{Identical to S-Cat-R-2.}

    \jcase{18}{S-Star-L-1}{Identical to S-Plus-L-1}

    \jcase{19}{S-Star-L-2}{Identical to S-Plus-L-2}

    \jcase{20}{S-Star-L-3}{Identical to S-Plus-L-2}

    \jcase{21}{S-Star-L-4}{Identical to S-Plus-L-2}

    \jcase{22}{S-Let}{
        \jgivengoalTwo{
            \caseFact{1} $\tck{\cdot}{\Delta}{\emptyset}{e_1}{s}{\I}$

            \caseFact{2} $\tck{\cdot}{\Gamma(x:s)}{\emptyset}{e_2}{t}{i}$

            \caseFact{3} $\envHasType{\eta}{\Gamma(\Delta)}$

            \caseFact{4} $\prefixstep{\eta}{e_1}{ }{e_1'}{p}$

            \caseFact{5} $\prefixstep{\eta[x\mapsto p]}{e_2}{ }{e_2'}{p'}$
        }{
            $\prefixHasType{p'}{t}$
        }{
            For all $\Gamma_0$ and $t'$, if $\derivrel{\eta}{\Gamma(\Delta)}{\Gamma_0}$ and $\derivrel{p'}{t}{t'}$ then $\tck{\cdot}{\Gamma_0}{\emptyset}{\cutTm{x}{e_1'}{e_2'}}{t'}{\I}$
        }{
            If $i = \I$ and $\emptyOn{\eta}{\cutTm{x}{e_1}{e_2}}$ then $\isEmpty{p'}$
        }

        \caseText{By Theorem~\ref{thm:env-subctx-lookup} on (3)}

        \caseFact{6} $\envHasType{\eta}{\Delta}$

        \caseText{By IH on (1), (6), and (4)}

        \caseFact{7} $\prefixHasType{p}{s}$

        \caseFact{8} For all $\Delta'$ and $s'$, if $\derivrel{\eta}{\Delta}{\Delta'}$ and $\derivrel{p}{s}{s'}$ then $\tck{\Delta'}{e_1'}{s'}{\I}$

        \caseFact{9} If $\I = \I$ and $\emptyOn{\eta}{e_1}$ then $\isEmpty{p}$

        \caseText{By Theorem~\ref{lemma:batch-sem-aux} and (9) on (5)}

        \caseFact{10} $\agree{\eta}{\{x \mapsto p\}}{\Delta}{x : s}$

        \caseText{By Theorem~\ref{thm:env-subctx-bind} on (4), (8), (10):}

        \caseFact{11} $\envHasType{\eta[x \mapsto p]}{\Gamma(x:s)}$

        \caseText{By IH on (3), (6), and (11)}

        \caseFact{12} $\prefixHasType{p'}{t}$

        \caseFact{13} For all $\Gamma_0'$ and $t'$, if $\derivrel{\eta[x \mapsto p]}{\Gamma(x:s)}{\Gamma_0'}$ and $\derivrel{p'}{t}{t'}$, then $\tck{\Gamma_0'}{e_2'}{t'}{\I}$

        \caseFact{14} If $i = \I$ and $\emptyOn{\eta[x \mapsto p]}{e_2}$ then $\isEmpty{p'}$

        \caseText{\textbf{Goal A} is complete by (12). For \textbf{Goal B}, we assume that there are $\Gamma_0$ and $t'$ such that:}

        \caseFact{15} $\derivrel{\eta}{\Gamma(\Delta)}{\Gamma_0}$

        \caseFact{16} $\derivrel{p'}{t}{t'}$

        \caseText{By Theorem~\ref{thm:env-subctx-bind-deriv} on (15), there is some $\Gamma'(-)$ such that}

        \caseFact{17} For all $\Delta'$ and $\Delta''$, if $\derivrel{\eta'}{\Delta'}{\Delta''}$ then $\derivrel{\eta \cdot \eta'}{\Gamma(\Delta')}{\Gamma'(\Delta'')}$

        \caseText{By Theorem~\ref{thm:derivrel-env-fun} on (7), there is some $\Delta'$ so that}

        \caseFact{18} $\derivrel{\eta}{\Delta}{\Delta'}$

        \caseText{By the uniqueness in Theorem~\ref{thm:derivrel-env-fun} and (17) and (18)}

        \caseFact{19} $\Gamma_0 = \Gamma'(\Delta')$

        \caseText{By Theorem~\ref{thm:derivrel-fun} on (8), there is some $s'$ so that}

        \caseFact{20} $\derivrel{p}{s}{s'}$

        \caseText{By (9) on (18) and (20)}

        \caseFact{21} $\tck{\cdot}{\Delta'}{\emptyset}{e_1'}{s'}{\I}$

        \caseText{By (17) on (20)}

        \caseFact{22} $\derivrel{\eta[x \mapsto p]}{\Gamma(x:s)}{\Gamma'(x:s')}$

        \caseText{By (13) on (22) and (16)}

        \caseFact{23} $\tck{\cdot}{\Gamma'(x:s')}{\emptyset}{e_2'}{t'}{\I}$

        \caseText{\textbf{Goal B} follows by T-Let on (21) and (23)}

        \caseText{For \textbf{Goal C}, assume:}

        \caseFact{24} $i = \I$

        \caseFact{25} $\emptyOn{\eta}{\cutTm{x}{e_1}{e_2}}$

        \caseText{By (25), applying (9)}

        \caseFact{26} $\isEmpty{p}$

        \caseText{By (25) and (26)}

        \caseFact{27} $\emptyOn{\eta[x \mapsto p]}{e_2}$

        \caseText{\textbf{Goal C} follows by (14) applied to (24) and (27)}
    }

    \jcase{23}{S-HistPgm}{
        \jgivengoalThree{
            \caseFact{1} $\lctck{\Omega}{M}{\flatten{s}}$

            \caseFact{2} $M \downarrow v$

            \caseFact{3} $p = \histValToPrefix{v}{s}$
        }{
            $\prefixHasType{p}{s}$
        }{
            If $\derivrel{\eta}{\Gamma}{\Gamma'}$ and $\derivrel{p}{s}{s'}$, then $\tck{\cdot}{\Gamma'}{\emptyset}{\sinkTm{p}}{s'}{\I}$
        }{
            If $\J = \I$ and $\emptyOn{\eta}{\emptyset}$ then $\isEmpty{p}$
        }

        \caseText{By Definition~\ref{def:histpgm}:}

        \caseFact{4} $v : \flatten{s}$

        \caseText{By Definition~\ref{def:type-ctx-flatten} on (4):}

        \caseFact{5} $\prefixHasType{p}{s}$

        \caseFact{6} $\isMaximal{p}$

        \caseText{\textbf{Goal A} follows by (5). For \textbf{Goal B}, assume:}

        \caseFact{7} $\derivrel{\eta}{\Gamma}{\Gamma'}$

        \caseFact{8} $\derivrel{p}{s}{s'}$

        \caseText{Then \textbf{Goal B} follows by Theorem~\ref{thm:sink-typing} on (5), (6), and (8). The premise of \textbf{Goal C} is absurd.}
    }

    \jcase{24}{S-Wait-1}{Identical to S-Plus-L-1}

    \jcase{25}{S-Wait-2}{
        \jgivengoalThree{
            \caseFact{1} $\envHasType{\eta'}{\Gamma(x : s)}$

            \caseFact{2} $\derivrel{\eta'}{\Gamma(x:s)}{\Gamma'}$

            \caseFact{3} $\tck{x : \flatten{s}}{\Gamma(\cdot)}{\emptyset}{e}{t}{i'}$

            \caseFact{4} $i = \I \implies \neg\left(\isMaximal{\eta(x)}\right) \wedge \neg\left(\nullable{s}\right)$

            \caseFact{5} $\envHasType{\eta}{\Gamma'}$

            \caseFact{6} $\prefixConcatRel{\eta'}{\eta}{\eta''}$

            \caseFact{7} $\eta''(x) = p$

            \caseFact{8} $\isMaximal{p}$

            \caseFact{9} $\prefixstep{\eta''}{e[\flatten{p}/x]}{n}{e'}{p'}$
        }{
            $\prefixHasType{p'}{t}$
        }{
            If $\derivrel{\eta}{\Gamma'}{\Gamma''}$ and $\derivrel{p'}{t}{t'}$, then $\tck{\cdot}{\Gamma''}{\emptyset}{e'}{t'}$
        }{
            If $i = \I$ and $\emptyOn{\eta}{{\waitTm{\eta'}{t}{x}{e}}}$ then $\isEmpty{p'}$
        }

        \caseText{By Theorem~\ref{thm:env-concat-correct} on (1), (2), (5), and (6)}

        \caseFact{10} $\envHasType{\eta''}{\Gamma(x : s)}$

        \caseFact{11} If $\envHasType{\eta'}{\Gamma'}$ and $\derivrel{\eta'}{\Gamma'}{\Gamma''}$, then $\derivrel{\eta''}{\Gamma(x:s)}{\Gamma''}$

        \caseText{By Theorem~\ref{thm:env-lookup} on (10) and (7) }

        \caseFact{12} $\prefixHasType{p}{s}$

        \caseText{By Definition~\ref{def:type-ctx-flatten} on (12) and (8)}

        \caseFact{13} $\flatten{p} : \flatten{s}$

        \caseText{By Definition~\ref{def:histpgm} with (3) and (13)}

        \caseFact{14} $\tck{\cdot}{\Gamma(\cdot)}{\emptyset}{e[\flatten{p}/x]}{t}{i'}$

        \caseText{By Theorem~\ref{thm:subty-env} on (10), using $\Gamma(x : s) \leq \Gamma(\cdot)$}

        \caseFact{15} $\envHasType{\eta''}{\Gamma(\cdot)}$

        \caseText{By IH on (9) with (14) and (15)}

        \caseFact{16} $\prefixHasType{p'}{t}$

        \caseFact{17} For all $\Gamma''$, if $\derivrel{\eta''}{\Gamma(\cdot)}{\Gamma''}$ and $\derivrel{p'}{t}{t'}$, then $\tck{\cdot}{\Gamma''}{\emptyset}{e'}{t'}{\I}$

        \caseText{(16) completes \textbf{Goal A}. For \textbf{Goal B}, suppose:}

        \caseFact{18} $\derivrel{\eta}{\Gamma'}{\Gamma''}$

        \caseFact{19} $\derivrel{p'}{t}{t'}$

        \caseText{By (18) and (11)}

        \caseFact{20} $\derivrel{\eta''}{\Gamma(x:s)}{\Gamma''}$

        \caseText{By Theorem~\ref{thm:env-subctx-bind-deriv}, on (20) using (6), there is some $\Gamma'''(-)$ so that:}

        \caseFact{21} For all $\Delta'$ and $\Delta''$ and $\eta_0$, if $\derivrel{\eta_0}{\Delta'}{\Delta''}$ and $\agree{\eta''}{\eta_0}{\Delta}{\Delta'}$
                      then $\derivrel{\eta'' \cdot \eta_0}{\Gamma(\Delta')}{\Gamma'''(\Delta'')}$

        \caseText{By (21), taking $\Delta' = \cdot$, $\Delta'' = \left(x : \deriv{p}{s}\right)$, and $\eta_0 = \{x \mapsto p\}$}

        \caseFact{22} $\Gamma'' = \Gamma'''(x : \deriv{p}{s})$h

        \caseText{By (21) again, taking $\Delta' = \Delta'' = \cdot$}

        \caseFact{23} $\derivrel{\eta''}{\Gamma(\cdot)}{\Gamma'''(\cdot)}$

        \caseText{By (17) on (23) and (19)}

        \caseFact{24} $\tck{\cdot}{\Gamma'''(\cdot)}{\emptyset}{e'}{t'}$

        \caseText{\textbf{Goal B} follows from (23), using $\Gamma'''(x : \deriv{p}{s}) \leq \Gamma'''(\cdot)$}

        \caseText{For \textbf{Goal C}, assume:}

        \caseFact{25} $i = \I$

        \caseFact{26} $\emptyOn{\eta}{{\waitTm{\eta'}{t}{x}{e}}}$

        \caseText{Since $x \in \text{fv}({\waitTm{\eta'}{t}{x}{e}})$, we have}

        \caseFact{27} $\isEmpty{\eta(x)}$

        \caseText{Applying (4) to (25)}

        \caseFact{27} $\neg\left(\isMaximal{\eta'(x)}\right)$
        
        \caseFact{28} $\neg\left(\nullable{s}\right)$

        \caseText{By the definition of environment concatenation, on (6) and (7)}

        \caseFact{29} $\prefixConcatRel{\eta'(x)}{\eta(x)}{p}$

        \caseText{By Theorem~\ref{thm:concat-maximal} on (29) and (8)}

        \caseFact{30} $\isMaximal{\eta(x)}$

        \caseText{But by Theorem~\ref{thm:empty-and-maximal-imply-nullable}, (27), (30), and (28) are contradictory. }
    }

    \jcase{26}{S-Fix}{
        \jgivengoalThree{
            \caseFact{1} $\tck{\Omega}{\Gamma}{\Omega \mid \Gamma \to s \, @ \, i}{e}{s}{i}$

            \caseFact{2} $\tck{\cdot}{\Gamma'}{\emptyset}{A}{\Gamma}{i}$

            \caseFact{3} $\lctck{\cdot}{\overline{M}}{\Omega}$

            \caseFact{4} $\envHasType{\eta}{\Gamma'}$

            \caseFact{5} $\overline{M} \downarrow \theta$

            \caseFact{6} $\prefixstep{\eta}{\cutTm{\Gamma}{A}{\fixsubst{e}{e}[\theta]}}{n}{e'}{p}$
        }{
            $\prefixHasType{p}{s}$
        }{
            If $\derivrel{\eta}{\Gamma'}{\Gamma''}$ and $\derivrel{p}{s}{s'}$, then $\tck{\cdot}{\Gamma''}{\emptyset}{e'}{s'}{\I}$
        }{
            If $i = \I$ and $\emptyOn{\eta}{\cutTm{\Gamma}{A}{e}}$ then $\isEmpty{p}$
        }

        \caseText{By Theorem~\ref{thm:fixsubst} on (1)}

        \caseFact{7} $\tck{\Omega}{\Gamma}{\emptyset}{\fixsubst{e}{e}}{s}{i}$

        \caseText{By Definition~\ref{def:histpgm} on (4)}

        \caseFact{8} $\theta : \Omega'$

        \caseText{Then again by Definition~\ref{def:histpgm} on (7) and (8)}

        \caseFact{9} $\tck{\cdot}{\Gamma}{\emptyset}{\fixsubst{e}{e}[\theta]}{s}{i}$

        \caseText{By \ruleName{T-ArgsLet} on (2) and (9)}

        \caseFact{10} $\tck{\cdot}{\Gamma'}{\emptyset}{{\cutTm{\Gamma}{A}{\fixsubst{e}{e}[\theta]}}}{s}{i}$

        \caseText{Goal follows immediately by IH on (6), with (4) and (10).}
    }

    \jcase{27}{S-ArgsLet}{
        \jgivengoalThree{
            \caseFact{1} $\tck{\cdot}{\Gamma_i}{\emptyset}{A}{\Gamma_o}$

            \caseFact{2} $\tck{\cdot}{\Gamma_o}{\emptyset}{e}{s}$

            \caseFact{3} $\envHasType{\eta}{\Gamma_i}$

            \caseFact{4} $\argsstep{\eta}{A}{\Gamma_o}{n_1}{A'}{\eta'}$

            \caseFact{5} $\prefixstep{\eta'}{e}{n_2}{e'}{p}$

            \caseFact{6} $\derivrel{\eta'}{\Gamma_o}{\Gamma_o'}$
        }{
            $\prefixHasType{p}{s}$
        }{
            For all $\Gamma_i'$ and $s'$, if $\derivrel{\eta}{\Gamma_i}{\Gamma_i'}$ and $\derivrel{p'}{s}{s'}$ then $\tck{\cdot}{\Gamma_i'}{\emptyset}{\cutTm{\Gamma_o'}{A'}{e'}}{s'}$
        }{
            If $i = \I$ and $\emptyOn{\eta}{\cutTm{\Gamma_o}{A}{e}}$ then $\isEmpty{p}$
        }

        \caseText{By IH on (4), with (1) and (3)}

        \caseFact{7} $\envHasType{\eta'}{\Gamma_o}$

        \caseFact{8} For all $\Gamma_i'$, if $\derivrel{\eta}{\Gamma_i}{\Gamma_i'}$ then $\tck{\cdot}{\Gamma_i'}{\emptyset}{A'}{\Gamma_o'}$

        \caseFact{9} If $i = \I$ and $\emptyOn{\eta}{A}$ then $\emptyOn{\eta'}{\Gamma_o}$

        \caseText{By IH on (5), with (2) and (7)}

        \caseFact{10} $\prefixHasType{p}{s}$

        \caseFact{11} For all $s'$, if $\derivrel{p}{s}{s'}$, then $\tck{\cdot}{\Gamma_o'}{\emptyset}{e'}{s'}$

        \caseFact{12} If $i = \I$ and $\emptyOn{\eta'}{e}$ then $\isEmpty{p}$

        \caseText{\textbf{Goal A} is (9). \textbf{Goal B} follows by assuming $\Gamma_i'$ and $s'$, specializing (8) and (10), and applying \ruleName{T-ArgsLet}.}

        \caseText{For \textbf{Goal C}, assume:}

        \caseFact{13} $i = \I$

        \caseFact{14} $\emptyOn{\eta}{\cutTm{\Gamma_o}{A}{e}}$

        \caseText{Because $\text{fv}(\cutTm{\Gamma_o}{A}{e}) = \text{fv}(A)$}

        \caseFact{15} $\emptyOn{\eta}{A}$

        \caseText{Applying (9) to (13) and (15)}

        \caseFact{16} $\emptyOn{\eta'}{\Gamma_o}$

        \caseText{Because of (2), $\text{fv}(e) \subseteq \Gamma_o$, and so (16) implies}

        \caseFact{17} $\emptyOn{\eta'}{e}$

        \caseText{\textbf{Goal C} is complete by applying (12) to (13) and (17)}
    }

    \jcase{28}{S-Args-Emp}{Immediate.}
    \jcase{29}{S-Args-Sng}{Immediate by IH.}
    \jcase{30}{S-Args-Comma}{Like \ruleName{T-Par-R}}
    \jcase{31}{S-Args-Semic-1-1}{Like \ruleName{T-Cat-R-1}}
    \jcase{32}{S-Args-Semic-1-2}{Like \ruleName{T-Cat-R-2}}
    \jcase{33}{S-Args-Semic-2-1}{Immediate by IH}
}

The following theorem proves that sink terms live up to their names. Given any maximal input prefix $p$,
the program $\sinkTm{p}$ will output an empty prefix of the appropriate type, and then step to itself.
\begin{theorem}[Sink Term Semantics Characterization]
    \label{thm:sink-step}
    If $\prefixHasType{p}{s}$, and $\isMaximal{p}$ then
    for all $n$ and $\eta$, we have $\prefixstep{\eta}{\sinkTm{p}}{n}{\sinkTm{p}}{\emp{\deriv{p}{s}}}$.
\end{theorem}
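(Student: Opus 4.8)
The plan is to proceed by induction on the derivation of $\prefixHasType{p}{s}$ (equivalently, by structural induction on $p$), carrying $\isMaximal{p}$ as a hypothesis and appealing to the definition of $\sinkTm{-}$, the derivative rules of Definition~\ref{tdef:derivrel}, and the empty-prefix clauses of Definition~\ref{def:emptyPrefix}. The derivative $\deriv{p}{s}$ is guaranteed to exist by Theorem~\ref{thm:derivrel-fun}, so the statement is well-posed. First I would discharge the five non-maximal prefix shapes---$\onepA$, $\catpA{p}$, $\sumpEmp$, $\stpEmp$, and $\stpA{p}$---by inverting $\isMaximal{p}$: none of these is derivable as maximal, so those cases are vacuous.

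For the maximal base cases $\epsp$, $\onepB$, and $\stpDone$, the sink term is literally $\epsTm$, so I would apply \ruleName{S-Eps-R} to obtain $\prefixstep{\eta}{\epsTm}{n}{\epsTm}{\epsp}$ for every $n$ and $\eta$. It then remains to check that the produced $\epsp$ is indeed $\emp{\deriv{p}{s}}$: in each case the derivative is $\epst$ (by the derivative rules for $\onepB$ and $\stpDone$, and trivially for $\epsp$), and $\emp{\epst} = \epsp$.

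The genuinely inductive work is in the parallel case and the four ``peeling'' cases. For $p = \parp{p_1}{p_2}$, inverting $\isMaximal{\parp{p_1}{p_2}}$ yields $\isMaximal{p_1}$ and $\isMaximal{p_2}$; since $\sinkTm{\parp{p_1}{p_2}} = \parpairTm{\sinkTm{p_1}}{\sinkTm{p_2}}$, I would apply \ruleName{S-Par-R} to the two instances of the induction hypothesis and confirm the output via $\emp{\deriv{p_1}{s} \| \deriv{p_2}{t}} = \parp{\emp{\deriv{p_1}{s}}}{\emp{\deriv{p_2}{t}}}$. For each of $\catpB{p_1}{p_2}$, $\sumpA{p}$, $\sumpB{p}$, and $\stpB{p}{p'}$, the sink term is by definition equal to the sink term of a single maximal sub-prefix, and the corresponding derivative rule makes $\deriv{p}{s}$ equal to the derivative of that same sub-prefix; after extracting the maximality of the sub-prefix by inversion, the goal is exactly the induction hypothesis applied at that sub-prefix, so these cases close immediately.

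The only bookkeeping subtlety---and the closest thing to an obstacle---is the gas index. Because \ruleName{S-Par-R} concludes at gas $n_1 + n_2$, to obtain the statement for an arbitrary fixed $n$ I would instantiate the two hypotheses at gas $n$ and $0$ (so that $n_1 + n_2 = n$); alternatively, one may prove the claim at $n = 0$ and lift it to all $n$ by Theorem~\ref{thm:prefix-sem-mono}. Every other rule used here passes the gas through unchanged, so no further step-index reasoning is required. Overall the argument is a routine case analysis; the care lies entirely in matching each produced prefix against $\emp{\deriv{p}{s}}$ using the derivative and empty-prefix definitions.
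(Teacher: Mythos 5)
Your proposal is correct: the paper states Theorem~\ref{thm:sink-step} without giving any proof, and your structural induction on $\prefixHasType{p}{s}$ is exactly the routine argument it implicitly relies on --- the five non-maximal prefix shapes are indeed vacuous under $\isMaximal{p}$, the base cases $\epsp$, $\onepB$, $\stpDone$ close by \ruleName{S-Eps-R} with $\deriv{p}{s} = \epst$, the parallel case closes by \ruleName{S-Par-R}, and the peeling cases ($\catpB{p_1}{p_2}$, $\sumpA{p}$, $\sumpB{p}$, $\stpB{p}{p'}$) reduce definitionally to the induction hypothesis on the relevant sub-prefix. Your handling of the gas index in the par case (instantiating the two hypotheses at $n$ and $0$, or appealing to Theorem~\ref{thm:prefix-sem-mono}) is a genuine subtlety correctly resolved, since sink terms contain no \texttt{fix} and so admit derivations at any gas.
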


\begin{theorem}[Homomorphism Theorem]
    \label{thm:hom-thm}
\begin{enumerate}
    \item Suppose:
    \begin{itemize}
        \item $\tck{\cdot}{\Gamma}{\emptyset}{e}{s}{i}$
        \item $\envHasType{\eta}{\Gamma}$
        \item $\envHasType{\eta'}{\deriv{\eta}{\Gamma}}$
        \item $\prefixstep{\eta}{e}{n_1}{e'}{p}$
        \item $\prefixstep{\eta'}{e'}{n_2}{e''}{p'}$
        \item $\prefixstep{\prefixConcat{\eta}{\eta'}}{e}{n_1 + n_2}{e'''}{p''}$
    \end{itemize}
    Then $e'' = e'''$ and $\prefixConcatRel{p}{p'}{p''}$.
    \item Suppose:
    \begin{itemize}
        \item $\tck{\cdot}{\Gamma_i}{\emptyset}{A}{\Gamma_i}{i}$
        \item $\envHasType{\eta_i}{\Gamma_i}$
        \item $\envHasType{\eta_i'}{\deriv{\eta_i}{\Gamma}}$
        \item $\argsstep{\eta_i}{A}{\Gamma_o}{n_1}{A'}{\eta_o}$
        \item $\argsstep{\eta_i'}{A'}{\deriv{\eta_o}{\Gamma_o}}{n_2}{A''}{\eta_o'}$
        \item $\argsstep{\prefixConcat{\eta_i}{\eta_i'}}{A}{\Gamma_o}{n_1 + n_2}{A'''}{\eta_o''}$
    \end{itemize}
    Then $A'' = A'''$, and $\prefixConcatRel{\eta_o}{\eta_o'}{\eta_o''}$.
\end{enumerate}
\end{theorem}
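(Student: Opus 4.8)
The plan is to prove both parts simultaneously by induction on the \emph{first} derivation---the step $\prefixstep{\eta}{e}{n_1}{e'}{p}$ in part (1) and $\argsstep{\eta_i}{A}{\Gamma_o}{n_1}{A'}{\eta_o}$ in part (2)---mirroring the organization of the soundness proof (Theorem~\ref{thm:prefix-pres}). For each semantic rule I would first invert the first step to pin down the shapes of $e$, $e'$, and $p$; this also determines the rule used by the second step $\prefixstep{\eta'}{e'}{n_2}{e''}{p'}$, up to a case split on the prefix that $e'$ reads out of $\eta'$. The content is then to reconstruct the single combined run $\prefixstep{\prefixConcat{\eta}{\eta'}}{e}{n_1+n_2}{e'''}{p''}$ and verify that it takes the matching branch, that $e'''=e''$, and that $\prefixConcat{p}{p'}$ is precisely $p''$. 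Determinism (Theorem~\ref{thm:prefix-sem-det}) and monotonicity (Theorem~\ref{thm:prefix-sem-mono}) serve to align the step-indices (the additive $n_1+n_2$ structure already matches the fuel accounting in \ruleName{S-Par-R}, \ruleName{S-Let}, \ruleName{S-Fix}) and to identify equal subruns, and the IH is applied to subterms with the typing and derivative types threaded through the environment sub-context lemmas.

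For the structural right and left rules the argument is mechanical. In \ruleName{S-Par-R} I push the IH through both components and reassemble using the parallel clause of prefix concatenation; in the buffering rules (\ruleName{S-Plus-L}, \ruleName{S-Star-L}, \ruleName{S-Wait}) the combined buffer is recovered from associativity of environment concatenation (Theorem~\ref{thm:env-concat-assoc}), after which the branch taken by the one-shot run is literally dictated by the concatenated prefix bound to $z$ (or $x$), read off from the defining clauses for concatenating $\catpA{-}$, $\catpB{-}{-}$, and the sum/star prefixes. Empty components are collapsed using Prefix/Environment Concatenation Empty (Theorems~\ref{thm:prefix-concat-emp} and~\ref{thm:env-concat-emp}) and the identities $\prefixConcat{\emp{t}}{q}=q$ and $\prefixConcat{q}{\emp{t}}=q$.

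I expect the genuine obstacles to be the rules where the semantics branches on maximality of an intermediate prefix: \ruleName{S-Cat-R-1}/\ruleName{S-Cat-R-2}, the analogous star rules, and \ruleName{S-Args-Semic-1-1}/\ruleName{S-Args-Semic-1-2}. The difficulty in \ruleName{S-Cat-R-1} as the first step is that $e_2$ does not run at all in the first phase, yet in the combined run it must run on $\prefixConcat{\eta}{\eta'}$ and agree with the second-phase run of $e_2$ on $\eta'$ alone. I would discharge this by exploiting the ordered semicolon typing $\semicctx{\Gamma}{\Delta}\vdash\catpairTm{e_1}{e_2}$: since the first phase emits the non-maximal $\catpA{p}$, the Maximal Semantics theorem (Theorem~\ref{thm:batch-sem}) forces $\neg\maximalOn{\eta}{\Gamma}$, so the semicolon environment invariant gives $\emptyOn{\eta}{\Delta}$, and then Environment Concatenation Empty (Theorem~\ref{thm:env-concat-emp}) shows that $\prefixConcat{\eta}{\eta'}$ and $\eta'$ coincide on $\dom{\Delta}\supseteq\text{fv}(e_2)$; determinism yields the identical output and resultant. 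To choose the matching branch I use Maximal Prefix Concatenation (Theorem~\ref{thm:concat-maximal}): $\prefixConcat{p}{p'}$ is maximal exactly when the phase-two head $p'$ is, so the combined run selects \ruleName{S-Cat-R-1} or \ruleName{S-Cat-R-2} in lockstep with the second phase. The \ruleName{S-Cat-R-2}-first case is dual, using the ``$p$ maximal implies $\prefixConcat{p}{p'}=p$'' clause of Theorem~\ref{thm:concat-maximal} to freeze the already-finished head before applying the IH to $e_2$.

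A second delicate point, in \ruleName{S-Cat-L-1} followed by a crossing \ruleName{S-Cat-L-2} (and likewise in \ruleName{S-Star-L}), is that the resultant term acquires a $\sinkTm{-}$ on the consumed component: the two-phase run yields $\sinkTm{q_1}$ whereas the combined run yields $\sinkTm{\prefixConcat{p}{q_1}}$. These are reconciled by the Sink Term Concatenation lemma (Theorem~\ref{thm:sink-concat}), giving $\sinkTm{q_1}=\sinkTm{\prefixConcat{p}{q_1}}$, together with the sink-term characterization (Theorem~\ref{thm:sink-step}) for the subsequent \ruleName{S-Let} reductions and a routine variable-renaming argument for the $[z/y]$ and $[\flatten{p}/x]$ substitutions. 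Finally, \ruleName{S-Fix} is handled by unfolding once and invoking the IH on the well-typed body supplied by Fixpoint Substitution (Theorem~\ref{thm:fixsubst}) and \ruleName{T-ArgsLet}, with the bookkeeping $n_1=n+1$ making the fuel line up, while \ruleName{S-ArgsLet} simply chains parts (1) and (2) of the mutual induction.
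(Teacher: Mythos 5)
Your overall strategy is the paper's own: mutual induction on the first step derivation (with inner case splits on the second and combined runs), using associativity of environment concatenation (Theorem~\ref{thm:env-concat-assoc}) to merge buffers in the \ruleName{S-Plus-L}/\ruleName{S-Star-L}/\ruleName{S-Wait} rules, and handling the key \ruleName{S-Cat-R-1}-then-\ruleName{S-Cat-R-2} case precisely as the paper does, via the contrapositive of Theorem~\ref{thm:batch-sem}, the semicolon environment invariant forcing $\emptyOn{\eta}{\Delta}$, Theorem~\ref{thm:env-concat-emp}, and determinism (Theorem~\ref{thm:prefix-sem-det}); your use of Theorems~\ref{thm:sink-concat} and~\ref{thm:sink-step} for the crossed-over left rules and Theorem~\ref{thm:fixsubst} for \ruleName{S-Fix} also matches the paper's proof exactly.

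However, there is one genuine gap, in the cases where the first phase has \emph{already} crossed over: \ruleName{S-Cat-R-2} as the first step (and analogously \ruleName{S-Star-R-2-2} and \ruleName{S-Args-Semic-1-2}). There, inverting the combined run yields a fresh derivation $\prefixstep{\prefixConcat{\eta}{\eta'}}{e_1}{n_1'}{e_1''}{p_1'}$ with $\isMaximal{p_1'}$, and since the concatenation clause for $\catpB{p_1}{p_2}$ keeps the head fixed, you must prove $p_1' = p_1$. Your cited clause of Theorem~\ref{thm:concat-maximal} (``$p$ maximal implies $\prefixConcat{p}{p'} = p$'') cannot do this: $p_1'$ is not given as a concatenation with head $p_1$ --- it is the output of an independent run of $e_1$ on a larger environment, so there is no concatenation judgment for the lemma to act on. (That clause does apply where the paper uses it, e.g.\ in \ruleName{S-Wait-2}, because there the relevant prefix is literally an environment lookup of the form $\prefixConcat{p}{p_0}$.) Nor can you recover $p_1' = p_1$ from the inductive hypothesis applied to $e_1$: the two-phase side never runs the resultant $e_1'$ (the pair steps directly to $e_2'$), and Full \core{} provides no totality guarantee that would let you manufacture a phase-two run of $e_1'$ on $\eta'$, since \texttt{fix} can exhaust any fuel. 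The paper closes exactly this hole with a separately proved semantic lemma, Maximal Semantics Extension (Theorem~\ref{thm:max-sem-extend}): a run whose output prefix is maximal produces the same output when its input environment is extended by concatenation. You need this lemma (or an equivalent, proved by its own induction on the semantics) at these points; with it added, the rest of your plan goes through as in the paper.
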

\jtheorem{}{
By mutual induction on the semantics judgments, then inverting all other judgments.
To reduce clutter, we will omit the typing premises that simply go along for the ride in each case.
We name the cases by the rule used for the step of $e$, and then if they are not uniquely determined,
the step for $e'$ an then the step of $e$ on $\prefixConcat{\eta}{\eta'}$.

    \jcase{1}{S-Eps-R}{Immediate.}

    \jcase{2}{S-One-R}{Immediate.}

    \jcase{3}{S-Var}{Immediate.}

    \jcase{4}{S-Par-R}{
        \jgivengoalTwo{
            \caseFact{1} $\prefixstep{\eta_1}{e_1}{n_1}{e_1'}{p_1}$

            \caseFact{2} $\prefixstep{\eta_1}{e_2}{n_2}{e_2'}{p_2}$

            \caseFact{3} $\prefixstep{\eta_2}{e_1'}{n_1'}{e_1''}{p_1'}$

            \caseFact{4} $\prefixstep{\eta_2}{e_2'}{n_2'}{e_2''}{p_2'}$

            \caseFact{5} $\prefixstep{\prefixConcat{\eta_1}{\eta_2}}{e_1}{n_1''}{e_1'''}{p_1''}$

            \caseFact{6} $\prefixstep{\prefixConcat{\eta_1}{\eta_2}}{e_2}{n_2''}{e_2'''}{p_2''}$
        }{
            $\parpairTm{e_1''}{e_2''} = \parpairTm{e_1'''}{e_2'''}$
        }{
            $\prefixConcatRel{\parp{p_1}{p_2}}{\parp{p_1'}{p_2'}}{\parp{p_1''}{p_2''}}$
        }

        \caseText{By IH on (1), (3), and (5)}

        \caseFact{7} $e_1'' = e_1'''$

        \caseFact{8} $\prefixConcatRel{p_1}{p_1'}{p_1''}$

        \caseText{By IH on (2), (4), and (6)}

        \caseFact{9} $e_2'' = e_2'''$

        \caseFact{10} $\prefixConcatRel{p_2}{p_2'}{p_2''}$

        \caseText{\textbf{Goal A} is immediate by (7) and (9)}

        \caseText{\textbf{Goal B} is immediate by (8) and (10)}
    }

    \jcase{5}{S-Cat-R-1, S-Cat-R-1, S-Cat-R-1}{
        \jgivengoalTwo{
            \caseFact{1} $\prefixstep{\eta_1}{e_1}{n}{e_1'}{p_1}$

            \caseFact{2} $\neg \left(\isMaximal{p_1}\right)$

            \caseFact{3} $\prefixstep{\eta_2}{e_1'}{n'}{e_1''}{p_1'}$

            \caseFact{4} $\neg \left(\isMaximal{p_1'}\right)$

            \caseFact{5} $\prefixstep{\prefixConcat{\eta_1}{\eta_2}}{e_1}{n}{e_1'''}{p_1''}$

            \caseFact{6} $\neg \left(\isMaximal{p_1''}\right)$
        }{
            $\catpairTm{e_1''}{e_2} = \catpairTm{e_1'''}{e_2}$
        }{
            $\prefixConcatRel{\catpA{p_1}}{\catpA{p_1'}}{\catpA{p_1''}}$
        }

        \caseText{Immediate by IH on (1), with (3) and (5).}
    }

    \jcase{6}{S-Cat-R-1, S-Cat-R-1, S-Cat-R-2}{
        \jgivengoalTwo{
            \caseFact{1} $\prefixstep{\eta_1}{e_1}{n}{e_1'}{p_1}$

            \caseFact{2} $\neg \left(\isMaximal{p_1}\right)$

            \caseFact{3} $\prefixstep{\eta_2}{e_1'}{n'}{e_1''}{p_1'}$

            \caseFact{4} $\neg \left(\isMaximal{p_1'}\right)$

            \caseFact{5} $\prefixstep{\prefixConcat{\eta_1}{\eta_2}}{e_1}{n}{e_1'''}{p_1''}$

            \caseFact{6} $\isMaximal{p_1''}$

            \caseFact{7} $\prefixstep{\prefixConcat{\eta_1}{\eta_2}}{e_2}{n}{e_2'}{p_2}$
        }{
            $\catpairTm{e_1''}{e_2} = e_2'$
        }{
            $\prefixConcatRel{\catpA{p_1}}{\catpA{p_1'}}{\catpB{p_1''}{p_2}}$
        }

        \caseText{By IH on (1), with (3) and (5):}

        \caseFact{8} $e_1'' = e_1'''$

        \caseFact{9} $\prefixConcatRel{p_1}{p_1'}{p_1''}$

        \caseText{But (9) is impossible by Theorem~\ref{thm:concat-maximal}, since $p_1''$ is maximal, but neither $p_1$ nor $p_1'$ are.}
    }

    \jcase{7}{S-Cat-R-1, S-Cat-R-2, S-Cat-R-1}{
        \jgivengoalTwo{
            \caseFact{1} $\prefixstep{\eta_1}{e_1}{n}{e_1'}{p_1}$

            \caseFact{2} $\neg \left(\isMaximal{p_1}\right)$

            \caseFact{3} $\prefixstep{\eta_2}{e_1'}{n_1}{e_1''}{p_1'}$

            \caseFact{4} $\isMaximal{p_1'}$

            \caseFact{5} $\prefixstep{\eta_2}{e_2}{n_1}{e_2'}{p_2}$

            \caseFact{5} $\prefixstep{\prefixConcat{\eta_1}{\eta_2}}{e_1}{n}{e_1'''}{p_1''}$

            \caseFact{6} $\neg \left(\isMaximal{p_1''}\right)$
        }{
            $e_2' = \catpairTm{e_1''}{e_2}$
        }{
            $\prefixConcatRel{\catpA{p_1}}{\catpB{p_1'}{p_2}}{\catpA{p_1''}}$
        }

        \caseText{By IH on (1), with (3) and (5):}

        \caseFact{8} $e_1'' = e_1'''$

        \caseFact{9} $\prefixConcatRel{p_1}{p_1'}{p_1''}$

        \caseText{But (9) is impossible by Theorem~\ref{thm:concat-maximal}, since $p_1'$ is maximal, $p_1''$ is not.}
    }

    \jcase{8}{S-Cat-R-1, S-Cat-R-2, S-Cat-R-2}{
        \jgivengoalTwo{

            \caseFact{1} $\tck{\cdot}{\Gamma}{\emptyset}{e_1}{s}{i_1}$

            \caseFact{2} $\tck{\cdot}{\Delta}{\emptyset}{e_2}{t}{i_2}$

            \caseFact{3} $\maximalOn{\eta_1}{\Gamma} \vee \maximalOn{\eta_2}{\Delta}$

            \caseFact{4} $\prefixstep{\eta_1}{e_1}{n}{e_1'}{p_1}$

            \caseFact{5} $\neg \left(\isMaximal{p_1}\right)$

            \caseFact{6} $\prefixstep{\eta_2}{e_1'}{n_1}{e_1''}{p_1'}$

            \caseFact{7} $\isMaximal{p_1'}$

            \caseFact{8} $\prefixstep{\eta_2}{e_2}{n_1}{e_2'}{p_2}$

            \caseFact{9} $\prefixstep{\prefixConcat{\eta_1}{\eta_2}}{e_1}{n}{e_1'''}{p_1''}$

            \caseFact{10} $\isMaximal{p_1''}$

            \caseFact{11} $\prefixstep{\prefixConcat{\eta_1}{\eta_2}}{e_2}{n}{e_2''}{p_2'}$
        }{
            $e_2' = e_2''$
        }{
            $\prefixConcatRel{\catpA{p_1}}{\catpB{p_1'}{p_2}}{\catpB{p_1''}{p_2'}}$
        }

        \caseText{By IH on (4), with (6) and (8):}

        \caseFact{12} $e_1'' = e_1'''$

        \caseFact{13} $\prefixConcatRel{p_1}{p_1'}{p_1''}$

        \caseText{By the contrapositive of Theorem~\ref{thm:batch-sem} on (1), (3), and (5)}

        \caseFact{14} $\neg \left(\maximalOn{\eta_1}{\Gamma}\right)$

        \caseText{By (3) and (14)}

        \caseFact{15} $\emptyOn{\eta_1}{\Delta}$

        \caseText{Because $\dom{Delta} \geq \text{fv}(e_2)$, (15) implies}

        \caseFact{16} $\emptyOn{\eta_1}{e_2}$

        \caseText{By Theorem~\ref{thm:env-concat-emp} on (16)}

        \caseFact{17} $\left(\prefixConcat{\eta_1}{\eta_2}\right)|_{e_2} = \eta_2|_{e_2}$

        \caseText{So by (8) and (16)}

        \caseFact{18} $\prefixstep{\prefixConcat{\eta_1}{\eta_2}}{e_2}{n_1}{e_2''}{p_2}$

        \caseText{By (18) and Theorem~\ref{thm:prefix-sem-det}}

        \caseFact{19} $p_2 = p_2'$

        \caseFact{20} $e_2' = e_2''$

        \caseText{\textbf{Goal A} is immediate by (20), and \textbf{Goal B} follows by (19) and (13)}
    }

    \jcase{9}{S-Cat-R-2}{
        \jgivengoalTwo{
            \caseFact{1} $\prefixstep{\eta_1}{e_1}{n_1}{e_1'}{p_1}$

            \caseFact{2} $\isMaximal{p_1}$

            \caseFact{3} $\prefixstep{\eta_1}{e_2}{n_2}{e_2'}{p_2}$

            \caseFact{4} $\prefixstep{\eta_2}{e_2'}{n'}{e_2''}{p_2'}$

            \caseFact{5} $\prefixstep{\prefixConcat{\eta_1}{\eta_2}}{\catpairTm{e_1}{e_2}}{n''}{e_0}{p_0}$
        }{
            $e_2'' = e_0$
        }{
            $\prefixConcatRel{\catpB{p_1}{p_2}}{p_2'}{p_0}$
        }

        \caseText{We begin by inverting (5). The case of S-Cat-R-1 is impossible, because with Theorem~\ref{thm:concat-maximal}, this would contradict the maximality of $p_1$. Thus,}

        \caseFact{6} $\prefixstep{\prefixConcat{\eta_1}{\eta_2}}{e_1}{n_1'}{e_1''}{p_1'}$

        \caseFact{7} $\isMaximal{p_1'}$

        \caseFact{8} $\prefixstep{\prefixConcat{\eta_1}{\eta_2}}{e_2}{n_2'}{e_2'''}{p_2''}$

        \caseText{The inversion also tells us that $p_0 = \catpB{p_1''}{p_2''}$, and $e_0 = e_2'''$}

        \caseText{By IH on (3), (4), and (8)}

        \caseFact{9} $\prefixConcatRel{p_2}{p_2'}{p_2''}$

        \caseFact{10} $e_2''' = e_2''$

        \caseText{\textbf{Goal A} is complete by (10)}

        \caseText{By (9):}

        \caseFact{11} $\prefixConcatRel{\catpB{p_1''}{p_2}}{p_2'}{\catpB{p_1''}{p_2''}}$

        \caseText{By Theorem~\ref{thm:max-sem-extend} on (1) and (6)}

        \caseFact{12} $p_1' = p_1''$

        \caseText{\textbf{Goal B} follows by (11) and (12).}
    }

    \jcase{10}{S-Par-L}{Immediate by two uses of IH}

    \jcase{11}{S-Cat-L-1, S-Cat-L-1}{
        \jgivengoalTwo{
            \caseFact{1} $\eta_1(z) \mapsto \catpA{p_1}$

            \caseFact{2} $\prefixstep{\eta_1[x\mapsto p_1,y\mapsto \emp{t}]}{e}{n}{e'}{p_1'}$

            \caseFact{3} $\eta_2(z) \mapsto \catpA{p_2}$

            \caseFact{4} $\prefixstep{\eta_2[x\mapsto p_2,y\mapsto \emp{t}]}{e'}{n'}{e''}{p_2'}$

            \caseFact{5} $\prefixstep{\prefixConcat{\eta_1}{\eta_2}}{\letcatTm{t}{x}{y}{z}{e}}{n''}{e_0}{p_0}$
        }{
            $\letcatTm{t}{x}{y}{z}{e''} = e_0$
        }{
            $\prefixConcatRel{p_1'}{p_2'}{p_0}$
        }

        \caseText{By Definition,}

        \caseFact{6} $(\prefixConcat{\eta_1}{\eta_2})(z) = \catpA{\prefixConcat{p_1}{p_2}}$

        \caseText{By (6), inverting (5) can only conclude with S-Cat-L-1, and so $e_0 = \letcatTm{t}{x}{y}{z}{e'''}$ such that}

        \caseFact{7} $\prefixstep{\left(\prefixConcat{\eta_1}{\eta_2}\right)[x \mapsto \prefixConcat{p_1}{p_2}, y \mapsto \emp{t}]}{e}{n''}{e'''}{p_0}$

        \caseText{By definition:}

        \caseFact{8} ${\left(\prefixConcat{\eta_1}{\eta_2}\right)[x \mapsto \prefixConcat{p_1}{p_2}, y \mapsto \emp{t}]} = \prefixConcat{\eta_1[x \mapsto p_1, y \mapsto \emp{t}]}{\eta_2[x \mapsto p_2, y \mapsto \emp{t}]}$

        \caseText{By IH on (2), with (4) and (7), rewriting by (8)}

        \caseFact{9} $e''' = e''$

        \caseFact{10} $\prefixConcatRel{p_1'}{p_2'}{p_0}$

        \caseText{(9) and (10) complete \textbf{Goal A} and \textbf{Goal B}, respectively.}
    }

    \jcase{12}{S-Cat-L-1, S-Cat-L-2}{
        \jgivengoalTwo{
            \caseFact{1} $\eta_1(z) \mapsto \catpA{p_1}$

            \caseFact{2} $\prefixstep{\eta_1[x\mapsto p_1,y\mapsto \emp{t}]}{e}{n}{e'}{p_1'}$

            \caseFact{3} $\eta_2(z) \mapsto \catpB{p_1'',p_2}$

            \caseFact{4} $\prefixstep{\eta_2[x\mapsto p_1'',y\mapsto p_2]}{e'}{n'}{e''}{p_2'}$

            \caseFact{5} $\prefixstep{\prefixConcat{\eta_1}{\eta_2}}{\letcatTm{t}{x}{y}{z}{e}}{n''}{e_0}{p_0}$
        }{
            $\cutTm{x}{\sinkTm{p_1''}}{e''[z/y]} = e_0$
        }{
            $\prefixConcatRel{p_1'}{p_2'}{p_0}$
        }

        \caseText{By Definition,}

        \caseFact{6} $(\prefixConcat{\eta_1}{\eta_2})(z) = \prefixConcat{\catpA{p_1}}{\catpB{p_1''}{p_2}} = \catpB{\prefixConcat{p_1}{p_1''}}{p_2}$

        \caseText{By (6), inverting (5) can only conclude with S-Cat-L-2, and so $e_0 = \cutTm{x}{\sinkTm{\prefixConcat{p_1}{p_1''}}}{e'''}$, such that:}

        \caseFact{7} $\prefixstep{\eta_2[x \mapsto \prefixConcat{p_1}{p_1''}, y \mapsto p_2]}{e}{n''}{e'''}{p_0}$

        \caseText{By IH on (2), with (4) and (7)}

        \caseFact{8} $e''' = e''$

        \caseFact{9} $\prefixConcatRel{p_1'}{p_2'}{p_0}$

        \caseText{\textbf{Goal B} follows by (9). By Theorem~\ref{thm:sink-concat}:}

        \caseFact{10} $\sinkTm{p_1''} = \sinkTm{\prefixConcat{p_1}{p_1''}}$

        \caseText{\textbf{Goal A} follows by (8) and (10).}
    }

    \jcase{13}{S-Cat-L-2}{
        \jgivengoalTwo{
            \caseFact{1} $\eta_1(z) \mapsto \catpB{p_1}{p_2}$

            \caseFact{2} $\prefixstep{\eta_1[x\mapsto p_1,y\mapsto p_2]}{e}{n}{e'}{p}$

            \caseFact{3} $\prefixstep{\eta_2}{\cutTm{x}{\sinkTm{p_1}}{e'[z/y]}}{n_1+n_2}{\cutTm{x}{e_0}{e_0'}}{p'}$

            \caseFact{4} $\prefixstep{\prefixConcat{\eta_1}{\eta_2}}{\letcatTm{t}{x}{y}{z}{e}}{n'}{e_1}{p''}$
        }{
            $e_1 = \cutTm{x}{e_0}{e_0'}$
        }{
            $\prefixConcatRel{p}{p'}{p''}$
        }

        \caseText{Inverting (3)}

        \caseFact{5} $\prefixstep{\eta_2}{\sinkTm{p_1}}{n_1}{e_0}{p_0}$

        \caseFact{6} $\prefixstep{\eta_2[x \mapsto p_0]}{e'[z/y]}{n_2}{e_0'}{p'}$

        \caseText{By Definition, there is some $p_2'$ such that $\eta_2(z) = p_2'$, and so}

        \caseFact{7} $\left(\prefixConcat{\eta_1}{\eta_2}\right)(z) = \catpB{p_1}{\prefixConcat{p_2}{p_2'}}$

        \caseText{Because of (7), inverting (4) can only end with S-Cat-L-2, and so we have that $e_1 = \cutTm{x}{\sinkTm{p_1}}{e''[y/z]}$, and}

        \caseFact{8} $\prefixstep{\left(\prefixConcat{\eta_1}{\eta_2}\right)[x \mapsto p_1, y\mapsto \prefixConcat{p_2}{p_2'}]}{e}{n'}{e''}{p''}$

        \caseText{By Theorem~\ref{thm:sink-step} and Theorem~\ref{thm:prefix-sem-det} on (5)}

        \caseFact{9} $e_0 = \sinkTm{p_1}$

        \caseFact{10} $p_0 = \emp{\deriv{p_1}{s}}$

        \caseText{Because $\eta_2(z) = p_2'$, we have that (6) equivalently says:}

        \caseFact{11} $\prefixstep{\eta_2[x \mapsto \emp{\deriv{p_1}{s}}, y \mapsto p_2']}{e'}{n_2}{e_0'}{p'}$

        \caseText{Then, we can compute:}

        \caseFact{12} $\left(\prefixConcat{\eta_1}{\eta_2}\right)[x \mapsto p_1, y\mapsto \prefixConcat{p_2}{p_2'}] = \prefixConcat{\left(\eta_1[x \mapsto p_1, y \mapsto p_2]\right)}{\left(\eta_2[x \mapsto \emp{\deriv{p_1}{s}}, y \mapsto p_2']\right)}$

        \caseText{So by IH on (2), with (11) and (8)}

        \caseFact{13} $e'' = e_0'$

        \caseFact{14} $\prefixConcatRel{p}{p'}{p''}$

        \caseText{\textbf{Goal A} follows by (9) and (13), with \textbf{Goal B} immediate by (14)}
    }

    \jcase{14}{S-Plus-R-1}{Immediate by IH.}

    \jcase{15}{S-Plus-R-2}{Immediate by IH.}

    \jcase{16}{S-Plus-L-1, S-Plus-L-1}{
        \jgivengoalTwo{
            \caseFact{1} $\prefixConcatRel{\eta_1'}{\eta_1}{\eta_1''}$

            \caseFact{2} $\eta_1''(z) = \sumpEmp$

            \caseFact{3} $\prefixConcatRel{\eta_1''}{\eta_2}{\eta_2'}$

            \caseFact{4} $\eta_2'(z) = \sumpEmp$

            \caseFact{5} $\prefixstep{\prefixConcat{\eta_1}{\eta_2}}{\sumcaseTm{r}{\eta_1'}{z}{x}{e_1}{y}{e_2}}{n}{e_0}{p}$
        }{
            $e_0 = \sumcaseTm{r}{\eta_2'}{z}{x}{e_1}{y}{e_2}$
        }{
            $\prefixConcatRel{\emp{t}}{\emp{t}}{p}$
        }

        \caseText{By Theorem~\ref{thm:env-concat-assoc}}

        \caseFact{6} $\prefixConcat{\eta_1'}{\left(\prefixConcat{\eta_1}{\eta_2}\right)} = \prefixConcat{\left(\prefixConcat{\eta_1'}{\eta_1}\right)}{\eta_2} = \prefixConcat{\eta_1''}{\eta_2} = \eta_2'$

        \caseText{By (6) and (4), the only rule can conclude with (5) is S-Plus-L-1. Inverting (5) and rewriting by (6), we have that}

        \caseFact{7} $e_0 = \sumcaseTm{r}{\eta_2'}{z}{x}{e_1}{y}{e_2}$

        \caseFact{8} $p = \emp{t}$

        \caseText{\textbf{Goal A} is (7), and \textbf{Goal B} follows by Theorem~\ref{thm:prefix-concat-emp} and (8)}
    }

    \jcase{17}{S-Plus-L-1, S-Plus-L-2}{
        \jgivengoalTwo{
            \caseFact{1} $\prefixConcatRel{\eta_1'}{\eta_1}{\eta_1''}$

            \caseFact{2} $\eta_1''(z) = \sumpEmp$

            \caseFact{3} $\prefixConcatRel{\eta_1''}{\eta_2}{\eta_2'}$

            \caseFact{4} $\eta_2'(z) = \sumpA{p}$

            \caseFact{5} $\prefixstep{\eta_2'[x \mapsto p]}{e_1}{n}{e_1'}{p'}$

            \caseFact{6} $\prefixstep{\prefixConcat{\eta_1}{\eta_2}}{\sumcaseTm{r}{\eta_1'}{z}{x}{e_1}{y}{e_2}}{n'}{e_0}{p''}$
        }{
            $e_0 = e_1'[z/x]$
        }{
            $\prefixConcatRel{\emp{t}}{p'}{p''}$
        }

        \caseText{By Theorem~\ref{thm:env-concat-assoc}}

        \caseFact{7} $\prefixConcat{\eta_1'}{\left(\prefixConcat{\eta_1}{\eta_2}\right)} = \prefixConcat{\left(\prefixConcat{\eta_1'}{\eta_1}\right)}{\eta_2} = \prefixConcat{\eta_1''}{\eta_2} = \eta_2'$

        \caseText{By (7) and (4), the only rule can conclude with (5) is S-Plus-L-4. Inverting (6) and rewriting by (7), we have that there is $e_1''$}

        \caseFact{8} $e_0 = e_1''[z/x]$

        \caseFact{9} $\prefixstep{\eta_2'[x \mapsto p]}{e_1}{n'}{e_1''}{p'}$

        \caseText{By Theorem~\ref{thm:prefix-sem-det} on (5) and (9)}

        \caseFact{10} $e_1'' = e_1'$

        \caseFact{11} $p'' = p'$

        \caseText{\textbf{Goal A} follows by (8) and (10), while \textbf{Goal B} follows by Theorem~\ref{thm:prefix-concat-emp} and (11)}
    }

    \jcase{18}{S-Plus-L-1, S-Plus-L-3}{Identical to Previous.}

    \jcase{19}{S-Plus-L-2}{
        \jgivengoalTwo{
            \caseFact{1} $\prefixConcatRel{\eta_1'}{\eta_1}{\eta_1''}$

            \caseFact{2} $\eta_1''(z) = \sumpA{p_1}$

            \caseFact{3} $\prefixstep{\eta_1''[x \mapsto p_1]}{e_1}{n}{e_1'}{p}$

            \caseFact{4} $\prefixstep{\eta_2}{e_1'[z/x]}{n'}{e}{p'}$

            \caseFact{5} $\prefixstep{\prefixConcat{\eta_1}{\eta_2}}{\sumcaseTm{r}{\eta_1'}{z}{x}{e_1}{y}{e_2}}{n''}{e'}{p''}$
        }{
            $e = e'$
        }{
            $\prefixConcatRel{p}{p'}{p''}$
        }

        \caseText{By Theorem~\ref{thm:env-concat-assoc}:}

        \caseFact{6} $\prefixConcat{\eta_1'}{\left(\prefixConcat{\eta_1}{\eta_2}\right)} = \prefixConcat{\left(\prefixConcat{\eta_1'}{\eta_1}\right)}{\eta_2} = \prefixConcat{\eta_1''}{\eta_2}$

        \caseText{By (6), there is some $p_1'$ such that:}

        \caseFact{7} $\eta_2(z) = p_1'$

        \caseFact{8} $\left(\prefixConcat{\eta_1''}{\eta_2}\right) = \sumpA{\prefixConcat{p_1}{p_1'}}$

        \caseText{Using (7), we can rewrite (4) as:}

        \caseFact{9} $\prefixstep{\eta_2[x \mapsto p_1']}{e_1'}{n'}{e}{p'}$

        \caseText{By (8) and (6), inverting (5) can only lead to S-Plus-L-2. Doing so yields:}

        \caseFact{10} $\prefixstep{\left(\prefixConcat{\eta_1''}{\eta_2}\right)[x \mapsto \prefixConcat{p_1}{p_1'}]}{e_1}{n''}{e_1''}{p''}$

        \caseFact{11} $e' = e_1''[z/x]$

        \caseText{We can compute that:}

        \caseFact{12} $\left(\prefixConcat{\eta_1''}{\eta_2}\right)[x \mapsto \prefixConcat{p_1}{p_1'}] = \prefixConcat{\left(\eta_1''[x \mapsto p_1]\right)}{\left(\eta_2'[x \mapsto p_1']\right)}$

        \caseText{So by (8), we can write (10) as}

        \caseText{13} $\prefixstep{\prefixConcat{\left(\eta_1''[x \mapsto p_1]\right)}{\left(\eta_2'[x \mapsto p_1']\right)}}{e_1}{n''}{e_1''}{p''}$

        \caseText{Both goals follow by IH on (3), with (9) and (13)}
    }

    \jcase{20}{S-Plus-L-3}{Identical to previous.}

    \jcase{21}{S-Star-R-1}{Immediate.}

    \jcase{22}{S-Star-R-2-1}{Identical to the cases for S-Cat-R-1}

    \jcase{23}{S-Star-R-2-2}{Identical to the cases for S-Cat-R-2}

    \jcase{24}{S-Star-L-1}{Identical to the cases for S-Plus-L-1}

    \jcase{25}{S-Star-L-2}{Identical to S-Plus-L-2}

    \jcase{26}{S-Star-L-3}{Identical to S-Plus-L-2}

    \jcase{27}{S-Star-L-4}{Identical to S-Plus-L-2}

    \jcase{28}{S-Let}{
        \jgivengoalTwo{
            \caseFact{1} $\prefixstep{\eta_1}{e_1}{n_1}{e_1'}{p_1}$

            \caseFact{2} $\prefixstep{\eta_1[x\mapsto p_1]}{e_2}{n_2}{e_2'}{p_1'}$

            \caseFact{3} $\prefixstep{\eta_2}{e_1'}{n_1'}{e_1''}{p_2}$

            \caseFact{4} $\prefixstep{\eta_2[x\mapsto p_2]}{e_2'}{n_2'}{e_2''}{p_2'}$

            \caseFact{5} $\prefixstep{\prefixConcat{\eta_1}{\eta_2}}{e_1}{n_1''}{e_1'''}{p_3}$

            \caseFact{6} $\prefixstep{\left(\prefixConcat{\eta_1}{\eta_2}\right)[x\mapsto p_3]}{e_2}{n_2''}{e_2'''}{p_3'}$
        }{
            $\cutTm{x}{e_1''}{e_2''} = \cutTm{x}{e_1'''}{e_2'''}$
        }{
            $\prefixConcatRel{p_1'}{p_2'}{p_3'}$
        }

        \caseText{By IH on (1), with (3) and (4)}

        \caseFact{7} $e_1'' = e_1'''$

        \caseFact{8} $\prefixConcatRel{p_1}{p_2}{p_3}$

        \caseText{By (8):}

        \caseFact{9} $\left(\prefixConcat{\eta_1}{\eta_2}\right)[x\mapsto p_3] = \prefixConcat{\left(\eta_1[x\mapsto p_1]\right)}{\left(\eta_2[x \mapsto p_2]\right)}$

        \caseText{By (9), we can rewrite (6) to:}

        \caseFact{10} $\prefixstep{\prefixConcat{\left(\eta_1[x\mapsto p_1]\right)}{\left(\eta_2[x \mapsto p_2]\right)}}{e_2}{n_2''}{e_2'''}{p_3'}$

        \caseText{By IH on (2), with (4) and (7)}

        \caseFact{11} $e_2'' = e_2'''$

        \caseFact{12} $\prefixConcatRel{p_1'}{p_2'}{p_3'}$

        \caseText{\textbf{Goal A} is immediate by (7) and (11), and \textbf{Goal B} is (12).}
    }

    \jcase{29}{S-HistPgm}{Immediate by Theorem~\ref{thm:sink-step} and Theorem~\ref{thm:sink-concat}.}

    \jcase{30}{S-Wait-1, S-Wait-1}{
        \jgivengoalTwo{
            \caseFact{1} $\prefixConcatRel{\eta_1'}{\eta_1}{\eta_1''}$

            \caseFact{2} $\eta_1''(x) = p$

            \caseFact{3} $\neg \left(\isMaximal{p}\right)$

            \caseFact{4} $\prefixConcatRel{\eta_1''}{\eta_2}{\eta_2'}$

            \caseFact{5} $\eta_2'(x) = p'$

            \caseFact{6} $\neg \left(\isMaximal{p'}\right)$

            \caseFact{7} $\prefixstep{\prefixConcat{\eta_1}{\eta_2}}{\waitTm{\eta_1'}{t}{x}{e}}{n''}{e'}{\emp{t}}$
        }{
            $e' = \waitTm{\eta_2'}{t}{x}{e}$
        }{
            $\prefixConcatRel{\emp{t}}{\emp{t}}{\emp{t}}$
        }

        \caseText{Note that, by Theorem~\ref{thm:prefix-concat-assoc}}

        \caseFact{8} $\prefixConcat{\eta_1'}{\left(\prefixConcat{\eta_1}{\eta_2}\right)} = \prefixConcat{\left(\prefixConcat{\eta_1'}{\eta_1}\right)}{\eta_2} = \prefixConcat{\eta_1''}{\eta_2} = \eta_2'$

        \caseText{Because of (8), when we invert (7), only S-Wait-1 can apply. Inverting (7) completes \textbf{Goal A}, and \textbf{Goal B} follows by Theorem~\ref{thm:prefix-concat-emp}}
    }

    \jcase{31}{S-Wait-1, S-Wait-2}{
        \jgivengoalTwo{
            \caseFact{1} $\prefixConcatRel{\eta_1'}{\eta_1}{\eta_1''}$

            \caseFact{2} $\eta_1''(x) = p$

            \caseFact{3} $\neg \left(\isMaximal{p}\right)$

            \caseFact{4} $\prefixConcatRel{\eta_1''}{\eta_2}{\eta_2'}$

            \caseFact{5} $\eta_2'(x) = p'$

            \caseFact{6} $\isMaximal{p'}$

            \caseFact{7} $\prefixstep{\eta_2'}{e[\flatten{p'}/x]}{n}{e'}{p''}$

            \caseFact{8} $\prefixstep{\prefixConcat{\eta_1}{\eta_2}}{\waitTm{\eta_1'}{t}{x}{e}}{n'}{e''}{p'''}$
        }{
            $e'' = e'$
        }{
            $\prefixConcatRel{\emp{t}}{p''}{p'''}$
        }

        \caseText{By Theorem~\ref{thm:prefix-concat-assoc}}

        \caseFact{8} $\prefixConcat{\eta_1'}{\left(\prefixConcat{\eta_1}{\eta_2}\right)} = \prefixConcat{\left(\prefixConcat{\eta_1'}{\eta_1}\right)}{\eta_2} = \prefixConcat{\eta_1''}{\eta_2} = \eta_2'$

        \caseText{Because of (8), when we invert (7), only S-Wait-2 can apply. Inverting (7) and applying Theorem~\ref{thm:prefix-sem-det} completes \textbf{Goal A}. \textbf{Goal B} is complete by Theorem~\ref{thm:prefix-concat-emp}}

    }

    \jcase{32}{S-Wait-2}{
        \jgivengoalTwo{
            \caseFact{1} $\prefixConcatRel{\eta_1'}{\eta_1}{\eta_1''}$

            \caseFact{2} $\eta_1''(x) = p$

            \caseFact{3} $\isMaximal{p}$

            \caseFact{4} $\prefixstep{\eta_1''}{e[\flatten{p}/x]}{n}{e'}{p'}$

            \caseFact{5} $\prefixstep{\eta_2}{e'}{n'}{e''}{p''}$

            \caseFact{6} $\prefixstep{\prefixConcat{\eta_1}{\eta_2}}{\waitTm{\eta_1'}{t}{x}{e}}{n''}{e'''}{p'''}$
        }{
            $e'' = e'''$
        }{
            $\prefixConcatRel{p'}{p''}{p'''}$
        }

        \caseText{By Theorem~\ref{thm:prefix-concat-assoc} on (1):}

        \caseFact{7} $\prefixConcat{\eta_1'}{\left(\prefixConcat{\eta_1}{\eta_2}\right)} = \prefixConcat{\left(\prefixConcat{\eta_1'}{\eta_1}\right)}{\eta_2} = \prefixConcat{\eta_1''}{\eta_2}$

        \caseText{By definition using (7), there is some $p_0$ so that $\eta_2(x) = p_0$, and}

        \caseFact{8} $\left(\prefixConcat{\eta_1''}{\eta_2}\right)(x) = \prefixConcat{p}{p_0}$

        \caseText{By Theorem~\ref{thm:concat-maximal} on (3) and (8)}

        \caseFact{9} $\prefixConcat{p}{p_0} = p$

        \caseText{By (9) and (3), inverting (6) can only conclude by S-Wait-2. Doing so yields:}

        \caseFact{10} $\prefixstep{\prefixConcat{\eta_1''}{\eta_2}}{e[\flatten{p}/x]}{n''}{e'''}{p'''}$

        \caseText{\textbf{Goal A} and \textbf{Goal B} follow immediately by IH on (4), with (5), (10).}
    }

    \jcase{33}{S-Fix}{Immediate by IH and Definition~\ref{def:histpgm}}

    \jcase{34}{S-ArgsLet}{Immediate by two uses of IH.}

    \jcase{35}{S-Args-Emp}{Immediate.}

    \jcase{36}{S-Args-Sng}{Immediate by IH.}

    \jcase{37}{S-Args-Comma}{Immediate by two uses of IH.}

    \jcase{38}{S-Args-Semic-1-1, S-Args-Semic-1-1}{Immediate by IH.}

    \jcase{39}{S-Args-Semic-1-1, S-Args-Semic-1-2}{Like the \ruleName{S-Cat-R-1}, \ruleName{S-Cat-R-2} case.}

    \jcase{40}{S-Args-Semic-1-2, S-Args-Semic-2}{Immediate by IH.}

    \jcase{41}{S-Args-Semic-2}{Immediate by IH.}

}

\subsection{Determinism}
\label{app:determinism}

\begin{theorem}[Determinism Theorem]
Suppose:
  \begin{enumerate}
    \item[(1)] $\tck{\cdot}{\commactx{\Gamma}{\Gamma'}}{\emptyset}{e}{s}{i}$
    \item $\envHasType{\eta}{\commactx{\Gamma}{\Gamma'}}$
    \item $\prefixstep{\eta|_{\Gamma} \cup \emp{\Gamma'}}{e}{n_1}{e_1}{p_1}$ and $\prefixstep{\eta|_{\Gamma'} \cup \emp{\Gamma}}{e_1}{n_2}{e_2}{p_2}$
    \item $\prefixstep{\eta|_{\Gamma'} \cup \emp{\Gamma}}{e}{n_1'}{e_1'}{p_1'}$ and $\prefixstep{\eta|_{\Gamma} \cup \emp{\Gamma'}}{e_1'}{n_2'}{e_2'}{p_2'}$.
    \item $\prefixstep{\eta}{e}{ }{e'}{p}$
  \end{enumerate}
  Then $e' = e_2 = e_2'$ and $p = \prefixConcat{p_1}{p_2} = \prefixConcat{p_1'}{p_2'}$.
\end{theorem}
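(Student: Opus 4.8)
The plan is to obtain determinism as a direct corollary of the Homomorphism Theorem (Theorem~\ref{thm:hom-thm}), exactly as sketched in Section~\ref{sec:determinism}: each of the two evaluation orders is really the same computation as running $e$ on the full environment $\eta$, merely sliced into two halves in two different ways. Write $\eta_\Gamma = \eta|_{\Gamma}$ and $\eta_{\Gamma'} = \eta|_{\Gamma'}$. The two half-environments used in strategy (3) are $\alpha_1 = \eta_\Gamma \cup \emp{\Gamma'}$ and $\alpha_2 = \eta_{\Gamma'} \cup \emp{\Gamma}$, and the crux of the whole argument is the identity $\prefixConcatRel{\alpha_1}{\alpha_2}{\eta}$, together with its mirror image $\prefixConcatRel{\beta_1}{\beta_2}{\eta}$ for strategy (4), where $\beta_1 = \eta_{\Gamma'}\cup\emp{\Gamma}$ and $\beta_2 = \eta_\Gamma \cup \emp{\Gamma'}$.

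First I would establish these two concatenation identities. Since the comma former carries no ordering or maximality constraints, $\envHasType{\alpha_1}{\commactx{\Gamma}{\Gamma'}}$ follows from restricting (2) and from well-typedness of the empty prefix (Theorem~\ref{thm:empty-prefix-correct}); likewise $\alpha_2$ is well-typed against $\deriv{\alpha_1}{\commactx{\Gamma}{\Gamma'}}$, which equals $\commactx{\deriv{\eta}{\Gamma}}{\Gamma'}$ because derivatives act pointwise and the empty environment leaves $\Gamma'$ unchanged (Theorem~\ref{thm:empty-context-deriv}). Here the occurrences of $\emp{-}$ on the already-consumed side must be read at the derivative type, which is precisely the shape demanded by Theorem~\ref{thm:prefix-concat-emp}. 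Now $\alpha_1$ is empty on $\dom{\Gamma'}$ and $\alpha_2$ is empty on $\dom{\Gamma}$ (Theorem~\ref{thm:emp-is-empty}), so Theorem~\ref{thm:env-concat-emp} gives $\prefixConcat{\alpha_1}{\alpha_2}|_{\Gamma'} = \alpha_2|_{\Gamma'} = \eta_{\Gamma'}$ and $\prefixConcat{\alpha_1}{\alpha_2}|_{\Gamma} = \alpha_1|_{\Gamma} = \eta_\Gamma$; since $\dom{\commactx{\Gamma}{\Gamma'}} = \dom{\eta}$, this yields $\prefixConcat{\alpha_1}{\alpha_2} = \eta$. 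The same computation with the roles of $\Gamma$ and $\Gamma'$ interchanged gives $\prefixConcat{\beta_1}{\beta_2} = \eta$.

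Next I would invoke Theorem~\ref{thm:hom-thm}(1) on strategy (3). Its typing premise is (1), its environment premises are the well-typedness facts just recorded, and its first two stepping premises are exactly the two halves of (3). The remaining premise is a run of $e$ on $\prefixConcat{\alpha_1}{\alpha_2} = \eta$ at gas $n_1 + n_2$. To supply it I would enlarge $n_1$ by monotonicity (Theorem~\ref{thm:prefix-sem-mono}) so that $n_1 + n_2$ dominates the gas witnessing (5)—this alters neither $e_1$ nor $p_1$—and then apply monotonicity again to (5) to obtain a run of $e$ on $\eta$ at gas $n_1 + n_2$ producing the very same $e'$ and $p$. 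The homomorphism theorem then delivers $e' = e_2$ and $\prefixConcatRel{p_1}{p_2}{p}$. Running the identical argument on strategy (4), now using $\prefixConcat{\beta_1}{\beta_2} = \eta$, gives $e' = e_2'$ and $\prefixConcatRel{p_1'}{p_2'}{p}$. Chaining these equalities yields $e_2 = e' = e_2'$ and $\prefixConcat{p_1}{p_2} = p = \prefixConcat{p_1'}{p_2'}$, which is the claim.

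I expect the only real friction to be bookkeeping rather than mathematics: getting the types of the $\emp{-}$ halves right so that the concatenation genuinely reconstructs $\eta$ (the derivative-type subtlety handled by Theorem~\ref{thm:prefix-concat-emp} and Theorem~\ref{thm:env-concat-emp}), and reconciling the step indices so that the combined run occurs at precisely the gas $n_1+n_2$ that Theorem~\ref{thm:hom-thm} requires. Both are dispatched by the monotonicity lemma (Theorem~\ref{thm:prefix-sem-mono}), so no fresh induction is needed here—the entire theorem is just a repackaging of the homomorphism result in terms of the parallel splitting of $\eta$.
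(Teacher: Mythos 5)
Your proposal is correct and is essentially the paper's own proof: the paper likewise establishes $\prefixConcat{\left(\eta|_{\Gamma}\cup\emp{\Gamma'}\right)}{\left(\eta|_{\Gamma'}\cup\emp{\Gamma}\right)} = \eta$ (and its mirror image) via Theorem~\ref{thm:env-concat-emp}, and then concludes by two applications of Theorem~\ref{thm:hom-thm}. The gas reconciliation via Theorem~\ref{thm:prefix-sem-mono} and the reading of $\emp{\Gamma}$ at the derivative type are details the paper's proof leaves implicit, and you dispatch them correctly.
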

\begin{proof}
  By Theorem~\ref{thm:env-concat-emp}, $\eta|_{\Gamma} \cdot \emp{\Gamma} = \eta|_{\Gamma}$, and  $\eta|_{\Gamma'} \cdot \emp{\Gamma'} = \eta|_{\Gamma'}$.
  Then, we can compute the concatenation of the subsequent input environments in (2), and those in (3).
  \begin{align*}
    {\prefixConcat{\left(\eta|_{\Gamma} \cup \emp{\Gamma'}\right)}{\left(\eta|_{\Gamma'} \cup \emp{\Gamma}\right)}}
    &= \left(\eta_{\Gamma} \cdot \emp{\Gamma}\right) \cup \left(\eta|_{\Gamma'} \cdot \emp{\Gamma'}\right)\\
    &= \eta|_\Gamma \cup \eta_{\Gamma'}\\
    &= \eta
  \end{align*}
  By the same argument, 
  $${\eta|_{\Gamma'} \cup \emp{\Gamma}} \cdot  {\eta|_{\Gamma} \cup \emp{\Gamma'}} = \eta $$
  By two uses of Theorem~\ref{thm:hom-thm}, we have:
  $e' = e_2 = e_2'$, and $p = \prefixConcat{p_1}{p_2} = \prefixConcat{p_1'}{p_2'}$, as required.
\end{proof}


\newpage

\section{Events}
\label{app:eventsApp}

\emph{Events} allow us to represent a \core{} prefix as a sequence of totally
ordered items, while retaining information needed to infer the rich structure of
the prefix representations. In this section, we define serialization and
deserialization functions from sequences of events to prefixes and back.
We further prove that, for any type $s$, the size of the possible events that may occur
on a channel sending events of type $s$ (and its derivatives) is bounded.
This section serves to justify our claim from Section~\ref{sec:core-lang}
that \core{} can be run atop a traditional stream processing system where streams are sequences.

The grammar of events is:
$$
  x ::= \oneev \mid \parevA{x} \mid \parevB{x} \mid \sumPuncA \mid \sumPuncB \mid \catPunc \mid \catevA{x}
$$

\begin{definition}[Event Typing Relation]
    We define a binary relation $\eventfor{x}{s}$ as follows:

    \begin{mathpar}
      \inferrule{ }{\eventfor{\oneev}{1}}

      \inferrule{\eventfor{x}{s}}{\eventfor{\parevA{x}}{s \| t}}

      \inferrule{\eventfor{x}{t}}{\eventfor{\parevB{x}}{s \| t}}

        \ENDOFLINE

      \inferrule{ }{\eventfor{\sumPuncA}{s + t}}

      \inferrule{ }{\eventfor{\sumPuncB}{s + t}}

      \inferrule{\nullable{s}}{\eventfor{\catPunc}{s \cdot t}}

        \ENDOFLINE

      \inferrule{\eventfor{x}{s}}{\eventfor{\catevA{x}}{s \cdot t}}

      \inferrule{ }{\eventfor{\sumPuncA}{s^\star}}

      \inferrule{ }{\eventfor{\sumPuncB}{s^\star}}

    \end{mathpar}
  \end{definition}

  Note that $s + t$ and $s^\star$ share the same punctuation events. Intuitively, this is because
  $s^\star$ can be unrolled as $\varepsilon + (s \cdot s^\star)$.

  \begin{definition}[Event Derivative Relation]
    We define a ternary relation $\derivrel{x}{s}{s'}$.

    \begin{mathpar}

      \inferrule{ }{
      \derivrel{\oneev}{1}{\varepsilon}
    }

    \inferrule{
      \derivrel{x}{s}{s'}
    }{
      \derivrel{\parevA{x}}{s \| t}{s' \| t}
    }

    \inferrule{
      \derivrel{x}{t}{t'}
    }{
      \derivrel{\parevB{x}}{s \| t}{s \| t'}
    }

    \inferrule{
    }{
      \derivrel{\sumPuncA}{s + t}{s}
    }

        \ENDOFLINE

    \inferrule{ }{
      \derivrel{\sumPuncB}{s + t}{t}
    }

    \inferrule{
      \nullable{s}
    }{
      \derivrel{\catPunc}{s \cdot t}{t}
    }

    \inferrule{
      \derivrel{x}{s}{s'}
    }{
      \derivrel{\catevA{x}}{s \cdot t}{s' \cdot t}
    }

    \inferrule{ }{
        \derivrel{\sumPuncA}{s^\star}{\varepsilon}
    }

        \ENDOFLINE

    \inferrule{ }{
        \derivrel{\sumPuncB}{s^\star}{s \cdot s^\star}
    }

    \end{mathpar}
  \end{definition}

  \begin{theorem}[Event Derivative Function]
    \label{thm:event-deriv-function}
    If $\eventfor{x}{s}$, there is a unique $s'$ such that $\derivrel{x}{s}{s'}$.
  \end{theorem}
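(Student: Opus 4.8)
The plan is to argue by induction on the derivation of $\eventfor{x}{s}$, which is the same as structural induction on the event $x$ combined with the case analysis on the shape of $s$ that the typing rule forces. The observation that makes everything go through is that the event-derivative relation $\derivrel{x}{s}{s'}$ is essentially \emph{syntax-directed}: the head constructor of $x$ together with the top-level constructor of $s$ selects at most one derivative rule. Consequently, uniqueness will follow by straightforward rule inversion, while existence comes from exhibiting the (forced) matching rule, with the induction hypothesis supplying the sub-derivative in the recursive cases. This mirrors the proof of Theorem~\ref{thm:derivrel-fun} for prefix derivatives.

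For the leaf events I would dispatch directly. When $x = \oneev$, the typing rule forces $s = \onet$, and the one applicable derivative rule yields $s' = \epst$. The only genuinely subtle point is the pair of punctuation events $\sumPuncA$ and $\sumPuncB$, since each has \emph{two} derivative rules --- one for $s + t$ and one for $s^\star$. Here the typing premise $\eventfor{\sumPuncA}{s}$ (respectively $\sumPuncB$) pins $s$ down to be either a sum or a star type, and because $+$ and $\star$ are distinct type constructors, exactly one of the two rules can fire. This both establishes existence and eliminates the competing rule for uniqueness. For $x = \catPunc$, the typing rule supplies precisely the side condition $\nullable{s}$ that the corresponding derivative rule demands, so $s' = t$ exists and is the unique choice.

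The recursive events $\parevA{x'}$, $\parevB{x'}$, and $\catevA{x'}$ are handled by the induction hypothesis. In each case the typing rule both fixes the top-level constructor of $s$ (a parallel or concatenation type) and supplies a sub-judgment $\eventfor{x'}{r}$ for the relevant component $r$. The IH then yields a unique $r'$ with $\derivrel{x'}{r}{r'}$, and the unique matching derivative rule reassembles it --- for instance $s' = r' \| t$ in the $\parevA{x'}$ case, and $s' = r' \cdot t$ in the $\catevA{x'}$ case. Uniqueness of the full derivative follows immediately from uniqueness of the sub-derivative.

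I do not expect any hard step here: the whole argument is a short, mechanical induction. The only place that needs a moment of care is the overlap of the punctuation events between sum and star types, and that is resolved entirely by the disjointness of the $+$ and $\star$ constructors in the grammar of types.
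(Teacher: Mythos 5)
Your proposal is correct: the paper states Theorem~\ref{thm:event-deriv-function} without an explicit proof, and your routine induction is exactly the argument it leaves implicit, matching the pattern of the prefix-derivative analogue, Theorem~\ref{thm:derivrel-fun} (there, uniqueness goes by induction on the derivative derivation and existence by induction on the typing derivation; you fold both into a single induction on typing, an immaterial difference). You also correctly isolate the one point of potential non-determinism --- $\sumPuncA$ and $\sumPuncB$ having derivative rules at both $s+t$ and $s^\star$ --- and discharge it by disjointness of the $+$ and $\star$ type constructors, which is indeed the only case where the relation is not trivially syntax-directed.
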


  Because of Theorem~\ref{thm:event-deriv-function}, if we know $\eventfor{x}{s}$,
  we may write the unique $s'$ such that $\derivrel{x}{s}{s'}$ as
  $\deriv{x}{s}$.

  \begin{definition}[Events Typing and Derivatives Relations]
    We lift event typing to lists by derivatives.
    \begin{mathpar}
      \inferrule{ }{
        \eventsfor{[]}{s}
      }

      \inferrule{
        \eventfor{x}{s}\\
        \derivrel{x}{s}{s'}\\
        \eventsfor{xs}{s'}
      }{
        \eventsfor{\cons{x}{xs}}{s}
      }
    \end{mathpar}

    We also lift derivatives to lists of events in the natural way.
    \begin{mathpar}
      \inferrule{ }{
        \derivrel{[]}{s}{s}
      }

      \inferrule{
        \derivrel{x}{s}{s'}\\
        \derivrel{xs}{s'}{s''}
      }{
        \derivrel{\cons{x}{xs}}{s}{s''}
      }
    \end{mathpar}
  \end{definition}

  \begin{theorem}[Events Derivative Function]
    \label{thm:events-deriv-function}
    If $\eventsfor{xs}{s}$, there is a unique $s'$ such that $\derivrel{xs}{s}{s'}$.
  \end{theorem}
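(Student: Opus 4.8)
The plan is to prove existence and uniqueness separately, each by a short induction, letting the single-event result Theorem~\ref{thm:event-deriv-function} do the real work.

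For existence, I would induct on the derivation of $\eventsfor{xs}{s}$. The base case $xs = []$ is immediate: the empty rule gives $\derivrel{[]}{s}{s}$, so $s' = s$ works. In the cons case $xs = \cons{x}{xs_0}$, inverting the typing rule supplies an intermediate type $s_0$ together with $\eventfor{x}{s}$, $\derivrel{x}{s}{s_0}$, and $\eventsfor{xs_0}{s_0}$. Applying the induction hypothesis to $\eventsfor{xs_0}{s_0}$ yields some $s''$ with $\derivrel{xs_0}{s_0}{s''}$, and then the cons rule for list derivatives assembles $\derivrel{\cons{x}{xs_0}}{s}{s''}$, giving the desired witness.

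For uniqueness, I would induct on $xs$. When $xs = []$, the only rule that can conclude $\derivrel{[]}{s}{s'}$ is the empty rule, forcing $s' = s$. When $xs = \cons{x}{xs_0}$, any derivation must, by inversion on the single applicable list rule, factor through some intermediate type $s_0$ with $\derivrel{x}{s}{s_0}$ and $\derivrel{xs_0}{s_0}{s'}$. The typing premise $\eventsfor{\cons{x}{xs_0}}{s}$ provides $\eventfor{x}{s}$, so Theorem~\ref{thm:event-deriv-function} pins down $s_0$ uniquely across any two competing derivations; the same typing premise records $\eventsfor{xs_0}{s_0}$, so the induction hypothesis applied at the (now common) type $s_0$ pins down $s'$.

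The one subtlety worth flagging is that the single-event derivative relation is \emph{not} a function of the event alone---for instance $\sumPuncA$ reduces $s + t$ to $s$ but $s^\star$ to $\varepsilon$---so determinism of the intermediate type $s_0$ genuinely relies on the well-typedness carried by $\eventsfor{xs}{s}$ and on invoking Theorem~\ref{thm:event-deriv-function} at each step, rather than on any type-free determinism of the derivative rules. Threading this typing information through the induction so that the single-event theorem is always applicable is the only place requiring care; beyond that, the argument is entirely routine.
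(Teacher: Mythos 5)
Your proof is correct and takes essentially the same approach as the paper, which treats this as routine: existence by induction on the typing derivation and uniqueness by induction on the list, with Theorem~\ref{thm:event-deriv-function} discharging each single-event step (the same pattern the paper uses for the prefix analogue, Theorem~\ref{thm:derivrel-fun}). One small remark: the subtlety you flag is even milder than you suggest --- for a \emph{fixed} type $s$, the rules for $\sumPuncA$/$\sumPuncB$ cannot overlap since $s+t$ and $s^\star$ are syntactically distinct constructors, so the intermediate type is already determined by the event--type pair, and the typing premise is really needed only for existence.
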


  Because of Theorem~\ref{thm:events-deriv-function}, if we know $\eventsfor{xs}{s}$,
  we may write the unique $s'$ such that $\derivrel{xs}{s}{s'}$ as
  $\deriv{xs}{s}$.

  \begin{theorem}[Empty List of Events]
    \label{thm:empty-eventsfor}
    For all $s$, we have $\eventsfor{[]}{s}$, and $\derivrel{[]}{s}{s}$
  \end{theorem}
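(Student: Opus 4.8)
The plan is to observe that both conjuncts are \emph{immediate} from the base-case rules of the two inductive definitions given just above the statement, so that essentially no work is required and, in particular, no induction on $s$ is needed.

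First, consider the claim $\eventsfor{[]}{s}$. In the definition lifting event typing to lists, the first rule is an axiom: it has no premises and concludes $\eventsfor{[]}{s}$. Crucially, this rule is schematic in the type, so it fires for an arbitrary $s$; thus $\eventsfor{[]}{s}$ holds for every $s$ by a single rule application. Second, for $\derivrel{[]}{s}{s}$, the definition lifting derivatives to lists of events likewise has a premise-free base rule concluding $\derivrel{[]}{s}{s}$, again schematic in $s$. So this too holds for every $s$ directly.

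There is no real obstacle here: the lemma is a sanity check whose entire content is that the empty list is typeable at any $s$ and acts as the identity for the (lifted) derivative operation. The only thing worth flagging is \emph{why} one bothers to state it separately, namely that it anchors the base cases of the inductions appearing in the surrounding development (for instance, the existence/uniqueness argument in the Events Derivative Function theorem, which inducts on the list $xs$ and appeals to exactly these two facts in its $[]$ case). Accordingly, I would present the proof in one line as two appeals to the respective empty-list rules, and not attempt any structural induction.
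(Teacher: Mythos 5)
Your proposal is correct and matches the paper's treatment: the paper states this theorem without any proof precisely because both conjuncts are instances of the premise-free, $s$-schematic base rules of the two list-lifted relations, exactly as you observe. No induction is needed, and your one-line argument by two rule applications is the intended (and only) content of the lemma.
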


  \begin{theorem}[Events Concatenation]
    \label{thm:eventsfor-concat-deriv}
    If
    \begin{enumerate}
      \item $\eventsfor{xs}{s}$
      \item $\derivrel{xs}{s}{s'}$
      \item $\eventsfor{ys}{s'}$
      \item $\derivrel{ys}{s'}{s''}$
    \end{enumerate}
    Then, $\eventsfor{xs \texttt{++} ys}{s}$, and $\derivrel{xs \texttt{++} ys}{s}{s''}$.

    In other words, if $\eventsfor{xs}{s}$ and $\eventsfor{ys}{\deriv{xs}{s}}$, then $\eventsfor{xs \texttt{++} ys}{s}$
    and $\deriv{xs\texttt{++}ys}{s} = \deriv{ys}{\deriv{xs}{s}}$.
  \end{theorem}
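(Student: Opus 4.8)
The plan is to prove the statement by induction on the list $xs$ --- equivalently, on the derivation of $\eventsfor{xs}{s}$ --- using the functionality of the single-event derivative relation to reconcile the intermediate types that are carried independently by the typing premise (1) and the derivative premise (2). The key observation is that both lifted relations, $\eventsfor{\cdot}{\cdot}$ and $\derivrel{\cdot}{\cdot}{\cdot}$, recurse through the tail of the list in lockstep, each threading its own ``current type,'' so the only real content of the proof is showing that these threaded types coincide.

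In the base case $xs = []$, inverting premise (2) against the only applicable rule forces $s' = s$, so $xs \texttt{++} ys = ys$ and premises (3) and (4) already establish both conclusions directly.

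In the inductive case $xs = \cons{x}{xs_0}$, I would invert (1) to obtain $\eventfor{x}{s}$, some type $s_0$ with $\derivrel{x}{s}{s_0}$, and $\eventsfor{xs_0}{s_0}$; and invert (2) to obtain $\derivrel{x}{s}{\tilde{s}_0}$ together with $\derivrel{xs_0}{\tilde{s}_0}{s'}$. By Theorem~\ref{thm:event-deriv-function} (uniqueness of the event derivative) we get $s_0 = \tilde{s}_0$, so the tail $xs_0$ is typed at $s_0$ and has list-derivative $s'$. Applying the induction hypothesis to $xs_0$ (at $s_0$, with derivative $s'$) together with (3) and (4) yields $\eventsfor{xs_0 \texttt{++} ys}{s_0}$ and $\derivrel{xs_0 \texttt{++} ys}{s_0}{s''}$. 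Since $\cons{x}{xs_0} \texttt{++} ys = \cons{x}{(xs_0 \texttt{++} ys)}$, re-applying the cons rules for the two lifted relations --- reusing $\eventfor{x}{s}$ and $\derivrel{x}{s}{s_0}$ --- reassembles the two desired conclusions.

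The argument is essentially routine; the only step requiring care is the reconciliation of the two independently chosen intermediate types $s_0$ and $\tilde{s}_0$ produced by inverting the typing and the derivative premises, which is exactly what the functionality result of Theorem~\ref{thm:event-deriv-function} supplies. As an alternative I would note that one can instead argue directly in the ``functional'' form stated in the second half of the theorem: assuming $\eventsfor{xs}{s}$ and $\eventsfor{ys}{\deriv{xs}{s}}$, show $\eventsfor{xs \texttt{++} ys}{s}$ with $\deriv{xs \texttt{++} ys}{s} = \deriv{ys}{\deriv{xs}{s}}$, appealing to Theorem~\ref{thm:events-deriv-function} to guarantee that every list-derivative appearing is well-defined and unique. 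This packaging hides the bookkeeping of intermediate types but proves the identical content.
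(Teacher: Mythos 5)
Your proof is correct and takes essentially the same approach as the paper, whose entire proof is ``Induction on $xs$.'' Your elaboration---inverting the list-derivative rule in the base case to force $s' = s$, and using the uniqueness part of Theorem~\ref{thm:event-deriv-function} to reconcile the two independently threaded intermediate types in the cons case---is exactly the routine bookkeeping that the paper's one-line proof elides.
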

  \begin{proof}
    Induction on $xs$.
  \end{proof}

\subsection*{Events}

\subsection*{Events to Prefix}

\begin{definition}[Event(s) to Prefix]
 \begin{mathpar}
   \inferrule{ }{ \EToP{\oneev}{1}{\onepB} }

   \inferrule{ \EToP{x}{s}{p} }{ \EToP{\parevA{x}}{s \| t}{\parp{p}{\emp{t}}}}

        \ENDOFLINE

   \inferrule{ \EToP{x}{t}{p} }{ \EToP{\parevB{x}}{s \| t}{\parp{\emp{s}}{p} }}

   \inferrule{ \EToP{x}{s}{p}}{ \EToP{\catevA{x}}{s \cdot t}{\catpA{p}}}

        \ENDOFLINE

   \inferrule{ \nullable{s} \\ \donebatch{s}{b}}{ \EToP{\catPunc}{s \cdot t}{\catpB{b}{\emp{t}}}}

   \inferrule{ }{ \EToP{\sumPuncA}{s + t}{\sumpA{\emp{s}}}}

   \inferrule{ }{ \EToP{\sumPuncB}{s + t}{\sumpB{\emp{t}}}}

        \ENDOFLINE

   \inferrule{ }{ \EToP{\sumPuncA}{s^\star}{\stpDone}}

   \inferrule{ }{ \EToP{\sumPuncB}{s^\star}{\stpA{\emp{s}}}}

        \ENDOFLINE

   \inferrule{ }{ \ESToP{[]}{s}{\emp{s}} }

   \inferrule{
     \EToP{t}{s}{p}\\
     \delta_p s \sim s'\\
     \ESToP{ts}{s'}{p'}\\
     p' \cdot p \sim p''
   }{
     \ESToP{t::ts}{s}{p''}
   }
  \end{mathpar}

\end{definition}

\begin{theorem}[Event to Prefix Function]
  \label{thm:etop-function}
  If $\eventfor{x}{s}$ then there is a unique $\prefixHasType{p}{s}$ such that $\EToP{x}{s}{p}$.
\end{theorem}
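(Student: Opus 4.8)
The plan is to induct on the derivation of $\eventfor{x}{s}$. Because the event-typing rules are driven by the top constructor of $x$ (and each rule pins down the shape of $s$), every case of this induction selects exactly one matching rule of the $\EToP{-}{-}{-}$ relation, so both existence and uniqueness fall out of the same case analysis.

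For existence, in each case I would apply the corresponding $\EToP$ rule and check that the prefix it produces is well typed. The three ``congruence'' cases---$\parevA{x}$ at $s \| t$, $\parevB{x}$ at $s \| t$, and $\catevA{x}$ at $s \cdot t$---use the induction hypothesis on the sub-derivation to obtain a well-typed sub-prefix $p$, after which I assemble $\parp{p}{\emp{t}}$, $\parp{\emp{s}}{p}$, or $\catpA{p}$ respectively; the well-typedness of the $\emp{-}$ components is Theorem~\ref{thm:empty-prefix-correct}, and the prefix typing rules for $\|$ and $\cdot$ close each case. The leaf cases $\oneev$, $\sumPuncA$, and $\sumPuncB$ (the last two arising at both $s+t$ and $s^\star$) are immediate, producing $\onepB$, $\sumpA{\emp{s}}$ or $\stpDone$, and $\sumpB{\emp{t}}$ or $\stpA{\emp{s}}$; each is well typed directly from the prefix rules together with Theorem~\ref{thm:empty-prefix-correct}.

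For uniqueness I would observe that the pair $(x,s)$ determines which $\EToP$ rule can conclude $\EToP{x}{s}{p}$: the head constructor of $x$ selects the rule in every case except $\sumPuncA$/$\sumPuncB$, where the two candidate rules are separated by whether $s$ is a sum or a star---and $s$ is a fixed input. Hence any two derivations of $\EToP{x}{s}{p}$ and $\EToP{x}{s}{p'}$ must use the same rule with the same premises; in the congruence cases inductive uniqueness forces the sub-prefixes to agree, and since $\emp{-}$ is a function the assembled prefixes coincide, giving $p = p'$.

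The main obstacle is the $\catPunc$ case at $s \cdot t$ with $\nullable{s}$, whose $\EToP$ rule produces $\catpB{b}{\emp{t}}$ for a ``done'' batch $b$ satisfying $\donebatch{s}{b}$. To finish that case I need that a nullable type has a \emph{unique} such $b$ and that it is a \emph{maximal} prefix of $s$. Uniqueness of the prefix component follows from the theorem ``Only Prefix of a Null Type is Empty'' (forcing $b = \emp{s}$), and a short auxiliary induction on the derivation of $\nullable{s}$ shows $\emp{s}$ is maximal: the base case is $\isMaximal{\epsp}$, and the step case $\parp{\emp{s_1}}{\emp{s_2}}$ is maximal since both components are maximal by the inner induction hypothesis. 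With $b = \emp{s}$ maximal of type $s$, the $\catpB$ prefix typing rule applies, and together with Theorem~\ref{thm:empty-prefix-correct} for $\emp{t}$ this yields $\prefixHasType{\catpB{b}{\emp{t}}}{s \cdot t}$; uniqueness of $b$ then propagates to uniqueness of the whole prefix.
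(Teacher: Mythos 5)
Your proposal is correct and follows exactly the paper's (one-line) proof strategy: induction on the derivation of $\eventfor{x}{s}$, with the event's head constructor and the fixed type $s$ jointly determining the unique applicable $\EToP$ rule in each case. Your careful treatment of the $\catPunc$ case is a genuine improvement in rigor over the paper, which never formally defines the $\donebatch{s}{b}$ judgment; your reading of $b$ as the unique maximal prefix of a nullable $s$, pinned down via \emph{Only Prefix of a Null Type is Empty} together with your auxiliary induction showing $\isMaximal{\emp{s}}$ when $\nullable{s}$, is precisely what is needed to discharge both existence (the $\catpB$ typing rule demands maximality of the first component) and uniqueness.
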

\begin{proof}
    Induction on $\eventfor{x}{s}$.
\end{proof}

\begin{theorem}[Events to Prefix Function]
  \label{thm:estop-function}
  If $\eventsfor{xs}{s}$, then there is a unique $\prefixHasType{p}{s}$
  such that $\ESToP{xs}{s}{p}$.
\end{theorem}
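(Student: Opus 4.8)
The plan is to induct on the list $xs$---equivalently, on the derivation of $\eventsfor{xs}{s}$---establishing existence and uniqueness of $p$ simultaneously. The base case $xs = []$ is immediate: the only rule whose conclusion carries an empty list is the one giving $\ESToP{[]}{s}{\emp{s}}$, so $p = \emp{s}$ is forced, and $\prefixHasType{\emp{s}}{s}$ holds by Theorem~\ref{thm:empty-prefix-correct}. Since no cons-rule can fire on $[]$, this witness is also unique.

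For the inductive step $xs = t :: ts$, I would first invert $\eventsfor{t::ts}{s}$ to extract $\eventfor{t}{s}$, a type $s'$ with $\derivrel{t}{s}{s'}$, and $\eventsfor{ts}{s'}$. Theorem~\ref{thm:etop-function} then yields a unique $\prefixHasType{p}{s}$ with $\EToP{t}{s}{p}$, and Theorem~\ref{thm:derivrel-fun} gives a unique $s''$ with $\derivrel{p}{s}{s''}$, the derivative of $s$ by the head prefix. To recurse, I apply the induction hypothesis to $ts$ at type $s''$, obtaining a unique $\prefixHasType{p'}{s''}$ with $\ESToP{ts}{s''}{p'}$; finally Theorem~\ref{thm:prefix-concat-fun} produces the unique concatenation $p''$ of $p$ and $p'$, which it also certifies has type $s$. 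Assembling these four deterministic pieces gives $\ESToP{t::ts}{s}{p''}$ with $\prefixHasType{p''}{s}$, and uniqueness of $p''$ follows because each component---head prefix, derivative type, tail prefix, and concatenation---is uniquely determined by a previously established function theorem or by the induction hypothesis.

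The main obstacle is a compatibility gap between two notions of derivative used in the construction: the $\ESToP$ rule types the tail by $s'' = \deriv{p}{s}$, the derivative with respect to the \emph{prefix} $p$, whereas inverting the events-typing judgment hands me a tail typed by $\deriv{t}{s}$, the derivative with respect to the \emph{event} $t$. To invoke the induction hypothesis at the right type I need the auxiliary lemma that these coincide: whenever $\EToP{t}{s}{p}$, the event derivative $\deriv{t}{s}$ equals the prefix derivative $\deriv{p}{s}$. I would prove this separately by induction on $\eventfor{t}{s}$, matching each event former against the corresponding clauses of the Event-to-Prefix relation, the Event Derivative Relation, and the prefix derivative relation of Definition~\ref{tdef:derivrel}; the cases are mechanical (for instance $\catevA{x}$ against $\EToP{\catevA{x}}{s \cdot t}{\catpA{p}}$ reduces to the head type $s$ via the inner induction, and the punctuation events line up the $\varepsilon$ and crossover cases). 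Once this lemma is in hand we have $\deriv{t}{s} = s''$, so the tail really is typed by $s''$ and the induction hypothesis applies cleanly.

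A closing remark: well-typedness of the output, $\prefixHasType{p''}{s}$, is not an additional burden but drops out of Theorem~\ref{thm:prefix-concat-fun} once I supply $\prefixHasType{p}{s}$ and $\prefixHasType{p'}{\deriv{p}{s}}$, the latter being exactly what the induction hypothesis delivers for the tail. Thus existence, uniqueness, and typing are all discharged within a single induction, with the derivative-compatibility lemma being the only genuinely new ingredient beyond the cited function theorems.
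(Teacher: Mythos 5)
Your proof is correct and follows essentially the same route as the paper, whose entire proof is the one-liner ``Induction on $xs$, using Theorem~\ref{thm:etop-function}.'' The derivative-compatibility lemma you isolate---that $\EToP{t}{s}{p}$ implies the event derivative $\deriv{t}{s}$ coincides with the prefix derivative $\deriv{p}{s}$---is genuinely needed to make the induction hypothesis apply at the type demanded by the cons rule of $\ESToP{\cdot}{\cdot}{\cdot}$, and the paper simply leaves it implicit (its analogue for the serialization direction appears as the PToES Relation Derivative Agreement lemma), so spelling it out is a strict improvement in rigor rather than a deviation.
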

\begin{proof}
    Induction on $xs$, using Theorem~\ref{thm:etop-function}.
\end{proof}

\subsection*{Prefix to Events}

\begin{definition}[PToES]

  In this definition, we write occasionally lift event constructors to lists, writing $\widehat{f}(xs)$ for $\left[f(x) \mid x \in xs\right]$.
  Also, we write $xs \| ys$ for the set of \emph{shuffles} of the lists $xs$ and $ys$.
  \begin{mathpar}

    \inferrule[PToES-Eps]{ }{ \PToES{\epsp}{\varepsilon}{ [] } }

    \inferrule[PToES-One-A]{ }{ \PToES{\onepA}{1}{ [] } }

    \inferrule[PToES-One-B]{ }{ \PToES{\onepB}{1}{ [\oneev] } }

        \ENDOFLINE

    \inferrule[PToES-Par]{
      \PToES{p}{s}{xs}\\
      \PToES{p'}{t}{ys}\\
      zs \in \parevAmap{xs} \| \parevBmap{ys}
    }{
      \PToES{\parp{p}{p'}}{s \| t}{zs}
    }

    \inferrule[PToES-Cat-A]{
      \PToES{p}{s}{xs}
    }{
      \PToES{\catpA{p}}{s \cdot t}{\catevAmap{xs}}
    }

        \ENDOFLINE

    \inferrule[PToES-Cat-B]{
      \PToES{b^\circ}{s}{xs}\\
      \PToES{p}{t}{ys}\\
    }{
      \PToES{\catpB{b}{p}}{s \cdot t}{\catevAmap{xs} \texttt{++} \cons{\catPunc}{ys} }
    }

    \inferrule[PToES-Sum-Emp]{ }{
      \PToES{\sumpEmp}{s+t}{[]}
    }

    \inferrule[PToES-Sum-A]{
      \PToES{p}{s}{xs}
    }{
      \PToES{\sumpA{p}}{s + t}{\cons{\sumPuncA}{xs}}
    }

        \ENDOFLINE

    \inferrule[PToES-Sum-B]{
      \PToES{p}{t}{xs}
    }{
      \PToES{\sumpB{p}}{s + t}{\cons{\sumPuncB}{xs}}
    }

    \inferrule[PToES-Star-Emp]{ }{
      \PToES{\stpEmp}{s^\star}{[]}
    }

    \inferrule[PToES-Star-Done]{ }{
      \PToES{\stpDone}{s^\star}{\sumPuncA}
    }

        \ENDOFLINE

    \inferrule[PToES-Star-A]{
      \PToES{p}{s}{xs}\\
    }{
      \PToES{\stpA{p}}{s^\star}{\cons{\sumPuncB}{\catevAmap{xs}}}
    }

        \ENDOFLINE

    \inferrule[PToES-Star-B]{
      \PToES{b^\circ}{s}{xs}\\
      \PToES{p}{s^\star}{ys}
    }{
      \PToES{\stpB{b}{p}}{s^\star}{ \cons{\sumPuncB}{ {\catevAmap{xs}} \texttt{++} {\cons{\catPunc}{ys}}} }
    } 


  \end{mathpar}
\end{definition}

\begin{theorem}[PToES Empty]
  \label{thm:ptoes-empty}
  If $\PToES{p}{s}{xs}$, then $xs = []$ iff $p = \emp{s}$
\end{theorem}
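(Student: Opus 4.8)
The plan is to proceed by induction on the derivation of $\PToES{p}{s}{xs}$, handling each of the thirteen rules in turn and, in every case, deciding the biconditional directly by comparing the list $xs$ produced against the empty prefix $\emp{s}$ supplied by Definition~\ref{def:emptyPrefix}. The cases sort naturally into three families. First, the genuinely empty base cases --- \ruleName{PToES-Eps}, \ruleName{PToES-One-A}, \ruleName{PToES-Sum-Emp}, and \ruleName{PToES-Star-Emp} --- produce $xs = []$, and in each the corresponding prefix ($\epsp$, $\onepA$, $\sumpEmp$, $\stpEmp$) is literally $\emp{s}$ by definition, so both sides of the iff are true. Second, the always-nonempty cases --- \ruleName{PToES-One-B}, \ruleName{PToES-Cat-B}, \ruleName{PToES-Sum-A}, \ruleName{PToES-Sum-B}, \ruleName{PToES-Star-Done}, \ruleName{PToES-Star-A}, and \ruleName{PToES-Star-B} --- emit a list that visibly contains at least one event (an $\oneev$, a leading tag $\sumPuncA$/$\sumPuncB$, or a $\catPunc$), so $xs \neq []$; for these I need only confirm that the prefix is likewise never $\emp{s}$, which is immediate by inspection, since e.g. $\emp{s \cdot t} = \catpA{\emp{s}}$ is a $\catpA{\cdot}$-prefix and hence differs from the $\catpB{\cdot}{\cdot}$-prefix of \ruleName{PToES-Cat-B}, and $\emp{s^\star} = \stpEmp$ differs from $\stpDone$, $\stpA{\cdot}$, and $\stpB{\cdot}{\cdot}$.

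The only genuinely inductive cases are \ruleName{PToES-Cat-A} and \ruleName{PToES-Par}. For \ruleName{PToES-Cat-A}, the output is $\catevAmap{xs}$ where $\PToES{p}{s}{xs}$; since mapping the constructor $\catevA{\cdot}$ over a list preserves length, $\catevAmap{xs} = []$ iff $xs = []$, which by the induction hypothesis is iff $p = \emp{s}$, which is iff $\catpA{p} = \catpA{\emp{s}} = \emp{s \cdot t}$. For \ruleName{PToES-Par}, the output $zs$ is an arbitrary shuffle drawn from $\parevAmap{xs} \| \parevBmap{ys}$ with $\PToES{p}{s}{xs}$ and $\PToES{p'}{t}{ys}$; here I will invoke the elementary fact that a shuffle of two lists is empty iff both lists are empty. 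Combined with the two induction hypotheses and the definition $\emp{s \| t} = \parp{\emp{s}}{\emp{t}}$, this gives $zs = []$ iff ($xs = []$ and $ys = []$) iff ($p = \emp{s}$ and $p' = \emp{t}$) iff $\parp{p}{p'} = \emp{s \| t}$.

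The sole step requiring any care is the parallel case, and even there the only supporting observation I need is that shuffling neither erases nor creates elements --- every shuffle of $as$ and $bs$ has length $|as| + |bs|$, so it is empty exactly when both summands are. Everything else reduces to a direct match of each rule's conclusion against Definition~\ref{def:emptyPrefix}, so once these two structural facts (length-preservation of the event-constructor maps such as $\catevAmap{\cdot}$, and emptiness of shuffles) are noted, I expect the proof to be short and entirely mechanical.
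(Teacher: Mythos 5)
Your proof is correct and matches the paper's approach exactly: the paper's proof is simply ``Induction on $\PToES{p}{s}{xs}$,'' and your case analysis --- grouping the rules into empty base cases, visibly nonempty cases, and the two genuinely inductive cases (\ruleName{PToES-Cat-A} and \ruleName{PToES-Par}) --- is the natural elaboration of that one-liner. The two structural facts you isolate (length-preservation of the lifted constructor maps and emptiness of a shuffle iff both inputs are empty) are precisely what the paper leaves implicit.
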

\begin{proof}
    Induction on $\PToES{p}{s}{xs}$.
\end{proof}

\begin{theorem}[PToES Left Total]
  \label{thm:ptoes-function}
  If $\prefixHasType{p}{s}$ then there exists (not necessarily unique) $xs$ such that $\PToES{p}{s}{xs}$
\end{theorem}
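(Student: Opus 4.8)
The final theorem (PToES Left Total, Theorem~\ref{thm:ptoes-function}) asserts that if $\prefixHasType{p}{s}$, then there exists some list of events $xs$ with $\PToES{p}{s}{xs}$. This is a totality claim for the prefix-to-events relation, restricted to well-typed prefixes.

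Let me think about how I would prove this.

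The relation $\PToES{p}{s}{xs}$ is defined by a set of rules, one for each prefix constructor. The prefix typing judgment $\prefixHasType{p}{s}$ is also defined by structural rules over prefix/type pairs. The natural approach is induction. But on what? The prefixes and types are linked: the rules for $\PToES$ recurse on subprefixes at subtypes, and in several cases (Cat-B, Star-B) they recurse on a "maximal-prefix-to-prefix" conversion $b^\circ$ of a maximal prefix $b$. So I should induct on the derivation of $\prefixHasType{p}{s}$, producing the witness $xs$ by matching each typing rule to the corresponding $\PToES$ rule.

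Let me walk through the cases to check the induction goes through. The base cases ($\epsp : \epst$, $\onepA : \onet$, $\onepB : \onet$) are immediate, with witnesses $[]$, $[]$, $[\oneev]$ given directly by PToES-Eps, PToES-One-A, PToES-One-B. For $\parp{p}{p'} : s \| t$: inverting the typing rule gives $\prefixHasType{p}{s}$ and $\prefixHasType{p'}{t}$; the IH yields event lists $xs$ and $ys$, and then PToES-Par produces a witness by picking any shuffle $zs \in \parevAmap{xs} \| \parevBmap{ys}$ (such a shuffle always exists). For $\catpA{p} : s \cdot t$: IH on $\prefixHasType{p}{s}$ gives $xs$, and PToES-Cat-A applies. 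The injection and star cases ($\sumpA{p}, \sumpB{p}, \stpA{p}$, the done/empty prefixes) are all handled analogously by a single use of the IH plus the matching rule.

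**The main obstacle.** The interesting cases are PToES-Cat-B and PToES-Star-B, which require $\PToES{b^\circ}{s}{xs}$, where $b$ is the maximal prefix component and $b^\circ$ (the $\prefixOf{b}$ operation) presumably coerces a maximal prefix back into an ordinary prefix of the same type. Inverting the typing rule $\prefixHasType{\catpB{b}{p}}{s \cdot t}$ gives $\prefixHasType{b}{s}$, $\isMaximal{b}$, and $\prefixHasType{p}{t}$. To apply the IH on $s$ I need $\prefixHasType{b^\circ}{s}$; this should follow because $b^\circ$ is just $b$ reinterpreted (the coercion is type-preserving), so I would invoke a small auxiliary lemma — or note directly — that $\prefixHasType{b}{s}$ implies $\prefixHasType{b^\circ}{s}$. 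Then the IH gives $xs$, the IH on $\prefixHasType{p}{t}$ gives $ys$, and PToES-Cat-B assembles the witness $\catevAmap{xs} \,\texttt{++}\, \cons{\catPunc}{ys}$. The $\stpB{b}{p}$ case is identical in structure but recurses on $s^\star$ for the tail. The subtlety to watch is making sure the induction is well-founded through this $b^\circ$ coercion: since $b^\circ$ has the same type $s$ and the same (or smaller) prefix structure as $b$, and the induction is on the $\prefixHasType{b}{s}$ derivation, the recursive call is on a structurally smaller derivation, so there is no circularity.

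**Plan summary.** I would state a one-line auxiliary fact that the maximal-to-prefix coercion $\prefixOf{\cdot}$ preserves typing, then prove the theorem by induction on the derivation of $\prefixHasType{p}{s}$, case-splitting on the last typing rule, inverting it, applying the IH to each subprefix (using the coercion lemma in the two "crossed-over" cases), and invoking the matching $\PToES$ rule to build the witness. I expect every case except Cat-B and Star-B to be entirely routine; the only real content is checking that the recursion through $\prefixOf{b}$ is type-correct and well-founded.

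\begin{proof}
By induction on the derivation of $\prefixHasType{p}{s}$, constructing a witness $xs$ in each case by applying the corresponding rule for $\PToES{p}{s}{xs}$.

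The base cases for $\epsp$, $\onepA$, and $\onepB$ are immediate, with witnesses given directly by PToES-Eps, PToES-One-A, and PToES-One-B. The cases for $\sumpEmp$, $\sumpA{p}$, $\sumpB{p}$, $\stpEmp$, $\stpDone$, and $\catpA{p}$ each follow from a single application of the induction hypothesis to the immediate subprefix, followed by the matching $\PToES$ rule. For $\parp{p}{p'}$, inverting the typing derivation yields $\prefixHasType{p}{s}$ and $\prefixHasType{p'}{t}$; the induction hypothesis supplies event lists $xs$ and $ys$, and PToES-Par produces a witness by choosing any shuffle in $\parevAmap{xs} \| \parevBmap{ys}$, which is nonempty.

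The only cases with content are $\catpB{b}{p}$ and $\stpB{b}{p}$, where the $\PToES$ rule recurses on the coercion $\prefixOf{b}$ of the maximal component $b$. Inverting the typing derivation for $\catpB{b}{p}$ gives $\prefixHasType{b}{s}$, $\isMaximal{b}$, and $\prefixHasType{p}{t}$. Since the coercion $\prefixOf{\cdot}$ is type-preserving, we have $\prefixHasType{\prefixOf{b}}{s}$, and because this derivation is no larger than that of $\prefixHasType{b}{s}$, the induction hypothesis applies to yield some $xs$ with $\PToES{\prefixOf{b}}{s}{xs}$. The induction hypothesis on $\prefixHasType{p}{t}$ yields $ys$, and PToES-Cat-B assembles the witness $\catevAmap{xs} \,\texttt{++}\, \cons{\catPunc}{ys}$. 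The case for $\stpB{b}{p}$ is identical, recursing on $s^\star$ for the tail and applying PToES-Star-B. This exhausts the cases, establishing existence of $xs$.
\end{proof}
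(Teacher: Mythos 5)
Your proof is correct and follows exactly the paper's approach: the paper proves this theorem by induction on the derivation of $\prefixHasType{p}{s}$, which is precisely your strategy, with your case analysis (including the observation that a shuffle always exists in the parallel case and that the coercion $\prefixOf{b}$ in the Cat-B and Star-B cases is type-preserving and keeps the induction well-founded) filling in the details the paper leaves implicit.
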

\begin{proof}
    Induction on $\prefixHasType{p}{s}$.
\end{proof}

\begin{lemma}[PToES Relation Derivative Agreement]
    If
    \begin{enumerate}
        \item $\prefixHasType{p}{s}$
        \item $\PToES{p}{s}{xs}$
        \item $\derivrel{p}{s}{s'}$
    \end{enumerate}
    then $\eventsfor{xs}{s}$ and $\derivrel{xs}{s}{s'}$
\end{lemma}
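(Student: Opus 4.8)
The plan is to induct on the derivation of $\PToES{p}{s}{xs}$, which directly exposes the shape of the event list $xs$. Since derivatives are functions (Theorem~\ref{thm:derivrel-fun}), for each rule it suffices to check that $\eventsfor{xs}{s}$ holds and that $\derivrel{xs}{s}{s'}$ yields the same $s'$ named in premise~(3). First I would clear the base cases. The empty-list rules (\jtref{PToES-Eps}, \jtref{PToES-One-A}, \jtref{PToES-Sum-Emp}, \jtref{PToES-Star-Emp}) each serialize the empty prefix $\emp{s}$, so $\eventsfor{[]}{s}$ holds by Theorem~\ref{thm:empty-eventsfor} and $\derivrel{[]}{s}{s}$ matches $\derivrel{\emp{s}}{s}{s}$ by Theorem~\ref{thm:derivrel-emp}. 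The singleton rules \jtref{PToES-One-B} and \jtref{PToES-Star-Done} follow by unfolding the event-typing and event-derivative relations once.

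Next I would isolate an auxiliary lemma, proved by a routine induction on the list: if $\eventsfor{as}{s_0}$ and $\derivrel{as}{s_0}{s_0'}$, then $\eventsfor{\catevAmap{as}}{s_0 \cdot t}$ and $\derivrel{\catevAmap{as}}{s_0 \cdot t}{s_0' \cdot t}$; intuitively, tagging every event with $\texttt{catevA}$ advances only the left factor of a concatenation. This lemma discharges \jtref{PToES-Cat-A} immediately, and combined with the analogous single-event facts for the punctuation markers it handles the prepend cases \jtref{PToES-Sum-A}, \jtref{PToES-Sum-B}, and \jtref{PToES-Star-A} via the cons rules of the two lifted relations, feeding each case its tail through the induction hypothesis.

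The append cases \jtref{PToES-Cat-B} and \jtref{PToES-Star-B} are then assembled using the Events Concatenation theorem (Theorem~\ref{thm:eventsfor-concat-deriv}). For \jtref{PToES-Cat-B}, the $\catevAmap{\cdot}$ lemma sends $s \cdot t$ to $s_0' \cdot t$, where $s_0'$ is the derivative of the serialized \emph{completed} $s$-component; because that component is maximal (from the well-typedness premise~(1) for $\catpB{\cdot}{\cdot}$), $s_0'$ is nullable by Theorem~\ref{thm:maximal-deriv-null}, so $\eventfor{\catPunc}{s_0' \cdot t}$ is well-typed and $\derivrel{\catPunc}{s_0' \cdot t}{t}$ holds. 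Concatenating the crossover event and the $t$-events (typed by the induction hypothesis) then yields the required derivative. The \jtref{PToES-Star-B} case is identical after first stepping across the leading $\sumPuncB$ marker, which unrolls $s^\star$ to $s \cdot s^\star$.

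The main obstacle is the parallel rule \jtref{PToES-Par}, where $xs = zs$ is an arbitrary shuffle of $\parevAmap{\cdot}$- and $\parevBmap{\cdot}$-tagged events. Here I would prove a dedicated sub-lemma: for any interleaving $zs$ of $\parevAmap{as}$ and $\parevBmap{bs}$ with $\derivrel{as}{s}{s''}$ and $\derivrel{bs}{t}{t''}$, one has $\eventsfor{zs}{s \| t}$ and $\derivrel{zs}{s \| t}{s'' \| t''}$. The proof goes by induction on $|as| + |bs|$: the head of $zs$ is forced to be the head of one of the two tagged lists, and processing a $\parevA$ (resp.\ $\parevB$) event advances only the left (resp.\ right) component of the parallel type, so the remaining shuffle is handled by the induction hypothesis at the stepped type. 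The key point---that the final result is independent of the interleaving---is exactly that left- and right-component derivatives act on disjoint factors and therefore commute; making this commutation precise in the inductive step is where the real work lies.
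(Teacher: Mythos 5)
Your proposal is correct and follows exactly the paper's route: the paper's entire proof is ``Induction on $\PToES{p}{s}{xs}$,'' and your case analysis fills in precisely the details that one-liner elides, with the right supporting facts in the two delicate spots (the $\catevAmap{\cdot}$ map lemma plus nullability of the derivative via Theorem~\ref{thm:maximal-deriv-null} to type $\catPunc$ in the \jtref{PToES-Cat-B}/\jtref{PToES-Star-B} cases, and the shuffle sub-lemma for \jtref{PToES-Par}, which mirrors the paper's own EsToP Par Recovery lemma). No gaps; your treatment of the parallel case correctly identifies that $\parevA$/$\parevB$ events advance disjoint factors of $s \| t$, which is what makes the derivative interleaving-independent.
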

\begin{proof}
    Induction on $\PToES{p}{s}{xs}$.
\end{proof}


\subsection*{Event Size}
Each event carries tag information about where it appears within a structured stream; this is
necessary for us to recover the rich prefix structure. Importantly, for a given stream type there
is an upper bound on the amount of tag information to be included on any event in any stream of that type.

\begin{definition}[Event size]
    We define the \emph{size} of an event recursively:
    \begin{align*}
        \evSize{\oneev} & = 1\\
        \evSize{\parevA{x}} & = 1 + \evSize{x}\\
        \evSize{\parevB{x}} & = 1 + \evSize{x}\\
        \evSize{\sumPuncA} & = 1 \\
        \evSize{\sumPuncB} & = 1 \\
        \evSize{\catPunc} & = 1\\
        \evSize{\catevA(x)} & = 1 + \evSize{x}
    \end{align*}
\end{definition}

We lift this to lists of events in the natural way:
\begin{definition}[Event List Size]
    \begin{align*}
        \evSize{[]} & = 0 \\
        \evSize{\cons{x}{xs}} & = \evSize{x} + \evSize{xs}
    \end{align*}
\end{definition}

To construct an \emph{a priori} bound on the size of any event to appear in stream, we recurse on the type of the stream:
\begin{definition}[Event size bound]
    \begin{align*}
        \evSizeBound{\varepsilon} & = 0\\
        \evSizeBound{1} & = 1\\
        \evSizeBound{s \| t} & = 1 + \max{(\evSizeBound{s}, \evSizeBound{t})}\\
        \evSizeBound{s \cdot t}& = \max{(1 + \evSizeBound{s}, \evSizeBound{t})} \\
        \evSizeBound{s + t} & = \max{(1, \evSizeBound{s}, \evSizeBound{t})} \\
        \evSizeBound{s^\star} & = \max{(1, 1 + \evSizeBound{s})}
    \end{align*}
\end{definition}

\begin{theorem}[Bounded Event Size]
    For all $s$, there is some $N = \evSizeBound(s)$ such that for any $\eventsfor{\XS}{s}$
    and any $x \in \XS$, we have that $|x| \leq N$, where $| \cdot |$
    denotes the size of the AST.
  \end{theorem}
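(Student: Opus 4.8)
The plan is to reduce the statement to two auxiliary facts about $\evSizeBound{\cdot}$ and then induct along the list of events, identifying the AST size $|x|$ with $\evSize{x}$. The first fact is a per-event bound: if $\eventfor{x}{s}$ then $\evSize{x} \leq \evSizeBound{s}$. The second is that the bound never grows under a derivative: if $\eventfor{x}{s}$ and $\derivrel{x}{s}{s'}$, then $\evSizeBound{s'} \leq \evSizeBound{s}$. Granting these, the theorem follows by induction on the derivation of $\eventsfor{\XS}{s}$. The base case $\eventsfor{[]}{s}$ is vacuous. In the cons case $\eventsfor{\cons{x}{xs}}{s}$ we have $\eventfor{x}{s}$, $\derivrel{x}{s}{s'}$, and $\eventsfor{xs}{s'}$; the head satisfies $\evSize{x} \leq \evSizeBound{s}$ by the first fact, while every event in the tail satisfies $\evSize{\cdot} \leq \evSizeBound{s'}$ by the inductive hypothesis, and $\evSizeBound{s'} \leq \evSizeBound{s}$ by the second fact. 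Hence $N = \evSizeBound{s}$ bounds all of $\cons{x}{xs}$ uniformly.

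The first fact I would prove by induction on the derivation of $\eventfor{x}{s}$, each case being a direct inequality comparing the recursive definition of $\evSize{\cdot}$ against that of $\evSizeBound{\cdot}$. The leaf events $\oneev$, $\sumPuncA$, $\sumPuncB$, and $\catPunc$ have size $1$ and always land under the corresponding $\max$ (for $\catPunc$ at type $s \cdot t$, note $1 \leq 1 + \evSizeBound{s}$). The two recursive constructors $\parevA{x}$/$\parevB{x}$ and $\catevA{x}$ add one to a subevent, and the definitions $\evSizeBound{s \| t} = 1 + \max(\evSizeBound{s}, \evSizeBound{t})$ and $\evSizeBound{s \cdot t} = \max(1 + \evSizeBound{s}, \evSizeBound{t})$ were chosen precisely to absorb this $+1$ given the inductive hypothesis on the subevent.

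The second fact I would prove by induction on the derivation of $\derivrel{x}{s}{s'}$, again checking each rule against the definition of $\evSizeBound{\cdot}$ and using monotonicity of $\max$. Most cases are immediate: the punctuation rules for sums and concatenation drop to a component type whose bound is one of the arguments of the defining $\max$ (e.g. $\derivrel{\sumPuncA}{s+t}{s}$ yields $\evSizeBound{s} \leq \evSizeBound{s+t}$, and $\derivrel{\catPunc}{s \cdot t}{t}$ yields $\evSizeBound{t} \leq \evSizeBound{s \cdot t}$), and the congruence rules for $\parevA{\cdot}$, $\parevB{\cdot}$, and $\catevA{\cdot}$ follow from monotonicity of $\max$ together with the inductive hypothesis on the subderivation.

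The one case that genuinely needs the design of $\evSizeBound{\cdot}$ is the star-unrolling derivative $\derivrel{\sumPuncB}{s^\star}{s \cdot s^\star}$, and this is where I expect the real content of the theorem to sit. Here the residual type $s \cdot s^\star$ is syntactically larger than $s^\star$, so a naive structural bound would blow up under repeated unrolling. The point is that $\evSizeBound{s \cdot s^\star} = \max(1 + \evSizeBound{s}, \evSizeBound{s^\star})$, and since $\evSizeBound{s^\star} = \max(1, 1 + \evSizeBound{s})$ already dominates $1 + \evSizeBound{s}$, this $\max$ collapses back to $\evSizeBound{s^\star}$; the bound is thus \emph{invariant} across this step rather than merely non-increasing. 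This is exactly why $\evSizeBound{s^\star}$ folds in the $1 + \evSizeBound{s}$ term up front, and it is what guarantees that walking arbitrarily far down a star stream never forces the tags on later events to grow.
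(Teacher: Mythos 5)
Your proof is correct, and it is worth noting that the paper's own proof of this theorem is only a truncated stub (it reads ``Induction on'' and breaks off), so your write-up supplies an argument the paper never actually gives. Your decomposition into two lemmas is the natural one: (i) a per-event bound, that $\eventfor{x}{s}$ implies $\evSize{x} \leq \evSizeBound{s}$, and (ii) monotonicity of the bound under event derivatives, that $\derivrel{x}{s}{s'}$ implies $\evSizeBound{s'} \leq \evSizeBound{s}$; with these, the induction along the derivation of $\eventsfor{\XS}{s}$ goes through exactly as you describe. I checked both lemmas case-by-case against the definitions and every case closes. Your diagnosis of where the real content sits is also right: for the star-unrolling case $\derivrel{\sumPuncB}{s^\star}{s \cdot s^\star}$, the residual type is not a structural subterm of $s^\star$ (indeed it is syntactically larger), and the computation $\evSizeBound{s \cdot s^\star} = \max\left(1 + \evSizeBound{s},\, \evSizeBound{s^\star}\right) = \max\left(1 + \evSizeBound{s},\, \max\left(1,\, 1 + \evSizeBound{s}\right)\right) = \evSizeBound{s^\star}$ shows the bound is exactly \emph{invariant} there, not merely non-increasing; this is precisely why a naive structural induction on $s$ (plausibly what the paper's truncated proof intended) would be awkward, and why your derivative-monotonicity lemma is the right tool. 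Two small points of hygiene: the theorem's $|x|$ is AST size while your lemmas bound $\evSize{x}$, and since $\evSize{\cdot}$ charges exactly one unit per constructor these coincide, but the identification deserves the explicit sentence you give it; and note that lemma (ii) as you prove it needs only the derivative derivation, not the typing premise $\eventfor{x}{s}$, so your statement is in fact slightly stronger than what the main induction requires.
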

  \begin{proof}
    Induction on 
  \end{proof}

\subsection*{Serialization and Deserialization}

We turn now to the final result of \ref{app:eventsApp}, that we can serialize a prefix $p$ into
a list of events $xs$, secure in the knowledge that when we deserialize $xs$ we will obtain the same prefix $p$.

Towards this result, we introduce a series of lemmas that allow us to use the tag information encoded in each event to
recover the prefix structure during deserialization. Observe that the shape of each lemma mirrors that of the
corresponding serialization ($\dagger$) constructor.

\begin{lemma}[EsToP Par Recovery]
    If \begin{enumerate}
        \item $zs \in \parevAmap{xs} \| \parevBmap{ys}$ (where $\|$ is list shuffle)
        \item $\ESToP{xs}{s}{p}$
        \item $\ESToP{ys}{t}{p'}$
    \end{enumerate}
    then $\ESToP{zs}{s \| t}{\parp{p}{p'}}$
\end{lemma}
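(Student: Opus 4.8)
The plan is to induct on the length of $zs$ (equivalently, on the shuffle), doing a case analysis at each step on which of the two tagged lists contributes the head of $zs$. Since $|zs| = |xs| + |ys|$, each inductive case strictly decreases this measure, so the recursion is well founded. Throughout I would invert the cons rule of the events-to-prefix relation $\hookrightarrow$ on the hypotheses $\ESToP{xs}{s}{p}$ and $\ESToP{ys}{t}{p'}$ and reassemble a derivation of $\ESToP{zs}{s \| t}{\parp{p}{p'}}$.

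For the base case, $zs = []$ forces $xs = []$ and $ys = []$, so inverting the two hypotheses by the empty rule gives $p = \emp{s}$ and $p' = \emp{t}$. The empty rule again yields $\ESToP{[]}{s \| t}{\emp{s \| t}}$, and since $\emp{s \| t} = \parp{\emp{s}}{\emp{t}}$ by Definition~\ref{def:emptyPrefix}, this is exactly $\ESToP{[]}{s \| t}{\parp{p}{p'}}$.

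For the inductive step, write $zs = z :: zs'$. By the definition of shuffle, either $z = \parevA{x}$ with $xs = x :: xs'$ and $zs'$ a shuffle of $\parevAmap{xs'}$ and $\parevBmap{ys}$, or the symmetric situation with $\parevB{}$. I would treat the $\parevA{}$ case in full. Inverting $\ESToP{x :: xs'}{s}{p}$ by the cons rule produces a head prefix $q$ with $\EToP{x}{s}{q}$, a derivative $\derivrel{q}{s}{s_1}$, a tail prefix $p_1$ with $\ESToP{xs'}{s_1}{p_1}$, and a prefix concatenation combining $q$ (first) and $p_1$ into $p$. I would then build the goal derivation directly: the $\parevA{}$ clause of $\EToP{}{}{}$ gives $\EToP{\parevA{x}}{s \| t}{\parp{q}{\emp{t}}}$; this prefix has derivative $\derivrel{\parp{q}{\emp{t}}}{s \| t}{s_1 \| t}$, using $\derivrel{q}{s}{s_1}$ together with $\deriv{\emp{t}}{t} = t$ from Theorem~\ref{thm:derivrel-emp}; the induction hypothesis applied to the shorter $zs'$ (with $\ESToP{xs'}{s_1}{p_1}$ and $\ESToP{ys}{t}{p'}$) gives $\ESToP{zs'}{s_1 \| t}{\parp{p_1}{p'}}$; and finally the parallel-concatenation rule of Definition~\ref{def:prefix-concat} concatenates $\parp{q}{\emp{t}}$ with $\parp{p_1}{p'}$ componentwise. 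The left component reproduces the concatenation of $q$ and $p_1$, which equals $p$ by uniqueness (Theorem~\ref{thm:prefix-concat-fun}), and the right component is $\prefixConcatRel{\emp{t}}{p'}{p'}$ by Theorem~\ref{thm:prefix-concat-emp}. Hence the cons rule concludes $\ESToP{z :: zs'}{s \| t}{\parp{p}{p'}}$.

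The $\parevB{}$ case is entirely symmetric, swapping the two components and using $\deriv{\emp{s}}{s} = s$ for the untouched left component. I do not expect a genuine obstacle here; the only delicate points are bookkeeping ones: matching the orientation of prefix concatenation in the cons rule (head first) against the componentwise parallel-concatenation rule, and discharging the empty-component concatenations via Theorem~\ref{thm:prefix-concat-emp}. Uniqueness of derivatives (Theorem~\ref{thm:derivrel-fun}) and of the deserialized prefix (Theorem~\ref{thm:estop-function}) guarantee that these pieces fit together without ambiguity.
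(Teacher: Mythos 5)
Your proof is correct and takes the same route as the paper, whose entire proof of this lemma is ``Induction on $zs$''; your argument is simply that induction carried out in full, with the right case split on which list contributes the head, the componentwise parallel-concatenation rule, and Theorems~\ref{thm:prefix-concat-emp} and~\ref{thm:derivrel-emp} discharging the empty-component obligations. Your reading of the cons rule's concatenation as head-first is the intended one (the paper's rule as printed transposes the arguments of $\cdot$, which is ill-typed as written), so no gap remains.
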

\begin{proof}
    Induction on $zs$.
\end{proof}

\begin{lemma}[EsToP Cat Recovery]
    If $\ESToP{xs}{s}{\prefixOf{b}}$ and $\ESToP{ys}{t}{p}$ then
    $$\ESToP{\catevAmap{xs} {++} \cons{\catPunc}{ys}}{s \cdot t}{\catpB{b}{p}}$$
\end{lemma}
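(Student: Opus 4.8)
The plan is to induct on the list of events $xs$, inverting $\ESToP{xs}{s}{\prefixOf{b}}$ at each stage and peeling one $\catevA$-wrapped event off the front of the serialized form $\catevAmap{xs}\,\texttt{++}\,\cons{\catPunc}{ys}$. Throughout I read the recursive clause of $\ESToP$ as processing the head event first (yielding a head prefix) and then concatenating the recursively-obtained tail prefix onto it, i.e. the side condition is $\prefixConcat{(\text{head})}{(\text{tail})}$; this is the only reading under which the concatenation in that clause is well typed, since the head prefix lives at $s$ and the tail prefix at its derivative.

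\textbf{Base case} $xs = []$. Here $\catevAmap{[]}\,\texttt{++}\,\cons{\catPunc}{ys} = \cons{\catPunc}{ys}$, and inverting $\ESToP{[]}{s}{\prefixOf{b}}$ forces $\prefixOf{b} = \emp{s}$. Since $\prefixOf{b}$ is the image of a done batch it is maximal, so by Theorem~\ref{thm:empty-and-maximal-imply-nullable} the type $s$ is nullable and $b$ is the unique done batch of $s$. Running the recursive clause of $\ESToP$ on $\cons{\catPunc}{ys}$: the head event $\catPunc$ deserializes via the $\catPunc$ rule for $\EToP$ (whose premises are exactly $\nullable{s}$ and $\donebatch{s}{b}$) to $\catpB{b}{\emp{t}}$, whose derivative at $s \cdot t$ is $t$; the tail $ys$ deserializes to $p$ by hypothesis; and the concatenation $\prefixConcat{\catpB{b}{\emp{t}}}{p} = \catpB{b}{\prefixConcat{\emp{t}}{p}} = \catpB{b}{p}$ by the $\catpB$ clause of concatenation together with Theorem~\ref{thm:prefix-concat-emp}. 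This is the goal.

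\textbf{Inductive step} $xs = \cons{x}{xs'}$. Inverting $\ESToP{\cons{x}{xs'}}{s}{\prefixOf{b}}$ gives a head prefix $p_0$ with $\EToP{x}{s}{p_0}$, a derivative $s' = \deriv{p_0}{s}$, a tail prefix $p_0'$ with $\ESToP{xs'}{s'}{p_0'}$, and $\prefixConcatRel{p_0}{p_0'}{\prefixOf{b}}$. Because $\prefixOf{b}$ is maximal, Theorem~\ref{thm:concat-maximal} tells us $p_0'$ is maximal, hence $p_0' = \prefixOf{b'}$ for a done batch $b'$ of type $s'$; this is precisely the data the induction hypothesis consumes, and applying it (with the unchanged hypothesis $\ESToP{ys}{t}{p}$) gives $\ESToP{\catevAmap{xs'}\,\texttt{++}\,\cons{\catPunc}{ys}}{s' \cdot t}{\catpB{b'}{p}}$. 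Now I run the recursive clause of $\ESToP$ on $\cons{\catevA{x}}{(\catevAmap{xs'}\,\texttt{++}\,\cons{\catPunc}{ys})}$: the head event $\catevA{x}$ deserializes to $\catpA{p_0}$ with derivative type $s' \cdot t$, the tail deserializes to $\catpB{b'}{p}$ by the induction hypothesis (matched to the recursive call via Theorem~\ref{thm:estop-function}), and the $\catpA$--$\catpB$ clause of concatenation closes the step since $\prefixConcat{\catpA{p_0}}{\catpB{b'}{p}} = \catpB{\prefixConcat{p_0}{p_0'}}{p} = \catpB{\prefixOf{b}}{p} = \catpB{b}{p}$.

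The main obstacle is the bookkeeping that threads a done batch through the derivative: I must know that chopping the head prefix $p_0$ off a done batch $b$ of type $s$ leaves a done batch $b'$ of type $s'$ whose image satisfies $\prefixConcatRel{p_0}{\prefixOf{b'}}{\prefixOf{b}}$, so that the induction hypothesis applies and the final concatenation reassembles $\prefixOf{b}$. The maximality half of this is supplied by Theorem~\ref{thm:concat-maximal}; the remaining piece is the (routine but definition-dependent) fact that the coercion $\prefixOf{\cdot}$ commutes with batch derivatives, which I would either record as a small auxiliary lemma or fold into the identification of done batches with maximal prefixes. Everything else is mechanical inversion of $\ESToP$ and $\EToP$ plus an appeal to the matching concatenation clause, mirroring the shape of the \ruleName{PToES-Cat-B} serialization rule.
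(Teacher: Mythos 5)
Your proof is correct and takes essentially the same approach as the paper, whose entire proof of this lemma is ``Induction on $xs$'': your base and inductive cases, with the appeals to Theorem~\ref{thm:concat-maximal}, Theorem~\ref{thm:prefix-concat-emp}, and Theorem~\ref{thm:empty-and-maximal-imply-nullable}, simply fill in the details that one-liner leaves implicit. Your head-first reading of the recursive events-to-prefix clause (the paper's $p' \cdot p \sim p''$ is evidently a transposition of $p \cdot p' \sim p''$, as the typing of prefix concatenation forces) and your explicit flagging of the identification of done batches with maximal prefixes via the coercion $\prefixOf{\,\cdot\,}$ are exactly the right way to handle the appendix's under-specified batch machinery.
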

\begin{proof}
    Induction on $xs$.
\end{proof}

\begin{lemma}[EsToP Star Recovery]
    If $\ESToP{xs}{s}{\prefixOf{b}}$ and $\ESToP{ys}{s^\star}{p}$ then
    $$\ESToP{\cons{\sumPuncB}{\catevAmap{xs}} {++} \cons{\catPunc}{ys}}{s^\star}{\stpB{b}{p}}$$
\end{lemma}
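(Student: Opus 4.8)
The plan is to observe that, unlike the sibling \emph{Par} and \emph{Cat} recovery lemmas, this statement needs no induction on a list. The serialized form $\cons{\sumPuncB}{\catevAmap{xs}} \texttt{++} \cons{\catPunc}{ys}$ begins with a single punctuation event $\sumPuncB$ (after the evident list-append simplification $(\sumPuncB \texttt{::} \ell)\texttt{++} m = \sumPuncB \texttt{::} (\ell \texttt{++} m)$), and once that event is consumed the remaining list $\catevAmap{xs} \texttt{++} \cons{\catPunc}{ys}$ is \emph{exactly} the serialization handled by the \emph{EsToP Cat Recovery} lemma at the concatenation type $s \cdot s^\star$. So the whole argument is a single application of the cons rule of the events-to-prefix relation $\ESToP{xs}{s}{p}$, feeding its tail premise to Cat recovery.

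Concretely, I would first apply that cons rule to peel off the leading $\sumPuncB$. Its head event is deserialized by the single-event rule $\EToP{\sumPuncB}{s^\star}{\stpA{\emp{s}}}$, producing the head prefix $\stpA{\emp{s}}$. The cons rule then requires the derivative of $s^\star$ by this prefix; using $\derivrel{\emp{s}}{s}{s}$ (Theorem~\ref{thm:derivrel-emp}) together with the event/derivative rule for $\stpA{\cdot}$, I obtain $\derivrel{\stpA{\emp{s}}}{s^\star}{s \cdot s^\star}$. This is the index type at which the remainder of the list must be read.

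Next I would discharge the tail premise $\ESToP{\catevAmap{xs} \texttt{++} \cons{\catPunc}{ys}}{s \cdot s^\star}{\catpB{b}{p}}$ by invoking \emph{EsToP Cat Recovery} with $t := s^\star$; its two hypotheses are precisely the two hypotheses of the present lemma, namely $\ESToP{xs}{s}{\prefixOf{b}}$ and $\ESToP{ys}{s^\star}{p}$. Finally, the cons rule closes with a concatenation obligation, which I would satisfy by showing $\prefixConcatRel{\stpA{\emp{s}}}{\catpB{b}{p}}{\stpB{b}{p}}$: this is an instance of the concatenation rule whose conclusion has the shape $\prefixConcatRel{\stpA{\cdot}}{\catpB{\cdot}{\cdot}}{\stpB{\cdot}{\cdot}}$, where after instantiation the single premise is $\prefixConcatRel{\emp{s}}{\prefixOf{b}}{\prefixOf{b}}$, which holds because $\emp{s}$ is a left unit for concatenation (Theorem~\ref{thm:prefix-concat-emp}) and $\prefixHasType{\prefixOf{b}}{s}$. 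Assembling the head derivation, the Cat-recovery tail, and this concatenation in the cons rule yields the desired $\ESToP{\cons{\sumPuncB}{\catevAmap{xs}} \texttt{++} \cons{\catPunc}{ys}}{s^\star}{\stpB{b}{p}}$.

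The main obstacle is purely bookkeeping rather than conceptual: I must line up the derivative computation $\derivrel{\stpA{\emp{s}}}{s^\star}{s \cdot s^\star}$ with the type at which Cat recovery is applied, and identify the maximal first component written $b$ in $\catpB{b}{p}$ and $\stpB{b}{p}$ with the prefix $\prefixOf{b}$ produced by the deserialization relation, so that the concatenation rule fires with its $p''$ equal to $\prefixOf{b}$. No genuinely new idea beyond reusing Cat recovery is required; the content of the lemma is exactly that the $s^\star$ case factors through the one-step unrolling $\varepsilon + s \cdot s^\star$, mirroring the way the serialization rule \ruleName{PToES-Star-B} was built to shadow \ruleName{PToES-Cat-B}.
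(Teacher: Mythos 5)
Your proposal is correct, but it takes a genuinely different route from the paper, which proves the lemma by a direct induction on $xs$ (mirroring the inductions used for the Par and Cat recovery lemmas). You instead avoid any new induction: you peel off the leading $\sumPuncB$ with one application of the cons rule of $\ESToP{\cdot}{\cdot}{\cdot}$, using $\EToP{\sumPuncB}{s^\star}{\stpA{\emp{s}}}$ and $\derivrel{\stpA{\emp{s}}}{s^\star}{s \cdot s^\star}$, delegate the tail $\catevAmap{xs} \,\texttt{++}\, \cons{\catPunc}{ys}$ to the already-proved Cat Recovery lemma at type $s \cdot s^\star$, and close with the concatenation instance $\prefixConcatRel{\stpA{\emp{s}}}{\catpB{\prefixOf{b}}{p}}{\stpB{\prefixOf{b}}{p}}$, whose premise $\prefixConcatRel{\emp{s}}{\prefixOf{b}}{\prefixOf{b}}$ follows from Theorem~\ref{thm:prefix-concat-emp}. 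This is arguably the better factoring: it makes explicit that the star case is nothing but the one-step unrolling $s^\star \cong \varepsilon + s \cdot s^\star$ routed through the cat case, and it buys modularity (the list induction lives once, inside Cat Recovery) at the cost of the derivative and concatenation bookkeeping you correctly identify. Two small points to watch: first, applying Theorem~\ref{thm:prefix-concat-emp} needs $\prefixHasType{\prefixOf{b}}{s}$, which does not come literally packaged with the hypothesis $\ESToP{xs}{s}{\prefixOf{b}}$ — you should either cite the fact that deserialization produces well-typed prefixes (Theorem~\ref{thm:estop-function}) or add the typing as an ambient assumption, as the paper implicitly does; second, the paper's cons rule as printed writes its concatenation premise as $p' \cdot p \sim p''$ with the tail prefix first, which cannot be the intended indexing of $\prefixConcatRel{\cdot}{\cdot}{\cdot}$ — your head-first reading is the coherent one, but it is worth noting that you are silently correcting an apparent typo rather than matching the rule verbatim.
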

\begin{proof}
    Induction on $xs$.
\end{proof}

\begin{theorem}[Serialization/Deserialization Round Trip]
    \label{thm:estop-deterministic}
    If $\PToES{p}{s}{\XS}$, then $\ESToP{\XS}{s}{p}$.
  \end{theorem}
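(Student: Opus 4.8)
The plan is to induct on the derivation of $\PToES{p}{s}{\XS}$, showing in each case that the list of events produced deserializes back to $p$. Since the three nontrivial structural constructors of the serialization relation already have matching recovery lemmas, most of the work is to line these up with the induction hypotheses; the remaining cases are either leaf cases or single-event prepends handled by one unfolding of the $\hookrightarrow$ cons rule.

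For the leaf cases $\ruleName{PToES-Eps}$, $\ruleName{PToES-One-A}$, $\ruleName{PToES-Sum-Emp}$, and $\ruleName{PToES-Star-Emp}$, the serialized list is $[]$ and I would close the goal with the $\ESToP{[]}{s}{\emp{s}}$ rule, using that $\emp{\varepsilon} = \epsp$, $\emp{1} = \onepA$, $\emp{s+t} = \sumpEmp$, and $\emp{s^\star} = \stpEmp$ respectively. The singleton cases $\ruleName{PToES-One-B}$, $\ruleName{PToES-Star-Done}$, and the two injection cases $\ruleName{PToES-Sum-A}$/$\ruleName{PToES-Sum-B}$ each amount to one application of the $\hookrightarrow$ cons rule: run the Event-to-Prefix relation (Theorem~\ref{thm:etop-function}) on the leading event, take its derivative, deserialize the tail by the induction hypothesis, and concatenate. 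The concatenation always places the empty single-event prefix first, so it collapses by Prefix Concatenation Empty (Theorem~\ref{thm:prefix-concat-emp}) together with Empty Prefix Derivative (Theorem~\ref{thm:derivrel-emp}) to reproduce exactly $\onepB$, $\stpDone$, $\sumpA{p}$, or $\sumpB{p}$.

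The compound cases are where the recovery lemmas do the heavy lifting. For $\ruleName{PToES-Par}$, the premises give $\PToES{p}{s}{xs}$, $\PToES{p'}{t}{ys}$, and a shuffle $zs \in \parevAmap{xs} \| \parevBmap{ys}$; the induction hypotheses yield $\ESToP{xs}{s}{p}$ and $\ESToP{ys}{t}{p'}$, and EsToP Par Recovery delivers $\ESToP{zs}{s\|t}{\parp{p}{p'}}$. For $\ruleName{PToES-Cat-B}$ and $\ruleName{PToES-Star-B}$ I would invoke EsToP Cat Recovery and EsToP Star Recovery in the same way. The two cases without a ready-made lemma are $\ruleName{PToES-Cat-A}$ and $\ruleName{PToES-Star-A}$, so I would first prove an auxiliary ``Cat-A recovery'' lemma: if $\ESToP{xs}{s}{p}$ then $\ESToP{\catevAmap{xs}}{s \cdot t}{\catpA{p}}$, by induction on $xs$, using $\deriv{\catpA{q}}{s\cdot t} = (\deriv{q}{s}) \cdot t$ and the concatenation rule for $\catpA{\cdot}$. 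This immediately settles $\ruleName{PToES-Cat-A}$, and it also settles $\ruleName{PToES-Star-A}$, since the leading $\sumPuncB$ deserializes to $\stpA{\emp{s}}$, whose derivative unfolds $s^\star$ to $s \cdot s^\star$, after which the tail $\catevAmap{xs}$ is handled at type $s \cdot s^\star$ by exactly that lemma.

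I expect the genuine difficulty to live in the recovery lemmas rather than in this outer induction: the serialization relation is nondeterministic in the parallel case (any shuffle of the two tagged sublists is admissible), so EsToP Par Recovery must show that \emph{every} interleaving deserializes to the same $\parp{p}{p'}$, which is the only place the argument reasons about arbitrary orderings. Within the outer induction the one recurring hazard is bookkeeping: I must keep the type at which each tail is deserialized in sync with the derivative of the leading single-event prefix, and check that each prefix concatenation produced by the cons rule reassembles the intended constructor. These checks are routine given Theorems~\ref{thm:prefix-concat-emp} and~\ref{thm:derivrel-emp} and the well-typedness of the emitted events (PToES Relation Derivative Agreement), but they are the steps most prone to a mismatched derivative.
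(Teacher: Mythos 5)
Your proposal is correct and follows essentially the same route as the paper: an outer induction on the derivation of $\PToES{p}{s}{\XS}$, discharging the leaf and singleton cases via the $\ESToP{[]}{s}{\emp{s}}$ rule and one unfolding of the cons rule (with Theorems~\ref{thm:prefix-concat-emp} and~\ref{thm:derivrel-emp}), and the compound cases via the Par/Cat/Star recovery lemmas. Your one addition --- the auxiliary ``Cat-A recovery'' lemma ($\ESToP{xs}{s}{p}$ implies $\ESToP{\catevAmap{xs}}{s \cdot t}{\catpA{p}}$, proved by induction on $xs$) --- is exactly the detail the paper's one-line proof leaves implicit for the \ruleName{PToES-Cat-A} and \ruleName{PToES-Star-A} cases, and your use of it, including unfolding $s^\star$ to $s \cdot s^\star$ after the leading $\sumPuncB$, is sound.
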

  \begin{proof}
    Induction on $\PToES{p}{s}{\XS}$.
  \end{proof}


\end{document}